\documentclass[11pt,letterpaper]{article}
\usepackage[letterpaper, left=1in, right=1in, top=0.9in, bottom=0.9in]{geometry}
\usepackage[american]{babel}
\usepackage[normalem]{ulem}
\usepackage{amsmath, amssymb, cases, amsthm}
\usepackage{thmtools}
\usepackage[shortlabels]{enumitem}
\usepackage{mdframed}
\usepackage{bbm}
\usepackage{bm}
\usepackage{microtype}
\usepackage{xcolor}
\usepackage{makecell}
\usepackage{mathtools}
\usepackage{algorithmic}
\usepackage[procnumbered,ruled,vlined,linesnumbered]{algorithm2e}
\usepackage{float}
\usepackage{varwidth}
\usepackage{modletter}
\usepackage{tcolorbox}
\usepackage{tikz}
\usetikzlibrary{arrows.meta,positioning,fit,calc}

\newtcolorbox{construction}[2][]
{
	colframe = gray!50,
	colback  = gray!10,
	coltitle = gray!10!black,
	left*=0mm, 
	before skip = 10pt,
	after skip = 10pt,
	title    = \textbf{\space\space #2},
	#1,
}

\SetKwInput{KwData}{Input}
\SetKwInput{KwResult}{Output}
\SetKwInput{KwGlobalVar}{Global variables}

\usepackage{xcolor}
\definecolor{ForestGreen}{rgb}{0.1333,0.5451,0.1333}
\definecolor{DarkRed}{rgb}{0.80,0,0}
\definecolor{Red}{rgb}{1,0,0}
\usepackage[linktocpage=true,
pagebackref=true,colorlinks,
linkcolor=DarkRed,citecolor=ForestGreen,
bookmarks,bookmarksopen,bookmarksnumbered]
{hyperref}

\usepackage[capitalize,nosort,nameinlink]{cleveref}

\declaretheorem[numberwithin=section,refname={Theorem,Theorems},Refname={Theorem,Theorems}]{theorem}
\declaretheorem[numberwithin=section,name=Theorem,refname={Theorem,Theorems},Refname={Theorem,Theorems}]{thm}
\declaretheorem[numberlike=theorem]{lemma}

\declaretheorem[numberlike=theorem]{corollary}
\declaretheorem[numberlike=theorem,name=Corollary,refname={Corollary,Corollaries},Refname={Corollary,Corollaries}]{cor}
\declaretheorem[numberlike=theorem,style=definition]{definition}

\theoremstyle{definition}

\def\final{1}  %
\ifnum\final=0  %
\newcommand{\todo}[1]{{\color{red}[{\tiny TODO: \bf #1}]\marginpar{\color{red}*}}}
\newcommand{\yonggang}[1]{{\color{blue}[{\tiny Yonggang: \bf #1}]\marginpar{\color{blue}*}}}
\newcommand{\thatchaphol}[1]{{\color{purple}[{\tiny Thatchaphol: \bf #1}]\marginpar{\color{purple}*}}}
\else %
\newcommand{\yonggang}[1]{}
\newcommand{\thatchaphol}[1]{}
\newcommand{\todo}[1]{}
\fi

\newcommand{\eps}{\epsilon}
\newcommand{\poly}{\mathrm{poly}}
\newcommand{\polylog}{\mathrm{polylog}}
\newcommand{\dist}{\mathrm{dist}}
\newcommand{\Val}{\mathrm{val}}

\newcommand{\vol}{\mathrm{vol}}

\newcommand{\Start}{\mathrm{Start}}
\newcommand{\End}{\mathrm{End}}
\newcommand{\Supp}{\mathrm{Supp}}
\newcommand{\Ohs}{\cO_{\mathrm{HS}}}
\newcommand{\Odp}{\cO_{\mathrm{DP}}}
\newcommand{\mincut}{\mathrm{MinCut}}

\newcommand{\amfmc}[1]{#1\text{-}\mathsf{ApxMFMC}}
\newcommand{\amfmcDAG}[1]{#1\text{-}\mathsf{ApxMFMC}\text{-}\mathsf{DAG}}

\newcommand{\Dem}{\mathrm{Dem}}

\newcommand{\Cong}{\mathrm{Cong}}

\renewcommand{\path}{\mathrm{Path}}

\newcommand{\Pre}{\mathrm{Pre}}
\newcommand{\Dec}{\mathrm{Suf}}
\newcommand{\HL}{\hat{\pi}}

\newcommand{\Kap}{\gamma}
\newcommand{\dd}{\mathbf{d}}
\newcommand{\apxDSSSP}[1]{#1\text{-}\mathsf{ApxSSSP}\text{-}\mathsf{DAG}}
\newcommand{\apxSSSP}[1]{#1\text{-}\mathsf{ApxSSSP}}
\newcommand{\Out}[1]{{#1}^{\mathrm{last}}}
\newcommand{\In}[1]{{#1}^{\mathrm{first}}}

\newcommand\ff{\mathit{f}}
\newcommand\Erev{E^{\mathrm{rev}}}

\newcommand\DDelta{\boldsymbol{\mathit{\Delta}}}
\newcommand\nnabla{\boldsymbol{\mathit{\nabla}}}

\newcommand{\DDE}{\textsc{DistDAGProj}}
\newcommand{\To}{T^{\mathrm{out}}}
\newcommand{\Ti}{T^{\mathrm{in}}}

\newcommand{\Erem}{{E^{\mathrm{rem}}\xspace}}

\newcommand{\SL}{s_{\mathrm{LDD}}}

\global\long\def\maxflow{\mathrm{maxflow}}

 \usepackage{tikz}
\usetikzlibrary{arrows.meta,calc,positioning}

\newcommand{\CellW}{2.1}   %
\newcommand{\CellH}{1.30}  %

\newcommand{\GridSep}{1.10}   %
\newcommand{\LabelShift}{0.00}

\newcommand{\HeaderOff}{0.20}   %
\newcommand{\TitleDrop}{0.45}   %

\newcommand{\RowSep}{3.25}

\newcommand{\DownStartY}{1.13}
\newcommand{\DownEndY}{0.78}

\newcommand{\LongStartX}{0.90}
\newcommand{\LongEndX}{2.15}
\newcommand{\LongCtrlX}{1.50}
\newcommand{\LongTopY}{1.70}
\newcommand{\LongBotY}{0.25}
\newcommand{\LongBendTop}{0.20}   %
\newcommand{\LongBendBot}{-0.20}  %

\newcommand{\ShortStartX}{1.90}
\newcommand{\ShortEndX}{2.15}
\newcommand{\ShortCtrlX}{2.00}
\newcommand{\ShortTopY}{1.35}
\newcommand{\ShortBotY}{0.68}
\newcommand{\ShortBendTop}{-0.00}
\newcommand{\ShortBendBot}{0.00}

\newcommand{\HXXLeftX}{0.70}
\newcommand{\HXXRightX}{1.10}
\newcommand{\HXXY}{0.50}
\newcommand{\HXXLabelX}{0.80}
\newcommand{\HXXLabelY}{0.50}

\newcommand{\RozX}{2.60}
\newcommand{\RozYBot}{0.90}
\newcommand{\RozYTop}{1.40}
\newcommand{\RozLabelX}{2.62}
\newcommand{\RozLabelY}{1.25}

\newcommand{\VijXLeft}{1.60}   %
\newcommand{\VijXRight}{2.035}  %
\newcommand{\VijY}{0.50}       %
\newcommand{\VijLabelX}{1.7}  %
\newcommand{\VijLabelY}{0.45}  %

\newcommand{\MadryStartX}{0.85}   %
\newcommand{\MadryStartY}{0.15}
\newcommand{\MadryCtrlAX}{1.2}
\newcommand{\MadryCtrlAY}{-0.15}
\newcommand{\MadryCtrlBX}{2}
\newcommand{\MadryCtrlBY}{-0.15}
\newcommand{\MadryEndX}{2.15}     %
\newcommand{\MadryEndY}{0.15}
\newcommand{\MadryLabelX}{1.20}
\newcommand{\MadryLabelY}{-0.10}

\newcommand{\HeadSize}{\scriptsize}
\newcommand{\CellSize}{\scriptsize}

\definecolor{cellgreen}{HTML}{DDF3D5}
\definecolor{cellpeach}{HTML}{F5D9CC}
\colorlet{LiText}{green!50!black}
\colorlet{RozText}{red!70!black}
\colorlet{GridLine}{black!70}

\tikzset{
	grid/.style={line width=.7pt, draw=GridLine},
	subgrid/.style={line width=.5pt, draw=black!60},
	header/.style={font=\HeadSize\bfseries, align=center},
	celltext/.style={font=\CellSize, align=center},
	midlabel/.style={font=\scriptsize, align=center},
	trivial/.style={densely dotted, line width=0.85pt, draw=gray!75, -{Latex[length=2.0mm]}},
	strong/.style={line width=1.1pt, draw=black, -{Latex[length=2.3mm]}}
}

\newcommand{\OneGrid}[5]{%
	\begin{scope}[shift={(#1,0)}]
		#2
		\draw[grid] (0,0) rectangle (3,2);
		\draw[subgrid] (1,0) -- (1,2);
		\draw[subgrid] (2,0) -- (2,2);
		\draw[subgrid] (0,1) -- (3,1);
		\node[header] at (.5,2+\HeaderOff) {Undirected};
		\node[header] at (1.5,2+\HeaderOff) {DAG};
		\node[header] at (2.5,2+\HeaderOff) {Directed};
		
		\foreach \X in {0.5,1.5,2.5} { \draw[trivial] (\X,\DownStartY) -- (\X,\DownEndY); }
		
		\draw[trivial] (\LongStartX,\LongTopY)
		.. controls (\LongCtrlX,\LongTopY+\LongBendTop) .. (\LongEndX,\LongTopY);
		\draw[trivial] (\LongStartX,\LongBotY)
		.. controls (\LongCtrlX,\LongBotY+\LongBendBot) .. (\LongEndX,\LongBotY);
		
		\draw[trivial] (\ShortStartX,\ShortTopY)
		.. controls (\ShortCtrlX,\ShortTopY+\ShortBendTop) .. (\ShortEndX,\ShortTopY);
		\draw[trivial] (\ShortStartX,\ShortBotY)
		.. controls (\ShortCtrlX,\ShortBotY+\ShortBendBot) .. (\ShortEndX,\ShortBotY);
		
		#3
		#4
		\node[font=\small\bfseries] at (1.5,-\TitleDrop) {#5};
	\end{scope}%
}

\begin{document}
\sloppy

\title{
DAG Projections:\\ Reducing Distance and Flow Problems to DAGs
}
\author{Bernhard Haeupler\thanks{
        INSAIT, Sofia University ``St.~Kliment Ohridski'' and ETH Zürich,
        \texttt{bernhard.haeupler@insait.ai}.
        Partially funded by the Ministry of Education and Science of Bulgaria's support for INSAIT as part of the Bulgarian National Roadmap for Research Infrastructure and through the European Research Council (ERC) under the European Union's Horizon 2020 research and innovation program (ERC grant agreement 949272).} \and 
Yonggang Jiang\thanks{MPI-INF and Saarland University, Germany, \texttt{yjiang@mpi-inf.mpg.de}. Part of this work was done while visiting INSAIT.  Supported by Google PhD fellowship.} \and   
Thatchaphol Saranurak\thanks{
        University of Michigan,
        \texttt{thsa@umich.edu}.
        Supported by NSF Grant CCF-2238138 and a Sloan Fellowship. Part of this work was done at INSAIT. Partially funded by the Ministry of Education and Science of Bulgaria's support for INSAIT as part of the Bulgarian National Roadmap for Research Infrastructure. }
}
\date{}
\maketitle
\pagenumbering{gobble}
\begin{abstract}
We show that every directed graph $G$ with $n$ vertices and $m$ edges admits a directed acyclic graph (DAG) with $m^{1+o(1)}$ edges, called a \emph{DAG projection}, that can either $(1+1/\polylog (n))$-approximate distances between all pairs of vertices $(s,t)$ in $G$, or $n^{o(1)}$-approximate  maximum flow between all pairs of vertex subsets $(S,T)$ in $G$. Previous similar results suffer a $\Omega(\log n)$ approximation factor for distances \cite{assadi2025covering,filtser2025stochastic} and, for maximum flow, no prior result of this type is known. 

Our DAG projections admit $m^{1+o(1)}$-time constructions. Further, they admit almost-optimal parallel constructions, i.e., algorithms with $m^{1+o(1)}$ work and $m^{o(1)}$ depth, assuming the ones for approximate shortest path or maximum flow \emph{on DAGs}, even when the input $G$ is not a DAG. 

DAG projections immediately transfer results on DAGs, usually simpler and more efficient, to directed graphs. As examples, we improve the state-of-the-art of $(1+\eps)$-approximate distance preservers \cite{HoppenworthXX25} and single-source minimum cut  \cite{cheung2013graph}, and obtain simpler construction of $(n^{1/3},\epsilon)$-hop-set  \cite{kogan2022new,bernstein2023closing} and combinatorial max flow algorithms \cite{bernstein2024maximum,BBLST25}.

Finally, via DAG projections, we reduce major open problems on almost-optimal parallel algorithms for \emph{exact} single-source shortest paths (SSSP) and maximum flow to easier settings:
\begin{itemize}
\item From exact directed SSSP to exact \emph{undirected} ones, {\small \par}
\item From exact directed SSSP to $(1+1/\polylog(n))$-approximation on DAGs, and{\small \par}
\item From exact directed maximum flow to $n^{o(1)}$-approximation on DAGs.{\small \par}
\end{itemize}

\end{abstract}

\clearpage
\tableofcontents
\clearpage
\pagenumbering{arabic}
\section{Introduction}

Distance and maximum flow structures of \emph{undirected} graphs can be approximated with trees, using \emph{tree covers} \cite{mendel2007ramsey} and \emph{tree cut sparsifier} \cite{racke2002minimizing}, respectively. Both objects are highly influential across many areas including data structures \cite{mendel2007ramsey,thorup2005approximate}, fast algorithms \cite{racke2014computing,van2024almost}, and online algorithm \cite{bartal1996probabilistic,alon2006general}.\footnote{There are probabilistic version of tree covers and tree cut sparsifiers called \emph{probabilistic/stochastic tree embedding} \cite{bartal1996probabilistic,fakcharoenphol2003tight} and \emph{probabilistic tree cut sparsifiers} \cite{racke2008optimal}, respectively. } This motivates the research program on showing analogous results in directed graphs. 

In this paper, we make significant progress in this direction and obtain many applications. Below, we first review tree covers and tree cut sparsifier, as well as the state-of-the-art of their directed analogous objects summarized in \Cref{tab:history}. 

For any undirected or directed graph $G$ with $n$ vertices and $m$ edges, vertex pair $s,t\in V(G)$, and vertex subset pair $S,T\subseteq V(G)$, let $\dist_{G}(s,t)$ denote the distance from $s$ to $t$ and $\maxflow_{G}(S,T)$ denote the maximum flow value from $S$ to $T$.

\paragraph{Approximating Distances.}

A \emph{tree cover }of a graph $G$ is a collection ${\cal T}$ of trees where $\dist_{G}(s,t)\le\min_{T\in{\cal T}}\dist_{T}(s,t)\le\alpha\cdot\dist_{G}(s,t)$ for every $s,t\in V(G)$ and $\alpha$ is called the approximation factor. Every\emph{ undirected} graph admits, for any $k\ge1$, a tree cover ${\cal T}$ containing $O(kn^{1/k})$ trees with $O(k)$ approximation and total size $|\sum_{T\in{\cal T}}E(T)|=O(kn^{1+1/k})$ \cite{mendel2007ramsey}. 

Motivated by the fact that directed cyclic graphs (DAGs) are usually more algorithmic friendly than general directed graphs,  Assadi, Hoppenworth, and Wein \cite{assadi2025covering} 
recently introduced a \emph{DAG cover} as a directed analog of a tree cover. A DAG cover of a graph $G$ is a collection ${\cal D}$ of DAGs where $\dist_{G}(s,t)\le\min_{D\in{\cal D}}\dist_{D}(s,t)\le\alpha\cdot\dist_{G}(s,t).$ They showed that every directed graph with edge weights from $\{1,2,\dots,\poly(n)\}$ admits a DAG cover containing $O(\log n)$ DAGs with $O(\log^{3}n\log\log n)$ approximation and total size $|\sum_{D\in{\cal D}}E(D)|=O(m\log^{3}n)$. They also gave a $\tilde{O}(m)$-time algorithm to construct it. Later, Filtser \cite{filtser2025stochastic} strictly improved the approximation factor to $O(\log n\log\log n)$, while using the same size and construction time.%

Both \cite{assadi2025covering,filtser2025stochastic} left as an open problem whether their approximation factor can be improved further. 

\paragraph{Approximating Maximum Flow.}

A \emph{tree cut sparsifier} of $G$  is a tree $T$ where, for every $S,T\subseteq V(G)$, $\maxflow_{G}(S,T)\le\maxflow_{T}(S,T)\le\alpha\cdot\maxflow_{G}(S,T)$. Every undirected graph admits a tree cut sparsifier $T$ with $O(\log n\log\log n)$ approximation and size $|E(T)|\le2n$ \cite{racke2014improved}. However, it remains unknown if there exists any set of DAGs that could approximate maximum flows of a directed graph.

\begin{table}
\centering{

\footnotesize{

\begin{tabular}{|c|c|c|c|}
\hline 
\textbf{Distance} & From $\rightarrow$ To & Approximation & Total size\tabularnewline
\hline 
\hline 
\cite{mendel2007ramsey} & \textcolor{brown}{undirected $\rightarrow$ tree} & $O(k)$  &  $O(kn^{1+1/k})$ \tabularnewline
\hline 
\cite{bartal2019covering} & \textcolor{brown}{undirected $\rightarrow$ tree} & $O(n^{1/k}\log^{1-1/k}n)$ & $nk$\tabularnewline
\hline 
\cite{assadi2025covering} & directed $\rightarrow$ DAG & \textcolor{black}{$O(\log^{3}n\log\log n)$} & $O(m\log^{3}n)$\tabularnewline
\hline 
\cite{filtser2025stochastic} & directed $\rightarrow$ DAG & \textcolor{black}{$O(\log n\log\log n)$} & $O(m\log^{3}n)$\tabularnewline
\hline 
\textbf{Ours} (\Cref{thm:main distance }) & directed $\rightarrow$ DAG & $(1+1/\polylog(n))$ & $m^{1+o(1)}$\tabularnewline
\hline 
\end{tabular}

~\\
~\\

\begin{tabular}{|c|c|c|c|}
\hline 
\textbf{Maximum flow} & From $\rightarrow$ To & Approximation & Total size\tabularnewline
\hline 
\hline 
\cite{racke2002minimizing} & \textcolor{brown}{undirected $\rightarrow$ tree} & $O(\log^{3}n)$  & $2n$\tabularnewline
\hline 
\cite{harrelson2003polynomial} & \textcolor{brown}{undirected $\rightarrow$ tree} & $O(\log^{2}n\log\log n)$ & $2n$\tabularnewline
\hline 
\cite{racke2008optimal} & \textcolor{brown}{undirected $\rightarrow$ tree} & $O(\log n)$ & \textcolor{brown}{$O(mn)$}\tabularnewline
\hline 
\cite{racke2014improved} & \textcolor{brown}{undirected $\rightarrow$ tree} & $O(\log n\log\log n)$ & $2n$\tabularnewline
\hline 
\textbf{Ours} (\Cref{thm:main flow}) & directed $\rightarrow$ DAG & $n^{o(1)}$ & $m^{1+o(1)}$\tabularnewline
\hline 
\end{tabular}

}}

\caption{Summary of results on approximating undirected and directed graphs by trees and DAGs, respectively. For distances, we trade $O(m \log^3 n)$ size and $\Omega(\log n)$ approximate for $m^{1+o(1)}$ size and $(1+o(1))$ approximation. For maximum flow, we give the first such DAG.}
\label{tab:history}
\end{table}

\subsection{Our Structural Results}

We make significant progress on both sides. First, for the distance side, we improve the approximation factor of $O(\log n\log\log n)$ \cite{filtser2025stochastic} down to $(1+1/\polylog(n))$ using a DAG with slightly larger $m^{1+o(1)}$ size. Second, for the flow side, we give the first directed analog of tree cut sparsifiers using a DAG of size $m^{1+o(1)}$. 

To describe our DAGs more precisely, we need a notion of \emph{DAG projection.} A \emph{partial projection} to $G$ is a graph $G'$ whose vertices of $G'$ are either copies of vertices in $G$ or Steiner vertices. Formally, there exists a mapping $\pi:V(G')\rightarrow V(G)\cup\{\bot\}$. We say $u'\in\pi^{-1}(u)$ is a copy of $u$ and $u'\in\pi^{-1}(\bot)$ is a Steiner vertex.\footnote{A \emph{projection} has a stricter requirement that $\pi : V(G')\rightarrow V(G)$ is a graph homomorphism. That is, there is no Steiner vertex and if $(u',v')\in E(G')$, then $(u,v)\in E(G)$ where $u=\pi(u')$ and $v=\pi(v')$ (and they have the same edge weight). See \Cref{def:inherited}, which was also used in \cite{negSSSP}. We usually omit the term ``partial'' and distinguish the two concepts only when the difference is crucial.} The \emph{width} of $G'$ is the maximum number of copies, i.e., $\max_{u\in V(G)}|\pi^{-1}(u)|$. If $G'$ is a DAG, then we say $G'$ is a \emph{DAG projection}.

\paragraph{$(1+\epsilon)$-Distance-Preserving DAG.}

Our first result is a DAG projection with almost-linear size that can $(1+1/\polylog(n))$-approximate distances. 
\begin{thm}
\label{thm:main distance }For any directed graph $G$ with edge weights from $\{0,1,2,\dots,\poly(n)\}$ and $\epsilon\ge1/\polylog(n)$, there exists a DAG projection $G'$ to $G$ of size $|E(G')|=m^{1+o(1)}$ and width $n^{o(1)}$  such that, for every $s,t\in V(G)$, 
\[
\dist_{G}(s,t)\le\dist_{G'}(\pi^{-1}(s),\pi^{-1}(t))\le(1+\epsilon)\dist_{G}(s,t).
\]
Moreover, there is a randomized algorithm for computing $G'$ in $m^{1+o(1)}$ time.
\end{thm}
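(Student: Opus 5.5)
We build $G'$ recursively with directed low-diameter decompositions (LDDs) and a hierarchy of distance scales. Scaling reduces the problem to pairs with $\dist_G(s,t)\in[D,2D]$ for $O(\log n)$ scales $D$; the final $G'$ is a disjoint union (over scales) of DAG projections, with $s$'s copies in all of them. Since every edge of $G'$ is a copy of an edge of $G$ or a Steiner edge of length at least the corresponding $G$-distance, the lower bound $\dist_G(s,t)\le\dist_{G'}(\pi^{-1}(s),\pi^{-1}(t))$ holds automatically. The work is the upper bound.

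For a fixed scale $D$, we run a directed LDD with diameter parameter $\epsilon' D$, where $\epsilon'=\epsilon/\polylog$. It removes an edge set $\Erem$ such that each strongly connected component has weak diameter at most $\epsilon' D$, and each edge lies in $\Erem$ with probability $O(w(e)\log^2 n/(\epsilon' D))$. The SCCs of $G-\Erem$ are topologically ordered. A shortest $s$--$t$ path $P$ of length at most $2D$ is cut into pieces lying in single SCCs, joined by DAG edges between SCCs or by edges of $\Erem$. In expectation $P$ uses $O(\log^2 n/\epsilon')=\polylog(n)$ edges of $\Erem$.

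To handle this, we make $k=\polylog(n)$ layered copies of the condensed structure. Layer $i$ contains the "DAG of SCCs" in topological order, and each removed edge $(u,v)\in\Erem$ goes from $u$'s copy in layer $i$ to $v$'s copy in layer $i+1$. Inside each SCC we must simulate paths of length up to $\epsilon' D$ acyclically, and we do this recursively. Each SCC $C$ with diameter $\epsilon' D$ is handled by the same construction at the smaller scale: we recursively build a DAG projection of $G[C]$ and splice it in. Within a layer, the recursive projection of $C$ is placed in topological position.

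The error from recursion levels compounds as $(1+\epsilon')^{\text{depth}}$. With depth $O(\log n/\log(1/\epsilon'))$, or more carefully $o(\log n)$ levels by choosing the shrink factor $n^{o(1)}$ per level, the total error stays at most $1+\epsilon$. To guarantee the bound for every pair rather than in expectation, we repeat the LDD $O(\log n)$ times and take the union, so that for each pair some sample has few removed edges on $P$ (by Markov plus independence).

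The main obstacle is the size and width blowup. Each level multiplies the width by (layers)$\times$(repetitions)$=\polylog(n)$. With $L$ levels the width is $\polylog(n)^L$, and we need this to be $n^{o(1)}$. This forces $L=o(\log n/\log\log n)$, so each level must shrink the diameter by a factor $n^{\omega(1/\log n\cdots)}$, which is too large for a single LDD with stretch $\epsilon'$.

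My first attempt is to make the layering itself hierarchical. Within one level we allow a path to cross only $\polylog$ removed edges, and we choose the LDD diameter as $D/n^{o(1)}$ instead of $\epsilon' D$. Then the number of removed edges crossed is $n^{o(1)}$, which we handle by hop-bounded layering with $n^{o(1)}$ layers. The approximation loss comes only from rounding at the bottom level and from the additive $\epsilon' D$ slack per level, both controlled by $\epsilon'=\epsilon/L$.

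For the running time, each LDD takes near-linear time, and the recursion yields $m^{1+o(1)}$ total size and time, since the total size is the sum over levels of (edges)$\times$width.
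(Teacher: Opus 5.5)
\paragraph{There is a genuine gap: the width bound.} This is the obstacle you flagged yourself, and your proposed fix does not repair it. In your scheme each recursion level multiplies the width by (layers)$\times$(repetitions), because every vertex reappears in every layer inside its SCC's recursive projection. Suppose a level shrinks the scale by a factor $\sigma$, i.e.\ uses LDD diameter $D/\sigma$. Your only tool is the guarantee $\Pr[e\in E^{\mathrm{rem}}]\le s_{\mathrm{LDD}}\,w(e)\sigma/D$, which puts about $s_{\mathrm{LDD}}\sigma$ removed edges on a length-$D$ path, so you need that many layers. With $\log_\sigma(nW)$ levels, the width your accounting gives is at least $\sigma^{\log_\sigma(nW)}\ge n$, whatever $\sigma$ you pick. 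With $\sigma=1/\epsilon'$ it is $\polylog(n)^{\Theta(\log n/\log\log n)}=n^{\Theta(1)}$. With $\sigma=n^{o(1)}$ and $n^{o(1)}$ layers it is no better than the trivial construction that stacks $n$ copies of $G$. Recursing into $G[C]$ does not shrink the number of vertices either, since an SCC of $G-E^{\mathrm{rem}}$ may contain almost all of $V$. Finally, the ``additive $\epsilon' D$ slack per level'' you invoke has no source in your construction: nothing in it trades an additive error for width.

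\paragraph{The missing idea.} The paper recurses on the number of vertices rather than on the scale, and handles big clusters non-recursively by an additive shortcut. It uses LDDs of diameter $d\le\eps\,\ell(p)/(9\log n)$, so only $O(s_{\mathrm{LDD}}\log n/\eps)$ layers are needed, exactly as in your first attempt. It then calls a cluster large if it has at least $n/\sigma$ vertices.
\begin{itemize}
\item Each small cluster gets a recursive projection of $G[C]$ that preserves \emph{all} distances of $G[C]$, not just those at the next scale.
\item Each of the at most $\sigma$ large clusters gets only an in-tree $T^{\mathrm{in}}_C$ and an out-tree $T^{\mathrm{out}}_C$ of (approximate) shortest paths in the whole graph $G$, rooted at an arbitrary $r_C\in C$.
\end{itemize}
For a shortest path $p$, let $C_x$ and $C_y$ be the first and last large clusters it meets. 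The entire stretch between them is replaced by $\mathrm{Start}\to r_{C_x}$ in $T^{\mathrm{in}}_{C_x}$, then $r_{C_x}\to r_{C_y}$ in $T^{\mathrm{out}}_{C_x}$, then $r_{C_y}\to\mathrm{End}$ in $T^{\mathrm{out}}_{C_y}$. Since $\dist_G(r_{C_x},r_{C_y})\le\ell(p_{\mathrm{mid}})+2d$, this costs only an additive $O(d)\le\eps\,\ell(p)/(2\log n)$. Only the small clusters before $C_x$ and after $C_y$ need recursive projections, placed in two big copies $D'_1,D'_2$.

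Large clusters therefore enter the width additively: $f(n)\le\tilde O(1/\eps)\,\bigl(f(n/\sigma)+2\sigma\bigr)$. With $\sigma=2^{\log^{1-\delta}n}$ the recursion has depth $\log^{\delta}n$, which gives $f(n)=n^{o(1)}$, while $\eps$ drops only by a factor $(1-1/\log n)$ per level.
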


Recall that $\dist_{G'}(\pi^{-1}(s),\pi^{-1}(t))$ is the distance from any copy of $s$ to any copy of $t$ in $G'$. It can be easily computed, for example, by adding to $G'$ zero-weight edges from dummy source $s_{0}$ to $\pi^{-1}(s)$ and from $\pi^{-1}(t)$ to dummy sink $t_{0}$, and then computing the distance from $s_{0}$ to $t_{0}$. 

Compared with \cite{assadi2025covering,filtser2025stochastic}, their DAG cover ${\cal D}$ only guarantees $O(\log n\log\log n)$ approximation albeit with slightly better $O(m\log^{3}n)$ total size. 

\paragraph{$n^{o(1)}$-Congestion-Preserving DAG.}

Our next result is a DAG projection with almost-linear size that can $n^{o(1)}$-approximate maximum flows. This is the first directed analog of tree cut sparsifier in undirected graphs.
\begin{thm}
\label{thm:main flow}For any directed graph $G$ with edge capacity from $\{1,2,\dots,\poly(n)\}$, there exists a DAG projection $G'$ to $G$ of size $|E(G')|=m^{1+o(1)}$ and width $n^{o(1)}$ such that, for every $S,T\subseteq V(G)$, 
\[
\maxflow_{G}(S,T)\le\maxflow_{G'}(\pi^{-1}(S),\pi^{-1}(T))\le n^{o(1)}\maxflow_{G}(S,T).
\]
Moreover, there is a randomized algorithm for computing $G'$ in $m^{1+o(1)}$ time.
\end{thm}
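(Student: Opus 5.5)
We would build $G'$ recursively from a directed expander hierarchy, in the spirit of the distance version (\Cref{thm:main distance }) but with congestion in place of length. The key fact is this. If a vertex set $X$ is $\phi$-expanding in $G$ with respect to its boundary or volume, then any demand on $X$ that respects degrees can be routed inside $G$ with congestion $\tilde O(1/\phi)$. Moreover, in the opposite direction, routing through a simple ``star'' gadget costs at most the cut value. This suggests the following gadget for an expander $X$. Create an in-copy layer $\pi^{-1}(X)^{\mathrm{in}}$, a single Steiner hub $h_X$, and an out-copy layer. Add an edge from each in-copy of $u$ to $h_X$ and from $h_X$ to each out-copy of $u$, each with capacity $\deg(u)$. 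This is acyclic, and any flow through it is routable in $G[X]$ with congestion $n^{o(1)}$ by expansion. That gives $\maxflow_{G'}\le n^{o(1)}\maxflow_G$ after scaling $G'$ down by the expansion factor. Conversely, every flow in $G[X]$ with endpoints weighted by degree passes through the gadget.

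The global construction proceeds as follows. First, compute a directed expander hierarchy (an $n^{o(1)}$-approximate one, as in the almost-linear max-flow literature) with $O(\log n)$ or $n^{o(1)}$ levels. At each level we have a partition into expanders plus a DAG order on the SCC-contracted remainder. Second, process level by level. At level $i$, the inter-cluster edges that go ``forward'' in the topological order of the contracted graph are kept as DAG edges between copies. The ``backward'' edges are boundary edges charged to the expander at the next level, whose gadget must provide a route. To keep acyclicity while allowing a flow to revisit clusters, we will take $L=n^{o(1)}$ layered copies of the whole structure (hence width $n^{o(1)}$). A flow path in $G$ is decomposed into $L$ segments, each monotone with respect to the hierarchy at some level, and the $j$th segment uses the $j$th layer. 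The edges between consecutive layers go only forward.

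For the lower bound $\maxflow_G\le\maxflow_{G'}$, we must show that every $S$–$T$ flow in $G$ maps into $G'$. The plan is to take a flow decomposition and argue that each path crosses each level's backward boundary a bounded number of times on average. Since we want a congestion guarantee only up to $n^{o(1)}$, we can afford to scale capacities of $G'$ by $n^{o(1)}$. Each time a path enters an expander $X$, it is rerouted via $h_X$, whose capacity is the boundary/volume of $X$. That capacity dominates the flow entering $X$ because the edges entering $X$ carry at most their capacity. For the upper bound, map $G'$-flow back. Edges between copies map to original edges, and each hub passage is routed by the expander-routing lemma with congestion $n^{o(1)}$. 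Each original edge is used by $n^{o(1)}$ copies across layers and levels, so the total congestion is $n^{o(1)}$.

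The running time is $m^{1+o(1)}$, using the almost-linear-time expander hierarchy construction together with the fact that each level contributes $O(m)$ edges per layer. I expect the main obstacle to be the layering argument. Flows in directed graphs can cycle through clusters many times, and a bounded number of layers must still suffice. To handle this, I would first use that an acyclic flow decomposition exists. I would then use a potential-based monotonicity from the hierarchy, so that each segment of a path strictly moves along a DAG order at its level. This requires the number of level switches to be bounded by a quantity like $2^{O(\text{depth})}=n^{o(1)}$, with depth $\sqrt{\log n}$-ish chosen to balance expansion loss against layer count.
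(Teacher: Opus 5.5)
The ingredients match the paper's: an expander hierarchy with $t\approx\sqrt{\log n}$ levels and $\phi=2^{-O(\sqrt{\log n})}$, one Steiner hub per cluster, forward inter-cluster edges lifted into the DAG, width $2^t$, and expander routing for the upper bound. But the lower bound $\maxflow_G(S,T)\le\maxflow_{G'}(\pi^{-1}(S),\pi^{-1}(T))$ depends on a step that is missing. You claim every flow path splits into $L=n^{o(1)}$ monotone segments because it uses each level's reversed edges a bounded number of times ``on average''. Nothing supplies such a bound.

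The orders in the hierarchy are fixed independently of $(S,T)$, and nothing prevents a simple flow path from alternating between earlier and later clusters many times. The only control coming from capacities is $\sum_p f(p)\cdot\#\{\text{reversed edges of }\cC_{i-1}\text{ on }p\}\le U_G(\Erev(\cC_{i-1}))$. This says nothing per unit of flow when $\maxflow_G(S,T)$ is small compared with that capacity. Rerouting through $h_X$ ``each time a path enters $X$'' only multiplies hub uses by the number of visits, so $L$ stays unbounded. Also, sizing a hub by the flow entering $X$ (or by $\deg$) is not what the hierarchy certifies. Only $\cC_i$-constrained, $\vol_{E_i}$-respecting demands are routable, and a vertex where a path enters a cluster through a forward edge may carry no $E_i$-volume at all.

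The paper never bounds crossings; it skips them. Take $C\in\cC_i$ with children $(Y_1,\dots,Y_z)$ in $\cC_{i-1}$. The construction concatenates the child projections $D_{i-1}(Y_j)$ along forward edges, which is legal since $\Erev(\cC_{i-1})\subseteq E_{\ge i}$. It then takes two copies of the result and adds a hub $w_i^C$. For every $u\in C$, there are edges from the copies of $u$ in the first copy to $w_i^C$, and from $w_i^C$ to the copies of $u$ in the second copy, each of capacity $\vol_{E_i}(u)$.

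A flow path $p$ in $G_i[C]$ is then cut into three parts:
\begin{itemize}
\item The prefix before its first $E_i$-edge $(u_1,v_1)$. It avoids $E_{\ge i}$, hence moves only forward through the $Y_j$, and is routed recursively in copy one.
\item The entire middle up to its last $E_i$-edge $(u_2,v_2)$, however often it cycles. It is replaced by the single hop $u_1\to w_i^C\to v_2$.
\item The suffix, routed recursively in copy two.
\end{itemize}
Feasibility is your capacity argument placed correctly. Paths whose first $E_i$-edge leaves $u_1$ carry total flow at most $\vol_{E_i}(u_1)$, and symmetrically at $v_2$.

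This gives width $2^t$ regardless of how often paths cycle. The hub demands are $\cC_i$-constrained and $2^{t-i}\vol_{E_i}$-respecting, so they route back with congestion $2^{t-i}/\phi$. Without this first-to-last shortcut, your layer count $L$ has no bound.
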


Observe that $\maxflow_{G'}(\pi^{-1}(S),\pi^{-1}(T))$ can be computed, for example, by adding to $G'$ infinite-capacity edges from dummy source $s_{0}$ to $\pi^{-1}(S)$ and from $\pi^{-1}(T)$ to dummy sink $t_{0}$, and then computing the maximum flow from $s_{0}$ to $t_{0}$. 

We emphasize that there are exponentially many $\maxflow_{G}(S,T)$ values that $G'$ preserves. So, it is unclear that $G'$ exists even if we allow $|E(G')|=O(n^{2})$. Note that even though our approximation is $n^{o(1)}$, the best approximation for tree cut sparsifiers in undirected graphs is $\Omega(\log n)$.

\paragraph{Almost-Optimal Size.}

The size of $m^{1+o(1)}$ in both \Cref{thm:main distance ,thm:main flow} are optimal up to $m^{o(1)}$ factor. Indeed, since reachability information of $m$-edge directed bipartite graphs can encode $\Omega(m)$ bits of information, the $\Omega(m)$ size lower bound holds even for arbitrary data structures that can answer reachability only, let alone approximating distances or flow. In contrast, in undirected graphs, tree covers or tree cut sparsifiers may have total size $o(m)$. 

\paragraph{Parallel Construction: Reductions to Approximations on DAGs.}

Not only we can construct the DAG projections from \Cref{thm:main distance ,thm:main flow} in almost-optimal $m^{1+o(1)}$ time in the sequential setting, our constructions are also almost optimal in the parallel settings (i.e., they take $m^{1+o(1)}$ work and $m^{o(1)}$ depth), assuming almost-optimal parallel algorithms for \emph{approximate} shortest paths or maximum flow \emph{on DAGs}. 

To formalize this, we say that there is an \emph{efficient parallel reduction} from problem ${\cal A}$ to problem $\cO$ if, given that $\cO$ can be solved in $w$ work and $d$ depth on a graph with $m$ edges, then $\cA$ can be solved in $w\cdot m^{o(1)}$ work and $d\cdot m^{o(1)}$ depth on a graph with $m$ edges. 
\begin{thm}
[Efficient parallel reductions]\label{thm:parallel construction intro}There are efficient parallel reductions from constructing DAG projections in \Cref{thm:main distance ,thm:main flow} to, respectively, computing $(1+1/\polylog(n))$-approximate single-source shortest path and $n^{o(1)}$-approximate maximum flow on a DAG whose topological order is given.
\end{thm}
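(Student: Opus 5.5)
The plan is to exhibit, for each of \Cref{thm:main distance ,thm:main flow}, a construction whose every step is either local work with small depth or a call to the DAG oracle. The oracle must always be called on a DAG whose topological order is already known, and the oracle calls must be arranged in $m^{o(1)}$ sequential rounds of total size $m^{1+o(1)}$. The natural skeleton is a recursive, layered construction. Decompose $G$ into pieces at scales $\ell=0,1,\dots,L$ with $L=O(\log n)$ or $o(\log n)$ levels: low-diameter decompositions for distances, expander-type decompositions for flow. Build the DAG projection bottom-up.

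At each level, a strongly connected piece is replaced by a width-$n^{o(1)}$ layered DAG. This DAG consists of $k=n^{o(1)}$ copies of the piece's vertex set, arranged in sequential layers. Edges go forward between consecutive layers, or within a layer through the already-built DAG of lower-level sub-pieces. The topological order of the new DAG is then explicit: sort first by layer index, then by the inductively known order inside each sub-piece's DAG. This invariant, that every intermediate graph we ever hand to the oracle carries a known topological order, is what I will maintain throughout.

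The key steps, in order, are as follows.

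\emph{(i)} Replace every sequential primitive of the $m^{1+o(1)}$-time construction that needs distances or flows in $G$ itself by the same computation on the partial DAG projection built so far. The point is that by induction this partial projection already $(1+\eps)$-approximates distances, or $n^{o(1)}$-approximates flows, within the lower-level pieces. Concretely, low-diameter decomposition needs only approximate ball growing, i.e.\ approximate SSSP from a super-source. Expander decomposition and the flow embeddings it uses reduce to $n^{o(1)}$-approximate max flow via cut-matching. In both cases I will run the oracle on the current DAG with dummy source and sink edges attached to $\pi^{-1}(\cdot)$, which keeps the graph acyclic and the order known.

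\emph{(ii)} Control error accumulation over the levels. For distances, each level contributes a $(1+\eps')$ factor, so choosing $\eps'=\eps/L$ with $L$ levels keeps $\eps'\ge 1/\polylog(n)$. For flow, the per-level loss is $n^{o(1)}$ and the number of levels is $o(\log n/\log\log n)$-ish, so the product stays $n^{o(1)}$.

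\emph{(iii)} Bound work and depth. Each level makes $m^{o(1)}$ oracle calls on graphs of total size $m^{1+o(1)}$. The calls at one level can run in parallel across pieces, since pieces are disjoint DAGs and can be combined into one DAG instance. So depth multiplies by $L\cdot m^{o(1)}=m^{o(1)}$ and work by $m^{o(1)}$.

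The main obstacle is step (i). The sequential construction presumably uses exact or approximate primitives on the \emph{cyclic} graph $G$. We must argue that approximate answers obtained on the DAG projection suffice: distances are only upward-approximated, and flows are routed in $G'$ and must be projected back to $G$ via $\pi$ with bounded congestion. The decomposition guarantees must be robust to this slack. I would first attempt to verify that the decompositions need only \emph{one-sided} guarantees, namely diameter upper bounds and expansion certificates, which survive approximation losses. Where a guarantee must be certified in $G$, I would map the flow paths through $\pi$ (a homomorphism on non-Steiner edges) and charge the congestion to the width $n^{o(1)}$.
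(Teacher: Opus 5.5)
\textbf{There is a genuine gap in step (i).} It is not the robustness issue you flag; the problem is circularity.

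The primitives the construction needs are distance and flow computations on the whole cyclic graph $G$, not inside lower-level pieces:
\begin{itemize}
\item the LDD of $G$ at scale $d$ needs $s_{\mathrm{LDD}} d$-length SSSP in $G$ (\Cref{lem:ldd});
\item the shortest-path trees rooted in the large clusters span all of $G$;
\item the first layer of the expander hierarchy ($F=E$) needs approximate max flow, with a cut, on $G$ to run cut--matching.
\end{itemize}
The pieces are \emph{outputs} of these primitives. So when the primitives are needed there is no ``partial DAG projection built so far'' to query. Even once it exists, a projection accurate only within pieces says nothing about the piece-crossing paths that ball growing or the matching flows use. Building bottom-up does not break this cycle.

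Your claim that LDD needs only approximate ball growing is also unjustified. A $(1+\eps)$ error on distances of order $d$ is an additive error $\eps d\gg\ell(e)$, which destroys the $s_{\mathrm{LDD}}\,\ell(e)/d$ cut-probability bound. The paper instead feeds the LDD \emph{exact} length-bounded SSSP, obtained by boosting a $(1+o(1/\log n))$-approximation via \cite{RozhonHMGZ23} (\Cref{lem:boostingSSSP}). This is exactly where $(1+1/\mathrm{polylog}\,n)$ accuracy is indispensable.

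\textbf{The missing idea is a ``spiral'' recursion on the \emph{same} graph $G$.} You relax the guarantee so that it has a trivial base case, and amplify by concatenating copies.
\begin{itemize}
\item \emph{Distances:} the relaxation is the length bound. An $(h/\lambda)$-length projection of $G$, concatenated $\lambda$ times and handed to the DAG oracle, gives $h$-length approximate SSSP on $G$ (\Cref{lem:lengthboundapxSSSPtoDAGs}). This directly yields the trees for large clusters and, after boosting, the exact SSSP the LDD needs. The base case $h=O(1)$ is trivial by stacking copies of $G$.
\item \emph{Flows:} the relaxation is an additive error $\delta$. A $(\kappa',\delta')$-projection gives $(\alpha\kappa',\alpha\delta')$-MFMC on $G$ (\Cref{lem:DAGembeddingtoMaxFlowadditiveerror}). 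That builds the hierarchy until the leftover top layer has capacity about $\delta'2^{O(\sqrt{\log n})}$. Replicating that layer $\sigma$ times with capacities scaled by $1/\sigma$ then drives the error down to about $\delta'/\sigma$. The base case $\delta\ge 2U_G(E)$ is trivial via a two-copy hub DAG.
\end{itemize}

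\textbf{Your accounting in (ii)--(iii) is also too loose.} Width multiplies across levels, so $n^{o(1)}$ copies per level over $\Theta(\log n)$ levels is not $n^{o(1)}$. The paper uses the large/small cluster split to keep the recursion on vertex count to $\log^{c}n$ levels for a constant $c<1$. Likewise, a product of $n^{o(1)}$ losses over many levels stays $n^{o(1)}$ only if the per-level loss $2^{O(\sqrt{\log n})}\alpha$ is balanced against the per-level shrinkage of $\delta$. The paper's choice of $\sigma$ is what achieves this balance.
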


\subsection{New Landscape of Parallel Shortest Paths and Maximum Flow}

Via the efficient parallel reduction in \Cref{thm:parallel construction intro}, we reduce major open problems of finding almost-optimal parallel algorithms for \emph{exact} single-source shortest paths (SSSP) and maximum flow to easier settings, leading to a clean landscape of both problems. See \Cref{fig:landscape}.

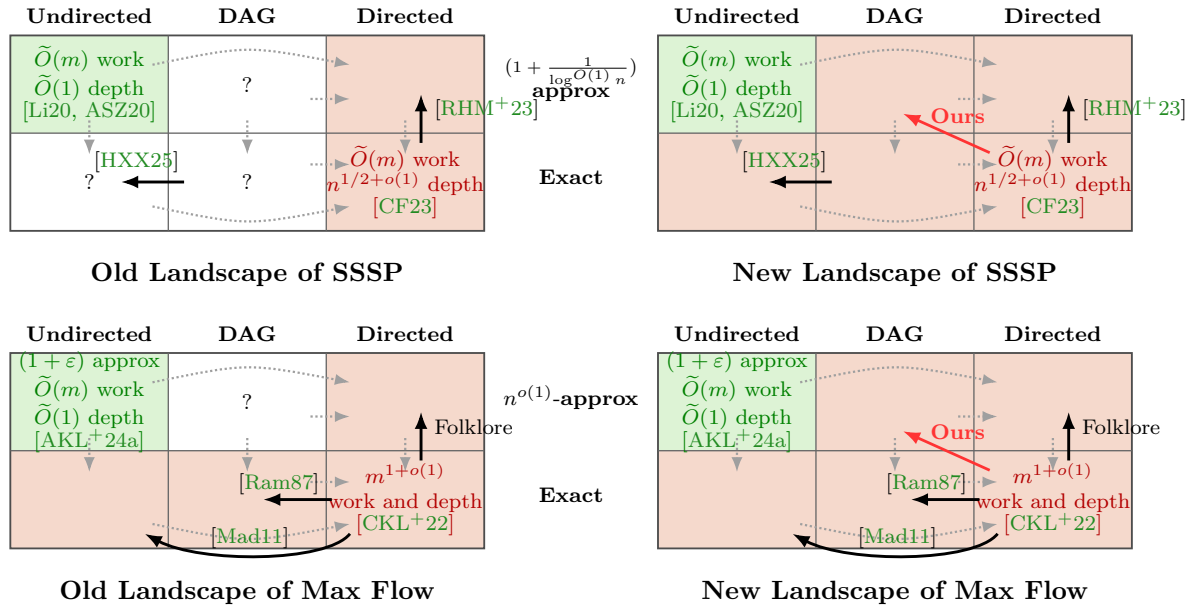
\begin{figure}

	\begin{center}
		\begin{tikzpicture}[x=\CellW cm, y=\CellH cm]
			\path[use as bounding box] (-0.25,-\RowSep-1.2*\TitleDrop) rectangle (6+\GridSep+0.35,2.6);

			\OneGrid{0}{%
				\fill[cellgreen] (0,1) rectangle (1,2);
				\fill[cellpeach] (2,0) rectangle (3,2);
			}{%
				\node[celltext,text=LiText]  at (.5,1.5) {$\widetilde{O}(m)$ work\\ $\widetilde{O}(1)$ depth\\[-1pt] {\scriptsize\cite{li2020faster,andoni2020parallel}}};
				\node[celltext] at (1.5,1.5) {?};
				\node[celltext] at (.5,0.5)  {?};
				\node[celltext] at (1.5,0.5) {?};
				\node[celltext,text=RozText] at (2.5,0.5) {$\widetilde{O}(m)$ work\\ $n^{1/2+o(1)}$ depth\\[-1pt] {\scriptsize\cite{cao2023parallel}}};
			}{%
				\draw[strong] (\HXXRightX,\HXXY) -- (\HXXLeftX,\HXXY);
				\node[font=\scriptsize, anchor=south] at (\HXXLabelX,\HXXLabelY) {\cite{HoppenworthXX25}};
				\draw[strong] (\RozX,\RozYBot) -- (\RozX,\RozYTop);
				\node[font=\scriptsize, anchor=west] at (\RozLabelX,\RozLabelY) {\cite{RozhonHMGZ23}};
			}{Old Landscape of SSSP}
			
			\OneGrid{3+\GridSep}{%
				\fill[cellpeach] (0,0) rectangle (3,2);
				\fill[cellgreen] (0,1) rectangle (1,2);
			}{%
				\node[celltext,text=LiText]  at (.5,1.5) {$\widetilde{O}(m)$ work\\ $\widetilde{O}(1)$ depth\\[-1pt] {\scriptsize\cite{li2020faster,andoni2020parallel}}};
				\node[celltext,text=RozText] at (2.5,0.5) {$\widetilde{O}(m)$ work\\ $n^{1/2+o(1)}$ depth\\[-1pt] {\scriptsize\cite{cao2023parallel}}};
			}{%
				\draw[strong] (\HXXRightX,\HXXY) -- (\HXXLeftX,\HXXY);
				\node[font=\scriptsize, anchor=south] at (\HXXLabelX,\HXXLabelY) {\cite{HoppenworthXX25}};
				\draw[line width=1.05pt, draw=red!80, -{Latex[length=2.3mm]}] (2.10,0.8) -- (1.55,1.20);
				\node[font=\scriptsize\bfseries, text=red!80, anchor=south west] at (1.66,1.00) {Ours};
				\draw[strong] (\RozX,\RozYBot) -- (\RozX,\RozYTop);
				\node[font=\scriptsize, anchor=west] at (\RozLabelX,\RozLabelY) {\cite{RozhonHMGZ23}};
			}{New Landscape of SSSP}
			
			\pgfmathsetmacro{\LabelMidX}{3 + 0.5*\GridSep + \LabelShift}
			\node[midlabel] at (\LabelMidX,1.55)
			{\tiny{$\bigl(1+\tfrac{1}{\log^{O(1)} n}\bigr)$}\\\textbf{approx}};
			\node[midlabel] at (\LabelMidX,0.55) {\textbf{Exact}};
			
			\begin{scope}[shift={(0,-\RowSep)}]
				
				\OneGrid{0}{%
					\fill[cellgreen] (0,1) rectangle (1,2);
					\fill[cellpeach] (0,0) rectangle (3,1);   %
					\fill[cellpeach] (2,1) rectangle (3,2);   %
				}{%
					\node[celltext,text=LiText] at (.5,1.5)
					{$(1+\varepsilon)$ approx\\ $\widetilde{O}(m)$ work\\ $\widetilde{O}(1)$ depth\\[-1pt] {\scriptsize\cite{AgarwalKLPWWZ24}}};
					\node[celltext] at (1.5,1.5) {?};
					\node[celltext,text=RozText] at (2.5,0.5)
					{$m^{1+o(1)}$\\ work and depth\\[-1pt] {\scriptsize\cite{ChenKLPGS22}}};
				}{%
					\draw[strong] (2.6,0.9) -- (2.6,1.4);
					\node[font=\scriptsize, anchor=west] at (2.62,1.25) {Folklore};
					\draw[strong] (\VijXRight,\VijY) -- (\VijXLeft,\VijY);
					\node[font=\scriptsize, anchor=south] at (\VijLabelX,\VijLabelY) {\cite{ramachandran1987complexity}};
					\draw[strong] (\MadryEndX,\MadryEndY)
					.. controls (\MadryCtrlBX,\MadryCtrlBY) and (\MadryCtrlAX,\MadryCtrlAY) ..
					(\MadryStartX,\MadryStartY);
					\node[font=\scriptsize, anchor=south west] at (\MadryLabelX,\MadryLabelY) {\cite{madry2011graphs}};
				}{Old Landscape of Max Flow}
				
				\OneGrid{3+\GridSep}{%
					\fill[cellpeach] (0,0) rectangle (3,2);
					\fill[cellgreen] (0,1) rectangle (1,2);
				}{%
					\node[celltext,text=LiText] at (.5,1.5)
					{$(1+\varepsilon)$ approx\\ $\widetilde{O}(m)$ work\\ $\widetilde{O}(1)$ depth\\[-1pt] {\scriptsize\cite{AgarwalKLPWWZ24}}};
					\node[celltext,text=RozText] at (2.5,0.5)
					{$m^{1+o(1)}$\\ work and depth\\[-1pt] {\scriptsize\cite{ChenKLPGS22}}};
				}{%
					\draw[strong] (2.6,0.9) -- (2.6,1.4);
					\node[font=\scriptsize, anchor=west] at (2.62,1.25) {Folklore};
					\draw[line width=1.05pt, draw=red!80, -{Latex[length=2.3mm]}]
					(2.10,0.80) -- (1.55,1.20);
					\node[font=\scriptsize\bfseries, text=red!80, anchor=south west]
					at (1.66,1.02) {Ours};
					\draw[strong] (\VijXRight,\VijY) -- (\VijXLeft,\VijY);
					\node[font=\scriptsize, anchor=south] at (\VijLabelX,\VijLabelY) {\cite{ramachandran1987complexity}};
					\draw[strong] (\MadryEndX,\MadryEndY)
					.. controls (\MadryCtrlBX,\MadryCtrlBY) and (\MadryCtrlAX,\MadryCtrlAY) ..
					(\MadryStartX,\MadryStartY);
					\node[font=\scriptsize, anchor=south west] at (\MadryLabelX,\MadryLabelY) {\cite{madry2011graphs}};
				}{New Landscape of Max Flow}
				
				\pgfmathsetmacro{\LabelMidXTwo}{3 + 0.5*\GridSep + \LabelShift}
				\node[midlabel] at (\LabelMidXTwo,1.55)
				{$n^{o(1)}$-\textbf{approx}};
				\node[midlabel] at (\LabelMidXTwo,0.55) {\textbf{Exact}};
				
			\end{scope}
		\end{tikzpicture}
	\end{center}

\caption{Old and new landscapes of parallel SSSP and maximum flow algorithms. The green area highlights the settings where near-optimal parallel algorithms are known. The red area highlights the settings as hard as the exact directed setting. The solid arrows represent non-trivial efficient parallel reductions, while the dotted arrows represent trivial ones. \label{fig:landscape}}

\end{figure}

More precisely, using our DAG projections, we obtain efficient parallel reductions
\begin{enumerate}
\item From exact directed SSSP to $(1+1/\polylog(n))$-approximation on DAGs, 
\item From exact directed SSSP to exact \emph{undirected} ones (using \cite{HoppenworthXX25}), and
\item From exact directed maximum flow to $n^{o(1)}$-approximation on DAGs. This reduction 
was not known even in the classical \emph{sequential} setting.\footnote{It  confirms the informal statement in \cite{bernstein2024maximum} about computing maximum flow, saying that  ``the main bottleneck seems to be a fast $n^{o(1)}$-approximation for DAGs''.}
\end{enumerate}
Consider algorithms in the six settings based on the following combinations: (1) exact or approximate, and (2) on directed graphs, DAGs, or undirected graphs. Our reductions categorize these six settings for both SSSP and maximum flow into only two regimes. The \emph{easy} regime consists of only the approximate undirected setting, and the \emph{hard} regime contains the remaining five settings. These five settings are equivalent in the sense they all admit algorithms with the same work and depth up to subpolynomial factor. 

This significantly cleans up the landscape of both problems. With this new landscape, to improve the state-of-the-art of either exact directed SSSP or maximum flow, it suffices to improve either (1) approximation algorithms on DAGs, or (2) exact algorithms on undirected graphs. 

We note that we crucially exploit our approximation guarantees. If they were worse, our reductions between exact and approximate settings above would not work with current techniques. For example, they would fail if the distance-preserving DAG projection only guaranteed $1.01$-approximation.\footnote{It fails because the reduction in \Cref{lem:boostingSSSP} by \cite{RozhonHMGZ23} requires $(1+o(1/\log n))$-approximate oracle.} Thus, the previous results with $\Omega(\log n)$ approximate \cite{assadi2025covering,filtser2025stochastic} indeed do not work. Similarly, if the congestion-preserving DAG projection guaranteed $n^{0.1}$-approximation, then the reduction would have been too inefficient.

\subsection{Transferring Results from DAGs to Directed Graphs}

Lastly, our DAG projections immediately transfer algorithms on DAGs, usually simpler and more efficient, to directed graphs. Below, we list problems that we improve the state-of-the-art or obtain simpler algorithms via DAG projections.

\subsubsection*{Applications of Distance-Preserving DAG Projections.}

\paragraph{Distance preservers: improved.}

Given a directed graph $G=(V,E)$ with edge lengths and a set of demand pairs $P\subseteq V\times V$, a \emph{$(1+\eps)$-approximate distance preserver} ($(1+\eps)$-DP) is a subgraph $H\subseteq G$ such that for every $(s,t)\in P$, we have $\dist_{H}(s,t)\le(1+\eps)\cdot\dist_{G}(s,t)$. A long line of work (e.g.~\cite{CoppersmithE05,bodwin2017linear,HoppenworthXX25}) have been trying to find $(1+\eps)$-DPs of smallest size $|E(H)|$. 

Currently, the best bounds for DAGs are strictly better than the ones for general directed graphs. To simplify the discussion, we focus on the $O(n)$-size regime. There exists a $(1+\eps)$-DP for DAGs of size $O(n+p\sqrt{n})$, which gives $O(n)$ as long as $p=O(\sqrt{n})$ \cite{CoppersmithE05,HoppenworthXX25}.\footnote{This holds even in the exact setting by applying Lemma 4.6 of \cite{HoppenworthXX25} to the exact distance preservers on undirected graphs by \cite{CoppersmithE05}.} However, the best bound for general directed graphs achieves $O(n)$ size only when $p=O(n^{1/3})$ \cite{bodwin2017linear}.\footnote{The bound of $O(n+n^{2/3}p)$ by Bodwin \cite{bodwin2017linear} works even in the exact setting.} Our distance-preserving DAG projection \Cref{thm:main distance } closes this gap up to $n^{o(1)}$ factor.
\begin{cor}
\label{cor:distancepreserver} For $\eps\ge1/\polylog(n)$, there exists a $(1+\eps)$-approximate distance preserver for any $n$-node directed graph with polynomial edge lengths and $p$ demand pairs of size $(n+p\sqrt{n})\cdot n^{o(1)}$, which is $n^{1+o(1)}$ for $p=O(\sqrt{n})$.
\end{cor}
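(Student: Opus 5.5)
Apply \Cref{thm:main distance } to $G$ (with $\eps$ replaced by $\eps/3$) to obtain a DAG projection $G'$ with $|E(G')|=m^{1+o(1)}$ and width $n^{o(1)}$. The plan is to transfer the DAG distance preserver to $G'$ and then project the result back to $G$.

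\textbf{Step 1: project the demands.} For each demand pair $(s,t)\in P$, choose copies $s'\in\pi^{-1}(s)$ and $t'\in\pi^{-1}(t)$ that realize $\dist_{G'}(\pi^{-1}(s),\pi^{-1}(t))\le(1+\eps/3)\dist_G(s,t)$. This gives a demand set $P'$ on $G'$ with $|P'|=p$.

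\textbf{Step 2: apply the DAG preserver.} Apply the known $(1+\eps/3)$-distance preserver for DAGs \cite{CoppersmithE05,HoppenworthXX25} to $(G',P')$. This yields $H'\subseteq G'$ of size $O(|V(G')|+p\sqrt{|V(G')|})$ with $\dist_{H'}(s',t')\le(1+\eps/3)\dist_{G'}(s',t')$.

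\textbf{Step 3: project back.} Define $H\subseteq G$ as the set of edges $(\pi(u'),\pi(v'))$ over all $(u',v')\in E(H')$, and bound the distances in $H$.

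\textbf{Main obstacle: vertex count and the projection property.} There are two issues.

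First, $|V(G')|$ must be $n^{1+o(1)}$, not $m^{1+o(1)}$. Width $n^{o(1)}$ gives at most $n^{1+o(1)}$ copies, but Steiner vertices are not controlled by the width. Two fixes are possible. One is to check that the construction of \Cref{thm:main distance } uses few Steiner vertices. The other is to eliminate each Steiner vertex by shortcutting: replace paths through Steiner vertices with direct edges between copies. This preserves acyclicity and distances, so only vertex count matters for the DAG preserver bound.

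Second, Step 3 needs the edges of $H'$ to correspond to edges or paths of $G$. The lower bound $\dist_G\le\dist_{G'}$ alone does not give this. I would therefore use the stronger (non-partial) projection property mentioned in the footnote, namely $\pi$ being a homomorphism. If instead $G'$ contains Steiner vertices or shortcut edges, each such edge $(u',v')$ must be realized by a path in $G$ of length at most its weight, and $H$ must take the union of these paths.

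Taking the union of paths breaks the size bound. To handle this, I would first apply the preserver to a version of $G'$ in which every edge is a genuine projected edge of $G$, with no Steiner vertices. Each edge of $H'$ then maps to a single edge of $G$, so $|E(H)|\le|E(H')|=O(n^{1+o(1)}+p\sqrt{n^{1+o(1)}})=(n+p\sqrt n)n^{o(1)}$.

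\textbf{Distance bound.} Every $s'\to t'$ path in $H'$ maps to an $s\to t$ walk in $H$ of equal length. Hence
\[
\dist_H(s,t)\le(1+\eps/3)^2\dist_G(s,t)\le(1+\eps)\dist_G(s,t)
\]
for small $\eps$, which proves the corollary.
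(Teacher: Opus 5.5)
Your argument is correct and is essentially the paper's proof via \Cref{thm:PreservertoDAG}. Like the paper, it relies on the construction being a genuine weight-preserving projection (\Cref{def:inherited}, \Cref{lem:simpledistanceDAGemb}) of width $n^{o(1)}$. That gives no Steiner vertices, $n^{1+o(1)}$ vertices, and edges mapping back one-to-one, so your shortcutting detour is unnecessary.

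The only difference is how you handle the demands. You fix a minimizing copy pair $(s',t')$ per demand, whereas the paper adds zero-length hubs $\In{v},\Out{v}$ and uses demands $(\In{s},\Out{t})$. The hubs avoid having to identify those copies and keep the reduction algorithmic; for this existential corollary your choice works equally well.
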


\Cref{cor:distancepreserver} exploits that the width in \Cref{thm:main distance } is $n^{o(1)}$ so that we do not blow up the number of vertices, and also that $G'$ is actually a \emph{projection} (see \Cref{def:inherited}), not only a partial projection.

\paragraph{Hop-set: simplified.}

Given a directed graph $G=(V,E)$ with edge lengths, a \emph{$(\beta,\eps)$-hop-set} of $G$ is a set of additional edges $H\subseteq V\times V$ added to the graph so that for every $s,t\in V$ we have (i) $\dist_{G}(s,t)\le\dist_{G\cup H}(s,t)$, and (2) $\dist_{G\cup H}^{(\beta)}(s,t)\le(1+\eps)\cdot\dist_{G}(s,t)$ where $\dist_{G\cup H}^{(\beta)}(s,t)$ denotes the lengths of the shortest path from $s$ to $t$ \emph{using at most $\beta$ edges}.

The breakthrough result of Kogan and Parter \cite{kogan2022new} showed the existence of linear-sized $(O(n^{1/3}),\eps)$-hop-sets for DAGs, but only linear-sized $(O(n^{2/5}),\eps)$-hop-sets for general graphs. With significant effort, Bernstein and Wein \cite{bernstein2023closing} closed this gap by showing linear-sized $(O(n^{1/3}),\eps)$-hop-sets for any graph.

\Cref{thm:main distance } can bypass this significant effort and close the gap up to $n^{o(1)}$ factor in a black-box way. Indeed, by applying the construction of \cite{kogan2022new} for DAGs on top of our distance-preserving DAG projection, we immediately obtain $n^{1+o(1)}$-sized $(O(n^{1/3}),\eps)$-hop-sets.

\paragraph{Potential applications.}

There are gaps between DAGs and general directed graphs in several variants of \emph{min-distance} problems \cite{dalirrooyfard2021approximation,chechik2022constant}. Also, the recent approximate restricted SSSP algorithm \cite{ashvinkumar2025faster} is much simpler on DAGs. Thus, DAG projections could potentially help close these gaps or simplify the algorithms.

\subsubsection*{Applications of Congestion-Preserving DAG Projections.}

\paragraph{Combinatorial Max flow: simplified.}

The  combinatorial maximum flow algorithms by \cite{bernstein2024maximum,BBLST25} runs in $\tilde{O}(n^{2})$ time, near-optimal on dense graphs. They first showed a very simple push-relabel algorithm that gives $O(1)$-approximation on DAGs and then generalized to general graphs via expander hierarchies in a white-box way. Our reduction from exact max flow to $n^{o(1)}$-approximation on DAGs (via \Cref{thm:main flow}), directly generalizes their algorithms for DAGs to general graphs in a black-box way with an additional $n^{o(1)}$ factor in time.

\paragraph{Bounded Single-Source Max flow: improved. }

Let $G$ be a graph with positive integral edge capacities and source $s\in V(G)$. Let $\maxflow_{G}^{k}(s,t)=\min\{k,\maxflow_{G}(s,t)\}$ denote the \emph{$k$-bounded maximum flow} value from $s$ to $t$. If $G$ is a DAG, the algorithm based on network coding by \cite{cheung2013graph} can compute $\maxflow_{G}^{k}(s,t)$ for all $t\in V(G)$ in $O(k^{\omega-1}m)$ total time, while the best known algorithm on general graphs is to trivially compute $\maxflow_{G}^{k}(s,t)$ for every $t$ using $\Omega(mn)$ time. 

Using our congestion-preserving DAG projection, we obtain a non-trivial algorithm that beats the $mn$ bound when $k\in(n^{\Omega(1)},n^{1/\omega-\Omega(1)})$.
\begin{thm}
There is a randomized algorithm that, given a directed graph with positive integral edge capacities, a source $s$, and an integer $k$, computes $n^{o(1)}$-approximation of $\maxflow_{G}^{k}(s,t)$ for all vertices $t$ in $k^{\omega}m^{1+o(1)}$ time.
\end{thm}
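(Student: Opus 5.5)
The plan is to build the congestion-preserving DAG projection $G'$ of \Cref{thm:main flow} once, in $m^{1+o(1)}$ time, and then run the network-coding algorithm of \cite{cheung2013graph} on $G'$ with source set $\pi^{-1}(s)$. First, contract all copies of $s$ into a single super-source $s_0$. Concretely, add $s_0$ with infinite-capacity edges (capacity $k$ suffices) to every vertex of $\pi^{-1}(s)$; since $s_0$ is a new vertex with only out-edges, $G'$ stays a DAG. Also cap every edge capacity at $k$. This does not change any $k$-bounded flow value, and it makes the instance compatible with the capacity assumptions of \cite{cheung2013graph}, e.g.\ by replacing an edge of capacity $c\le k$ with $c$ parallel unit edges. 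That blows up the size by a factor of at most $k$, which the $k^{\omega}m^{1+o(1)}$ budget absorbs.

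For each copy $t'\in V(G')$, run the \cite{cheung2013graph} algorithm on the resulting DAG $G''$ with $|E(G'')|\le k\,m^{1+o(1)}$. In $O(k^{\omega-1}|E(G'')|)=k^{\omega}m^{1+o(1)}$ time it returns $\min\{k,\maxflow_{G''}(s_0,t')\}$ for all $t'$. The value we want for $t\in V(G)$ is $\min\{k,\maxflow_{G'}(\pi^{-1}(s),\pi^{-1}(t))\}$, and this is a flow to a \emph{set} of sinks. A single-source computation gives only per-copy values $\min\{k,\maxflow(s_0,t')\}$. The width is $n^{o(1)}$, so I would output
\[
\min\Bigl\{k,\sum_{t'\in\pi^{-1}(t)}\min\{k,\maxflow_{G''}(s_0,t')\}\Bigr\}.
\]
This lies between $\max_{t'}$ and $|\pi^{-1}(t)|\cdot\max_{t'}$, and hence within an $n^{o(1)}$ factor of $\min\{k,\maxflow_{G'}(\pi^{-1}(s),\pi^{-1}(t))\}$. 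A subtlety is that the max over copies lower-bounds the set flow but the sum may overshoot it. Overshoot is still at most a width factor above the true set value, since each single-copy flow is at most the set flow. Alternatively, I would check whether the construction already makes $\pi^{-1}(t)$ funnel into a single sink copy.

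The approximation guarantee then follows from \Cref{thm:main flow}: $\maxflow_G(s,t)\le\maxflow_{G'}(\pi^{-1}(s),\pi^{-1}(t))\le n^{o(1)}\maxflow_G(s,t)$. Taking $\min\{k,\cdot\}$ preserves an $\alpha$-approximation, because $\min\{k,x\}\le\min\{k,\alpha y\}\le\alpha\min\{k,y\}$ for $\alpha\ge1$. Composing this with the width-factor loss above gives $n^{o(1)}$ overall. The total running time is $m^{1+o(1)}$ for the construction plus $k^{\omega}m^{1+o(1)}$ for network coding, plus $O(|V(G')|)$ for aggregation.

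The main obstacle I expect is matching the precise input model of \cite{cheung2013graph} to the $m^{1+o(1)}$-size DAG: Steiner vertices, possibly large capacities, and multiple copies of $s$. I would handle these as above with the super-source and capacity capping at $k$. If their algorithm natively supports capacities via $k$-dimensional coding vectors on edges, I would avoid the parallel-edge blow-up and get the tighter $k^{\omega-1}m^{1+o(1)}$ bound.
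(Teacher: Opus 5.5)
Your plan follows the paper's overall route: build the congestion DAG projection, attach a super-source to $\pi^{-1}(s)$, cap capacities at $k$, expand each edge into parallel unit edges, and make one call to the single-source $k$-bounded connectivity algorithm of \cite{cheung2013graph}. There is one step that does not go through as written, and one difference in how you handle the sinks.

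\textbf{The gap: integrality of $G'$.} Your step ``replace an edge of capacity $c\le k$ by $c$ parallel unit edges'' assumes $G'$ has integral capacities. \Cref{thm:main flow} does not guarantee this. In fact the projection behind it is fractional: it is produced by \Cref{lem:expanderdecompositiontoDAGembedding} with $\sigma=2^{2\log^{0.7}n}\alpha^{\eta}$, and that construction scales every capacity at the top layer by $1/\sigma$.

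The paper avoids this by not using \Cref{thm:main flow} as a black box. It calls \Cref{lem:expanderdecompositiontoDAGembedding} directly with $\sigma=1$, on an exact expander hierarchy with $E_t=\emptyset$. That hierarchy is obtained from \cite{BBLST25} using the almost-linear max flow of \cite{ChenKLPGS22}. This choice gives zero additive error and $\kappa=O(2^t/\phi)=n^{o(1)}$. By the last clause of that lemma, it also gives integral capacities, which is exactly what the parallel-edge expansion needs.

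You could instead repair your version by scaling all capacities by $\sigma$, taking $\sigma$ integral so the capacities become integers, and using $k\sigma$ as the bound. This costs a $\sigma^{\omega}=n^{o(1)}$ factor in running time. Either fix requires looking inside the construction; the black-box statement alone is not enough.

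\textbf{The sinks: a valid alternative.} The paper adds, for each $t\in V\setminus\{s\}$, a vertex $t^*$ with capacity-$k$ edges from every vertex of $\pi^{-1}(t)$. Since $t^*$ has only in-edges, the graph stays a DAG. The same single run from $s^*$ then returns $\min\{k,\maxflow_{D}(\pi^{-1}(s),\pi^{-1}(t))\}$ exactly. The approximation then follows from the flow and cut projection algorithms, so the only loss is the congestion factor $\kappa$.

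Your summation over copies also works. Write $F$ for the set value and $w$ for the width. The set flow is at most the sum of the per-copy flows, and each per-copy flow is at most the set flow, so your output lies in $[F,wF]$. This costs an extra factor of $w=n^{o(1)}$, which is affordable, and depends on the width bound. The paper's super-sink route avoids both.
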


\paragraph{Potential application.}

Vertex cut sparsifiers for $k$ terminals with optimal $\Theta(k^{2})$ size are only known on DAGs \cite{he2021near}, but the best construction on general graphs still requires $O(k^{3})$ size \cite{kratsch2012representative}. Can we use congestion-preserving DAG projection help bridging this gap, even with approximation?

\subsection{Related Work}
There has also been a line of work on preserving distances in \emph{undirected} graphs by tree-like graphs with multiple copies of the original vertices.
This includes the \emph{multi-embeddings} of Bartal and Mendel \cite{bartal2003multi}, the \emph{clan embeddings} of Filtser and Le \cite{filtser2021clan}, and \emph{tree embeddings with copies} \cite{haepler2022adaptive}. 
However, these works incur at least logarithmic distortion and are tailored to undirected graphs.

In contrast, our distance DAG projections apply to directed graphs and achieve $(1+\epsilon)$-approximation with almost-linear size. 
Moreover, for our congestion DAG projection, we are not aware of prior copy-based constructions aimed at preserving congestion or flow.

\subsection{Organization}

In \Cref{sec:overview}, we will sketch the \emph{existence} of our distance DAG projections and congestion DAG projections, without considering the time complexity. In \Cref{sec:prelim}, we give a full definition of all the basic notations used in the paper. In \Cref{sec:DAGprojections}, we give a formal definition of all DAG projection-related concepts, and show their applications assuming efficient construction. In \Cref{sec:distanceembedding}, we will show an efficient algorithm for constructing distance DAG projections assuming SSSP oracles on DAGs. In \Cref{sec:congestionDAGembedding}, we will show an efficient algorithm for constructing congestion DAG projections assuming max flow oracles on DAGs.

\section{Overview}\label{sec:overview}

In this section, we sketch the existence of our distance and congestion DAG projections, ignoring running time and focusing only on the idea and existence.

\subsection{Distance DAG Projections}

Our construction is inspired by the  $(n^{o(1)},\eps)$-hop-sets of \cite{Cohen00} in \emph{undirected} graphs.

Given a directed graph $G=(V,E)$ with edge lengths $\ell_G:E\to \bbN_{\ge 1}$, the goal is to construct a $(1+\eps)$-distance-preserving DAG projection $D$ onto $G$.
For convenience, when we say that $D$ \emph{preserves the $(s,t)$-distance} for $s,t\in V(G)$, we mean: there exist $s',t'\in V(D)$ with $\pi(s')=s$ and $\pi(t')=t$ such that
\[
\dist_G(s,t)\le\dist_D(\pi^{-1}(s),\pi^{-1}(t))\;\le\;(1+\eps)\cdot \dist_G(s,t).
\]
We will use the following directed low-diameter decomposition (LDD) as a subroutine.

\begin{lemma}[\cite{BringmannCF23,BernsteinNW25}]\label{lem:LDDexistence}
Let $G=(V,E)$ be a directed graph with edge lengths and let $d$ be a positive integer. There exists a random set of edges $\Erem$ (a \emph{low-diameter decomposition}) satisfying:
\begin{itemize}
    \item each SCC of $G-\Erem$ has diameter at most $d$,
    \item for every $e\in E$, $\Pr[e\in \Erem]\le \SL{}\cdot \ell_G(e)/d$,
\end{itemize}
where $\SL{}=\tO{1}$.
\end{lemma}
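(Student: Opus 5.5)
We plan to prove \Cref{lem:LDDexistence} by recursive ball carving in the style of Bernstein, Nanongkai and Wulff-Nilsen. This gives $\SL{}=O(\log^2 n)$, which is $\tO{1}$ as required. The sharper $O(\log n\log\log n)$ bound of \cite{BringmannCF23} is not needed. Write $B^{+}(v,r)$ for the set of vertices at distance at most $r$ from $v$ in the current graph, and $B^{-}(v,r)$ for those at distance at most $r$ to $v$. The procedure works on a current vertex set $U$, starting from $U=V$, and uses the induced graph $G[U]$.

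\textbf{Dense case.} Suppose every $v\in U$ has $|B^{+}(v,d/4)|>|U|/2$ and $|B^{-}(v,d/4)|>|U|/2$. Then for any $u,v\in U$ the out-ball of $u$ and the in-ball of $v$ share a vertex, so $\dist(u,v)\le d/2$. Hence $G[U]$ is strongly connected with diameter at most $d$, and we stop.

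\textbf{Carving case.} Otherwise some $v$ has a small ball, say $|B^{+}(v,d/4)|\le|U|/2$; the in-ball case is symmetric with reversed edges.
\begin{enumerate}
\item Sample a radius $r$ from an exponential distribution with rate $\lambda=c\log n/d$, truncated at $d/4$.
\item Put every edge leaving $B^{+}(v,r)$ into $\Erem$.
\item Recurse separately on $B^{+}(v,r)$ and on $U\setminus B^{+}(v,r)$.
\end{enumerate}
For correctness, any cycle that meets both $B^{+}(v,r)$ and its complement must use an edge leaving the ball, and all such edges were deleted. So every SCC of $G-\Erem$ lies inside a single piece of the final recursion, and every final piece is handled by the dense case. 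This gives the diameter bound.

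\textbf{Edge-cutting probability.} Fix an edge $e=(x,y)$. In one carving step, $e$ is cut only if the radius lands in the interval $[\dist(v,x),\dist(v,x)+\ell(e))$. By memorylessness of the exponential distribution, conditioned on $r\ge\dist(v,x)$, this happens with probability at most $\lambda\ell(e)$. Otherwise, either $r<\dist(v,x)$, so $x$ stays in the remainder, or both endpoints fall inside the ball. In the second case $e$ moves to a piece of at most half the size. Therefore, along the recursion path of $e$, the number of carving steps in which the ball reaches $x$ but does not cut $e$ is at most $\log n$. Summing the conditional probabilities over these "reached" events, the total cut probability is at most $\lambda\ell(e)(\log n+1)=O(\log^2 n)\,\ell(e)/d$.

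\textbf{Main obstacle: truncation.} The delicate step is truncating the radius at $d/4$ without breaking memorylessness or the halving argument. Choosing $c$ large makes $\Pr[r_{\text{untruncated}}>d/4]\le n^{-3}$. We can then condition on this never happening over all $\le n$ carving steps, which distorts probabilities by only a $1+o(1)$ factor. Alternatively, we cap at $d/4$ and charge the event of reaching the cap, which has probability $\le n^{-3}\le \ell(e)/d$, directly to $e$. Either way the bound $\Pr[e\in\Erem]\le \SL{}\cdot\ell_G(e)/d$ with $\SL{}=O(\log^2 n)$ follows.
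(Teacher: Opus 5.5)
Your proof is correct. The paper does not prove \Cref{lem:LDDexistence} at all: it imports it from \cite{BringmannCF23,BernsteinNW25}, and imports its algorithmic form \Cref{lem:ldd} from \cite{AshvinkumarBCGH24}. So your argument is a self-contained alternative, and it differs from the cited Bernstein--Nanongkai--Wulff-Nilsen construction in an instructive way.

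In the cited construction, each call does the following. It classifies vertices once as light or heavy, using ball sizes in that call's input graph (estimated by sampling). It then carves balls around the light vertices one after another from the residual graph, recursing only inside the carved balls. The heavy leftover is returned as a single cluster. Because the heavy vertices' large balls are measured in the input graph rather than in the leftover, that cluster is only guaranteed small \emph{weak} diameter, which is all the paper uses (cf.\ \Cref{def:ldd}).

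You instead re-test density inside $G[U]$ after every single carve and recurse on both sides. This is expensive to implement, but as an existence proof it yields the stronger conclusion that every SCC of $G-E^{\mathrm{rem}}$ has \emph{strong} diameter at most $d/2$. Your charging argument correctly handles the fact that the remainder side never halves. A step can cut $e$ only if the ball reaches the relevant endpoint, and each reached-but-uncut step moves $e$ into a piece of at most half the size. Hence $\Pr[e\in E^{\mathrm{rem}}]$ is at most $\lambda\ell(e)$ times the number of reached steps, which is at most $\log n+1$.

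On truncation, keep your first option and phrase it as rejection sampling: abort and restart whenever a radius exceeds $d/4$. Bound $\Pr[e\in E^{\mathrm{rem}}\wedge\text{no abort}]$ in the unconditioned process. The per-step bound $\Pr[\text{cut}]\le\lambda\ell(e)\Pr[r\ge\dist(v,x)]$ then only picks up an extra $\lambda\ell(e)\cdot n\cdot n^{-\Omega(c)}$ from overshooting radii. After dividing by $\Pr[\text{no abort}]\ge 1-n^{1-\Omega(c)}$, the bound stays purely multiplicative for every $d$.

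Your second option needs two repairs as written:
\begin{itemize}
\item The cap event must be union-bounded over all (up to $n$) steps on $e$'s recursion path, not only over the reached ones.
\item The comparison with $\ell(e)/d$ needs $d\le n^{O(1)}\ell(e)$. This holds under the paper's polynomially bounded integer lengths once $c$ is chosen accordingly.
\end{itemize}
Zero-length edges cause no trouble in either option: $\dist(v,y)\le\dist(v,x)+\ell(e)$ leaves an empty cutting window, so they are never cut.
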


Now, the construction is as follows.

\paragraph{Step 1 (LDD).}
Compute LDDs of $G$ with diameters $2^i$ for all $0\le i\le O(\log n)$.
Fix one such LDD for a given $i$ with parameter $d=2^i$, yielding an edge set $\Erem$ as in \Cref{lem:LDDexistence}. We show how to build a DAG projection that preserves $(s,t)$-distances in $G$ whenever $\dist_G(s,t)$ is on the order of $d/\eps$. The final projection $D$ will be the union of the $O(\log n)$ projections over all $i$, thereby handling all pairs (we assume edge lengths are polynomial on $n$).

Let $\cC$ be the family of SCCs of $G-\Erem$. By construction, each $C\in\cC$ has $\operatorname{diam}(G[C])\le d$.
Let $\sigma=n^{o(1)}$ (to be fixed later), and classify clusters by size:
\begin{itemize}
    \item \emph{large} if $|C|\ge n/\sigma$,
    \item \emph{small} if $|C|<n/\sigma$.
\end{itemize}

\paragraph{Step 2 (recursive construction for small clusters).}
For each small $C\in\cC$, recursively construct a DAG projection $D_C$ of $G[C]$ that $(1+\eps)$-approximately preserves distances for \emph{all pairs} of vertices in $G[C]$ (instead of just length bounded pairs).

\paragraph{Step 3 (shortest-path trees for large clusters).}
For each large $C\in\cC$, pick an arbitrary root $r_C\in C$. In $G$, compute a shortest-path tree $\To_C$ rooted at $r_C$ and a reversed shortest-path tree $\Ti_C$ rooted at $r_C$ (i.e., every $(s,r_C)$-path in $\Ti_C$ is a shortest $(s,r_C)$-path). Define $D_C$ to be the DAG projection that combines $\To_C$ and $\Ti_C$ (as \emph{disjoint copies} of subgraphs of $G$) by merging the common root $r_C$.

\paragraph{Step 4 (combining everything).}
Let $\cC=(C_1,\ldots,C_z)$ be a topological ordering of the SCCs in $G-\Erem$. Form $D'$ by concatenating the $D_{C_i}$ using copies of edges in $G$: for every $1\le i<j\le z$, every $u'\in V(D_{C_i})$, and every $v'\in V(D_{C_j})$, if $(\pi(u'),\pi(v'))\in E(G)$ then add the edge $(u',v')$ to $D'$.

Finally, let $x=\SL{}/\eps$. Create $x$ copies $D'_1,\ldots,D'_x$ of $D'$ and concatenate them in the same manner: for every $1\le i<j\le x$, $u'\in V(D'_i)$, and $v'\in V(D'_j)$, if $(\pi(u'),\pi(v'))\in E(G)$ then add $(u',v')$. The resulting DAG is $D$.

\paragraph{Distance preservation.}

It is easy to see that
\[
    \dist_G(s,t)\le \dist_{D}\bigl(\pi^{-1}(s),\pi^{-1}(t)\bigr)
\]
since $\pi$ is a (weight-preserving) graph homomorphism (the only edges we added to $D$ are in Step 3 and 4, which are copies of edges in the original graph): Every path in $D$ connecting a vertex in $\pi^{-1}(s)$ to a vertex in $\pi^{-1}(t)$ can be mapped to a path in $G$ connecting $s$ to $t$ with the same length. Hence the distance in $G$ is at most the distance in $D$.

In what remains, we will only prove the harder direction
\[
    \dist_{D}\bigl(\pi^{-1}(s),\pi^{-1}(t)\bigr)\le (1+\eps)\cdot \dist_G(s,t).
\]

Fix $s,t\in V(G)$ with $\dist_G(s,t)=d/\eps$. Let $p$ be an $s{\to}t$ path in $G$ of length $\ell_G(p)=d/\eps$. By \Cref{lem:LDDexistence}, in expectation there are $\SL{}/\eps$ edges of $p$ lie in $\Erem$. Hence we can partition $p$ into subpaths $p_1,p_2,\ldots$, each entirely contained in $G-\Erem$. Notice that edges connecting $p_i$ to $p_{i+1}$ is preserved in $D$ according to Step 4.%

Let $a,b\in V$ be the endpoints of some subpath $p_i$. Let $p'$ be a shortest $a{\to}b$ path in $G-\Erem$. Decompose $p'$ into maximal subpaths $p'_1,\ldots,p'_{y'}$ so that each $p'_j$ lies within a single SCC of $G-\Erem$. Notice that edges connecting $p'_i$ to $p'_{i+1}$ is preserved in $D'$ according to Step 4.
If that SCC is a small cluster, then the endpoints’ distance is preserved by the recursive construction. Otherwise, it is a large cluster.

Let $C_L$ be the first large cluster meeting $p$ (\emph{not $p'$}, we switch our attention to the whole path $p$) and, similarly, $C_R$ be the last. Let $\tilde{p}$ be the subpath of $p$ from its first vertex in $C_L$ to its last vertex in $C_R$. Denote these endpoints by $\Start(\tilde{p})$ and $\End(\tilde{p})$. By the definition of $D_C$ for a large cluster $C$, $D_{C}$ contains $\Ti_C$ followed by $\To_C$, both rooted at $r_C$. Thus a copy of $\Start(\tilde{p})$ in $\Ti_{C_L}$ reaches $r_{C_L}$ within distance at most $d$ (since $G[C_L]$ has diameter $d$ and $\Ti_{C_L}$ is a shortest path tree), and from $r_{C_L}$ we can reach a copy of $\End(\tilde{p})$ within distance $\ell_G(\tilde{p})$ plus at most $d$ additional length via $\To_{C_R}$ (this is because the original distance from $c_L$ to $\End(\tilde{p})$ is at most $d+\ell_G(\tilde{p})$: from $c_L$ we can reach $\Start(\tilde{p})$ whithin distance $d$, the we follow $\tilde{p}$ to reach $\End(\tilde{p})$). Hence, the total distance is at most
\[
\ell_G(\tilde{p})+2d,
\]
which has an additive $O(d)$ overhead. Since $\dist_G(s,t)=d/\eps$, this additive term is negligible and absorbed in the $(1+\Theta(\eps))$ multiplicative guarantee.  

\paragraph{Size of the projection.}
If we write $f(n)$ as the width (maximum number of copies) of the DAG projection returned by the above construction on an $n$-vertex graph, then $f(n)$ can be calculated as follows.

(1) In Step 1 and Step 4, we know that $D$ contains $x = \tO{1/\eps}$ copies of $D'$.

(2) In Step 2, each vertex is copied $f(n/\sigma)$ times in $D'$.

(3) In Step 3, each vertex is copied $2 \cdot \sigma^2$ times in $D'$, where the factor $2$ comes from the forward and reversed shortest path trees, and the factor $\sigma^2$ comes from the fact that there are at most $\sigma$ large clusters (because each large cluster has at least $n/\sigma$ vertices).

To summarize, we get the following recursive inequality:
\[
    f(n) \le \tO{1/\eps} \cdot \bigl(f(n/\sigma) + 2\sigma^2\bigr).
\]

Notice that $\eps = 1/\polylog(n)$ according to our assumption. By choosing an appropriate parameter $\sigma = n^{o(1)}$, the desired bound $f(n) = n^{o(1)}$ follows.

\paragraph{Algorithmic Aspect.} 
The algorithm above runs in $m^{1+o(1)}$ time; it takes near-linear time in the size of the output, which is $m^{1+o(1)}$, because we simply call LDD which can be computed using SSSP as a subroutine. 

In the parallel setting, our goal is to reduce DAG projection construction to approximate SSSP on DAGs.
But, currently, the algorithm needs to call as a subroutine the  
SSSP algorithm on general graphs, which does not admit efficient parallel algorithms yet. 
Our strategy is to reduce SSSP on general graphs to SSSP on DAGs using our distance DAG projection itself. 
This leads to the `chicken-and-egg' situation: (When we are given an SSSP on DAGs oracle), we can reduce distance DAG projection to SSSP on general graphs, and we can reduce SSSP on general graphs to distance DAG projection.

Our strategy (detailed in \Cref{sec:distanceembedding}) is to find a spiral recursion: We define $h$-length DAG projection that only preserves distances up to $h$, and we reduce it to $h/n^{o(1)}$-length SSSP on general graphs which computes distance up to $h/n^{o(1)}$, reducing to $h/n^{o(1)}$-length DAG projection. This spiral recursion will finally make $h$ small enough so the construction is trivial.

\subsection{Congestion DAG Projections}\label{subsec:congestion-dag-projections}
Given a graph $G=(V,E)$ with edge capacities denoted by $U_G:E\to \bbR_{\ge 0}$, we will show how to construct a $n^{o(1)}$-congestion-preserving DAG projection.

We consider \emph{expander decomposition} as the analogue of LDD in the congestion setting. We first give some necessary definitions.

For a flow $\ff$, we represent the demand routed by $\ff$ as a pair of vectors $(\DDelta,\nnabla)$, where $\DDelta,\nnabla:V\to\bbR_{\ge 0}$ denote the source demands and sink demands routed by $\ff$, respectively; see \Cref{subsec:flowandcuts} for the formal definition.

We say a demand $(\DDelta,\nnabla)$ is \emph{$\dd$-respecting} for some function $\dd:V\to\bbR_{\ge 0}$ if $\DDelta(v),\nnabla(v)\le \dd(v)$ for every $v\in V$. For an edge set $E'\subseteq E$, define
\[
\vol_{E'}(v)\;=\;\sum\{\,U_G(u,v)\mid (u,v)\in E'\text{ or }(v,u)\in E'\,\},
\]
and for a vertex set $C\subseteq V$ write $\vol_{E'}\!\mid_{C}$ for the restriction of function $\vol_{E'}$ to $C$.

\begin{definition}[Terminal Expanding]\label{def:overviewterminalexpanding}
    Given a directed graph $G=(V,E)$ with edge capacities and a function $\dd:V\to\bbR_{\ge 0}$, we say $\dd$ is \emph{$\phi$-expanding} on $G$ if every $\dd$-respecting demand is routable via a flow $\ff$ in $G$ with congestion at most $1/\phi$.
\end{definition}

We next formalize the hierarchical structure we will use.

\begin{definition}[Expander  Hierarchy]\label{def:overviewexpanderdecompositionhierarchy}
    Given a directed graph $G=(V,E)$ with edge capacities, an \emph{$\phi$-expander hierarchy} with $t$ layers consists of edge sets $E_1,\ldots,E_t$ with $E_1=E$ and $E_t=\emptyset$. Let $E_{>i}=\bigcup_{t\ge j>i} E_j$. For every $1\le i<t$ and every strongly connected component (SCC) $C$ of $G-E_{>i}$, the function $\vol_{E_i}\!\mid_C$ is $\phi$-expanding in $G[C]$.
\end{definition}

By \cite{bernstein2024maximum}, such a hierarchy exists with expansion $\phi=2^{-O(\sqrt{\log n})}$ and $t=\sqrt{\log n}$. In our actual efficient construction, we will instead use a weaker hierarchy (\Cref{def:expanderdecompositionhierarchy}) that admit fast construction.

\paragraph{Constructing the congestion DAG projection $D$.}
Let $E_1,\ldots,E_t$ be the expander hierarchy of $G$ with expansion $\phi=2^{-O(\sqrt{\log n})}$ and $t=\sqrt{\log n}$. The construction proceeds recursively from top to bottom using the hierarchy. We will only describe the top level.

Fix an SCC $C$ of $G$. By definition, $\vol_{E_{t-1}}\!\mid_C$ is $\phi$-expanding in $G[C]$. Let $D_C$ be a congestion DAG embedding of $G[C]-E_{t-1}$ obtained \emph{recursively}. We build a congestion DAG embedding $D'_C$ of $G[C]$ as follows:
\begin{itemize}
    \item Create two disjoint copies $D_C^{(1)}$ and $D_C^{(2)}$ of $D_C$, and add a dummy vertex $w_C$.
    \item For every $v\in V(D_C^{(1)})$, add an edge $(v,w_C)$ with capacity $\vol_{E_{t-1}}(\pi(v))$.
    \item For every $v\in V(D_C^{(2)})$, add an edge $(w_C,v)$ with capacity $\vol_{E_{t-1}}(\pi(v))$.
\end{itemize}
Here $\pi$ denotes the projection map from the vertices of $D_C$ to  $G$.

Let $C_1,\ldots,C_z$ be the SCCs of $G$ in a topological order. We obtain $D$ by concatenating the graphs $D'_{C_1},\ldots,D'_{C_z}$: for every $1\le i<j\le z$ and for every $u\in V(D'_{C_i})$, $v\in V(D'_{C_j})$, add an edge $(u,v)$ in $D$ whenever $(\pi(u),\pi(v))\in E(G)$.

\paragraph{Size of $D$.}
There are $t=O(\sqrt{\log n})$ layers, and in each layer we make two copies of every vertex in the next layer. Hence, every vertex of $G$ (and every dummy vertex $w_C$) is copied at most $2^t=n^{o(1)}$ times overall. For edges, we only add edges $(u,v)$ in $D$ if $(\pi(u),\pi(v))\in E$ or one of $u,v$ is a dummy vertex. Since the number of vertex copies is $n^{o(1)}$, it follows that $|E(D)|\le m^{1+o(1)}$.

\paragraph{Correctness.}

We first show that $\maxflow_{G}(S,T)\le\maxflow_{G'}(\pi^{-1}(S),\pi^{-1}(T))$.
Given a flow $\ff$ in $G$ from $S$ to $T$, we show how to route $\ff$ in $D$ from $\pi^{-1}(S)$ to $\pi^{-1}(T)$ with the same value. Because we concatenate $D'_{C_1},\ldots,D'_{C_z}$ using all edges of $G$ respecting the topological order, it suffices to show that the restriction $\ff_C$ of $\ff$ to a fixed SCC $C$ can be routed in $D'_C$. Consider a flow path $p$ of $\ff_C$; we decompose it into three segments:
\begin{enumerate}
    \item The prefix of $p$ before it uses any edge of $E_{t-1}$. This segment is routed inside $D_C^{(1)}$ by the recursive embedding.
    \item The subsequence of $p$ from its first edge $(u_1,v_1)\in E_{t-1}$ to its last edge $(u_2,v_2)\in E_{t-1}$. We replace this by the two-hop path $(u'_1,w_C,v'_2)$ where $\pi(u'_1)=u_1$ and $\pi(v'_2)=v_2$. Feasibility holds because the capacity of $(u'_1,w_C)$ is $\vol_{E_{t-1}}(u_1)$, and analogously for $(w_C,v'_2)$ so no congestion larger than $1$ is introduced when considering all flow paths of $\ff$ (a feasible flow in $G$) passing through $E_{t-1}$.
    \item The suffix of $p$ after its last edge in $E_{t-1}$, which is routed inside $D_C^{(2)}$ by recursion.
\end{enumerate}

Then we show that $\maxflow_{G'}(\pi^{-1}(S),\pi^{-1}(T))\le n^{o(1)}\maxflow_{G}(S,T)$.
Given a flow $\ff^D$ in $D$, we map it back to $G$ with congestion at most $n^{o(1)}$. For all portions of $\ff^D$ that do not traverse dummy vertices, we use the projection $\pi$; since each original vertex has at most $n^{o(1)}$ copies, this incurs at most an $n^{o(1)}$ congestion blow-up. 

Next, consider every subpath of flow paths of $\ff^D$ that are incident to the dummy vertex $w_C$. 
This induces a demand that is $n^{o(1)}\cdot \vol_{E_{t-1}}$-respecting in $G[C]$ (the $n^{o(1)}$ factor comes from the number of vertex copies), hence is routable in $G[C]$ with congestion at most $1/\phi = 2^{O(\sqrt{\log n})}= n^{o(1)}$. Although each $w_C$ may itself be replicated $n^{o(1)}$ times across lower layers, the resulting multiplicative factors still yield total congestion $n^{o(1)}$ overall.

\paragraph{Algorithmic Aspect.} 
Excluding the time for computing the expander hierarchy in \Cref{def:overviewexpanderdecompositionhierarchy}, the construction of congestion DAG projection take near-linear time in the output-size, which is  $m^{1+o(1)}$ time.
To compute the expander hierarchy efficiently, we employ a weaker version as in stated \Cref{def:expanderdecompositionhierarchy} which admit simpler construction by performing expander decomposition in a bottom-up manner. 

In the sequential setting, by using almost-linear time maximum flow algorithm to compute expander decomposition, this gives $m^{1+o(1)}$ total time. 

Unfortunately, to obtain efficient parallel reduction, this suffers from the similar bottleneck as in the distance DAG projection algorithm: Expander decomposition usually requires maximum flow on a general graph as a subroutine (which does not admit fast parallel algorithms), and we intend to reduce max flow on general graphs to max flow on DAGs. 

The idea is again to design a spiral recursion (detailed in \Cref{sec:congestionDAGembedding}): (Given a max flow on DAGs oracle) We define $\delta$ additive-error congestion DAG projection to tolerate some $\delta$ additive error, and reduce it to $\delta\cdot n^{o(1)}$ additive-error maximum flow on general graphs, which reduces to $\delta\cdot n^{o(1)}$ additive-error congestion DAG projection. The additive error will be finally large enough so the problem is trivial.
\section{Preliminaries}\label{sec:prelim}
We use the following notation. For $a \in \mathbb{N}$, let $[a] = \{1,2,\ldots,a\}$. We write $\tO{f} = f \cdot \polylog n$ and $\hO{f} = f \cdot n^{o(1)}$. Unless stated otherwise, we use $n,m$ to denote the numbers of nodes and edges of the input graph, respectively. A \emph{randomized algorithm} succeeds with high probability (w.h.p.), meaning that it fails to output the correct answer with probability at most $1/n^c$ for some constant $c$.

We use $[E]$ to denote the indicator variable of the event $E$, i.e., $[E]=1$ if $E$ happens and $[E]=0$ otherwise. We overload $[\cdot ]$ for both ranges and indicators; context will be clear. For functions with the same domain (e.g. $d_1,d_2 : V \to \bbR$), we write $d_1 \le d_2$ to denote $d_1(v) \le d_2(v)$ for every $v \in V$. Other relations $d_1 = d_2$, $d_1 < d_2$, etc. are defined similarly. For a function $d : V \to \bbR$ and $a \in \bbR$, we write $d+a$ to denote the function $(d+a)(v) = d(v) + a$ for every $v \in V$. For $S \subseteq V$, we write $d(S) = \sum_{v \in S} d(v)$. We write $d\mid_S$ to denote the function $d$ restricted to $S$.

An algorithm $\cA$ with input size $t$ in this paper may make oracle calls to another algorithm $\cA'$. We say the oracle calls are \emph{$f$-work-efficient} if the sum of all input sizes to $\cA'$ over all oracle calls is at most $f \cdot t$, and \emph{$f$-depth-efficient} if the longest dependency chain among all calls has length at most $f$. 

\paragraph{Graph.}
All graphs in this paper, unless specified otherwise, are directed graphs $G = (V,E)$ possibly with edge lengths and capacities. We assume edge lengths and capacities are positive integers bounded by $W$, where $W$ is a polynomial in $n$, for simplicity. We treat edge lengths and capacities as attributes of edges $e \in E$, so we use $\ell(e)$ and $U(e)$ to denote the length and capacity of $e$, respectively, instead of writing $G = (V,E,\ell,U)$. We write $\ell_G(e)$ and $U_G(e)$ if we want to stress that the length and capacity are associated with $G$. For an edge set $E' \subseteq E$ and vertex sets $S,T \subseteq V$, we write
\[
E'(S,T) = \{ (s,t) \mid s \in S,\, t \in T,\, (s,t) \in E' \}.
\]
We use standard definitions of graph terminology for directed graphs, including path, distance, strongly connected components (SCCs), and shortest-path tree, which can be found in textbooks. For a path $H$, we use $\ell(H)$ and $U(H)$ to denote the sum of the lengths or capacities, respectively, of the edges on the path, and we use $|H|$ to denote the number of edges (edges can repeat, in which case they are counted multiple times). We use
\[
\delta^+(S) = \{ (u,v) \mid u \in S,\, v \in V \setminus S \}
\quad\text{and}\quad
\delta^-(S) = \{ (u,v) \mid v \in S,\, u \in V \setminus S \}
\]
to denote the outgoing and incoming edge sets of $S$, respectively. We use $\Start(p)$ and $\End(p)$ to denote the starting vertex and the ending vertex of a path $p$.

\paragraph{Projection map.}
A graph $G=(V,E)$ can be assigned a projection map $\pi: V \to U$ or $\pi: V \to U \cup \{\bot\}$ (where $U$ can be the vertex set of another graph). We write $\pi_G$ to stress that $\pi_G$ is the labeling function for $G$. We write $\pi^{-1} : U \to 2^V$ for the inverse mapping, where $\pi^{-1}(u) = \{ v \in V \mid \pi(v) = u \}$. For a set $U' \subseteq U$, we write $\pi^{-1}(U') = \bigcup_{u \in U'} \pi^{-1}(u)$.

For a set $S \subseteq V$, we write
\[
\pi(S) = \{ \pi(s) \mid s \in S \} \setminus \{ \bot \},
\]
and we write
\[
\HL(S) = \{ u \in U \mid \pi^{-1}(u) \subseteq S \}.
\]
In words, $\pi(S)$ contains vertices $u$ of $G$ for which some copy of $u$ is in $S$, and $\HL(S)$ contains vertices $u$ of $G$ for which all copies of $u$ are in $S$.

\paragraph{Remark on DAG as an input.}
A DAG refers to a \emph{directed acyclic graph}. In this paper, when we say a DAG is an input, we assume its vertex set is given in a topological order.

\subsection{Distances and Shortest Paths}

\paragraph{Graph distance.}
We use $\dist_G(s,t)$ to denote the distance from $s$ to $t$ in $G$ (with respect to the length function $\ell_G$). When $G$ is clear from the context, we simply write $\dist(s,t)$.

Given a directed graph $G = (V,E)$ and a vertex set $A \subseteq V$, the \emph{weak diameter} of $A$ (in $G$) is defined as
\[
\max_{s,t \in A} \dist_G(s,t).
\]
In contrast, the \emph{strong diameter} of $A$ is defined as
\[
\max_{s,t \in A} \dist_{G[A]}(s,t),
\]
where $G[A]$ is the subgraph of $G$ induced by $A$. The strong diameter of $A$ is also the diameter of $G[A]$.

\paragraph{Single-source shortest path (SSSP).}
The single-source shortest path problem is: given a graph with edge lengths and a source $s$, find a shortest-path tree rooted at $s$, i.e., every $(s,t)$-path in the tree has length $\dist_G(s,t)$ for every $t \in V$.

The problem of $\alpha$-approximate SSSP (denoted by $\apxSSSP{\alpha}$) only requires outputting an approximate shortest-path tree such that the length of any $(s,t)$-path (denoted by $\tilde{d}(s,t)$) satisfies
\[
\dist_G(s,t) \le \tilde{d}(s,t) \le \alpha \cdot \dist_G(s,t).
\]

The \emph{$h$-length} $\apxSSSP{\alpha}$ problem only needs to satisfy
\begin{enumerate}
    \item $\dist_G(s,t) \le \tilde{d}(s,t)$ for every $t \in V$, and
    \item $\tilde{d}(s,t) \le \alpha \cdot \dist_G(s,t)$ only for those $t \in V$ with $\dist_G(s,t) \le h$.
\end{enumerate}
We use $h$-length SSSP to denote the problem of $h$-length $\apxSSSP{1}$.

We use $\apxDSSSP{\alpha}$ to denote the problem of $\alpha$-approximate SSSP when the input graphs are restricted to DAGs. We assume that a topological order of the input DAG is given as part of the input.

\subsection{Flows and Cuts}\label{subsec:flowandcuts}

\paragraph{Cuts.}
Given a directed graph $G = (V,E)$, a cut is a partition of the vertex set into two sets $(S, \bar{S})$ where $\bar{S} := V \setminus S$ (when $V$ is clear from the context). For convenience, we also use $S$ to denote the cut $(S,\bar{S})$ when there is no ambiguity. The value of a cut $(S,\bar{S})$ is defined as
\[
\Val_G(S) := U_G(\delta^+(S)) := \sum_{e \in \delta^+(S)} U_G(e),
\]
i.e., the total capacity of edges going from $S$ to $\bar{S}$. We allow a cut to be $\emptyset$ or $V$, in which case the value is defined to be $0$.

\paragraph{Flows.}
Given a directed graph $G = (V,E)$, a (single-commodity) flow $\ff$ is represented as a collection of paths in $G$ associated with positive flow values. We call each path $p \in \ff$ a \emph{flow path}, and we denote its value by $\ff(p) \in \bbR_{>0}$. If $p \notin \ff$, we write $\ff(p) = 0$. For an edge $e \in E$, we define
\[
\ff(e) = \sum_{p : e \in p} \ff(p).
\]
(That is, the flow on an edge is the sum of the flows of all paths that use that edge.) By default, we assume a flow is represented by flows on each edge, so every flow can be represented in $O(m)$ space.

The \emph{congestion} of a flow $\ff$ in $G$ is
\[
\Cong(\ff) = \max_{e \in E} \frac{\ff(e)}{U_G(e)}.
\]
If $\Cong(\ff) \le 1$, we say $\ff$ is \emph{feasible}. Let
\[
\Out{\ff}(u) = \sum_{(u,v) \in E} \ff(u,v)
\qquad\text{and}\qquad
\In{\ff}(u) = \sum_{(v,u) \in E} \ff(v,u)
\]
be the outgoing and incoming flow at a vertex $u \in V$, and let
\[
\ff(u) = \Out{\ff}(u) - \In{\ff}(u)
\]
be the net outgoing flow at $u$.

The following lemma shows it is possible to convert from flow path decomposition and edge representation. It is implied by Section 8 of \cite{agarwal2024parallel}. In their paper, they only proved the flow decomposition. The reverse direction that recovers a flow from a set of sink-source demands can be done by a reverse traversal of their algorithm. 

\begin{theorem}[Section 8 of \cite{agarwal2024parallel}]\label{thm:flowpathdecomposition}
Let $G$ be a directed graph with capacities, and let $F$ be a circulation-free feasible flow on $G$.
Then there is a parallel algorithm that computes a representation
\[
\{(\lambda_i,s_i,t_i)\}_{i\in[k]}
\]
such that $F$ can be decomposed into $k$ directed flow paths, where path $i$ has value $\lambda_i$ and goes from $s_i$ to $t_i$.

Moreover, given any values $(\lambda'_i)_{i\in[k]}$ satisfying $\lambda'_i\le \lambda_i$ for every $i\in[k]$, there is a parallel algorithm that computes an edge representation of a feasible flow routing $\lambda'_i$ units from $s_i$ to $t_i$ for every $i\in[k]$.

Both algorithms run in near-linear work and polylogarithmic depth.
\end{theorem}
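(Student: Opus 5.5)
The plan is to follow the structure of the flow-decomposition argument in Section~8 of \cite{agarwal2024parallel}, making explicit the bookkeeping that lets the same structure be run backwards.

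\textbf{Acyclic support.} Since $F$ is circulation-free, its support $\{e : F(e)>0\}$ is acyclic; otherwise some directed cycle would carry positive flow on every edge. So I will first compute a topological order of the support, which takes near-linear work and polylogarithmic depth on a DAG. From then on, $F$ is an $s$-$t$ transshipment on a DAG. Its source demands are $\DDelta(v)=\max\{0,F(v)\}$ and its sink demands are $\nnabla(v)=\max\{0,-F(v)\}$.

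\textbf{Decomposition by divide and conquer on the topological order.}
\begin{enumerate}
\item Fix a canonical splitting rule at every vertex $v$: order the incoming units of flow (the source excess $\DDelta(v)$ followed by the in-edges in a fixed order) and the outgoing units (the out-edges followed by $\nnabla(v)$) as consecutive intervals of the real line. Match them by position, i.e.\ fill-first-in, fill-first-out. A flow path is then the trajectory of a single point of mass carried through these interval matchings.
\item To compute all such trajectories in parallel, split the topological order into a prefix $L$ and a suffix $R$. Replace every edge of $F$ from $L$ to $R$ by a virtual sink at its tail and a virtual source at its head, and recurse on both halves in parallel.
\item Each half returns a list of triples $(\lambda,s,t)$ in which each endpoint is either real or virtual, sorted by interval position.
\item Combine the two lists by a merge along the interval coordinate of each crossing edge. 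Each $L$-path ending at a virtual sink on edge $e$ is concatenated with the $R$-paths starting at the matching virtual source of $e$. This cuts the pieces at the union of their breakpoints, so the merge is a parallel sorted-list merge with polylogarithmic depth.
\end{enumerate}
Every triple of a finished segment records only its endpoints and its value, together with a pointer to the two children it came from. This pointer tree is exactly what the second algorithm needs.

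\textbf{Reverse direction.} Given $\lambda'_i\le\lambda_i$, I will push the values down the recursion tree. Each top-level triple $i$ splits into pieces at the lower levels, and each piece inherits the scaled value $\lambda'_i\cdot(\text{piece mass})/\lambda_i$. At the leaves the pieces are single edges, and I sum their contributions into $F'(e)$ by a parallel reduction. Because every piece carries at most its original mass, $F'(e)\le F(e)\le U(e)$, so $F'$ is feasible. Flow conservation holds piecewise, so $F'$ routes exactly $\lambda'_i$ from $s_i$ to $t_i$.

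\textbf{Main obstacle: size.} The difficult part is bounding the total number of breakpoints, and hence $k$ and the work, by $\tilde O(m)$. Naively propagating the breakpoints of every edge through the DAG could create $\Theta(nm)$ pieces. My first attempt is to use the fill-first-in, fill-first-out rule, under which each path splits only at an edge boundary, and to argue by a charging argument that each edge boundary is charged $O(\log n)$ times across the recursion levels. If that fails, I would fall back on the randomized or rounding-based splitting used in \cite{agarwal2024parallel}, and accept their near-linear bound as given.
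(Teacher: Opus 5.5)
There is a genuine gap in your first step. You claim that a topological order of the support of $F$ can be computed in near-linear work and polylogarithmic depth, but no such algorithm is known. Parallel topological sorting is essentially the parallel reachability bottleneck: polylogarithmic depth is known only with roughly matrix-multiplication work, and the best almost-linear-work algorithms have depth around $n^{1/2+o(1)}$.

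You also cannot get the order for free from $F$. The demand of a flow is unconstrained, so one unit on every edge of an arbitrary DAG is a circulation-free feasible flow. Sorting the support of $F$ is therefore as hard as sorting an arbitrary DAG. Everything downstream depends on this order: the prefix/suffix split, the merge along crossing edges, and the recursion tree you reuse for the reverse direction. As written, your proposal proves the statement only when a topological order of the support is supplied.

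That is not the setting in which the theorem is used. It is applied in the expander-decomposition step to flows in a general graph $G$, where no order is available. Elsewhere the paper explicitly avoids computing SCCs or topological orders in parallel (see its footnotes). The paper's route is to import the decomposition of Section~8 of \cite{agarwal2024parallel}, which it applies to flows where no order is available. It then gets the second part by running that algorithm's bookkeeping in reverse. To repair your proof you would need a comparable way to compute the trajectories without a global order.

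By contrast, the issue you single out as the main obstacle is not one. Fix an interval matching at every vertex. Each in- or out-boundary at a vertex lies in a single in-interval, so it pulls back along one trajectory to a single point on the line of (real or virtual) source intervals. Two source points with no pulled-back boundary between them follow the same route, because the maps are order-preserving translations and every interval is contiguous. Hence a subproblem with $n_X$ vertices and $m_X$ incident edges has $O(m_X+n_X)$ pieces. Each edge touches at most two subproblems per level, so the recursion has $O((m+n)\log n)$ pieces in total, and $k=O(m+n)$.

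One small correction to your reverse pass: a top-level path does not split into lighter pieces further down. Instead, each lower-level piece should receive the sum of $\lambda'_i$ over the top-level paths it contains. This gives $F'=\sum_i(\lambda'_i/\lambda_i)\cdot(\text{path } i)\le F$. So, given an order, your argument goes through; the missing ingredient is avoiding the need for one.
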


\paragraph{Demands.}
A \emph{(single-commodity) demand} is a pair $(\DDelta,\nnabla)$ where $\DDelta, \nnabla : V \to \bbR_{\ge 0}$. Given a flow $\ff$, the demand routed by the flow, denoted by $\Dem(\ff) = (\DDelta_{\ff}, \nnabla_{\ff})$, is defined as
\[
\DDelta_{\ff}(u) = \max(\ff(u), 0)
\qquad\text{and}\qquad
\nnabla_{\ff}(u) = -\min(\ff(u), 0).
\]
The \emph{value} of a demand $(\DDelta,\nnabla)$ is defined as
\[
\Val(\DDelta,\nnabla) = \sum_{v \in V} \DDelta(v).
\]
The value of a flow is $\Val(\ff) = \Val(\Dem(\ff))$. We say $\ff$ is an $(s,t)$-flow with value $x$ if $\ff$ routes the demand
\[
\DDelta(v) = x \cdot [v=s],
\qquad
\nnabla(v) = x \cdot [v=t].
\]
Similarly, we say $\ff$ is an $(S,T)$-flow with value $x$ for $S,T\subseteq V$ if $\ff$ routes a demand with source and sink demands being subsets of $S,T$.

A \emph{sub-demand} of $(\DDelta,\nnabla)$, denoted by $(\DDelta',\nnabla')$, satisfies $\DDelta'(v) \le \DDelta(v)$ and $\nnabla'(v) \le \nnabla(v)$ for every $v \in V$, in which case we write $(\DDelta',\nnabla') \preceq (\DDelta,\nnabla)$. A flow \emph{partially} routes the demand $(\DDelta,\nnabla)$ if it routes a sub-demand of it. We define the support of a demand to be
\[
\Supp(\DDelta,\nnabla) = \{ v \in V \mid \DDelta(v) + \nnabla(v) > 0 \}.
\]

Let $\pi: V \to U$ be a projection map. For a demand $D = (\DDelta,\nnabla)$ on $V$, we define $\pi(D) = (\DDelta',\nnabla')$ on $U$ by
\[
\DDelta'(u) = \sum_{v \in V : \pi(v) = u} \DDelta(v)
\qquad\text{and}\qquad
\nnabla'(u) = \sum_{v \in V : \pi(v) = u} \nnabla(v)
\]
for every $u \in U$.

\paragraph{Volume.}
Consider $F \subseteq E$. Let
\[
\delta^+_F(v) = \{ (v,u) \in F \}, \qquad
\delta^-_F(v) = \{ (u,v) \in F \}, \qquad
\delta_F(v) = \delta^+_F(v) \cup \delta^-_F(v)
\]
be the sets of edges in $F$ incident to $v$. Define
\[
\vol^+_F(v) = \sum_{e \in \delta^+_F(v)} U_G(e), \qquad
\vol^-_F(v) = \sum_{e \in \delta^-_F(v)} U_G(e), \qquad
\vol_F(v) = \vol^+_F(v) + \vol^-_F(v).
\]
A demand $(\DDelta,\nnabla)$ is \emph{$\vol_F$-respecting} if
\[
\DDelta(v), \nnabla(v) \le \vol_F(v) \quad \text{for every } v \in V.
\]

\paragraph{Max flow.}
The classical max-flow problem is: given a directed graph with edge capacities $G = (V,E)$ and two vertices $s,t$, output an $(s,t)$-flow $\ff$ with maximum value.

In many cases, however, we need to route a demand instead of fixing the source and sink. Thus, we define the \emph{(generalized) max-flow} problem as: given a demand $(\DDelta,\nnabla)$, find a feasible flow with maximum value that routes a sub-demand of $(\DDelta,\nnabla)$. Notice that this problem is equivalent to the classical setting by adding a super source connected to every node $v$ with capacity $\DDelta(v)$ and a super sink connected from every node $v$ with capacity $\nnabla(v)$.

We also define the \emph{$\alpha$-approximate max-flow} problem as finding a feasible flow with value at least a $1/\alpha$-fraction of the maximum value that routes a sub-demand of $(\DDelta,\nnabla)$.

It is well known that \emph{cuts} are dual to \emph{flows}. Thus, if we can return both a flow and a cut, they will certify the approximation ratio. To make this formal, we define the $\amfmc{\alpha}$ problem as finding a feasible flow $\ff$ that routes a sub-demand of $(\DDelta,\nnabla)$ and a cut $S$ satisfying
\begin{equation}\label{eq:maxflowmincut}
    \Val(S) + \sum_{v \in S} \nnabla(v) + \sum_{v \notin S} \DDelta(v) \le \alpha \cdot \Val(\ff).
\end{equation}
In this case, $(\ff,S)$ is called an $\amfmc{\alpha}$ pair. We define $\amfmcDAG{\alpha}$ as the problem $\amfmc{\alpha}$ with inputs restricted to DAGs.

Equation~\eqref{eq:maxflowmincut} certifies that $\ff$ is an $\alpha$-approximate max flow. To see this, consider adding a super source connected to every node with capacity $\DDelta(v)$ and a super sink connected from every node with capacity $\nnabla(v)$, and apply the classical max-flow/min-cut theorem on the super source and sink.

For technical reasons, we will also sometimes use additive error. We define the problem $\amfmc{(\alpha,\delta)}$ as finding a feasible flow $\ff$ that routes a sub-demand of a given $(\DDelta,\nnabla)$ and a cut $S$ satisfying
\begin{equation}\label{eq:maxflowmincutdelta}
    \Val(S) + \nnabla(S) + \DDelta(\bar{S}) \le \alpha \cdot \Val(\ff) + \delta.
\end{equation}

\section{DAG Projections and Their Applications}\label{sec:DAGprojections}

In this section, we introduce the notion of distance-preserving ous DAG projection in \Cref{subsec:basicdefinitions}, state our main result (\Cref{lem:simpledistanceDAGemb}), and show its applications in \Cref{subsec:applicationsofdistanceDAGembedding}.
Then, we introduce the notion of congestion-preserving DAG projection in \Cref{subsec:CongDAGProfDef}, state our main result (\Cref{lem:congestionDAGembedding}), and show its applications in \Cref{subsec:AppCongDAG}.

\subsection{Distance DAG projections}\label{subsec:basicdefinitions}

\begin{definition}[Graph Projections]\label{def:inherited}
    Let $G = (V,E)$ be a graph with edge lengths.
    A \emph{DAG projection onto $G$} is a DAG $D = (V',E')$ together with a projection map $\pi : V' \to V$ that is a \emph{weight-preserving} graph homomorphism. That is, for every $(x,y) \in E'$, we have $(\pi(x), \pi(y)) \in E$ and
    \[
        \ell_D(x,y) = \ell_G(\pi(x), \pi(y)).
    \]
    The \emph{size} of the projection is $|E'|$, and the \emph{width} of the projection is $\max_{v \in V} |\pi^{-1}(v)|$.
\end{definition}

Naturally, we would like $D$ to approximately preserve distances in $G$.

\newcommand{\hdist}{\widehat{\dist}}
\begin{definition}\label{def:DAGembeddingDistance}
    A DAG projection $D$ onto $G$ is \emph{$\lambda$-distance-preserving} if the following holds: for every $s,t \in V$, 
    \[
        \dist_G(s,t) \le \dist_{D}(\pi^{-1}(s),\pi^{-1}(t)) \le \lambda \cdot \dist_G(s,t).
    \]
\end{definition}
The following lemma follows immediately from \Cref{def:inherited}: every $(s',t')$-path in $D$ projects to a path in $G$ of the same length.

\begin{lemma}\label{lem:inheritedoneside}
    If $D$ is a DAG projection onto $G$, then for every $s,t \in V$ we have
    \[
        \dist_G(s,t) \le \dist_{D}(\pi^{-1}(s),\pi^{-1}(t)).
    \]
\end{lemma}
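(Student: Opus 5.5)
The plan is to project paths from $D$ down to $G$ and use that $\pi$ is a weight-preserving homomorphism (\Cref{def:inherited}). If no copy of $t$ is reachable from any copy of $s$ in $D$, then $\dist_{D}(\pi^{-1}(s),\pi^{-1}(t))=\infty$ and the inequality holds trivially. Otherwise, I will fix $s'\in\pi^{-1}(s)$ and $t'\in\pi^{-1}(t)$ and a shortest $(s',t')$-path $P'=(x_0=s',x_1,\dots,x_k=t')$ in $D$ attaining the minimum over all such copy pairs. Since $D$ is a DAG with finite edge set and positive lengths, this minimum exists. With zero-weight edges it still exists, because every path in a DAG is simple and so only finitely many paths need to be considered.

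Next I will define the projected walk $P=(\pi(x_0),\pi(x_1),\dots,\pi(x_k))$ in $G$. By the homomorphism property in \Cref{def:inherited}, each $(\pi(x_{i-1}),\pi(x_i))$ is an edge of $E$. Its length is $\ell_G(\pi(x_{i-1}),\pi(x_i))=\ell_D(x_{i-1},x_i)$. Summing over $i$ gives that $P$ is an $(s,t)$-walk in $G$ with $\ell_G(P)=\ell_D(P')$.

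Finally, I will note that a walk in $G$ may repeat vertices, but it still upper bounds the distance: $\dist_G(s,t)\le \ell_G(P)$, since removing cycles only decreases length when lengths are nonnegative. Chaining the relations gives $\dist_G(s,t)\le \ell_G(P)=\ell_D(P')=\dist_{D}(\pi^{-1}(s),\pi^{-1}(t))$.

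There is no real obstacle here. The only points needing care are the empty or unreachable case and the remark that projected paths may become walks, which the nonnegative edge lengths handle.
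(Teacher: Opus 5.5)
Your proposal is correct and is essentially the paper's argument: the paper says the lemma follows immediately from \Cref{def:inherited}, because every $(s',t')$-path in $D$ projects under the weight-preserving homomorphism $\pi$ to a path in $G$ of the same length. Your extra care about unreachable pairs, existence of the minimum, and projected paths becoming walks (handled by nonnegative lengths) is sound and fills in details the paper leaves implicit.
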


\noindent

We say a projection $D$ onto $G$ is a \emph{distance projection} if it is $\lambda$-distance-preserving for some $\lambda$.

In \Cref{sec:distanceembedding}, we will prove the following key result, which shows that a distance DAG projection can be constructed efficiently in the parallel setting, given only an approximate SSSP algorithm on DAGs.
We state here a simplified version for $\eps \ge 1/\polylog n$. For general $\eps$, see \Cref{lem:distanceDAGembeddingtoSSSP}.

\begin{theorem}[Simplified version of \Cref{lem:distanceDAGembeddingtoSSSP}]\label{lem:simpledistanceDAGemb}
    There is a randomized algorithm that, given a directed graph $G$ and a parameter $o(1) > \eps \ge 1/\polylog n$, constructs a $(1+\eps)$-distance-preserving DAG projection of $G$ with width $n^{o(1)}$ (and its topological order). The algorithm makes $n^{o(1)}$ work-efficient, $\tO{1}$ depth-efficient calls to an $\apxDSSSP{(1+\eps/\log n)}$ oracle.
\end{theorem}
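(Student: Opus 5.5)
The plan is to make the existence construction of \Cref{sec:overview} (LDD at every scale, recursion on small clusters, shortest-path trees on large clusters, topological concatenation, $x=\SL{}/\eps$ repetitions) algorithmic. The only tool we allow is an $\apxDSSSP{(1+\eps/\log n)}$ oracle, so the difficulty is a chicken-and-egg problem. Both the LDD of \Cref{lem:LDDexistence} and the trees $\To_C,\Ti_C$ need (approximate, length-bounded) SSSP on a \emph{general} directed graph. We resolve this with the spiral recursion from the overview.

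First we define an \emph{$h$-length} $(1+\eps)$-distance-preserving DAG projection: it must satisfy \Cref{lem:inheritedoneside}, and it must satisfy the upper bound only for pairs with $\dist_G(s,t)\le h$. We then prove two reductions.

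\textbf{(a)} Given an $h$-length DAG projection $D$ of width $w$, $h$-length $\apxSSSP{(1+\eps)(1+\eps/\log n)}$ on $G$ reduces to one oracle call. Add a super source joined by zero-length edges to $\pi^{-1}(s)$, run the DAG oracle on $D$ (its topological order is known), and set $\tilde d(s,v)=\min_{v'\in\pi^{-1}(v)}\tilde d(v')$. Since $\pi$ is a weight-preserving homomorphism, the resulting tree projects to walks in $G$, so the lower bound holds automatically. The cost is $O(|E(D)|)=w\cdot m$.

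\textbf{(b)} Building an $h$-length projection only requires SSSP up to length $h/\sigma'$ for some $\sigma'=n^{o(1)}$, or approximate versions of it. For scales $d=2^i\le \eps h$, the LDD with diameter $d$ can be implemented with approximate $O(d\log n)$-length SSSP via the standard ball-growing construction of \cite{BringmannCF23,BernsteinNW25}. Approximate distances only perturb the diameter bound by a $(1+\eps)$ factor and the cutting probability by a constant, which we absorb. The shortest-path trees inside large clusters only need distances up to roughly $h(1+\eps)$. To fit the spiral, we restructure so that the per-level SSSP calls have length bound $h/\sigma'$. The idea is to handle the top scale via the $x$-fold concatenation: a path of length $h$ is split into $\tO{1/\eps}$ pieces, each covered at a smaller scale. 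We would use the $h/\sigma'$-length projection for $G$ (built recursively) together with (a) to answer these calls.

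Unwinding this gives the recursion $h\to h/n^{o(1)}\to\cdots$, with $O(\log_{\sigma'}(nW))=n^{o(1)}$-free depth once $\sigma'=2^{\sqrt{\log n}}$. The recursion depth is $O(\sqrt{\log n})$ levels. At the base case $h<1$, only zero-length edges matter, and the base projection is given by the SCC condensation of zero-length edges together with trivial copies (reachability can also be handled by the oracle on trivial DAGs).

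\textbf{Accounting.} The approximation multiplies at most $(1+\eps/\log n)$ per level of the distance recursion (the small-cluster recursion has depth $\log_\sigma n$), plus the additive $2d\le O(\eps)\cdot$length per scale. To keep the total at $1+\eps$, we rescale $\eps$ by the number of nested levels, which is $\polylog$. This is why the oracle accuracy is $\eps/\log n$. Width follows the overview recurrence $f(n)\le\tO{1/\eps}(f(n/\sigma)+2\sigma^2)$ times the spiral factor, and remains $n^{o(1)}$. Every call is on a graph of size $m\cdot n^{o(1)}$, which gives $n^{o(1)}$ work-efficiency. Calls at a single level are independent, so the depth is the product of the recursion depths, which we hope is $\tO{1}$ after batching all scales and clusters in parallel.

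\textbf{Main obstacle.} The hardest step is (b). We must show the LDD and the large-cluster trees tolerate \emph{approximate}, length-bounded trees, whose paths may be longer than the true shortest paths, and we must make the length bound genuinely drop by an $n^{o(1)}$ factor each round rather than by a constant. The first thing we would try is to build at scale $h$ only from pieces produced at scale $\le \eps h/\SL{}$. This exploits the $x$-fold concatenation in Step 4 so that no SSSP query longer than $h/\sigma'$ is ever issued.
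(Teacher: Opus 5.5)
Your skeleton (the $h$-length relaxation plus a spiral recursion on $h$) matches the paper. However, step (b), which you flag as the main obstacle, does not close as proposed, and the idea the paper uses to close it is missing.

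Your reduction (a) only turns an $h$-length projection into $h$-length SSSP. So you try to make the construction of the $h$-length projection issue only short queries, by leaning on the $x$-fold concatenation of Step 4. This fails for three reasons:
\begin{itemize}
\item The tree rooted at the first large cluster on $p$ must reach the root of the last one at distance up to about $\ell_G(p)+2d$, which can be close to $h$.
\item The pieces of $p$ between reversed LDD edges have no length bound at all.
\item If you instead build the $h$-length output by stacking copies of a shorter-length projection, the stacking sits inside the \emph{output}. The width is then multiplied at every spiral level (your ``times the spiral factor''). With $x=\tO{1/\eps}$ copies the length drops only by a $\polylog$ factor, giving $\Theta(\log n/\log\log n)$ levels and width $(\tO{1/\eps})^{\Theta(\log n/\log\log n)}=n^{\Theta(1)}$. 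With $\sigma'$ copies the width is about ${\sigma'}^{\log_{\sigma'}(nW)}=nW$.
\end{itemize}

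The missing idea is \Cref{lem:lengthboundapxSSSPtoDAGs}. Generalize (a) with a parameter $\lambda$ by stacking $\lambda$ copies of an $(h/\lambda)$-length projection $D^\ell$ into an induced DAG (\Cref{def:induced}). Any path of length at most $h$ splits into at most $\lambda$ pieces of length at most $h/\lambda$, each preserved in one copy and glued by the inter-copy edges. Hence $\lambda w$ oracle calls answer $h$-length approximate SSSP. The stacked DAG is temporary and is discarded after the queries, so $\lambda=2^{\log^{1-\delta}n}$ costs work but never width. Consequently:
\begin{itemize}
\item Steps 1 and 3 can keep issuing full $h$-length queries.
\item $\DDE(G,\eps,h)$ calls $\DDE(\cdot,\cdot,h/\lambda)$ only $\tO{1}$ times, so the spiral has $\log^{\delta}n$ levels and $(\tO{1})^{\log^{\delta}n}=n^{o(1)}$ calls in total.
\item The width recurrence of \Cref{lem:smallduplication}, $f(n,\eps)\le\tO{1/\eps}\bigl(f(n/\sigma,(1-\tfrac{1}{\log n})\eps)+2\sigma\bigr)$, has no spiral factor.
\end{itemize}

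Your second shortcut, running the LDD on approximate distances, is also unjustified. With $(1+\eps')$-approximate estimates, $\tilde d(v)-\tilde d(u)$ across an edge $(u,v)$ can be as large as $\ell(u,v)+\eps'\tilde d(u)$. At radii around $d$, drawn with density about $(\log n)/d$, an edge can therefore be cut with probability $\Theta(\eps'\log n)$ independently of $\ell(u,v)$. A path made of many unit-length edges then collects far more than $50(\log n)\SL/\eps$ reversed edges, which breaks \Cref{lem:numberofcrossing}.

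The paper avoids this by computing LDDs (\Cref{lem:ldd}) with \emph{exact} length-bounded SSSP. It obtains exact SSSP through the boosting of \cite{RozhonHMGZ23} (\Cref{cor:boostingSSSPhlength}, \Cref{lem:reducingSSSPtoDAGs}), which requires a $(1+o(1/\log n))$-approximate oracle. That requirement is the real reason the oracle accuracy must be $\eps/\log n$. Since the boosting and the LDD query modified graphs $G'$, you also need fresh recursive projections of those graphs (the paper's $\tO{1}$ calls to $\DDE(G',\eps,h/\lambda)$). A single projection of $G$ reused for all queries is not enough.
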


By plugging in the best parallel algorithm for approximate SSSP \cite{CaoFR20} into \Cref{lem:simpledistanceDAGemb}, we obtain the following bound.

\begin{corollary}
    There is a randomized algorithm that, given a directed graph $G$ and a parameter $o(1) > \eps \ge 1/\polylog n$, constructs a $(1+\eps)$-distance-preserving DAG projection of $G$ (and its topological order) with width $n^{o(1)}$ in $\hO{m}$ work and $\hO{\sqrt{n}}$ depth.
\end{corollary}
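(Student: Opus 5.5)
The corollary follows by substituting a concrete parallel algorithm for the oracle in \Cref{lem:simpledistanceDAGemb}. I will use the parallel $(1+\eps')$-approximate SSSP algorithm of \cite{CaoFR20}, with $\eps'=\eps/\log n\ge 1/\polylog n$. That algorithm works on general directed graphs, so in particular it handles DAGs whose topological order is given; the order can simply be ignored. For such $\eps'$ it runs in $\tO{m}$ work and $n^{1/2+o(1)}$ depth on an $m$-edge graph, with high probability. So it is a valid implementation of an $\apxDSSSP{(1+\eps/\log n)}$ oracle.

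Next I account for the cost of all oracle calls. Let the oracle calls made by the algorithm of \Cref{lem:simpledistanceDAGemb} have input sizes $m_1,\dots,m_k$. Work-efficiency gives $\sum_i m_i\le n^{o(1)}\cdot m$, so the total work of the calls is $\sum_i \tO{m_i}=\hO{m}$.

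For depth, calls that are independent run in parallel. The depth is therefore governed by the longest dependency chain, which has length $\tO{1}$ by depth-efficiency. Each call on that chain has depth at most $n^{1/2+o(1)}$, because every $m_i\le m^{1+o(1)}$ and the graph has $n^{o(1)}\cdot n$ vertices (the width is $n^{o(1)}$). This gives depth $\tO{1}\cdot n^{1/2+o(1)}=\hO{\sqrt n}$.

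Any work or depth the reduction spends outside the oracle calls is within the same bounds. This is implicit in the claim that the theorem is an efficient parallel reduction, i.e., $\hO{m}$ work and $m^{o(1)}$ depth.

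Finally, correctness. Each oracle call fails with probability at most $1/n^c$, and there are only polynomially many calls. A union bound therefore shows that all calls succeed with high probability, and the width and topological-order guarantees carry over directly from \Cref{lem:simpledistanceDAGemb}.

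The one point that needs care is whether the oracle's depth bound should be stated in terms of $n$ or in terms of the vertex count of the DAG instances it is called on. Because the width is $n^{o(1)}$, those instances have only $n^{1+o(1)}$ vertices, so the $n^{1/2+o(1)}$ depth bound still holds.
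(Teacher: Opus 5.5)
Your proposal is correct and follows the paper's proof, which consists only of instantiating the $\apxDSSSP{(1+\eps/\log n)}$ oracle in \Cref{lem:simpledistanceDAGemb} with the parallel approximate SSSP algorithm of \cite{CaoFR20}. Your extra bookkeeping spells out what the paper leaves implicit: work via work-efficiency, depth via the $\widetilde{O}(1)$-length dependency chain, the union bound, and the observation that every oracle instance has only $n^{1+o(1)}$ vertices, so the $n^{1/2+o(1)}$ depth bound survives.
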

\subsection{Useful Tools for SSSP}

\paragraph{Boosting Approximate SSSP}. It is possible to reduce exact SSSP to their approximate variants.

For SSSP, we have the following result from \cite{RozhonHMGZ23}.

\begin{lemma}[\cite{RozhonHMGZ23}]\label{lem:boostingSSSP}
    Given a directed graph $G$ with edge lengths and a source $s \in V(G)$,
    there is an algorithm that solves SSSP on $G$ with source $s$. The algorithm makes $\tO{1}$-work-efficient and $\tO{1}$-depth-efficient an oracle that solves
    $O\bigl(\max_{t \in V(G)} \dist_G(s,t)\bigr)$-length $\apxSSSP{(1+o(1/\log n))}$.
\end{lemma}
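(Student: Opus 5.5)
The plan is to follow the scaling-plus-potential strategy of \cite{RozhonHMGZ23}: repeatedly call the approximate oracle, turn its output into a potential $\phi$, and recurse on the graph with reduced lengths $\ell_\phi(u,v)=\ell(u,v)+\phi(u)-\phi(v)$. In this graph all shortest paths are unchanged, but the distances from $s$ are much smaller. After $O(\log n)$ rounds the remaining distances are so small relative to edge granularity that the approximate answer is forced to be exact. Work and depth are then $\tO{1}$ times one oracle call, since each round makes $O(1)$ oracle calls plus near-linear, polylog-depth bookkeeping.

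\textbf{Step 1 (scaling).} Write each integer length in binary and process bits from most to least significant. This mirrors Goldberg/Gabow-style scaling. At scale $j$, the graph has lengths $\ell_j=\lfloor \ell/2^{j}\rfloor$. The potential inherited from scale $j+1$ (doubled) gives nonnegative reduced lengths, and in this reduced graph every distance from $s$ is $O(n)$, or more precisely $O(\max_t\dist(s,t))$ of the current instance. So each scale reduces to the following task: exact SSSP from $s$ in a graph with nonnegative integer lengths where all relevant distances are at most $h=O(\max_t \dist_G(s,t))$. This is exactly the length bound the oracle is allowed to assume.

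\textbf{Step 2 (one boosting round).} Call the $h$-length $\apxSSSP{(1+\eps)}$ oracle with $\eps=o(1/\log n)$. It returns a tree with values $\tilde d(v)\in[\dist(s,v),(1+\eps)\dist(s,v)]$. Since $\tilde d$ comes from tree path lengths, it is attained by a real path. I would set $\phi(v)=\tilde d(v)$ and then repair it into a feasible potential: edges with $\ell_\phi<0$ witness that $\tilde d$ violates the triangle inequality. The fix is to replace $\tilde d$ by $\min(\tilde d(v), \tilde d(u)+\ell(u,v))$-style relaxations. Alternatively, and this is what I would try first, take $\phi(v)=\lfloor \tilde d(v)/(1+\eps)^{\text{level}}\rfloor$ on a geometric grid so that violations are bounded by $\eps\,\dist$, and absorb them by adding a small uniform slack to every edge. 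The target invariant is that, after reduction, $\dist_{\ell_\phi}(s,v)\le O(\eps)\dist(s,v)$ plus a small additive term. Iterating $k=O(\log n)$ times multiplies the accumulated error by $(1+\eps)^{k}=1+o(1)$. This is precisely why $\eps=o(1/\log n)$ is needed: the recursion depth is logarithmic and the approximation factors compound. After these rounds the reduced distances are below $1$ in integer units, hence $0$ on the true shortest-path edges, and a final zero-length reachability/BFS pass in the reduced graph recovers the exact tree.

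\textbf{Main obstacle.} The hard step is Step 2: converting an approximate distance estimate into a \emph{valid} potential (all reduced lengths nonnegative) while still shrinking distances by a constant or $\eps$ factor. Raw approximate distances violate the triangle inequality by up to $\eps\cdot\dist$. Negative reduced edges would break the next oracle call, which requires nonnegative lengths, and would also break the $h$-length guarantee. I would handle this by rounding each edge up to a multiple of $\eps h/n$-ish granularity before calling the oracle, so that the oracle's multiplicative error becomes an additive error smaller than one unit of the next scale. The careful accounting is that the $\log n$ levels of this rounding add up to at most a constant number of units per scale. Everything else (the scaling wrapper, the $\tO{1}$ work and depth accounting, and the final extraction of the tree) is routine.
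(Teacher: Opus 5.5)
\textbf{There is a genuine gap in Step 2.} You correctly flag it as the main obstacle, but you do not resolve it, and it is the entire content of the lemma.

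A weak $(1+\epsilon)$-approximate tree only guarantees $\dist(s,v)\le\tilde d(v)\le(1+\epsilon)\dist(s,v)$. This is compatible with $\tilde d(u)=\dist(s,u)$ and $\tilde d(v)=(1+\epsilon)\dist(s,v)$ for a shortest-path edge $(u,v)$ of length $1$. Then $\phi=\tilde d/(1+\epsilon)$, or any rounding or rescaling of it, violates feasibility on that single edge by roughly $\epsilon\,\dist(s,u)$, which can be of order $\epsilon h$. None of your remedies fixes this:
\begin{itemize}
\item One round of $\min(\tilde d(v),\tilde d(u)+\ell(u,v))$ does not restore the triangle inequality, because the value at $u$ can drop in the same round. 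Iterating to a fixed point is Bellman--Ford, with up to $n$ dependent rounds, which destroys the polylogarithmic depth.
\item A uniform per-edge slack large enough to absorb an $\epsilon h$ violation on one edge distorts path lengths in proportion to hop count, so shortest paths change.
\item The rounding fails quantitatively. At granularity $g\approx\epsilon h/n$, the oracle's error $\epsilon h$ is about $n$ units, not less than one. Making it less than one unit needs $g>\epsilon h$, and then rounding along an $n$-edge path adds error up to $ng\gg h$.
\end{itemize}
Your explanation of why $\epsilon=o(1/\log n)$ is needed is also off. Once each round produces a genuinely feasible potential, the rounds are exact bookkeeping and no approximation error compounds across them.

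\textbf{The missing idea is the one the paper relies on.} The paper does not reprove this lemma. It appeals to Theorem 1.7 and Lemma 3.1 of \cite{RozhonHMGZ23}, and adds only the observation that every oracle call made there is on a graph with maximum source distance $O(\max_t\dist_G(s,t))$. The core of that reduction is to first upgrade weak approximations to \emph{strongly} approximate distances. These are exact distances in a perturbed graph with $\ell\le\ell'\le(1+\epsilon)\ell$, so they satisfy $\tilde d(v)\le\tilde d(u)+(1+\epsilon)\ell(u,v)$ on every edge. Only then is $\phi=\lfloor\tilde d/(1+\epsilon)\rfloor$ a feasible integral potential, since flooring commutes with adding the integer $\ell(u,v)$. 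Its reduced distances are at most $\epsilon\,\dist(s,v)+1$, so a shrinking loop like yours can work. The weak-to-strong conversion is a logarithmic-depth recursion, and it, not your outer loop, is what consumes the $(1+o(1/\log n))$ accuracy.

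A smaller point: your final zero-length reachability pass is directed reachability. To keep depth polylogarithmic it must itself go through the oracle, which is exact at distance $0$ because its error is multiplicative. As written, your proposal reproduces the outer potential-reduction framework but not the step that makes it work.
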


\Cref{lem:boostingSSSP} is an implication of Theorem 1.7 and Lemma 3.1 in \cite{RozhonHMGZ23}. One minor difference is that in Theorem 1.7 they only specify the oracle as a general $\apxSSSP{(1+o(1/\log n))}$ on arbitrary graphs (not necessarily $G$), instead of an $O\bigl(\max_{t \in V(G)} \dist_G(s,t)\bigr)$-length $\apxSSSP{(1+o(1/\log n))}$. However, by examining their algorithm, the oracle is always called on graphs (not necessarily $G$) whose maximum source distance is $O\bigl(\max_{t \in V(G)} \dist_G(s,t)\bigr)$.

We can strengthen \Cref{lem:boostingSSSP} by a simple trick so that both the algorithm and the oracle are $h$-length bounded.

\begin{corollary}\label{cor:boostingSSSPhlength}
    Given a directed graph $G$ with edge lengths, a source $s \in V(G)$, and a length bound $h$,
    there is an algorithm solving $h$-length SSSP on $G$ with source $s$.
    The algorithm makes $\tO{1}$-work-efficient and $\tO{1}$-depth-efficient an oracle solving $O(h)$-length $\apxSSSP{(1+o(1/\log n))}$.
\end{corollary}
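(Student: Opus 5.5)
The plan is a standard truncation trick: modify $G$ so that every vertex has distance $O(h)$ from $s$, and vertices within distance $h$ keep their exact distances. Then apply \Cref{lem:boostingSSSP} to the modified graph. In the modified graph, the maximum source distance is $O(h)$, so the oracle required by \Cref{lem:boostingSSSP} is exactly an $O(h)$-length $\apxSSSP{(1+o(1/\log n))}$ oracle.

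\textbf{Construction.} Form $G_h$ from $G$ by adding, for every $v\in V(G)$, a shortcut edge $(s,v)$ of length $L:=2h$ (rounded to an integer). Keep parallel edges, or if an edge $(s,v)$ already exists, keep the one of smaller length. Then
\[
\dist_{G_h}(s,v)=\min\{\dist_G(s,v),\,L\}.
\]
Any path using a shortcut edge has length at least $L$. Conversely, every vertex is reached with length at most $L$. Two consequences follow:
\begin{itemize}
\item $\max_{t}\dist_{G_h}(s,t)\le 2h=O(h)$;
\item for every $t$ with $\dist_G(s,t)\le h$, we have $\dist_{G_h}(s,t)=\dist_G(s,t)$, and every shortest $s$--$t$ path in $G_h$ avoids shortcut edges.
\end{itemize}
The second holds because $h<L$, so shortcut paths are strictly longer. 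The size grows by only $n\le m+n$ edges, and the construction takes linear work and constant depth.

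\textbf{Applying the lemma.} Run the algorithm of \Cref{lem:boostingSSSP} on $G_h$ with source $s$. It makes $\tO{1}$-work/depth-efficient calls to an oracle for $O(\max_t\dist_{G_h}(s,t))=O(h)$-length $\apxSSSP{(1+o(1/\log n))}$. The oracle calls are on graphs derived inside that algorithm, not necessarily $G_h$. By the remark after \Cref{lem:boostingSSSP}, these graphs have maximum source distance $O(h)$, so this is precisely the oracle we are allowed. The output is an exact shortest-path tree $T$ of $G_h$.

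\textbf{Converting back to $G$.} Delete the shortcut edges from $T$. Vertices whose tree path used a shortcut get $\tilde d(s,v)=\infty$ (or are reattached arbitrarily via some path of $G$). Both $h$-length SSSP guarantees then hold. Every remaining tree path is a genuine path in $G$, so its length is at least $\dist_G(s,t)$. For $t$ with $\dist_G(s,t)\le h$, the tree path in $G_h$ has length $\dist_G(s,t)<L$, so it uses no shortcut and survives with exact length.

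\textbf{Main subtlety.} The one step I would check carefully is that, inside \Cref{lem:boostingSSSP}, the oracle's maximum source distance scales with that of the input graph $G_h$ and not with $G$. Our modification fixes the input's maximum source distance at $2h$, which is what makes the claimed $O(h)$-length oracle suffice.
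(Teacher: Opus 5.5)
Your proposal is correct and matches the paper's proof: add a length-$2h$ edge from $s$ to every vertex so that the maximum source distance is at most $2h$, run \Cref{lem:boostingSSSP} on the modified graph, and recover the $h$-length answer for $G$. The only difference is cosmetic. The paper thresholds the returned exact distances at $h$, outputting $+\infty$ above $h$, whereas you prune the shortcut edges from the returned tree; both correctly handle vertices with $\dist_G(s,t)>h$.
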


\begin{proof}[Proof of \Cref{cor:boostingSSSPhlength}]
    We build a new graph $G'$ by adding an edge of length $2h$ from $s$ to every other node in $V(G)$. Then we run \Cref{lem:boostingSSSP} on $G'$ and $s$. According to \Cref{lem:boostingSSSP}, it requires an oracle solving $O\bigl(\max_{t \in V(G')} \dist_{G'}(s,t)\bigr)$-length $\apxSSSP{(1+o(1/\log n))}$, where $O\bigl(\max_{t \in V(G')} \dist_{G'}(s,t)\bigr) \le 2h$ by the definition of $G'$. Thus, we can get the exact distances from $s$ in $G'$ by $\tO{1}$ calls to an $O(h)$-length $\apxSSSP{(1+o(1/\log n))}$ oracle on graphs with $O(m)$ edges.

    To obtain $h$-length SSSP on $G$, notice that if $\dist_G(s,t) \le h$ then $\dist_{G'}(s,t) = \dist_G(s,t)$. If $\dist_G(s,t) > h$, we do not need to return the exact distance in $G$. Hence, for every $t \in V(G)$, it suffices to return $\dist_G(s,t) = \dist_{G'}(s,t)$ if $\dist_{G'}(s,t) \le h$, and return $\dist_G(s,t) = +\infty$ otherwise.
\end{proof}

\paragraph{Low Diameter Decomposition}
A \emph{low-diameter decomposition (LDD)} decomposes a directed graph into small-diameter clusters.

\begin{definition}[Low-Diameter Decomposition]\label{def:ldd}
Let $G = (V,E)$ be a directed graph with edge weights and let $d$ be a positive integer. A \emph{low-diameter decomposition (LDD)} with diameter $d$ and slack $\SL$ of $G$ is a sequence of vertex sets $(V_1, V_2, \ldots, V_z)$, which is a partition of $V$, such that
\begin{itemize}
    \item each $V_i$ has weak diameter at most $d$ in $G$, and
    \item for any edge $(u,v) \in E$, the probability that $u \in V_i$ and $v \in V_j$ with $i > j$ is at most $\SL \cdot w(u,v)/d$. We call such an edge $(u,v)$ a \emph{reversed edge}.
\end{itemize}
\end{definition}

Ashvinkumar et al.~\cite{AshvinkumarBCGH24} show how to compute an LDD with $\tO{1}$ calls to an SSSP oracle.

\begin{lemma}[\cite{AshvinkumarBCGH24}]\label{lem:ldd}
    There is a randomized algorithm that, given a directed graph $G$ and a positive integer $d$, computes an LDD with diameter $d$ and slack $\SL = O(\log^2 n)$ using $\tO{1}$ calls to a $(\SL \cdot d)$-length SSSP oracle on graphs with $O(n)$ vertices and $O(m)$ edges.
\end{lemma}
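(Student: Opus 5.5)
The plan is to follow the recursive ``heavy/light'' ball-carving scheme for directed LDDs from \cite{BringmannCF23,AshvinkumarBCGH24}, adapted so that the output is an \emph{ordered} partition with weak-diameter guarantees. The single ordering rule I will use is this: a piece carved as an out-ball is placed \emph{after} everything else in the current subproblem, and a piece carved as an in-ball is placed \emph{before} it. With this rule, only the edges leaving an out-ball piece, or entering an in-ball piece, can become reversed edges. So it suffices to bound the probability that an edge $(u,v)$ is cut by a carved ball boundary in the ``bad'' direction.

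\textbf{Step 1 (heavy/light classification).} In a subproblem on vertex set $V'$ with $n'$ vertices, sample $\Theta(\log n)$ random vertices. Run $d$-length SSSP from each sample, and from each sample in the reversed graph. From this, estimate for every $v$ whether $|B^+(v,d/4)|$ and $|B^-(v,d/4)|$ exceed $n'/2$, up to constant-factor slack w.h.p. Call $v$ \emph{heavy} if both balls are large, \emph{out-light} if its out-ball is small, and \emph{in-light} otherwise.

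Any two heavy vertices $u,w$ satisfy $B^+(u,d/4)\cap B^-(w,d/4)\neq\emptyset$, so $\dist_G(u,w)\le d/2$. Hence the heavy vertices remaining at the end form one cluster of weak diameter at most $d$, placed in the middle of the order.

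\textbf{Step 2 (carving light vertices).} For each light vertex, draw a radius $r_v$ from a geometric/exponential distribution with mean $\Theta(d/\log n)$, truncated at $d/4$. Carve $B^{+}(v,r_v)$ for out-light $v$ and $B^{-}(v,r_v)$ for in-light $v$. Each carved piece has at most $n'/2$ vertices, so we recurse on it; the recursion depth is therefore $O(\log n)$.

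To avoid carving sequentially (which would need many SSSP calls), I would implement all carvings of one level simultaneously, Miller--Peng--Xu style. Attach a super-source $s^*$ with an edge to each out-light center $v$ of length $\Delta-r_v$, where $\Delta$ is the shift cap. One SSSP from $s^*$ then assigns each vertex to the center that reaches it first. The reversed graph is treated symmetrically for in-light centers, and out-light and in-light carvings are done in two phases. All distances involved are $O(d)\le \SL\cdot d$, so each call is a $(\SL\cdot d)$-length SSSP on a graph with $O(n)$ vertices and $O(m)$ edges. The total is $\tO{1}$ calls per level over $O(\log n)$ levels.

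\textbf{Step 3 (slack analysis).} Fix an edge $(u,v)$. At one recursion level, by memorylessness of the exponential shifts, the probability that the boundary of the winning ball separates $u$ from $v$ in the bad direction is $O(\log n\cdot w(u,v)/d)$. Here the $\log n$ comes from the truncation concentration and from the standard MPX ``top two shifts within $w$'' argument. A union bound over the $O(\log n)$ levels gives $\SL=O(\log^2 n)$.

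Two further facts complete the argument. Every final piece is either the heavy core, of weak diameter at most $d$, or a subproblem of constant size. Carved balls that are recursed into only refine the order locally, so edges internal to them are handled by induction.

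\textbf{Main obstacle.} I expect the hardest part to be Step 2. The simultaneous-shift implementation must be compatible with the prescribed ordering, and vertices claimed by different centers in the same phase must be ordered among themselves without creating extra reversed edges. My first attempt would be to order pieces of one phase by decreasing shifted arrival, as in the analyses of \cite{BringmannCF23}, and to check that an edge is reversed only when the two top shifted distances of its tail differ by less than $w(u,v)$. That event has probability $O(w\log n/d)$.
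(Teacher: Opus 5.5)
\textbf{The gap is in Step~2}, and it goes deeper than choosing the right order: simultaneous Miller--Peng--Xu carving does not satisfy the slack bound in directed graphs. Your memorylessness argument in Step~3 applies only to \emph{ball} boundaries. It covers a reversed edge $(x,y)$ with $x\in B^{+}(c,r_c)$ and $y\notin B^{+}(c,r_c)$, where $c$ is the single center owning $x$; this forces $r_c\in[\dist(c,x),\dist(c,x)+w)$. In a race, the boundary between two cells instead compares \emph{different} centers. In a directed graph the center that wins the head $y$ need not reach the tail $x$ cheaply, so $c_x\neq c_y$ forces no small gap at $x$ or at $y$.

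Concretely, take a directed $4$-cycle $a\to b\to c\to e\to a$ with lengths $\mu,w,\mu,w$. Here $\mu=\Theta(d/\log n)$ is the mean radius and $w\ll\mu$, and the cycle sits inside a large graph so that all four vertices are out-light centers. Suppose $r_a,r_c\in[1.3\mu,1.7\mu]$ and $r_b,r_e\le 0.1\mu$; this event has probability bounded below by a constant independent of $w$. Then the race produces the cells $\{a,b\}$ and $\{c,e\}$. The cheap edges $b\to c$ and $e\to a$ cross between these two pieces in opposite directions, so \emph{every} order of the pieces reverses one of them. That gives reversal probability $\Omega(1)$ rather than $O(w\log^2 n/d)$. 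Moreover, the top two shifted values at the tails $b$ and $e$ differ by at least $0.2\mu$, so the criterion you planned to verify is false.

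\textbf{What is missing} is a parallel carving whose pieces behave like sequentially carved balls. For example, each vertex could join the first center whose ball contains it, in an order fixed independently of the radii. Out-pieces would be placed in reverse carving order after the rest, and in-pieces in carving order before it. Then every reversed edge leaves the out-ball of the center owning its tail, or enters the in-ball of the center owning its head. Memorylessness then gives $O(w/\mu)$ per level.

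Realizing such a carving with only $\widetilde{O}(1)$ calls to a length-bounded SSSP oracle is exactly the nontrivial content of Lemma~4 of \cite{AshvinkumarBCGH24}. The paper does not reprove it. It invokes that lemma and only observes two things: the algorithm there inherently produces the order of the parts, and each of its oracle calls only needs distances up to $\SL\cdot d$.

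A smaller point: outputting subproblems of constant size as parts can violate the weak-diameter bound. The recursion must continue until every part is a heavy core, for instance a singleton.
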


\Cref{lem:ldd} follows from Lemma 4 in \cite{AshvinkumarBCGH24}, with two minor differences.
\begin{enumerate}
    \item Lemma 4 in \cite{AshvinkumarBCGH24} does not output the order of the sequence $(V_1, \ldots, V_z)$ but only outputs the set of reversed edges. However, this is not a problem since their algorithm inherently specifies the order of the partition.
    \item Lemma 4 in \cite{AshvinkumarBCGH24} only specifies the oracle as a general SSSP, instead of a $(\SL \cdot d)$-length SSSP. However, every oracle call in their algorithm only needs the vertices within distance at most $\SL \cdot d$ from the source, so it is effectively a $(\SL \cdot d)$-length SSSP oracle.
\end{enumerate}

\subsection{Applications of Distance DAG Projections}\label{subsec:applicationsofdistanceDAGembedding}

\paragraph{Reducing Exact SSSP to Approximate SSSP on DAGs.}
The first immediate implication is that we can reduce exact SSSP on general graphs to SSSP on DAGs.

\begin{restatable}{lemma}{SSSPtoDAG}\label{thm:SSSPtoDAG}
There is a parallel randomized algorithm solving (exact) SSSP on directed graphs, with $n^{o(1)}$-work-efficient $\tO{1}$-depth-efficient oracle calls to a $(1+o(1/\log n))$-approximate SSSP oracle on DAGs.\footnote{The reduction is strong enough that we can assume the DAG oracle is given a topological order as part of the inputs. This is a basic assumption across the paper: When we call an oracle on DAGs, a topological order of the DAG must be given.}
\end{restatable}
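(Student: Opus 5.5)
The plan is to combine \Cref{lem:boostingSSSP} with \Cref{lem:simpledistanceDAGemb}. The boosting lemma reduces exact SSSP on $G$ from $s$ to $\tO{1}$ work- and depth-efficient calls to an $\apxSSSP{(1+o(1/\log n))}$ oracle. Each such call is made on some graph $H$ of size $O(m)$, which is not necessarily $G$. So it suffices to answer each call using $(1+o(1/\log n))$-approximate SSSP on DAGs, with $n^{o(1)}$ work overhead and $\tO{1}$ depth overhead.

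To answer one call on a graph $H$ with source $s'$, I will proceed as follows.
\begin{enumerate}
\item Choose $\eps = 1/\log^2 n$. This satisfies $o(1) > \eps \ge 1/\polylog n$, and it guarantees $(1+\eps)^2 = 1+o(1/\log n)$.
\item Apply \Cref{lem:simpledistanceDAGemb} to $H$ with this $\eps$. This yields a $(1+\eps)$-distance-preserving DAG projection $D$ of width $n^{o(1)}$, together with its topological order. The construction itself uses $n^{o(1)}$-work-efficient, $\tO{1}$-depth-efficient calls to an $\apxDSSSP{(1+\eps/\log n)}$ oracle. Since $1+\eps/\log n = 1+o(1/\log n)$, this is an oracle of the allowed type.
\item Add a super-source $s_0$ to $D$, with zero-length edges to every copy in $\pi^{-1}(s')$. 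Place $s_0$ first in the topological order, so the result is still a DAG with a known order.
\item Call the DAG oracle once more from $s_0$ with approximation $1+\eps$. This gives an approximate shortest-path tree in $D$.
\item For each $t\in V(H)$, set $\tilde d(t)=\min_{t'\in\pi^{-1}(t)} \tilde d_D(s_0,t')$. The minimum is taken over at most $n^{o(1)}$ copies, so it costs $O(\log n)$ depth and work linear in $|V(D)|$.
\end{enumerate}

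By \Cref{lem:inheritedoneside} and the preservation guarantee,
\[
\dist_H(s',t) \le \tilde d(t) \le (1+\eps)^2\dist_H(s',t) = (1+o(1/\log n))\dist_H(s',t).
\]
A tree for $H$ is recovered by mapping each $t$ to its minimizing copy $t'$. Its parent in the tree on $D$ projects through the homomorphism $\pi$ to a predecessor of $t$ in $H$. This projected predecessor structure realizes walks whose lengths are at most $\tilde d(t)$, and those walks can be shortcut to a tree. The same answer also solves the length-bounded variant that the boosting lemma actually requires.

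Accounting: $|E(D)| = m^{1+o(1)}$ because the width is $n^{o(1)}$. Each call therefore inflates the work by $n^{o(1)}$ and the depth by $\tO{1}$. Composing with the $\tO{1}$ efficiency of \Cref{lem:boostingSSSP} gives the claimed $n^{o(1)}$-work-efficient, $\tO{1}$-depth-efficient reduction.

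The main obstacle is turning the output for $D$ into a valid approximate shortest-path tree of $H$, rather than only distance estimates. The tree is needed because \Cref{lem:boostingSSSP} may use the tree structure. If that extraction causes trouble, I would fall back on the projected predecessor pointers: they form a subgraph of $H$ in which every $t$ is reached within $\tilde d(t)$. A shortest-path tree of that subgraph could then be obtained by the same oracle. The alternative is to check that the boosting lemma only consumes distance estimates, which by the above already have the required $1+o(1/\log n)$ quality.
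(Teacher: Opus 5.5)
Your proposal is correct and is essentially the paper's argument. The paper likewise boosts with \Cref{lem:boostingSSSP} and answers each approximate call by building the projection of \Cref{lem:simpledistanceDAGemb} and minimizing over copies; formally this is \Cref{lem:reducingSSSPtoDAGs} together with \Cref{lem:lengthboundapxSSSPtoDAGs}. The differences are cosmetic: you use a zero-length super-source where the paper makes one DAG-oracle call per copy of the source, and you spell out the predecessor-pointer extraction of a tree, which the paper leaves implicit.
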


The main idea is as follows. We use \Cref{lem:simpledistanceDAGemb} to construct a DAG projection $D$ of $G$, and then run the oracle to solve SSSP on $D$. To get the distance between $(s,t)$ in $G$, we take the minimum over all pairs $(s',t')$ in $D$ with $\pi(s')=s$ and $\pi(t')=t$; this gives an approximate distance in $G$. Finally, we boost the approximate distance to the exact distance by using \Cref{lem:boostingSSSP}. For the formal argument, see \Cref{lem:reducingSSSPtoDAGs}.

\paragraph{Reducing Exact SSSP to Undirected Graphs.}
It is known that SSSP on DAGs can be reduced to (exact) SSSP on undirected graphs (see \cite[ Lemma 4.6]{HoppenworthXX25}). Thus, we obtain the following lemma.

\begin{restatable}{lemma}{SSSPtoUndir}\label{thm:SSSPtoUndir}
There is a parallel randomized algorithm solving (exact) SSSP on directed graphs, with $n^{o(1)}$-work-efficient $\tO{1}$-depth-efficient oracle calls to an exact SSSP oracle on undirected graphs.
\end{restatable}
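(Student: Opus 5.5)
We obtain \Cref{thm:SSSPtoUndir} by composing \Cref{thm:SSSPtoDAG} with the reduction of \cite[Lemma 4.6]{HoppenworthXX25}. By \Cref{thm:SSSPtoDAG}, exact SSSP on a directed graph $G$ reduces, with $n^{o(1)}$-work-efficient and $\tO{1}$-depth-efficient calls, to a $(1+o(1/\log n))$-approximate SSSP oracle on DAGs whose topological order is part of the input. It therefore suffices to implement that DAG oracle, on each queried DAG $D$, with $\tO{1}$-work-efficient and $O(1)$-depth-efficient calls to an exact undirected SSSP oracle. Composing the two reductions then multiplies the efficiency factors, giving $n^{o(1)}\cdot \tO{1}=n^{o(1)}$ work-efficiency and $\tO{1}$ depth-efficiency, as claimed.

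For the DAG-to-undirected step we use \cite[Lemma 4.6]{HoppenworthXX25}. Recall the idea: given a DAG $D$ with topological order $v_1,\dots,v_N$ and polynomially bounded lengths, set a potential $\phi(v_i)=i\cdot M$ for a sufficiently large $M$ (polynomial in $n$ times the maximum length). Replace each arc $(v_i,v_j)$, $i<j$, by an undirected edge of length $\ell(v_i,v_j)+\phi(v_j)-\phi(v_i)$, which is positive.

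Any undirected $s$--$t$ path that traverses some edge \enquote{backwards} pays the large potential difference twice. With $M$ chosen large enough, such a path is strictly longer than every monotone (forward) path whenever a forward path exists. Hence the undirected distance from $s$ to $t$ equals $\dist_D(s,t)+\phi(t)-\phi(s)$ whenever $t$ is reachable from $s$. Unreachable vertices are detected by comparing against the threshold implied by the potential gap (or by a separate reachability computation in the DAG, which is trivial given the topological order). The transformation itself takes linear work and polylog depth given the order.

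Since this returns exact distances in $D$, it certainly meets the $(1+o(1/\log n))$-approximation requirement. The undirected instances have the same size as $D$ and polynomially bounded weights, so the oracle-efficiency bounds are preserved. The main point requiring care is the choice of $M$ so that nonmonotone paths never win; this is where the polynomial weight bound and the given topological order are used. It is also why \Cref{thm:SSSPtoDAG} must guarantee that its DAG oracle calls come with topological orders, which the footnote to \Cref{thm:SSSPtoDAG} ensures.
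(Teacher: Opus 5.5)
Your proposal is correct and follows the paper's proof. The paper also composes \Cref{thm:SSSPtoDAG} with the reweighting $\ell(v_i,v_j)+(j-i)M$ from \cite[Lemma 4.6]{HoppenworthXX25}, under which any non-monotone undirected path pays at least $2M$ extra; your threshold test for unreachable vertices is, if anything, more careful than the paper's sketch. One small correction: drop the parenthetical alternative, because single-source reachability on a DAG is not known to be parallel-efficient even when a topological order is given (it is a special case of exact SSSP on DAGs, which is open in this setting), and the threshold test alone already suffices.
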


\begin{proof}
    We apply \Cref{thm:SSSPtoDAG} with the following algorithm that solves exact SSSP on a DAG $D$ using a single oracle call to exact SSSP on an undirected graph $D'$. We set $V(D) = V(D')$. Suppose $D$ has a topological order $(v_1,\dots,v_n)$.\footnote{The reduction from directed SSSP to DAGs assumes that a topological order for the input DAG is given. This is not always a standard assumption in parallel settings, but when we generate the DAG projection, a corresponding topological order is also produced.}
    For every edge $(v_i,v_j) \in E(D)$ with $i < j$, we create an undirected edge $(v_i,v_j) \in E(D')$ with length
    \[
        \ell_{D'}(v_i,v_j) = \ell_D(v_i,v_j) + (j-i) \cdot M,
    \]
    where $M$ is a sufficiently large constant chosen as described in the text.
    Notice that the $(v_i,v_j)$-distance in $D'$ is at most $\dist_D(v_i,v_j) + (j-i)\cdot M$ by following the shortest path in $D$, and any path in $D'$ that does not follow the topological order must incur an extra $(j-i+1)\cdot M > \dist_D(v_i,v_j) + (j-i)\cdot M$. Thus, running exact SSSP on $D'$ gives exact SSSP on $D$ after subtracting the corresponding $(j-i)\cdot M$ term.
\end{proof}

\paragraph{Reducing Hop-set Construction to DAGs.}
The definition of DAG projection also immediately gives a hop-set reduction from general graphs to DAGs.

\begin{restatable}{lemma}{HopsettoDAG}\label{thm:HopsettoDAG}
Suppose there is an oracle $\Ohs$ constructing $(\beta,\eps)$-hopset of size $s(m,n)$ for $o(1)>\eps>1/\polylog(n)$ on DAGs, then there is a randomized algorithm constructing $(\beta,3\eps)$-hopset of size $s(m^{1+o(1)},n^{1+o(1)})$ on general directed graphs, with $n^{o(1)}$-work-efficient and $\tO{1}$-depth-efficient calls to $(1+\eps/\log n)$-approximate SSSP and $\Ohs$. 

\end{restatable}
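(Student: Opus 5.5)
The approach is to build a distance DAG projection with \Cref{lem:simpledistanceDAGemb}, run the DAG hopset oracle $\Ohs$ on it, and project the resulting hop edges back to $G$ through $\pi$. Concretely, I apply \Cref{lem:simpledistanceDAGemb} with accuracy $\eps$ to obtain a $(1+\eps)$-distance-preserving DAG projection $D$ of $G$ together with its topological order. By the width bound, $D$ has $n^{1+o(1)}$ vertices, and by construction it has $m^{1+o(1)}$ edges. This step uses exactly the $n^{o(1)}$-work-efficient and $\tO{1}$-depth-efficient calls to $(1+\eps/\log n)$-approximate SSSP on DAGs promised by the theorem.

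Next I add a super-source $s^\ast$ and a super-sink $t^\ast$ to $D$, with zero-weight edges from $s^\ast$ to every copy and from every copy to $t^\ast$. A cleaner alternative is to leave $D$ unchanged and work with pairs of copies directly. I then call $\Ohs$ on $D$ to get a $(\beta,\eps)$-hopset $H_D$ of size $s(m^{1+o(1)},n^{1+o(1)})$. Because $D$ is a DAG, every hop edge $(x,y)\in H_D$ satisfies $\ell_{H_D}(x,y)\ge \dist_D(x,y)$. I output
\[
H=\{(\pi(x),\pi(y)) : (x,y)\in H_D\},
\]
giving each projected edge the length $\ell_{H_D}(x,y)$ and keeping the minimum length when parallel edges arise. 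Clearly $|H|\le |H_D|$.

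For the lower bound on distances, take any path in $G\cup H$. Each $H$-edge is the projection of some $(x,y)$ with length at least $\dist_D(x,y)$. By \Cref{lem:inheritedoneside}, $\dist_D(x,y)\ge \dist_G(\pi(x),\pi(y))$. So every $H$-edge is no shorter than the corresponding distance in $G$, and therefore $\dist_{G\cup H}(s,t)\ge \dist_G(s,t)$.

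For the hop bound, fix $s,t$. By the $(1+\eps)$-preservation of $D$, there are copies $s'\in\pi^{-1}(s)$ and $t'\in\pi^{-1}(t)$ with $\dist_D(s',t')\le(1+\eps)\dist_G(s,t)$. The hopset guarantee gives a path of at most $\beta$ edges in $D\cup H_D$ from $s'$ to $t'$ whose length is at most $(1+\eps)\dist_D(s',t')$. Projecting this path edge by edge through $\pi$ is weight-preserving on $D$-edges, because $D$ is a genuine projection in the sense of \Cref{def:inherited}. Each $H_D$-edge maps to an $H$-edge of no greater length. The result is a $\beta$-hop path in $G\cup H$ of length at most $(1+\eps)^2\dist_G(s,t)\le(1+3\eps)\dist_G(s,t)$, using $\eps<1$.

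The main obstacle is making sure the projection is genuine, with no Steiner vertices, so that every edge of $D$ maps to an edge of $G$ and hops do not multiply. \Cref{def:inherited} guarantees this. The other point to check is that the hopset oracle is defined for all pairs of $D$ rather than only for pairs of original vertices; the pair-of-copies argument above needs only the standard all-pairs hopset guarantee on $D$.
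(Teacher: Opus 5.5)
Your proposal is correct and follows the paper's proof essentially step for step: you build the $(1+\eps)$-distance-preserving DAG projection via \Cref{lem:simpledistanceDAGemb}, run $\Ohs$ on it, project the hop edges through $\pi$, and use $(1+\eps)^2\le 1+3\eps$, relying on $\pi$ being a genuine weight-preserving homomorphism just as the paper does. One small correction: the bound $\ell_{H_D}(x,y)\ge \dist_D(x,y)$ comes from the first hopset property of $H_D$ (adding $H_D$ never shortens distances in $D$), not from $D$ being a DAG.
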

\begin{proof}
    The algorithm takes a graph $G$ and applies \Cref{lem:simpledistanceDAGemb} with parameter $\eps$, which gives a DAG projection $D$ of $G$ with width $n^{o(1)}$, using $n^{o(1)}$ work-efficient and $\tO{1}$ depth-efficient calls to $\apxDSSSP{(1+\eps/\log n)}$. Since $D$ is a projection, it has $m^{1+o(1)}$ edges and $n^{1+o(1)}$ vertices.
    
    We then apply the oracle $\Ohs$ on $D$ to get a hop set $H^D$ of size $s(m^{1+o(1)}, n^{1+o(1)})$ which is a $(\beta,\eps)$-hop-set of $D$. Let
    \[
        H = \{ (\pi(u), \pi(v)) \mid (u,v) \in H^D \}.
    \]
    We show that $H$ is a $(\beta, 3\eps)$-hop-set for $G$.
    
    First, for any $(u,v) \in H$, pick $(u',v') \in H^D$ with $\pi(u') = u$ and \(\pi(v') = v\). By construction, the weight of $(u,v)$ in $H$ is the same as the weight of $(u',v')$ in $H^D$, and since $D$ is a distance projection, we have
    \[
        \dist_G(u,v) \le \dist_D(u',v') \le \ell_{H^D}(u',v') = \ell_H(u,v),
    \]
    so adding $H$ does not create shorter-than-original edges in $G$.
    
    Now fix $s,t \in V$. By the definition of a distance DAG projection, there exist $s',t' \in V(D)$ with $\pi(s') = s$, $\pi(t') = t$ such that
    \[
        \dist_D(s',t') \le (1+\eps) \cdot \dist_G(s,t).
    \]
    Since $H^D$ is a $(\beta,\eps)$-hop-set for $D$, there is a path $p^D$ in $D \cup H^D$ with at most $\beta$ edges and
    \[
        \ell_{D \cup H^D}(p^D) \le (1+\eps) \cdot \dist_D(s',t')
        \le (1+\eps)(1+\eps) \cdot \dist_G(s,t)
        \le (1+3\eps) \cdot \dist_G(s,t)
    \]
    for $\eps \le 1$. The projection $\pi(p^D)$ is a path in $G \cup H$ with the same number of edges and the same length, so
    \[
        \dist^{(\beta)}_{G \cup H}(s,t) \le (1+3\eps) \cdot \dist_G(s,t).
    \]
\end{proof}

\paragraph{Reducing Distance Preserver Construction to DAGs.}
We can reduce the construction of $(1+\epsilon)$-approximate  distance preservers to DAGs.
\begin{restatable}{lemma}{PreservertoDAG}\label{thm:PreservertoDAG}
Suppose there is an oracle $\Odp$ constructing $(1+\eps)$-approximate distance preserver of size $s(n,p)$ for $o(1)>\eps>1/\polylog(n)$ on $n$-node DAGs with $p$ demand pairs, then there is a randomized algorithm constructing $(1+3\eps)$-approximate distance preserver of size $s(n^{1+o(1)},p)$ on $n$-node general graphs with $p$ demand pairs, with $n^{o(1)}$-work-efficient and $\tO{1}$-depth-efficient calls to to $(1+\eps/\log n)$-approximate SSSP and $\Odp$.

\end{restatable}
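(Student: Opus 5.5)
Following the proof of \Cref{thm:HopsettoDAG}, we first apply \Cref{lem:simpledistanceDAGemb} with parameter $\eps$ to $G$. This yields a $(1+\eps)$-distance-preserving DAG projection $D$ with width $n^{o(1)}$, and so $|V(D)|\le n^{1+o(1)}$. The construction makes $n^{o(1)}$-work-efficient and $\tO{1}$-depth-efficient calls to $\apxDSSSP{(1+\eps/\log n)}$.

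Next we lift the demand pairs. For each $(s,t)\in P$ we choose copies $s'\in\pi^{-1}(s)$ and $t'\in\pi^{-1}(t)$ with $\dist_D(s',t')\le(1+\eps)\dist_G(s,t)$. The key constraint is that we keep exactly one pair per demand, so that $P'$ still has $p$ pairs. To find these copies algorithmically, we add a super-source joined by zero-length edges to $\pi^{-1}(s)$, run approximate SSSP on the resulting DAG, and take the copy $t'$ that attains the minimum. Running one such computation per demand pair is too expensive. Instead we compute $(1+\eps/\log n)$-approximate distances from each copy set and absorb the extra error into the $3\eps$ budget. A cleaner route is to note that \Cref{def:DAGembeddingDistance} already guarantees such a pair exists; the size bound we want is existential, and only the efficiency claim requires computing the pair. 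We then call $\Odp$ on $D$ with the pairs $P'$ and obtain a subgraph $H^D\subseteq D$ of size $s(n^{1+o(1)},p)$ satisfying $\dist_{H^D}(s',t')\le(1+\eps)\dist_D(s',t')$.

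The output is $H=\{(\pi(u),\pi(v)) : (u,v)\in E(H^D)\}$. Because $D$ is a genuine projection (\Cref{def:inherited}), every edge of $H^D$ maps to an edge of $G$ with the same length. Hence $H\subseteq G$ and $|E(H)|\le|E(H^D)|$. Correctness follows the same way as in the hop-set proof. For each $(s,t)\in P$, the $s'\to t'$ path in $H^D$ projects to an $s\to t$ walk in $H$ of equal length, so
\[
\dist_H(s,t)\le(1+\eps)\dist_D(s',t')\le(1+\eps)^2\dist_G(s,t)\le(1+3\eps)\dist_G(s,t).
\]
The lower bound $\dist_G\le\dist_H$ holds trivially because $H\subseteq G$.

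The main obstacle is the pair-selection step: we must identify the good copies $(s',t')$ while keeping the demand count at exactly $p$ and staying within the stated oracle costs. The fallback of including all copy pairs would inflate the demand count by a factor of $n^{o(1)}$, which the target bound $s(n^{1+o(1)},p)$ does not allow.
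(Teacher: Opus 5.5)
Your setup is fine: you build $D$ via \Cref{lem:simpledistanceDAGemb}, project $H^D$ back through the homomorphism $\pi$, and bound the error by $(1+\eps)^2\le 1+3\eps$. The problem is the step you yourself flag as the main obstacle, which is a genuine gap. The statement is algorithmic and requires $n^{o(1)}$-work-efficient oracle calls, so appealing to the mere existence of good copies $(s',t')$ does not suffice. Your proposed way of computing them runs approximate SSSP from each copy set $\pi^{-1}(s)$. That needs one SSSP call per distinct source in $P$, i.e.\ up to $\min(p,n)$ calls on a graph with $m^{1+o(1)}$ edges, which is not work-efficient.

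The missing idea is to let $\Odp$ do the copy selection itself. For every $v\in V(G)$, add a new vertex $\In{v}$ with zero-length edges $(\In{v},v')$, and a new vertex $\Out{v}$ with zero-length edges $(v',\Out{v})$, for all $v'\in\pi^{-1}(v)$. Call the result $D'$.
\begin{itemize}
\item $D'$ is still a DAG, since the new vertices are pure sources or pure sinks, so a topological order extends trivially.
\item $D'$ has $n^{1+o(1)}+2n$ vertices.
\item $\dist_{D'}(\In{s},\Out{t})=\dist_D(\pi^{-1}(s),\pi^{-1}(t))$.
\end{itemize}
Now call $\Odp$ once on $D'$ with the $p$ pairs $(\In{s},\Out{t})$, so the demand count stays exactly $p$. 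The $\In{}$-vertices have no incoming edges and the $\Out{}$-vertices have no outgoing edges. Hence any $\In{s}\to\Out{t}$ path in $H^D$ has its interior inside $V(D)$, starting at a copy of $s$ and ending at a copy of $t$. Discard the gadget edges and project the rest to obtain $H\subseteq G$ with $\dist_H(s,t)\le(1+\eps)\,\dist_D(\pi^{-1}(s),\pi^{-1}(t))\le(1+3\eps)\,\dist_G(s,t)$. This needs no SSSP calls beyond those used to construct $D$.
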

\begin{proof}
    The algorithm takes a graph $G$ and applies \Cref{lem:simpledistanceDAGemb} with parameter $\eps$, which gives a DAG projection $D$ of $G$ with width $n^{o(1)}$, using $n^{o(1)}$ work-efficient and $\tO{1}$ depth-efficient calls to $\apxDSSSP{(1+\eps/\log n)}$. Again, $D$ has $m^{1+o(1)}$ edges and $n^{1+o(1)}$ vertices.
    
    We modify $D$ as follows: for every $v \in V(G)$, add two vertices $\In{v}$ and $\Out{v}$ to $D$, and for every $v' \in V(D)$ with $\pi(v') = v$ add edges $(\In{v}, v')$ and $(v', \Out{v})$ of length $0$. Denote the resulting graph by $D'$.
    
    We apply the oracle $\Odp$ on $D'$ with demand pairs
    \[
        P^D := \{ (\In{u}, \Out{v}) \mid (u,v) \in P \}
    \]
    to get $H^D$ of size $s(n^{1+o(1)}, p)$. Let
    \[
        H = \{ (\pi(u), \pi(v)) \mid (u,v) \in H^D,\, u \in V(D),\, v \in V(D) \}.
    \]
    We emphasize that $H$ is well-defined because $\pi$ is a graph homorphism according \Cref{def:DAGembeddingDistance}. (A partial projection, as in \Cref{def:DAGembeddingdummy}, is not enough.)
    
    Now, we show that $H$ is a $(1+3\eps)$-approximate distance preserver for $G$.
    Fix $(s,t) \in P$. Then $(\In{s}, \Out{t}) \in P^D$, so
    \[
        \dist_{H^D}(\In{s}, \Out{t}) \le (1+\eps) \cdot \dist_{D'}(\In{s}, \Out{t}).
    \]
    By the way we added zero-length in/out vertices, there exist $s',t' \in V(D)$ with $\pi(s') = s$ and $\pi(t') = t$ such that
    \[
        \dist_{D'}(\In{s}, \Out{t}) = \dist_D(s',t').
    \]
    Also, the path witnessing $\dist_{H^D}(\In{s}, \Out{t})$ can be chosen to stay inside $V(D)$ except for the endpoints, so its projection is a path in $H$. Therefore,
    \[
        \dist_H(s,t) \le (1+\eps) \cdot \dist_D(s',t') \le (1+3\eps) \cdot \dist_G(s,t),
    \]
    where the last inequality uses that $D$ is a $(1+\eps)$-distance-preserving DAG projection of $G$.
\end{proof}
Previously, the best $(1+\eps)$-approximate distance preserver has size $O(n+\sqrt{n}\cdot p)$ \cite{CoppersmithE05,HoppenworthXX25}\footnote{Lemma 4.6 of \cite{HoppenworthXX25} shows that distance preserver on DAGs reduces to (exact) distance preservers on undirected graphs, which exhibits a construction in \cite{CoppersmithE05}.} We thus immediately get \Cref{cor:distancepreserver} by applying \Cref{thm:PreservertoDAG}.

\subsection{Congestion DAG Projections}
\label{subsec:CongDAGProfDef}

We can also consider the case where we want $D$ to preserve flow. For technical reasons, we allow the DAG projection $D$ to contain \emph{dummy vertices} that map to a dummy value $\bot$, and we do not require the projection to be a homomorphism. This can be removed with some efforts, but this version makes the later algorithms cleaner. We call it a \emph{partial projection}, and we will usually omit the word “partial”.

\begin{definition}[Partial Projections]\label{def:DAGembeddingdummy}
    Let $G = (V,E)$ be a graph with edge capacities. A \emph{DAG (partial) projection onto $G$} is a DAG $D = (V',E')$ together with a (partial) projection map $\pi : V' \to V \cup \{\bot\}$.
\end{definition}

Now we define congestion-preserving projections.

\begin{definition}\label{def:DAGembeddingCongestion}
    A DAG projection $D$ of $G = (V,E)$ is \emph{$\kappa$-congestion-preserving} if for every $S,T\subseteq V(G)$, we have
    \[
    \maxflow_{G}(S,T)\le\maxflow_{G'}(\pi^{-1}(S),\pi^{-1}(T))\le \kappa\cdot \maxflow_{G}(S,T).
    \]
\end{definition}

Often we want not only the existential property above, but also an explicit and efficient way to project flows and cuts from $D$ back to $G$. This is given by the following definition.

\begin{definition}[Projection Algorithm]
    Let $D$ be a DAG projection of $G = (V,E)$. A $\kappa$-congestion-preserving \emph{projection algorithm} associated with $D$ is an algorithm that, given either
    \begin{itemize}
        \item a flow $\ff^D$ in $D$ with $\pi(\Supp(\Dem(\ff^D))) \subseteq V$, returns a flow $\ff$ in $G$ with congestion at most $\kappa \cdot \Cong(\ff^D)$ such that $\pi(\Dem(\ff^D)) = \Dem(\ff)$; or
        \item a cut $S^D$ in $D$, returns a cut $S$ in $G$ with value at most $\Val(S^D)$ such that $\HL(S^D) \subseteq S \subseteq \pi(S^D)$.
    \end{itemize}
    The projection algorithm is \emph{efficient} if it takes $\hO{|E(D)|}$ work and $\hO{1}$ depth.

    When we say $D$ is associated with a projection algorithm, we assume the projection is surjective, i.e. $\pi^{-1}(v) \neq \emptyset$ for every $v \in V(G)$.
\end{definition}

In \Cref{sec:congestionDAGembedding}, we will prove the following theorem, showing that congestion DAG projections can be constructed efficiently in the parallel setting using an approximate max-flow oracle on DAGs.

\begin{restatable}{theorem}{congestionDAGembedding}
\label{lem:congestionDAGembedding}
Suppose there is an oracle for $\amfmcDAG{\alpha}$ where $\alpha = n^{o(1)}$. Then there is a randomized algorithm that, given a directed graph $G = (V,E)$ with edge capacities, outputs a DAG projection $D$ of $G$ together with an efficient $n^{o(1)}$-congestion-preserving projection algorithm, such that $|E(D)| = \hO{|E(G)|}$ (and $D$’s topological order is returned). The algorithm makes $n^{o(1)}$ work-efficient and $n^{o(1)}$ depth-efficient calls to $\amfmcDAG{\alpha}$.
\end{restatable}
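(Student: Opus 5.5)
We follow the overview in \Cref{subsec:congestion-dag-projections}, but replace the exact expander hierarchy with a bottom-up, weaker hierarchy, and break the circular dependence on general-graph max flow with a spiral recursion on an additive error $\delta$.

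\textbf{Step 1: the DAG projection from a hierarchy.} Given edge sets $E_1,\dots,E_t$ with $t=O(\sqrt{\log n})$, we build $D$ recursively from the lowest level upward, exactly as in the overview. For each SCC $C$ of $G-E_{>i}$, take two copies of the projection for $G[C]-E_i$ and add a hub $w_C$ joined with capacities $\vol_{E_i}\circ\pi$. Then concatenate the pieces along the topological order of the SCCs, using copies of the edges of $G$.

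The width is $2^t=n^{o(1)}$, so $|E(D)|=\hO{m}$, and the topological order comes for free from the construction.

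The projection algorithm has two parts.
\begin{itemize}
\item \emph{Flows.} We project non-hub flow via $\pi$. The demand induced at each hub $w_C$ is $n^{o(1)}\vol_{E_i}$-respecting, so we reroute it inside $G[C]$ using the certified expansion. The congestion multiplies over the $t$ levels to $(n^{o(1)}/\phi)^{t}$. Choosing $\phi=2^{-O(\sqrt{\log n})}$ keeps this $n^{o(1)}$.
\item \emph{Cuts.} Given $S^D$, we output the set $S$ of vertices all of whose copies lie in $S^D$, adjusted level by level. We argue $\Val(S)\le\Val(S^D)$ by charging each cut edge of $G$ to a cut edge of $D$: either a concatenation copy, or a hub edge whose capacity $\vol_{E_i}$ pays for the $E_i$-edges.
\end{itemize}
The lower bound $\maxflow_G\le\maxflow_D$ follows from the three-segment routing in the overview.

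\textbf{Step 2: computing the hierarchy with approximate certificates.} We compute the hierarchy bottom-up using a standard cut-matching/expander-decomposition routine. That routine needs approximate max-flow/min-cut on general graphs; we realize it by calling $\amfmc{(\alpha,\delta)}$ on subgraphs $G[C]$. The weaker hierarchy only guarantees that $\vol_{E_i}|_C$ is $\phi$-expanding \emph{up to additive error} $\delta$. This suffices for a projection algorithm with additive error $\delta n^{o(1)}$.

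\textbf{Step 3: the spiral recursion.} To solve $\amfmc{(\alpha',\delta)}$ on a general graph, we build a $\delta'$-additive congestion DAG projection with $\delta'=\delta/n^{o(1)}$ and call the $\amfmcDAG{\alpha}$ oracle on it. We then map the flow and cut back with the projection algorithm, losing an $n^{o(1)}$ factor.

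Conversely, building a $\delta'$-additive projection needs $\amfmc{(\cdot,\delta'\cdot n^{o(1)})}$, i.e.\ at a larger additive error. We recurse on $\log_{n^{o(1)}}(nW)$ error scales. At the base, $\delta\ge$ total capacity, so the all-or-nothing answer is trivial.

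Each scale multiplies work by $n^{o(1)}$. We choose the per-scale factor $2^{\Theta(\sqrt{\log n})}$ so the number of scales is $O(\sqrt{\log n})$ and the total stays $n^{o(1)}$. Finally, we remove the additive error at the top: capacities are integers $\ge1$, so we take $\delta<1/\poly$, or equivalently scale capacities.

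\textbf{Main obstacle.} The hardest part is the accounting in Steps 2--3. We must keep every blow-up subpolynomial while compounding three sources of loss: the approximation factor $\alpha$, the hierarchy depth $t$, and the number of spiral scales. The plan is to choose $\phi$, $t$ and the scale ratio all as $2^{\Theta(\sqrt{\log n})}$, and to verify that the cut projection's value bound survives additive errors at every level.
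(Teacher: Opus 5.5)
Your architecture (bottom-up weak hierarchy, a spiral on the additive error, trivial base case at $\delta\ge U_G(E)$) matches the paper's. The gap is that nothing in your construction makes the spiral progress.

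\textbf{The missing mechanism.} For the recursion to reach the base case, a projection with additive error $\delta$ must be buildable from a max-flow oracle whose additive error is \emph{larger} than $\delta$. The gain must exceed the factor $\alpha$ lost when a $(\kappa,\delta)$-projection is turned back into an $\amfmc{(\alpha\kappa,\alpha\delta)}$ oracle.

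Your Step~2 says the opposite: an additive-$\delta$ oracle yields a projection with error $\delta n^{o(1)}$. This contradicts the direction you assert in Step~3. Taken literally, every turn demands a smaller additive error and never bottoms out.

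The overview construction cannot supply the gain either. An oracle with additive error $\delta'$ certifies expansion only for edge sets of mass at least about $\delta'\cdot 2^{\Theta(\sqrt{\log n})}$; this is the precondition in \Cref{lem:terminalexpanderdecomposition}. So the hierarchy must stop with an uncertified top layer $E_t$ of up to that mass. All flow through its hub is then lost, an additive error possibly far above $\delta'$.

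The missing idea is the replication in \Cref{lem:expanderdecompositiontoDAGembedding}. At the top level, take $2\sigma$ replicas of the concatenated DAG. Join them by infinite-capacity edges from each vertex's last copy in replica $k$ to its first copy in replica $k+1$, and by copies of the $E_t$-edges. Attach the hub only between replicas $\sigma$ and $\sigma+1$, and scale all capacities by $1/\sigma$. Then:
\begin{itemize}
\item the discarded hub flow is only $O(U_G(E_t)/\sigma)$;
\item the rest still projects with congestion $O(2^t/\phi)$;
\item the cut projection picks the best of the $\sigma$ replicas preceding the hub (or following it, if the hub lies in $S^D$) by averaging. Every $E_t$-edge cut in some replica is paid either by its cut replica-to-replica copy, or by the hub edge at its head, and the latter happens for at most one replica.
\end{itemize}
This averaging is also what gives $\maxflow_G\le\maxflow_D$. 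Your three-segment routing no longer fits through a hub whose capacity is divided by $\sigma$. Because $\sigma$ only affects the size of $D$, linearly and without compounding across scales, it can be made very large.

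\textbf{Parameter accounting.} This also fails, in two places.

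In Step~1, $(n^{o(1)}/\phi)^t$ with $1/\phi=2^{\Theta(\sqrt{\log n})}$ and $t=\Theta(\sqrt{\log n})$ is $n^{\Theta(1)}$, not $n^{o(1)}$. The correct bound is additive. The level-$i$ hub demand is $2^{t-i}\vol_{E_i}$-respecting and is routed once, directly in $G$, so the total congestion is $\sum_i 2^{t-i}/\phi=O(2^t/\phi)$.

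In Step~3, each scale multiplies the congestion by roughly $2^{\Theta(\sqrt{\log n})}\alpha$, since a decomposition driven by an $\amfmc{(\alpha\kappa',\cdot)}$ oracle needs $\phi<1/(\alpha\kappa'\log^6 n)$. Each scale also multiplies the work by the $2^{\Theta(\sqrt{\log n})}$ max-flow calls per decomposition. With a per-scale ratio of $2^{\Theta(\sqrt{\log n})}$ you get $\Theta(\sqrt{\log n})$ scales, so $\kappa^*\approx n^{\Theta(1)}\alpha^{\Theta(\sqrt{\log n})}$, which is polynomial or worse.

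The per-scale reduction of $\delta$ must dominate these losses. The paper takes $\sigma=2^{2\log^{0.7}n}\alpha^{\eta}$ with $\eta=1/\sqrt{\log_n\alpha}=\omega(1)$. Then $\delta$ drops by $2^{\log^{0.7}n}\alpha^{\eta}$ per scale, and there are $d=O(\log n)/\log(2^{\log^{0.7}n}\alpha^{\eta})$ scales. This gives $\kappa^*\le(2^{O(\sqrt{\log n})}\alpha)^d\le 2^{O(\log^{0.8}n)}n^{O(1/\eta)}=n^{o(1)}$, while the size stays $O(2^t\sigma m)=n^{o(1)}m$. The work bound is the same calculation.
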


\subsection{Applications of Congestion DAG Projections}
\label{subsec:AppCongDAG}

\paragraph{Reducing Exact Max Flow to Approximate Max Flow on DAGs.}
We first show the immediate application of reducing exact max flow to DAGs.

\begin{restatable}{lemma}{MaxflowtoDAG}\label{thm:MaxflowtoDAG}
There is a parallel randomized algorithm solving (exact) max flow on directed graphs, with $n^{o(1)}$-work-efficient $n^{o(1)}$-depth-efficient oracle calls to a $n^{o(1)}$-approximate max flow oracle\footnote{For technical reasons (cut-matching game in expander decomposition), we require this max flow algorithm to return an approximate max flow and min cut pair.} on DAGs.
\end{restatable}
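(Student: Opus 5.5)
The plan is to build the congestion DAG projection with \Cref{lem:congestionDAGembedding} and solve an approximate max flow on it with the DAG oracle. The approximate flow is then projected back to $G$ by the projection algorithm, and the approximation is boosted to an exact max flow with a standard iterative scheme on residual graphs. Concretely, given $G$ with capacities and terminals $s,t$, first invoke \Cref{lem:congestionDAGembedding} to get a DAG $D$ with $|E(D)|=\hO{m}$, its topological order, and an efficient $\kappa$-congestion-preserving projection algorithm with $\kappa=n^{o(1)}$. This uses $n^{o(1)}$ work- and depth-efficient calls to $\amfmcDAG{\alpha}$. Next, call the oracle on $D$ with demand $\DDelta=\infty\cdot[\pi^{-1}(s)]$ and $\nnabla=\infty\cdot[\pi^{-1}(t)]$; equivalently, attach a super source and sink, which keeps $D$ acyclic and the topological order easy to extend. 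This returns a flow $\ff^D$ and cut $S^D$ with $\Val(S^D)\le\alpha\Val(\ff^D)$. Projecting $\ff^D$ gives an $(s,t)$-flow in $G$ of the same value with congestion at most $\kappa$. After scaling by $1/\kappa$ it is feasible, with value at least $\maxflow_D/\kappa\ge\maxflow_G(s,t)/(\alpha\kappa)$, using the lower bound $\maxflow_G\le\maxflow_D$ from \Cref{def:DAGembeddingCongestion}. Projecting the cut $S^D$ gives a certifying cut in $G$ of value at most $\alpha\kappa\,\Val(\ff)$, so we obtain an $\amfmc{n^{o(1)}}$ pair in $G$.

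Second, I would boost this to an exact max flow. The simplest route is the classical residual augmentation: repeat the approximate step on the residual graph $G_{\ff}$, which is again a directed graph, so the projection must be rebuilt each round. Each round removes at least a $1/(\alpha\kappa)$ fraction of the remaining max flow value. With polynomially bounded integer capacities, after $O(\alpha\kappa\log(nW))=n^{o(1)}$ rounds the remaining value is below $1$, and integrality finishes the job. Strictly, the remaining fractional flow must be rounded first; this is done with the parallel flow rounding implicit in \Cref{thm:flowpathdecomposition}, or by working with a final round where the residual value is $<1$ and the integral max flow is then already attained after standard rounding. Because each round rebuilds $D$ on an $m$-edge graph with $n^{o(1)}$-efficient oracle calls, total work and depth efficiency remain $n^{o(1)}$.

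The main obstacle is the boosting step in the parallel setting. Sequential augmentation rounds add depth, but since there are only $n^{o(1)}$ of them this is acceptable for the $n^{o(1)}$-depth-efficiency claim. A subtler issue is that the residual graph has reverse edges and capacities that may become fractional. I would handle this by scaling capacities to integers at polynomial precision and using \Cref{thm:flowpathdecomposition} to extract and round flows, so that all intermediate flows stay feasible. If pure augmentation turns out to be too delicate, the fallback is to plug the $\amfmc{n^{o(1)}}$ oracle on general graphs, obtained as above, into a known almost-linear interior-point-style boosting framework. That framework only requires approximate max-flow/min-cut pairs on general directed graphs.
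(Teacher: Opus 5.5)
Your proposal is correct and follows the paper's route. You build the projection with \Cref{lem:congestionDAGembedding}, reduce to the DAG oracle by the super-source/super-sink argument of \Cref{lem:DAGprojectiontoMaxFlow} (re-derived for the $s$-$t$ case, projecting both the flow and the cut), and then boost by the folklore residual-graph augmentation, which the paper only cites and you spell out with $O(\alpha\kappa\log(nW))=n^{o(1)}$ rounds.

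There are two small slips to fix. First, the value bound should read $\Val(\ff^D)/\kappa\ge\maxflow_D/(\alpha\kappa)\ge\maxflow_G(s,t)/(\alpha\kappa)$. Second, \Cref{thm:flowpathdecomposition} is a path-decomposition tool, not a rounding procedure. The integrality step, and keeping residual capacities integral so that \Cref{lem:congestionDAGembedding} applies, should instead use a standard parallel flow rounding that does not decrease the flow value.
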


To prove the lemma, we first show how to use a congestion DAG projection to reduce the max-flow/min-cut problem to DAGs.

\begin{lemma}\label{lem:DAGprojectiontoMaxFlow}
    Let $D$ be a DAG projection of $G = (V,E)$ with a $\kappa$-congestion-preserving efficient projection algorithm. Let $\cO$ be an oracle solving $\amfmc{\alpha}$ on DAGs. Then there is an algorithm solving $\amfmc{\alpha \cdot \kappa}$ on $G$, with complexity proportional to one efficient call to $\cO$ plus the cost of the projection algorithm.
\end{lemma}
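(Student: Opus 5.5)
We lift the demand to $D$, solve the problem there with $\cO$, and project the flow and the cut back to $G$ with the projection algorithm. Given a demand $(\DDelta,\nnabla)$ on $G$, we first need a demand on $D$. Since the projection is surjective, we may pick for every $v\in V$ an arbitrary copy $v'\in\pi^{-1}(v)$ and put all of the demand of $v$ there. A more robust alternative is to add, for every $v\in V$, two new vertices $\In{v}$ and $\Out{v}$. Each $\In{v}$ gets edges of infinite capacity to all of $\pi^{-1}(v)$, and each $\Out{v}$ gets edges from all of $\pi^{-1}(v)$. We place the first $\In{v}$ vertices at the front of the topological order and the $\Out{v}$ vertices at the end, so the graph stays a DAG. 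The demand on the augmented DAG $D^+$ is $\DDelta^+(\In{v})=\DDelta(v)$ and $\nnabla^+(\Out{v})=\nnabla(v)$. We make one call to $\cO$ on $D^+$ and obtain a pair $(\ff^+,S^+)$ with $\Val(S^+)+\nnabla^+(S^+)+\DDelta^+(\bar S^+)\le \alpha\Val(\ff^+)$.

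\textbf{Flow side.} We strip the in/out edges from $\ff^+$ to get a flow $\ff^D$ in $D$ with congestion at most $1$. Its demand projects under $\pi$ to a sub-demand of $(\DDelta,\nnabla)$, because all flow entering $\pi^{-1}(v)$ from $\In{v}$ totals at most $\DDelta(v)$. Its support lies in $\pi^{-1}(V)$, so no dummy vertices are involved. The projection algorithm then returns a flow in $G$ with the same projected demand and congestion at most $\kappa$. Scaling it by $1/\kappa$ gives a feasible flow $\ff$ with $\Val(\ff)=\Val(\ff^+)/\kappa$.

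\textbf{Cut side.} We may assume that no infinite-capacity edge crosses $S^+$; otherwise the left-hand side is infinite and the inequality fails. So if $\In{v}\in S^+$ then $\pi^{-1}(v)\subseteq S^+$, and if $\Out{v}\notin S^+$ then $\pi^{-1}(v)\cap S^+=\emptyset$. Let $S^D=S^+\cap V(D)$. Then $\Val_D(S^D)\le \Val(S^+)$, since the remaining crossing edges contribute nonnegatively. The projection algorithm gives a cut $S$ in $G$ with $\Val_G(S)\le\Val(S^D)$ and $\HL(S^D)\subseteq S\subseteq \pi(S^D)$.

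We now charge the demand terms. If $v\in S$, then $v\in\pi(S^D)$, so some copy of $v$ lies in $S^+$. This forces $\Out{v}\in S^+$, and therefore $\nnabla(S)\le \nnabla^+(S^+)$. If $v\notin S$, then $v\notin \HL(S^D)$, so some copy of $v$ lies outside $S^+$. This forces $\In{v}\notin S^+$, and therefore $\DDelta(\bar S)\le \DDelta^+(\bar S^+)$. Together,
\[
\Val(S)+\nnabla(S)+\DDelta(\bar S)\le \alpha\Val(\ff^+)=\alpha\kappa\Val(\ff),
\]
which is the required $\amfmc{\alpha\kappa}$ certificate. The cost is one call to $\cO$ on a DAG of size $O(|E(D)|+\sum_v|\pi^{-1}(v)|)=O(|E(D)|+|V(D)|)$, plus the projection algorithm and linear bookkeeping.

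The main obstacle is the cut direction: matching $\HL\subseteq S\subseteq\pi$ with the correct demand terms. The in/out gadget is designed exactly for this, so that any finite-value cut is saturated on all copies consistently. We also need to handle dummy vertices, which map to $\bot$ and carry no demand, and to check that the augmented graph is still presented with a topological order.
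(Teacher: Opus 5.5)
Your proof is correct and is essentially the paper's argument. Your gadget vertices $\In{v},\Out{v}$ play the role of the paper's $s_v,t_v$. The only cosmetic difference is that you place the demands $\DDelta(v),\nnabla(v)$ directly on these vertices and use the demand-form certificate, whereas the paper attaches a super source $s$ and super sink $t$ through edges of capacity $\DDelta(v)$ and $\nnabla(v)$; the two are equivalent. Your cut charging is the same as the paper's: finite value forces all copies to be consistent with the gadget vertices, and then $\HL(S^D)\subseteq S\subseteq\pi(S^D)$ converts the demand terms in $D$ into those in $G$.
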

\begin{proof}
    We build a graph $D'$ from $D$ as follows. For every $v \in V$, add a vertex $s_v$ and edges $(s_v, v')$ for every $v' \in V(D)$ with $\pi(v') = v$, each with infinite capacity. Also for every $v \in V$, add a vertex $t_v$ and edges $(v', t_v)$ for every $v' \in V(D)$ with $\pi(v') = v$, each with infinite capacity. Then add a super source $s$ and edges $(s, s_v)$ for every $v \in V$ with capacity $\DDelta(v)$, and add a super sink $t$ and edges $(t_v, t)$ for every $v \in V$ with capacity $\nnabla(v)$. Denote the resulting graph by $D'$.

    Run the oracle $\cO$ on $D'$ with source $s$ and sink $t$, and let the resulting flow and cut be $(\ff', S')$. Then
    \[
        \Val(S') \le \alpha \cdot \Val(\ff').
    \]
    Restrict $\ff'$ and $S'$ to $D$ to get $\ff^D$ and $S^D$. By construction, all source/sink demand of $\ff^D$ lives on vertices that project to $V$, i.e. $\pi(\Supp(\Dem(\ff^D))) \subseteq V$. Apply the projection algorithm to $(\ff^D, S^D)$ to get a flow $\ff$ in $G$ and a cut $S$ in $G$. By the guarantee of the projection algorithm:
    \[
        \Cong(\ff) \le \kappa \cdot \Cong(\ff^D) \le \kappa,
    \]
    so $\ff / \kappa$ is feasible in $G$, and
    \[
        \Val(S) \le \Val(S^D), \qquad \HL(S^D) \subseteq S \subseteq \pi(S^D).
    \]

    Next we relate the cut values. As in the usual super-source/super-sink reduction, any $v \in V$ for which there exists $v' \in V(D)$ with $v' \notin S^D$ must have $s_v \notin S'$ (otherwise an infinite-capacity edge is cut), so $(s,s_v)$ contributes $\DDelta(v)$ to $\Val(S')$. Symmetrically, any $v \in V$ for which there exists $v' \in V(D)$ with $v' \in S^D$ must have $t_v \in S'$ (otherwise an infinite-capacity edge is cut), so $(t_v,t)$ contributes $\nnabla(v)$ to $\Val(S')$. All remaining contribution to $\Val(S')$ comes from edges inside $D$, i.e. from $\Val(S^D)$. Hence
    \[
        \Val(S') \ge \Val(S^D) + \sum_{v \notin \HL(S^D)} \DDelta(v) + \sum_{v \in \pi(S^D)} \nnabla(v).
    \]
    Since $\Val(S') \le \alpha \cdot \Val(\ff')$ and $\Val(\ff) = \Val(\ff')$, and since
    \[
        \HL(S^D) \subseteq S \subseteq \pi(S^D),
    \]
    we get
    \[
        \Val(S) + \sum_{v \in S} \nnabla(v) + \sum_{v \notin S} \DDelta(v)
        \;\le\;
        \Val(S^D) + \sum_{v \in \pi(S^D)} \nnabla(v) + \sum_{v \notin \HL(S^D)} \DDelta(v)
        \;\le\;
        \alpha \cdot \Val(\ff).
    \]
    Finally, we output the feasible flow $\ff/\kappa$ and the cut $S$. Multiplying the right-hand side by $\kappa$ to account for scaling gives
    \[
        \Val(S) + \sum_{v \in S} \nnabla(v) + \sum_{v \notin S} \DDelta(v)
        \;\le\;
        \alpha \kappa \cdot \Val(\ff/\kappa),
    \]
    so $(\ff/\kappa, S)$ is an $\amfmc{\alpha \cdot \kappa}$ pair.
\end{proof}

Now we can prove \Cref{thm:MaxflowtoDAG}.

\begin{proof}[Proof of \Cref{thm:MaxflowtoDAG}]
    Given a directed graph $G$, apply \Cref{lem:congestionDAGembedding} to construct a DAG projection of $G$ with an $n^{o(1)}$-congestion-preserving efficient projection algorithm, of size $\hO{|E(G)|}$, and with a topological order. Then apply \Cref{lem:DAGprojectiontoMaxFlow} to obtain an $n^{o(1)}$-approximate max flow on $G$. It is folklore that an approximate max-flow algorithm can be turned into an exact max-flow algorithm by working on the residual graph and rerunning the approximation, so the lemma follows.
\end{proof}

\paragraph{Single-source Bounded Minimum Cuts.}
Another application is the single-source $k$-bounded minimum cuts problem. Previously, efficient algorithms were known only for DAGs. Using our congestion DAG projection, we obtain an algorithm for general graphs.

\begin{restatable}{lemma}{SSkmincut}\label{thm:SSkmincut}
    There is a randomized algorithm given a directed graph with edge capacities, a source $s$, and an integer $k$, find $n^{o(1)}$-approximate $(s,t)$-minimum cut for all $t\in V$ with $\mincut_G(s,t)\le k$ in $k^{\omega}m^{1+o(1)}$ time.
\end{restatable}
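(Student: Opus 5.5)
We build the congestion DAG projection $D$ of $G$ by \Cref{lem:congestionDAGembedding}. In the sequential setting we implement the $\amfmcDAG{\alpha}$ oracle with any almost-linear-time max-flow algorithm, so $D$ is obtained in $m^{1+o(1)}$ time together with a topological order. It has $|E(D)|=\hO{m}$ and comes with a $\kappa=n^{o(1)}$-congestion-preserving efficient projection algorithm. We then run the network-coding algorithm of \cite{cheung2013graph} on $D$ and read off the answers for $G$.

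\textbf{Step 1 (the auxiliary DAG).} We make the multi-terminal structure explicit, as in the proof of \Cref{lem:DAGprojectiontoMaxFlow}. Add a source $s^\ast$ with infinite-capacity edges to every copy in $\pi^{-1}(s)$. For every $t\in V$, add a sink vertex $t^\ast$ with infinite-capacity edges from every copy in $\pi^{-1}(t)$. The new vertices go first and last in the topological order, so the result $D^\ast$ is still a DAG with $\hO{m}$ edges. Infinite capacities are replaced by $k$, which does not change $k$-bounded values.

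\textbf{Step 2 (run the DAG algorithm).} We run \cite{cheung2013graph} on $D^\ast$ from $s^\ast$ with bound $k'=\kappa k$, where $\kappa$ is the upper congestion factor. This computes $\maxflow^{k'}_{D^\ast}(s^\ast,t^\ast)=\min\{k',\maxflow_D(\pi^{-1}(s),\pi^{-1}(t))\}$ for all $t$. Its cost is $O(k'^{\omega-1}|E(D^\ast)|)=k^{\omega-1}m^{1+o(1)}$, which is within the claimed $k^\omega m^{1+o(1)}$. It is unclear whether the factor $\kappa^{\omega-1}=n^{o(1)}$ fits inside the $m^{o(1)}$ slack; if it does not, we instead use bound $k$ and report the values $\min\{k,\cdot\}$.

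\textbf{Step 3 (correctness).} By \Cref{def:DAGembeddingCongestion} applied with $S=\{s\}$ and $T=\{t\}$,
\[
\maxflow_G(s,t)\le \maxflow_D(\pi^{-1}(s),\pi^{-1}(t))\le\kappa\,\maxflow_G(s,t).
\]
Hence if $\mincut_G(s,t)\le k$, the computed value is at most $\kappa k\le k'$, so it is not truncated and $n^{o(1)}$-approximates $\mincut_G(s,t)$. For $t$ whose true value exceeds $k$ we may output anything, so truncation is harmless.

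\textbf{Step 4 (obtaining the cuts, the main obstacle).} The statement asks for the cuts themselves, not only their values. The network-coding algorithm yields, for each $t$, a rank certificate from which a minimum $(s^\ast,t^\ast)$-cut $S^D_t$ of $D^\ast$ can be extracted. Restricting $S^D_t$ to $D$ and applying the cut part of the projection algorithm gives a cut $S_t$ in $G$ with $\Val(S_t)\le\Val(S^D_t)$ and $\HL(S^D_t)\subseteq S_t\subseteq\pi(S^D_t)$. The separation requirement holds: since $s^\ast\in S^D_t$ and its edges have capacity larger than the cut value, all copies of $s$ lie in $S^D_t$, so $s\in\HL(S^D_t)\subseteq S_t$. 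Symmetrically, no copy of $t$ lies in $S^D_t$, so $t\notin\pi(S^D_t)$ and hence $t\notin S_t$.

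The difficulty is cost. Extracting $S^D_t$ and projecting it costs $\hO{m}$ per target, which is too much to do naively for all $n$ targets. We will first try to extract each cut from the $k'$-dimensional encoding vectors, reading off the set of vertices whose vectors are not spanned at $t^\ast$ in $O(k'^{\omega}\cdot\text{size})$ amortized total, and project all cuts in one batched pass. Failing that, we will output the cut values together with an implicit representation of the cuts.
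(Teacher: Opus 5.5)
\textbf{Gap in Step 2: the capacitated DAG.} The single-source bounded-connectivity algorithm of \cite{cheung2013graph} is a network-coding algorithm for \emph{uncapacitated} DAGs, with parallel edges allowed. Each edge carries one encoding vector, i.e.\ one unit of capacity, and the $O(k^{\omega-1}m)$ bound counts unit edges. You run it directly on the capacitated DAG $D^\ast$ and charge only $|E(D^\ast)|$. Moreover, the $D$ returned by \Cref{lem:congestionDAGembedding} need not have integral capacities. Its final step applies \Cref{lem:expanderdecompositiontoDAGembedding} with $\sigma=2^{2\log^{0.7}n}\alpha^{\eta}>1$, which scales every capacity by $1/\sigma$, so $D^\ast$ cannot even be read directly as a multigraph. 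Your bound of $k^{\omega-1}m^{1+o(1)}$, strictly better than the stated $k^{\omega}m^{1+o(1)}$, is a symptom of this omission.

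\textbf{How the paper closes it.} The paper does not use \Cref{lem:congestionDAGembedding} as a black box. It applies \Cref{lem:expanderdecompositiontoDAGembedding} directly with $\sigma=1$ to an expander hierarchy whose last layer is $E_t=\emptyset$, computed sequentially with \cite{BBLST25} and the max-flow algorithm of \cite{ChenKLPGS22}. This gives no additive error and, by the last clause of that lemma, integral capacities. It then attaches $s^\ast,t^\ast$ with capacity-$k$ edges and replaces every edge $(u,v)$ of $D$ by $\min(k,U_D(u,v))$ parallel unit edges. The result has $k\cdot m^{1+o(1)}$ edges, and \cite{cheung2013graph} runs on it in $k^{\omega-1}\cdot k\,m^{1+o(1)}=k^{\omega}m^{1+o(1)}$ time. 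This unfolding factor $k$ is exactly where the exponent $\omega$ comes from. With your $D$ you would first have to rescale capacities to integers before unfolding; some integralization-plus-unfolding step is unavoidable.

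\textbf{Smaller points.} After that fix, your Step 3 goes through, and $k'=\kappa k$ is unnecessary. With bound $k$ and capacity-$k$ terminal edges, a minimum $(s^\ast,t^\ast)$-cut either cuts a terminal edge, so has value at least $k\ge\mincut_G(s,t)$, or projects to an $(s,t)$-cut of $G$ of no larger value. Flow projection gives the matching upper bound $\kappa\,\mincut_G(s,t)$. This same cut argument is what actually rescues your Step 1. Its claim that capping terminal edges at $k$ leaves $k'$-bounded values unchanged is not justified when $k'>k$.

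Finally, Step 4 is a plan with a fallback, not an argument. The paper only computes the approximate values for all $t$, matching the introduction's formulation in terms of $\maxflow^k_G(s,t)$. So either drop Step 4 or supply a real argument for extracting explicit cuts within the time bound.
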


\begin{proof}
    Given a directed graph $G = (V,E)$ with edge capacities, apply \Cref{lem:expanderdecompositiontoDAGembedding} with $\sigma = 1$ and a directed expander hierarchy whose last layer $E_t = \emptyset$, to construct a DAG projection with an $n^{o(1)}$-congestion-preserving efficient projection algorithm of size $\hO{|E(G)|}$ and integral capacities. Here we use the almost-linear-time max-flow algorithm of \cite{ChenKLPGS22} as the oracle and the directed expander-hierarchy algorithm of \cite{BBLST25} to obtain the hierarchy in $m^{1+o(1)}$ time.

    Then, add a vertex $s^*$ to $D$ and connect it to every vertex in $\pi^{-1}(s)$ with capacity $k$. For every $t \in V \setminus \{s\}$, add a vertex $t^*$ to $D$ and connect every vertex in $\pi^{-1}(t)$ to $t^*$ with capacity $k$. For every edge $(u,v) \in E(D)$, replace it by $\min(k, U_D(u,v))$ parallel uncapacitated edges. Denote the resulting graph by $D'$. Then $|E(D')| = k \cdot \hO{|E(G)|}$.

    Run the single-source $k$-edge-connectivity algorithm of \cite[Theorem 1.2]{cheung2013graph} on $D'$ from $s^*$ to all $t^*$ to obtain $\lambda_{D'}(s^*, t^*)$ for all $t$, in time $k^{\omega} \hO{|E(G)|}$. We claim that $\lambda_{D'}(s^*, t^*)$ is an $n^{o(1)}$-approximation to $\lambda_G(s,t)$ whenever $\lambda_G(s,t) \le k$.

    For the lower bound, let $\ff'$ be a max flow from $s^*$ to $t^*$ in $D'$. Restrict $\ff'$ to $D$ to obtain $\ff^D$ whose source demand lies on $\pi^{-1}(s)$ and sink demand lies on $\pi^{-1}(t)$. Because we introduced at most as many parallel edges as the original capacity, the congestion does not increase. Applying the projection algorithm yields a flow $\ff$ from $s$ to $t$ in $G$ of the same value and congestion at most $n^{o(1)}$. Scaling down by $n^{o(1)}$ makes it feasible in $G$, so
    \[
        \lambda_G(s,t) \ge \lambda_{D'}(s^*, t^*) / n^{o(1)}.
    \]

    For the upper bound, let $S'$ be a minimum $(s^*, t^*)$-cut in $D'$ with value $\lambda_{D'}(s^*, t^*)$. If $S'$ cuts an edge adjacent to $s^*$ or $t^*$, then the cut has value at least $k \ge \lambda_G(s,t)$ (by the assumption $\lambda_G(s,t) \le k$), so we are done. Otherwise, $\pi^{-1}(s) \subseteq S'$ and $\pi^{-1}(t) \cap \bar S' = \emptyset$. Restrict $S'$ to $D$ to get $S^D$. By construction (replacing each edge by up to $k$ parallels), either the value of $S^D$ increases past $k$, in which case we are done, or it stays the same. Moreover, $s \in \HL(S^D)$ and $t \notin \pi(S^D)$, so applying the projection algorithm to $S^D$ gives a valid $(s,t)$-cut in $G$ of value at most $\lambda_{D'}(s^*, t^*)$.
\end{proof}

\section{Distance DAG Projection Construction}\label{sec:distanceembedding}

In this section, we show an efficient parallel algorithm for constructing a $(1+\eps)$-distance-preserving DAG projection, using oracle calls to only an approximate SSSP on DAGs.

\begin{theorem}\label{lem:distanceDAGembeddingtoSSSP}
    There is a randomized algorithm that, given a directed graph $G$ with edge lengths and parameters $0 < \eps < o(1)$ and $0 < \delta < 0.5$, constructs a $(1+\eps)$-distance-preserving DAG projection of $G$ with width
    \[
        w \;=\; \bigl(\tO{1/\eps}\bigr)^{\log^\delta n} \cdot n^{1 / \log^\delta n}.
    \]
    The algorithm makes $n^{o(1)}w$-work-efficient $\tO{1}$-depth-efficient calls to an $\apxDSSSP{(1+\eps/\log n)}$ oracle on graphs.
\end{theorem}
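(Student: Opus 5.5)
The plan is to implement the construction from the overview (LDD at each scale, recursion on small clusters, shortest-path trees for large clusters, topological concatenation, $x=\SL/\eps$ layered copies). To make it parallel with only a DAG oracle, we run it as a \emph{spiral recursion} on a length bound $h$. Concretely, we first prove an $h$-length version: a DAG projection $D_h$ that is a projection in the sense of \Cref{def:inherited} and $(1+\eps)$-preserves $\dist_G(s,t)$ whenever $\dist_G(s,t)\le h$. The full theorem is then the union over $h=2^i$, $i\le O(\log n)$. Since edge lengths are polynomial, $O(\log n)$ scales suffice, and the union of DAG projections, placed side by side, is still a DAG projection, because \Cref{lem:inheritedoneside} gives the lower bound for free.

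\textbf{The $h$-length construction.} Set $d=\eps h/\SL$. Computing an LDD with diameter $d$ via \Cref{lem:ldd} needs $(\SL d)$-length SSSP, and $\SL d=\eps h\le h/\rho$ with $\rho=\Omega(1/\eps)$. By \Cref{cor:boostingSSSPhlength}, $(\SL d)$-length SSSP reduces to $O(\eps h)$-length $\apxSSSP{(1+o(1/\log n))}$. That problem in turn reduces, by recursion, to building an $O(\eps h)$-length $(1+\eps/\log n)$-distance-preserving DAG projection and calling the $\apxDSSSP{}$ oracle on it, with sources $\pi^{-1}(s)$ attached to a super source. The large-cluster shortest-path trees $\To_C,\Ti_C$ only need distances up to $O(h)$. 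We will get these from the projection being built at the current level, by first building the small-cluster parts and then running DAG SSSP, or more simply by running bounded SSSP on $G$ through the lower-level projection, boosted with \Cref{cor:boostingSSSPhlength}. The recursion on $h$ bottoms out at $h<1$: with lengths in $\bbN_{\ge 1}$ plus zero edges handled by SCC contraction, this case is trivial, namely a topological layering of SCCs of zero-length edges. The recursion depth in $h$ is $O(\log n/\log(1/\eps))$, and each level calls the DAG oracle only $\tO{1}$ times in sequence, which gives the $\tO{1}$ depth claim up to this factor.

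\textbf{Correctness.} Correctness follows the overview argument. For $s,t$ with $\dist_G(s,t)\approx h$, the shortest path $p$ hits $\Erem$ (the reversed edges) $O(\SL h/d)=O(1/\eps)$ times in expectation. So $x=O(\SL/\eps)$ concatenated copies suffice after a Markov/repetition argument: we take $O(\log n)$ independent LDDs and union them to make it hold w.h.p. Within one copy, the pieces lying in small clusters are handled by recursion on $G[C]$, and all the large-cluster portions are shortcut through $r_{C_L}$ with additive $2d=O(\eps h/\SL)$ error. Summed over $O(1/\eps)$ pieces, this gives additive $O(h\cdot\eps)$ after rescaling $\eps$ by a constant. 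The small-cluster recursion loses a $(1+\eps')$ factor at each of its $\log_\sigma n$ levels, so we must run it with $\eps'=\eps/\log^\delta n$. This is where the width formula comes from: the recursion $f(n)\le\tO{1/\eps}\,(f(n/\sigma)+2\sigma^2)$ with $\sigma=n^{1/\log^\delta n}$ has depth $\log^\delta n$ and yields $w=(\tO{1/\eps})^{\log^\delta n}\cdot n^{O(1/\log^\delta n)}$. Adjusting $\sigma$ by a constant in the exponent matches the stated form.

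\textbf{Main obstacle.} The main obstacle is the accounting of errors across the two nested recursions, cluster size and length bound. Each spiral level uses approximate DAG distances of accuracy $1+\eps/\log n$ to obtain exact bounded SSSP via boosting. We must check that \Cref{cor:boostingSSSPhlength} really only requires the $(1+o(1/\log n))$ guarantee from the lower-level projection, and that the LDD and shortest-path trees built from exact bounded distances keep the projection homomorphic. We also need the work to stay within a factor $n^{o(1)}w$: each level of the $h$-recursion multiplies size by the width of the lower projection. To avoid multiplying widths, the lower-level projections are used only as SSSP tools and are discarded, not embedded. With that choice, the total work is a geometric sum dominated by the final output size.
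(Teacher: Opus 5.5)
There is a genuine gap in the cost analysis of your spiral.

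\textbf{The spiral is too deep.} You shrink the length bound by only $\Theta(1/\eps)$ per level ($h\to\eps h$, tied to the LDD diameter). So the length recursion has $K=\Theta(\log n/\log(1/\eps))$ levels. Every level triggers $\tilde O(1)$ fresh constructions of the lower-level projection:
\begin{itemize}
\item \Cref{lem:ldd} makes $\tilde O(1)$ bounded-SSSP calls;
\item \Cref{cor:boostingSSSPhlength} turns each of these into $\tilde O(1)$ approximate calls on \emph{modified} graphs;
\item each such modified graph needs its own projection.
\end{itemize}
The width, and hence the size $\approx wm$, of a lower-level projection does not decrease with $h$. So the costs do not form a geometric series. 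The total work is $(\tilde O(1))^{K}\cdot wm$, which for $\eps=\log^{-b}n$ is $(\log^{c}n)^{\Theta(\log n/(b\log\log n))}=n^{\Theta(c/b)}$, i.e.\ polynomial, not $n^{o(1)}w$. It gets worse because the small-cluster recursion restarts the spiral at length $nW$ on each of its $\log^\delta n$ levels. Discarding the lower-level projections saves nothing, since their construction cost is already paid.

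\textbf{The missing idea} is to decouple the shrink factor from $\eps$. The paper reduces $h$ to $h/\lambda$ with $\lambda=2^{\log^{1-\delta}n}$. It recovers $h$-length SSSP from the $(h/\lambda)$-length projection by inducing a projection on $\lambda$ concatenated copies and calling the DAG oracle from every copy of the source (\Cref{lem:lengthboundapxSSSPtoDAGs}). This costs only a factor $\lambda w=n^{o(1)}w$ in size. Then the length recursion has depth $O(\log^\delta n)$, and the combined recursion over cluster size and length has depth $\log^{2\delta}n$. The total blow-up is $(\tilde O(1))^{\log^{2\delta}n}=n^{o(1)}$ precisely because $\delta<1/2$. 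Your proposal never uses that hypothesis, which is a symptom of the problem.

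\textbf{The large-cluster trees need the same lemma.} They must approximate distances up to $O(h)$, but your lower-level projection only covers length $O(\eps h)$. \Cref{cor:boostingSSSPhlength} turns approximate distances into exact ones at the same length scale; it cannot extend the range. Building ``the small-cluster parts first'' does not help either, because paths through large clusters are exactly what those parts miss. Once the trees are only approximate, their multiplicative error on $\ell(p_{\mathrm{mid}})$ must be absorbed. The paper does this by running the length-$h/\lambda$ recursion with accuracy $(1-\tfrac{10}{\log n})\eps$ and the small clusters with $(1-\tfrac{1}{\log n})\eps$, so $\eps$ degrades by at most a factor $2$ over all $\log^{2\delta}n$ levels.

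\textbf{Two smaller points.} First, inside the spiral the lower-level object must preserve \emph{all} distances up to its bound. So every level needs all scales $2^i\le h$, as in the paper's Step 1, not the single scale $d=\eps h/\SL$. Second, the accuracy of the inner projections must stay fixed (e.g.\ $\eps/\log n$, which suffices for boosting). It must not shrink again at every spiral level, or the width formula breaks.
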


\paragraph{High-Level Strategy.}
By \Cref{lem:boostingSSSP}, to reduce SSSP on general graphs to DAGs, it suffices to construct a $(1+\eps)$-distance-preserving DAG projection with $\eps = o(1/\log n)$. The difficulty is that we only know how to construct such a projection using oracle calls to an \emph{exact} SSSP algorithm (to build an LDD), which creates a “chicken-and-egg’’ situation.

To resolve this issue, as mentioned in the overview, we will first define the $h$-length version of DAG projection.
\begin{definition}\label{def:DAGembeddingDistancelength}
    A DAG projection $D$ onto $G$ is $h$-length \emph{$\lambda$-distance-preserving} if the following holds: for every $s,t \in V$ with $\dist_G(s,t)\le h$, we have
    \[
        \dist_G(s,t) \le \dist_{D}(\pi^{-1}(s),\pi^{-1}(t)) \le \lambda \cdot \dist_G(s,t).
    \]
\end{definition}

We first observe that, when $h=\tilde{O}(1)$, the problem becomes trivial: making $h$ copies of the vertex set, and build a $h$-layered graph with original edges connecting adjacent layers suffices (assuming edge length are positive integers).

Our main technical contribution in this section is to show the following spiral reduction to graduate reduce $h$.
\begin{enumerate}
    \item Using similar ideas in the overview, we can show how to reduce the algorithm of constructing (a fixed size) $h$-length distance-preserving DAG projection to $h$-length SSSP.
    \item We can reduce $h$-length SSSP to $h/z$-length distance-preserving DAG projection with the help of a DAG oracle for SSSP using \Cref{lem:reducingSSSPtoDAGs}.
\end{enumerate}

\paragraph{Organization.}
In \Cref{subsec:SSSPtoDAGProjection}, we show the second point of reducing $h$-length SSSP to $h/z$-length distance-preserving DAG projection. In later sections, we show the algorithm of reducing constructing a fixed-size DAG projection to an SSSP oracle, and its analysis.

\subsection{SSSP via DAG Projections for Smaller Length Constraints}\label{subsec:SSSPtoDAGProjection}

The following definition will be used repeatedly to construct DAG projections from smaller ones.

\begin{definition}[Induced DAG Projections]\label{def:induced}
    Let $G = (V,E)$ be a directed graph and let $(G_1, G_2, \dots, G_z)$ be a sequence of DAG projections of $G$. We say that $G'$ is a DAG projection of $G$ \emph{induced} from $(G_1, G_2, \dots, G_z)$ if $G'$ is constructed as follows:
    \begin{itemize}
        \item let $G' = G_1 \cup G_2 \cup \dots \cup G_z$ (i.e. take the disjoint union of the DAGs);
        \item for every $u \in V(G_i)$ and $v \in V(G_j)$ with $i < j$, if $\pi(u) = \pi(v)$ or $(\pi(u), \pi(v)) \in E(G)$, then we add the edge $(u,v)$ to $G'$ with the same length and capacity as in $G$.
    \end{itemize}
\end{definition}
 Intuitively, $G$ ``concatenate'' the DAG projections $G_1,...,G_z$ together, and then ``lift'' original edges of $G$ to connect between different $G_i$.

We first show that a distance DAG projection allows us to reduce approximate SSSP on general graphs to SSSP on DAGs. The reduction uses an additional parameter $\lambda$ that lets us decrease the length bound.

\begin{lemma}\label{lem:lengthboundapxSSSPtoDAGs}
    Suppose $0 < \eps, \eps' < 1$. There is an algorithm that, given
    \begin{itemize}
        \item a directed graph $G$ with edge lengths,
        \item an integer $\lambda \ge 1$,
        \item an $h$-length $(1+\eps)$-distance-preserving DAG projection $D$ of $G$ with width $w$, and
        \item a source $s \in V(G)$,
    \end{itemize}
    solves $(\lambda h)$-length $\apxSSSP{(1+\eps)(1+\eps')}$ on $(G,s)$ by making $\lambda w$-work-efficient $\tO{1}$-depth-efficient calls to an $\apxDSSSP{(1+\eps')}$ oracle.
\end{lemma}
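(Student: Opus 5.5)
We build the induced DAG projection $D^\lambda$ from $\lambda$ copies $(D_1,\dots,D_\lambda)$ of $D$, as in \Cref{def:induced}. Then we run the DAG oracle once on $D^\lambda$, augmented with a super source, and read off distances via $\pi$.

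\emph{Construction.} Concatenating the topological orders of $D_1,\dots,D_\lambda$ gives a topological order of $D^\lambda$, since every added edge goes from $D_i$ to $D_j$ with $i<j$. Add a source vertex $s^*$ with zero-length edges to every copy of $s$ in $D_1$. (If zero-length edges are disallowed, we can identify these copies into one vertex, or use length $1$ and subtract.) The resulting graph has at most $\lambda w|V|$ vertices, and $O(\lambda^2 w^2 m)$ edges if we naively add all induced edges. To keep the cost $\lambda w$-work-efficient, I would instead add only edges between consecutive copies $D_i\to D_{i+1}$, namely the lifts of $E(G)$ and the identity edges $u\mapsto u$, restricted to a single canonical copy per vertex in each layer. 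The identity edges let a path wait and skip layers, so skip edges are unnecessary. Making this restriction both sparse and correct is the step I expect to be the main bookkeeping obstacle: if we keep all copies, the edge count becomes roughly $\lambda w^2 m$ rather than $\lambda w m$. The fix is to route the inter-layer lifts through one hub copy of each vertex per layer (a zero-length in-vertex and out-vertex per $v\in V$ per layer). Every copy of $v$ in layer $i$ has an edge to the hub $v_i^{\mathrm{out}}$. Every lifted edge $(u,v)\in E(G)$ goes from $u_i^{\mathrm{out}}$ to $v_{i+1}^{\mathrm{in}}$, and $v_{i+1}^{\mathrm{in}}$ has edges to all copies of $v$ in $D_{i+1}$. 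This gives size $O(\lambda(|E(D)|+m+wn))=O(\lambda w m)$ edges with a valid topological order. Then call $\apxDSSSP{(1+\eps')}$ once on this DAG from $s^*$, and set $\tilde d(s,t)=\min_{t'\in\pi^{-1}(t)}\tilde d(s^*,t')$. Return the corresponding tree projected through $\pi$, and return $+\infty$ for unreached vertices.

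\emph{Lower bound.} Every $s^*$–$t'$ path, after dropping the hub and identity steps of zero length, projects to an $s$–$t$ walk in $G$ of the same length, because $\pi$ is a weight-preserving homomorphism on $D$ and the lifted edges copy edges of $G$. Hence $\tilde d(s,t)\ge\dist_G(s,t)$. The projected approximate tree also gives paths realizing these lengths.

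\emph{Upper bound.} Fix $t$ with $\dist_G(s,t)\le\lambda h$ and a shortest path $p$. We cut $p$ greedily into at most $\lambda$ consecutive pieces, separated by single edges of $p$, where each piece has length at most $h$: walk along $p$, and whenever adding the next edge would exceed $h$, that edge becomes a connector. Since every piece plus its connector has length more than $h$, there are at most $\lambda$ pieces, assuming integer lengths and edge lengths at most $h$. Edges longer than $h$ cannot occur on $p$ unless $p$ is that single edge, which the layers handle directly. For piece $j$ from $a_j$ to $b_j$, $D_j$ contains copies of $a_j,b_j$ at distance at most $(1+\eps)\dist_G(a_j,b_j)$. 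The connector $(b_j,a_{j+1})$ is realized by the hub edges from layer $j$ to layer $j+1$ with the same length. Summing over pieces, some copy of $t$ is at distance at most $(1+\eps)\ell(p)$ from $s^*$. The oracle then returns at most $(1+\eps')(1+\eps)\dist_G(s,t)$.

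Depth is a single oracle call plus polylogarithmic-depth construction and min-taking, so the calls are $\tO{1}$-depth-efficient.
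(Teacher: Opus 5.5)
Your proof is correct, but it follows a different route from the paper's. The paper does three things:
\begin{itemize}
\item It takes the full induced projection $D'$ of $\lambda$ copies of $D$ (\Cref{def:induced}).
\item It calls the DAG oracle once from each of the up to $\lambda w$ copies of $s$.
\item It glues the pieces of $p$ with the identity edges ($\pi(u)=\pi(v)$) between copies, asserting that $p$ can be split into at most $\lambda$ endpoint-sharing subpaths of length at most $h$.
\end{itemize}
You instead link only consecutive layers, through in/out hubs, make a single call from a zero-length super source, and split $p$ greedily into pieces separated by connector edges. These connectors are realized by lifted $G$-edges between layers.

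Your version buys two things.
\begin{itemize}
\item \emph{The work bound actually holds.} You make one call on a DAG of size $O(\lambda(|E(D)|+|V(D)|+m))$, which is $O(\lambda)$ times the input size. By contrast, the induced projection as literally defined has edges between all $\Theta(\lambda^2)$ pairs of layers and all pairs of copies, so its size can be $\Theta(\lambda^2w^2m)$. It is then queried $\lambda w$ times, so the paper's ``size $O(\lambda w m)$'' accounting is loose.
\item \emph{Your connector-based split is the right way to partition.} Endpoint-sharing pieces can fail: a shortest path with edge lengths $1,2,1$ and $h=\lambda=2$ cannot be cut into two subpaths of length at most $2$. Your count $(k-1)h<\ell(p)\le\lambda h$, in contrast, always gives $k\le\lambda$.
\end{itemize}
The paper's version, in exchange, keeps the layered graph a DAG projection in the sense of \Cref{def:induced}, the same object reused elsewhere. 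Your hub edges are zero-length non-lifts, which is harmless here since only distances are needed and paths still project to walks of equal length.

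Two small corrections.
\begin{itemize}
\item The count $(k-1)h<\ell(p)$ needs no assumption on edge lengths. Your remark that an edge longer than $h$ can lie on $p$ only if $p$ is that single edge is false for $\lambda\ge 2$. Nothing breaks, though, because such an edge simply becomes a connector, possibly preceded by an empty piece.
\item Drop the fallback of identifying the copies of $s$ in $D_1$. Two copies of $s$ can be joined by a path inside $D$, so identifying them can create a cycle. Zero-length edges, which the paper itself uses for dummy sources, are the right tool.
\end{itemize}
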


\begin{proof}
    Let $D'$ be the induced DAG projection induced from a sequence of $\lambda$ copies of $D$ (cf. \Cref{def:induced}). For every $s' \in V(D')$ with $\pi(s') = s$, we run the $\apxDSSSP{(1+\eps')}$ oracle on $D'$ with source $s'$. Let $\tilde{d}_{D'}(s',v')$ be the resulting approximate distance from $s'$ to $v'$ in $D'$.

    For each $v \in V(G)$, we output
    \[
        d_G(v) := \min \{ \tilde{d}_{D'}(s',v') \mid \pi(s') = s,\, \pi(v') = v \}.
    \]
    Since there are at most $w$ preimages of any vertex in one copy and we have $\lambda$ copies, this makes at most $\lambda w$ oracle calls. Each such call is on a graph of size $O(\lambda w m)$, so the total extra work is $O(\lambda w m)$ and depth is $\tO{1}$.

    For correctness, note first that $D'$ is still a projection of $G$, so any path in $D'$ projects to a path of the same length in $G$. Therefore $d_G(v) \ge \dist_G(s,v)$.

    Now let $v \in V(G)$ with $\dist_G(s,v) \le \lambda h$, and let $p$ be a shortest $(s,v)$-path in $G$. Partition $p$ into at most $\lambda$ subpaths $p_1,\dots,p_\lambda$ such that every $p_i$ has length at most $h$ (this is possible because the total length is at most $\lambda h$). By the definition of an $h$-length $(1+\eps)$-distance-preserving DAG projection, for each $p_i$ there is a corresponding path $p_i'$ in $D$ with
    \[
        \pi(\Start(p_i')) = \Start(p_i), \quad \pi(\End(p_i')) = \End(p_i),
        \quad \text{and} \quad \ell_D(p_i') \le (1+\eps) \cdot \ell_G(p_i).
    \]
    Let $p_i''$ be the copy of $p_i'$ in the $i$-th copy of $D$ inside $D'$. By the construction of the induced projection (we connect later copies after earlier ones when their projections match), the concatenation
    \[
        p'' := p_1'' \oplus p_2'' \oplus \dots \oplus p_\lambda''
    \]
    is a valid path in $D'$ from some $s' \in \pi^{-1}(s)$ in the first copy to some $v' \in \pi^{-1}(v)$ in the last copy. Its length is
    \[
        \ell_{D'}(p'') \le (1+\eps) \sum_i \ell_G(p_i) = (1+\eps) \cdot \ell_G(p) = (1+\eps) \cdot \dist_G(s,v).
    \]
    The oracle gives a $(1+\eps')$-approximation to this distance in $D'$, so
    \[
        d_G(v) \le (1+\eps') \cdot \ell_{D'}(p'')
        \le (1+\eps')(1+\eps) \cdot \dist_G(s,v).
    \]
\end{proof}

Since exact SSSP can be reduced to approximate SSSP, we can now reduce exact SSSP to DAGs.

\begin{lemma}\label{lem:reducingSSSPtoDAGs}
    There is an algorithm that, given an $m$-edge directed graph $G$ with edge lengths, a source $s$, and integers $h \ge 1$ and $\lambda \ge 1$, solves $(\lambda h)$-length exact SSSP. The algorithm makes
    \begin{itemize}
        \item $O(1)$-work-efficient $\tO{1}$-depth-efficient calls to an algorithm that outputs an $h$-length $(1+o(1/\log n))$-distance-preserving DAG projection of width $w$ on graphs with $O(m)$ edges,
        \item $O(\lambda w)$-work-efficient $\tO{1}$-depth-efficient calls to an $\apxDSSSP{(1+o(1/\log n))}$ oracle.
    \end{itemize}
\end{lemma}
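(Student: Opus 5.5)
The plan is to compose \Cref{cor:boostingSSSPhlength} with \Cref{lem:lengthboundapxSSSPtoDAGs}. By \Cref{cor:boostingSSSPhlength}, applied with length bound $\lambda h$, exact $(\lambda h)$-length SSSP on $(G,s)$ reduces to $\tO{1}$-work-efficient and $\tO{1}$-depth-efficient calls to an oracle for $O(\lambda h)$-length $\apxSSSP{(1+o(1/\log n))}$. These calls are made on auxiliary graphs $G_j$ with $O(m)$ edges. So it suffices to implement that approximate oracle on each $G_j$ using the two resources in the statement.

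For each oracle query $(G_j,s_j)$, I first call the DAG-projection algorithm on $G_j$. This returns an $h$-length $(1+\eps)$-distance-preserving DAG projection $D_j$ of width $w$, with $\eps=o(1/\log n)$. I then invoke \Cref{lem:lengthboundapxSSSPtoDAGs} with parameter $\lambda'=c\lambda$, where $c$ is the constant hidden in the $O(\lambda h)$ length bound, and with $\eps'=o(1/\log n)$. This solves $(c\lambda h)$-length $\apxSSSP{(1+\eps)(1+\eps')}$ on $(G_j,s_j)$. Since $(1+o(1/\log n))^2=1+o(1/\log n)$, the approximation factor is good enough for the oracle required by \Cref{cor:boostingSSSPhlength}. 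The choice of $o(1/\log n)$ for both $\eps$ and $\eps'$ has to be made consistently, for instance with both equal to $1/(\log n\log\log n)$.

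Next I account for cost. Calls to the projection algorithm are one per boosting query, on graphs with $O(m)$ edges. The boosting layer is $\tO{1}$-work-efficient, so these calls are efficient up to its overhead; I expect the $O(1)$-work-efficiency claimed for them to be read as ``per call of the boosting oracle'' and absorbed in the statement's accounting. Each application of \Cref{lem:lengthboundapxSSSPtoDAGs} makes $\lambda' w=O(\lambda w)$-work-efficient and $\tO{1}$-depth-efficient DAG-oracle calls. Composing the two layers multiplies work factors and adds depth chains, which gives $O(\lambda w)$ times polylog work and $\tO{1}$ depth overall.

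The main obstacle is consistency of the length parameters. The oracle in \Cref{cor:boostingSSSPhlength} must handle $O(\lambda h)$-length queries on the modified graphs, which include the added $2\lambda h$ edges. So the $h$-length guarantee of $D_j$ must be a guarantee for $G_j$, not for $G$. I address this by building the projection on each queried graph rather than once on $G$. Finally, I check that the $\lambda'$-fold induced projection in \Cref{lem:lengthboundapxSSSPtoDAGs} covers every vertex within distance $c\lambda h$: this holds because a shortest path of length at most $\lambda' h$ splits into $\lambda'$ pieces of length at most $h$, using integrality of edge lengths at the boundaries.
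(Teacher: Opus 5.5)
Your proposal is correct and is essentially the paper's proof. You apply \Cref{cor:boostingSSSPhlength}, build the $h$-length projection on each resulting instance's own graph, and feed it to \Cref{lem:lengthboundapxSSSPtoDAGs} with $\eps,\eps'=o(1/\log n)$. Your bookkeeping is more careful than the paper's, which glosses over both the constant ($\lambda'=c\lambda$ for the $O(\lambda h)$ bound) and the $\tO{1}$ boosting overhead behind the stated $O(1)$ and $O(\lambda w)$ factors. The only quibble is your closing splitting check: it is already covered by \Cref{lem:lengthboundapxSSSPtoDAGs}, and if redone, $\lambda'$ pieces suffice because the edge straddling each boundary can be traversed as an inter-copy edge of the induced projection, not because of integrality.
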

\begin{proof}
    By \Cref{cor:boostingSSSPhlength}, to solve $(\lambda h)$-length exact SSSP it suffices to solve $\tO{1}$ instances of $(\lambda h)$-length $\apxSSSP{(1+o(1/\log n))}$ on graphs with $\tO{m}$ edges.

    For one such instance on a graph $G'$ with $\tO{m}$ edges, first run the DAG-projection oracle on $G'$ to obtain an $h$-length $(1+o(1/\log n))$-distance-preserving DAG projection $D$ of $G'$ with width $w$. Then apply \Cref{lem:lengthboundapxSSSPtoDAGs} with $\eps = o(1/\log n)$ and $\eps' = o(1/\log n)$ to solve the $(\lambda h)$-length approximate SSSP instance using $\lambda w$ calls to an $\apxDSSSP{(1+o(1/\log n))}$ oracle on graphs with $O(\lambda w m)$ edges and with additional $O(\lambda w m)$ work and $\tO{1}$ depth.
\end{proof}

\subsection{DAG Projections via SSSP}\label{subsec:algorithmdistanceDAGembeddingtoSSSP}
In this section, we describe a recursive algorithm $\DDE(G,\eps,h)$ that takes a directed graph $G$ with edge lengths and returns an $h$-length $(1+\eps)$-distance-preserving DAG projection of $G$ with width $\hO{1}$. To obtain \Cref{lem:distanceDAGembeddingtoSSSP}, it suffices to call $\DDE(G,\eps,nW)$, since every shortest $s$–$t$ path has length at most $nW$. Let $\delta\in(0,1)$ be a constant parameter.

\paragraph{Step 1 (LDD).}
The algorithm computes $O(\log n)$ independent LDDs of $G$ with diameter $2^i$ and slack $\SL = O(\log^2 n)$, for every
\[
0 \le i \le \lceil \log h \rceil,
\]
and collects them in a family $\cS$. These LDDs are computed by combining \Cref{lem:ldd} and \Cref{lem:reducingSSSPtoDAGs}, i.e. we reduce the LDD computations to
\begin{enumerate}
    \item $\tO{1}$ calls to $\DDE(G', \eps, h/\lambda)$ where $G'$ has $O(m)$ edges, and
    \item $\tO{1}$ calls to the $\apxDSSSP{(1+\eps/\log n)}$ oracle on graphs with $\tO{\lambda m}$ edges,
\end{enumerate}
where we set the parameter $\lambda = 2^{\log^{1-\delta}n}$.

By \Cref{def:ldd}, each $S \in \cS$ is a sequence of vertex sets forming a partition of $V$. Let $\sigma = 2^{\log^{1-\delta}n}$ be a parameter. For each $S \in \cS$ and each $C \in S$, we call $C$
\begin{itemize}
    \item a \emph{large cluster} if $|C| \ge n/\sigma$, and
    \item a \emph{small cluster} if $|C| < n/\sigma$.
\end{itemize}

\paragraph{Step 2 (recursive construction for small clusters).}
For each $S \in \cS$ and each small cluster $C \in S$, we make the recursive call
\[
    D_C \leftarrow \DDE\bigl(G[C],\, (1 - \tfrac{1}{\log n}) \cdot \eps,\, nW\bigr).
\]

\paragraph{Step 3 (shortest-path trees for large clusters).}
We first make a recursive call on the whole graph to get a smaller-length projection:
\[
    D^\ell \leftarrow \DDE\bigl(G,\, (1 - \tfrac{10}{\log n}) \cdot \eps,\, h/\lambda\bigr).
\]
For each $S \in \cS$ and each large cluster $C \in S$, let $r_C \in C$ be an arbitrary vertex. Using \Cref{lem:lengthboundapxSSSPtoDAGs} with the $D^\ell$ as the required DAG projection, together with the $\apxDSSSP{(1+\eps/\log n)}$ oracle, we compute an $h$-length $\alpha$-approximate shortest-path tree $\To_C$ rooted at $r_C$, where
\[
    \alpha
    =
    \bigl(1 + (1 - \tfrac{10}{\log n})\eps \bigr)
    \cdot
    \bigl(1 + \tfrac{\eps}{\log n} \bigr)
    \le
    1 + \bigl(1 - \tfrac{5}{\log n}\bigr)\eps,
\]
and we also compute a reversed shortest-path tree $\Ti_C$ rooted at $r_C$ (i.e. every $(s,r_C)$-path in $\Ti_C$ is a shortest $(s,r_C)$-path). Let $D_C$ be the DAG obtained by combining $\To_C$ and $\Ti_C$ (as \emph{disjoint copies} of subgraphs of $G$) with their roots $r_C$ identified, and define the vertex labeling to $V(G)$ in the natural way.

\paragraph{Step 4 (combining everything).}
For each $S = (V_1, V_2, \dots, V_z) \in \cS$, let $D_S$ be the DAG projection induced from the sequence
\[
    (D_{V_1}, D_{V_2}, \dots, D_{V_z}),
\]
where $D_{V_i}$ is the DAG constructed in Step 2 or Step 3 depending on whether $V_i$ is small or large.

For each $S \in \cS$, we make
\[
    z := \frac{50(\log n)\SL}{\eps}
\]
copies of $D_S$ and arrange them into a sequence $(D^{(1)}_S, D^{(2)}_S, \dots, D^{(z)}_S)$. Let $D'_S$ be the DAG projection of $G$ induced from this sequence.

Let
\[
    D' := \bigcup_{S \in \cS} D'_S.
\]
We make two copies of $D'$, denoted $D'_1, D'_2$, and let the final DAG projection $D$ be the DAG induced from $(D'_1, D'_2)$. We return $D$.

\paragraph{Base case.}
When $G$ has a constant number of vertices or $h$ is a constant, we return the DAG projection $D$ induced from
\[
(G^{(1)}, G^{(2)}, \dots, G^{(\max(|V(G)|, h))})
\]
(i.e. we repeat $G$ exactly $\max(|V(G)|, h)$ times).

\subsection{Analysis: Approximation}

It is immediate that $D$ is an DAG projection onto $G$, since $D$ is constructed only by applying \Cref{def:induced} to a sequence consisting of (1) DAG projections onto subgraphs of $G$ and (2) subgraphs of $G$. Thus, it remains to show that $D$ is an $h$-length $(1+\eps)$-distance-preserving DAG projection of $G$. By \Cref{lem:inheritedoneside} and \Cref{def:DAGembeddingDistance}, it is enough to prove that for every $s,t \in V(G)$ with $\dist_G(s,t) \le h$,
\[
\min \{ \dist_D(s',t') \mid \pi(s') = s,\ \pi(t') = t \} \;\le\; (1+\eps) \cdot \dist_G(s,t).
\]

We prove this by induction on $|V(G)|$ and on $h$. In the base case, when $|V(G)|$ is constant or $h$ is constant, any shortest path of length at most $h$ has at most $\max(|V(G)|, h)$ vertices (the path is simple and edge lengths are positive). Because we stacked $\max(|V(G)|, h)$ copies of $G$, the induced DAG contains a path from the copy of $s$ in the first layer to the copy of $t$ in the last layer with exactly the same length, so the base case holds.

\begin{lemma}[Induction]\label{lem:induction}
    Assume all recursive calls in $D \leftarrow \DDE(G,\eps,h)$ are correct. Then, with high probability, for every $s,t \in V(G)$ with $\dist_G(s,t) \le h$, we have
    \[
        \min \{ \dist_D(s',t') \mid \pi(s') = s,\ \pi(t') = t \} \;\le\; (1+\eps) \cdot \dist_G(s,t).
    \]
\end{lemma}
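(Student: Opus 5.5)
Fix $s,t$ with $L:=\dist_G(s,t)\le h$ and a shortest path $p$. Choose the scale $i$ with $2^i \approx \eps L/(c\log n)$ for a suitable constant $c$. If $L$ is so small that $2^i<1$, use $i=0$; see the last paragraph for that case. Among the $O(\log n)$ independent LDDs of diameter $d=2^i$ in $\cS$, each edge is reversed with probability at most $\SL\ell(e)/d$. So the expected number of reversed edges on $p$ is at most $\SL L/d = O(\SL\log n/\eps)$. By Markov, a single LDD has at most twice this many reversed edges with probability $1/2$, and with $O(\log n)$ independent copies some $S\in\cS$ succeeds w.h.p. A union bound over all $n^2$ pairs then gives the claim for all pairs simultaneously. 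Fix such an $S=(V_1,\dots,V_z)$. Cutting $p$ at its reversed edges splits it into at most $z/2$ (for the chosen constants) subpaths $p_1,p_2,\dots$. Along each subpath the cluster index is non-decreasing.

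I will route each $p_j$ inside one copy $D^{(j)}_S$, consecutive copies being joined through the induced edges of \Cref{def:induced}, which lift the reversed edge between them. Inside one $D_S$, the subpath $p_j$ visits clusters $V_{a_1}, V_{a_2},\dots$ with $a_1\le a_2\le\cdots$, and the induced edges between the $D_{V_a}$ carry the edges between clusters. Let $q$ be a maximal piece of $p_j$ inside a small cluster $C$. I route $q$ through $D_C$, paying $(1+(1-1/\log n)\eps)\,\dist_{G[C]}(\Start(q),\End(q))$. The catch is that $p$'s piece inside $C$ may leave $C$ and return, so it need not be a path in $G[C]$. I avoid this by using the monotone order: $p$ cannot leave $V_a$ and return without a reversed edge. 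Hence within $p_j$ each cluster is a contiguous piece lying in $G[C]$, and $\dist_{G[C]}\le\ell(q)$.

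For large clusters, as in the overview, let $C_L$ and $C_R$ be the first and last large clusters that $p_j$ meets within this copy. I replace the portion of $p_j$ from its first vertex $x$ in $C_L$ to its last vertex $y$ in $C_R$ by $\Ti_{C_L}$ followed by $\To_{C_L}$, entering $\Ti_{C_L}$ at $x$. Reaching $r_{C_L}$ costs at most the weak diameter $d$. From $r_{C_L}$, $\To_{C_L}$ reaches $y$ at cost at most $(1+(1-5/\log n)\eps)(d+\ell(p_j[x,y]))$, using $h$-length validity since $d+\ell\le h$ for our scale. I must check that this detour stays within one $D_{V_a}$. It does, because $D_{C_L}$ contains both trees as copies of subgraphs of $G$ and $D_S$ lists $D_{C_L}$ once. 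The path then exits $y$ via induced edges to later clusters, which are fine since their indices are at least $R\ge L$. The additive cost is $O(d)$ per subpath. Over $O(\SL\log n/\eps)$ subpaths this totals $O(\SL\log n\, d/\eps)$, which I need to be at most $\eps L/(2\log n)$. This forces $d=\Theta(\eps^2 L/(\SL\log^2 n))$, so I will retune the scale $i$ accordingly. The count of subpaths is then $O(\SL^2\log^2 n/\eps^2)$, which exceeds $z$. \textbf{This mismatch between the number of copies $z$ and the additive error is the main obstacle.}

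I expect the resolution to be that the additive $2d$ is paid only once per large-cluster detour. Per subpath, the saving $\eps\ell(p_j)/\log n$ from the recursive slack $(1-5/\log n)$ applied to $\ell(p_j)$ does not cover it. Instead, the detour should be used once globally: pass to the final two-copy stage $(D'_1,D'_2)$ and route the whole middle portion from the first large cluster to the last through one tree pair in $D'$, charging $2d$ once against the slack $5\eps L/\log n$. This works with $d=\Theta(\eps L/\log n)$, hence $\SL L/d=O(\SL\log n/\eps)\le z$ reversed edges. I will try this accounting first. Small-cluster pieces use slack $\eps/\log n$ each, so they are fine multiplicatively. Finally, if $L$ is so small that $d<1$, then all edges have length at least 1 and $d\ge 1$ ensures no more than $L\le O(\log n/\eps)$ edges anyway. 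In that case I route each edge as its own subpath through the $z$ copies, with no additive error.
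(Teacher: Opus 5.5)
Your final resolution is essentially the paper's argument. The paper routes the prefix before the first large cluster on $p$ in $D'_1$ and the suffix after the last large cluster in $D'_2$, using the recursive small-cluster projections placed in the $\Pre$-indexed copies of $D_S$. It shortcuts the whole middle portion once through tree detours, starting with the in-tree and out-tree of the first large cluster. The scale is $d=\Theta(\eps L/\log n)$, so the number of reversed edges stays below $z$ and the additive $O(d)$ is paid only once. Your use of $\To_{C_L}$ all the way to the exit vertex $y$ is a harmless simplification of the paper's detour through the root and out-tree of the last large cluster.

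One correction to the accounting: charge that $O(d)$ against the combined slack of at least $\eps L/\log n$, which comes from both the small-cluster pieces and the middle portion. The middle portion alone only provides $5\eps\,\ell(p_{\mathrm{mid}})/\log n$, which can be tiny, so "$5\eps L/\log n$" is not available in general. Concretely, take $d\le \eps L/(c\log n)$ for a large enough constant $c$. Also commit to this global version in the write-up and drop the per-subpath attempt.
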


\begin{proof}
    Fix $s,t \in V(G)$ with $\dist_G(s,t) \le h$. We show the inequality holds w.h.p. for this pair; a union bound over all $s,t$ gives the lemma.

    Let $p$ be a shortest $s$–$t$ path in $G$. We need the following.

    \begin{lemma}\label{lem:numberofcrossing}
        With high probability, there exists $S \in \cS$ such that
        \begin{enumerate}
            \item every cluster in $S$ has diameter at most $\frac{\eps \cdot \ell_G(p)}{9 \log n}$, and
            \item $p$ contains at most $50 (\log n) \SL / \eps$ reversed edges in $S$ (see \Cref{def:ldd}).
        \end{enumerate}
    \end{lemma}
    \begin{proof}
        If $\ell_G(p) = O((\log n)\SL/\eps)$, then $|p| = O((\log n)\SL/\eps)$ (edge lengths are positive), and taking the LDD with diameter $1$ satisfies the claim: every cluster is a singleton, so diameter $0$, and the number of reversed edges is $O((\log n)\SL/\eps)$.

        Otherwise, $\ell_G(p) = \omega((\log n)\SL/\eps)$. Among the $O(\log n)$ diameter scales $2^i$ we took (up to at least $h \ge \ell_G(p)$), there is one scale $d$ in
        \[
            d \in \left[\frac{\eps \cdot \ell_G(p)}{18 \log n},\ \frac{\eps \cdot \ell_G(p)}{9 \log n}\right].
        \]
        For that scale we sampled $O(\log n)$ independent LDDs. By \Cref{def:ldd}, the expected number of reversed edges on $p$ in such an LDD is at most
        \[
            \SL \cdot \frac{\ell_G(p)}{d} \;\le\; 36 (\log n)\SL/\eps.
        \]
        By Markov’s inequality and independence across the $O(\log n)$ trials at that scale, w.h.p. one of them has at most $50 (\log n)\SL/\eps$ reversed edges and diameter at most $\eps \cdot \ell_G(p)/(9 \log n)$.
    \end{proof}

    Let $S \in \cS$ be as in \Cref{lem:numberofcrossing}, and let $d \le \eps \cdot \ell_G(p)/(9 \log n)$ be its diameter parameter. Let
    \[
        (C_1, C_2, \dots, C_q)
    \]
    be the sequence of clusters of $S$ that $p$ visits, in order. Let $p_i$ be the subpath of $p$ inside $C_i$. By \Cref{lem:numberofcrossing}, there are at most
    \[
        q_{\text{rev}} := 50 (\log n)\SL/\eps
    \]
    reversed edges along $p$. For each $p_i$, define $\Pre(p_i)$ to be the number of reversed edges on $p$ before $p_i$, and $\Dec(p_i)$ the number after $p_i$.

    Let $C_x$ be the first large cluster along this sequence, and $C_y$ the last large cluster (it is possible that $C_x = C_y$; the one-large-cluster case is handled similarly, so we assume $x < y$ for clarity). Let
    \[
        p_{\mathrm{mid}} := p_x \oplus p_{x+1} \oplus \dots \oplus p_y
    \]
    be the portion of $p$ from the first to the last large cluster.

    For every $i < x$, $C_i$ is a small cluster. By the induction hypothesis, $D_{C_i}$ contains a path $p'_i$ that projects to $p_i$ and
    \[
        \ell_{D_{C_i}}(p'_i) \le \bigl(1 + (1 - \tfrac{1}{\log n})\eps\bigr) \cdot \ell_G(p_i).
    \]
    Recall that in Step 4 we made
    \[
        z := \frac{50 (\log n)\SL}{\eps}
    \]
    copies of $D_S$, so we can place $p'_i$ in the $\Pre(p_i)$-th copy of $D_S$. We do the symmetric construction for every $i > y$ (these go into the second big union). Hence we can replace every $p_i$ with $i<x$ or $i>y$ by a corresponding $p'_i$ with only a factor $(1 + (1 - 1/\log n)\eps)$ blow-up, and the edges linking $p'_i$ to $p'_{i+1}$ exist in $D$:
    \begin{itemize}
        \item if the edge was reversed, we put $p'_{i+1}$ in a later copy, so the induced construction adds the connecting edge;
        \item if it was not reversed, the edge is already in the same copy.
    \end{itemize}

    It remains to replace $p_{\mathrm{mid}}$ in $D$. By construction of $D_{C_x}$ (large cluster), we have a reversed SPT $\Ti_{C_x}$, so there is a path
    \[
        p'_x : \Start(p_{\mathrm{mid}}) \to r_{C_x}
    \]
    in $D_{C_x}$. Its projection is the corresponding path in $G$, and since the cluster diameter is at most $d \le \eps \ell_G(p)/(9 \log n) \le h$, and $\Ti_{C_x}$ is a $(1 + (1 - 5/\log n)\eps)$-approximation,
    \[
        \ell(p'_x) \le (1 + (1 - 5/\log n)\eps) \cdot d \le \frac{\eps \cdot \ell_G(p)}{8 \log n}.
    \]
    Similarly, in $\To_{C_y}$ we have a path
    \[
        p'_y : r_{C_y} \to \End(p_{\mathrm{mid}})
    \]
    with the same kind of bound.

    Between the two large clusters, in $\To_{C_x}$ we have a path
    \[
        p'_{\mathrm{mid}} : r_{C_x} \to r_{C_y}
    \]
    that is a $(1 + (1 - 5/\log n)\eps)$-approximate shortest path in $G$. Also,
    \[
        \dist_G(r_{C_x}, r_{C_y}) \le \ell_G(p_{\mathrm{mid}}) + 2d
    \]
    because we can go from $r_{C_x}$ to the entry of $p_{\mathrm{mid}}$ in $C_x$ in at most $d$, follow $p_{\mathrm{mid}}$, then go from the exit in $C_y$ to $r_{C_y}$ in at most $d$.

    We place $p'_x$ and $p'_{\mathrm{mid}}$ in the $\Pre(p_x)$-th copy of $D_S$ in the first big union, and $p'_y$ in the $\Pre(p_y)$-th copy in the second big union. By the induced construction, these connect to form a path
    \[
        p_{\mathrm{mid}}'' := p'_x \oplus p'_{\mathrm{mid}} \oplus p'_y
    \]
    in $D$. Its length satisfies
    \[
        \ell_D(p_{\mathrm{mid}}'') \le 2 \cdot \frac{\eps \cdot \ell_G(p)}{8 \log n} + (1 + (1 - 5/\log n)\eps) \cdot (\ell_G(p_{\mathrm{mid}}) + 2d)
    \]
    \[
        \le (1 + (1 - 5/\log n)\eps) \cdot \ell_G(p_{\mathrm{mid}}) + \frac{\eps \cdot \ell_G(p)}{2 \log n}.
    \]

    Finally, concatenating the replaced prefix, the middle $p_{\mathrm{mid}}''$, and the replaced suffix gives a path in $D$ from some $s'$ with $\pi(s') = s$ to some $t'$ with $\pi(t') = t$ whose length is at most
    \[
        \bigl(1 + (1 - \tfrac{1}{\log n})\eps \bigr) \cdot (\ell_G(p) - \ell_G(p_{\mathrm{mid}}))
        \;+\; (1 + (1 - \tfrac{5}{\log n})\eps) \cdot \ell_G(p_{\mathrm{mid}})
        \;+\; \frac{\eps \cdot \ell_G(p)}{2 \log n}
        \;\le\; (1+\eps) \cdot \ell_G(p),
    \]
    for $n$ sufficiently large. This proves the inductive step.
\end{proof}

\subsection{Analysis: Size and Complexity}

\paragraph{Width of the DAG projection.}
The width of the DAG projection is bounded as follows.

\begin{lemma}\label{lem:smallduplication}
    Suppose $D \leftarrow \DDE(G,\eps,h)$ where $G$ has $n$ vertices. Then $D$ has width
    \[
        f(n,\eps) = \bigl(\tO{1/\eps}\bigr)^{\log^\delta n} \cdot n^{1 / \log^\delta n}
    \]
\end{lemma}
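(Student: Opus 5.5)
The plan is to write a recurrence for the width that follows Steps~1--4 of $\DDE$ and then unroll it along the recursion in $n$. The first thing to observe is that the output $D$ is assembled only from the graphs $D_C$ built in Steps~2 and~3. The auxiliary projection $D^\ell$ and the calls $\DDE(G',\eps,h/\lambda)$ used to compute the LDDs only serve to compute trees and partitions; they are never copied into $D$. So the recursion in $h$ does not affect the width, and only the recursion on small clusters, where $n$ drops to below $n/\sigma$, matters. Accordingly, I will write $f(n,\eps)$ for the worst-case width over all $h$.

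The next step is to bound the width of a single $D_S$, with $S\in\cS$ fixed. Since $S$ is a partition of $V$, each $v\in V$ lies in exactly one cluster. If that cluster is small, it contributes at most $f(n/\sigma,(1-\tfrac{1}{\log n})\eps)$ copies of $v$ through the recursive call in Step~2. Every large cluster $C$ contributes at most $2$ copies of every vertex, one in $\To_C$ and one in $\Ti_C$, because these trees may span all of $V(G)$. There are at most $\sigma$ large clusters, since each has at least $n/\sigma$ vertices. The induced construction of \Cref{def:induced} adds only edges, never vertices. Hence
\[
\mathrm{width}(D_S)\le f\bigl(n/\sigma,(1-\tfrac{1}{\log n})\eps\bigr)+2\sigma .
\]
Step~4 then takes $z=50(\log n)\SL/\eps$ copies to form each $D'_S$, a union over $|\cS|=O(\log^2 n)$ partitions ($O(\log n)$ scales times $O(\log n)$ samples), and finally two copies of $D'$. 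This gives
\[
f(n,\eps)\le \frac{c\log^5 n}{\eps}\Bigl(f\bigl(n/\sigma,(1-\tfrac{1}{\log n})\eps\bigr)+2\sigma\Bigr)
\]
for an absolute constant $c$, using $\SL=O(\log^2 n)$. The base case returns width $\max(|V(G)|,h)$, which is $O(1)$ in both base-case regimes.

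Now unroll the recurrence. With $\sigma=2^{\log^{1-\delta}n}=n^{1/\log^\delta n}$, the vertex count falls below a constant after $k=\log n/\log\sigma=\log^\delta n$ levels. Along the way $\eps$ is multiplied by $(1-\tfrac{1}{\log n})^{k}\ge 1/2$, since $k=\log^{\delta}n=o(\log n)$. So at every level the multiplier is at most $\tfrac{2c\log^5 n}{\eps}=\tO{1/\eps}$, where the polylog is in terms of the top-level $n$, which dominates the smaller $n$ at deeper levels. Unrolling gives
\[
f(n,\eps)\le \bigl(\tO{1/\eps}\bigr)^{k}\bigl(O(1)+2k\sigma\bigr).
\]
The factor $2k=2\log^\delta n$ is absorbed into one more $\tO{1/\eps}$ factor, and this yields $f(n,\eps)=(\tO{1/\eps})^{\log^\delta n}\cdot n^{1/\log^\delta n}$ as claimed.

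The step I expect to need the most care is the per-$S$ accounting. I must check that $D_S$ charges the $2\sigma$ tree copies once per partition and not once per cluster containing $v$. I also need the $\eps$-degradation in Step~2 to stay a constant factor over the $k$ levels, which is what keeps the per-level factor uniformly $\tO{1/\eps}$ rather than compounding.
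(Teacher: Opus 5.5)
Your proposal is correct and follows the paper's proof essentially step for step. It has the same per-partition bound $f(n/\sigma,(1-\tfrac{1}{\log n})\eps)+2\sigma$, the same $O(\log^5 n/\eps)$ multiplier from the $z$ copies, the $O(\log^2 n)$ LDDs and the final doubling, and the same unrolling over $\log^\delta n$ levels; your observation that $D^\ell$ and the LDD-related recursive calls never enter $D$ just makes explicit what the paper leaves implicit. One small slip, which the paper's text shares: stacking $\max(|V(G)|,h)$ layers in the base case is not $O(1)$ when a constant-size cluster is reached through Step~2 (where $h=nW$), so the base case should use $\min(|V(G)|,h+1)$ layers, which still suffices for the base-case correctness argument and gives width $O(1)$ in both regimes.
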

\begin{proof}
    The base case is trivial since each vertex is duplicated only a constant number of times.

    Recall the structure of $D$. The final DAG $D$ is induced from two copies of $D'$. Each $D'$ is the union (over $S \in \cS$) of $D'_S$. Each $D'_S$ is obtained from
    \[
        z := 50 (\log n)\SL / \eps
    \]
    copies of $D_S$. Each $D_S$ contains one copy of $D_C$ for every cluster $C$ in $S$.

    For a cluster $C$ we have two cases.
    \begin{itemize}
        \item If $C$ is a \emph{small} cluster, then by the induction hypothesis every vertex in $C$ is duplicated
        \[
            f\bigl(|C|, (1 - 1/\log n)\eps\bigr)
        \]
        times in $D_C$.
        \item If $C$ is a \emph{large} cluster, then we build $D_C$ just from the two trees, so every vertex of $G$ participates in $D_C$ at most twice.
    \end{itemize}

    Large clusters have size at least $n/\sigma$ and are disjoint, so there are at most $\sigma$ large clusters. Thus, across all large clusters, each vertex is duplicated at most $2\sigma$ times. Small clusters are disjoint and have size at most $n/\sigma$, so across all small clusters, each vertex is duplicated at most
    \[
        f\bigl(n/\sigma, (1 - 1/\log n)\eps\bigr)
    \]
    times.

    We have $O(\log^2 n)$ LDDs in $\cS$ (because we take $O(\log n)$ scales and $O(\log n)$ independent LDDs per scale). Every such LDD is blown up by a factor of $O((\log n)\SL/\eps)$. Putting this together, we obtain the recursion
    \[
        f(n,\eps)
        \;\le\;
        O\bigl((\log n)^5 / \eps\bigr)
        \cdot
        \bigl(
            f\bigl(n/\sigma, (1 - 1/\log n)\eps\bigr)
            + 2\sigma
        \bigr).
    \]
    Taking
    \[
        \sigma = 2^{\log^{1-\delta} n}
    \]
    and unwinding for at most $\log^\delta n$ levels (since $n \mapsto n/\sigma$ shrinks by $2^{\log^{1-\delta} n}$ each time), we get
    \[
        f(n,\eps)
        =
        \bigl(\tO{1/\eps}\bigr)^{\log^\delta n} \cdot n^{1 / \log^\delta n}.
    \]
\end{proof}

\paragraph{Complexity.}
Let $T(m,n,\eps,h)$ be the work of $\DDE(G,\eps,h)$ on a graph $G$ with $m$ edges and $n$ nodes. Let $\cO_{\mathrm{DAG}}(m)$ denote the work of one call to the DAG-SSSP oracle on a graph with $m$ edges.

Step 1 runs $O(\log^2 n)$ LDD constructions. By \Cref{lem:ldd} and \Cref{lem:reducingSSSPtoDAGs}, this costs
\[
    \tO{1} \text{ calls to } \apxDSSSP{(1+\eps/\log n)} \text{ on graphs with } O(\lambda w m) \text{ edges,}
\]
plus
\[
    \tO{1} \text{ recursive calls } T(m,n,\eps,h/\lambda),
\]

Step 2 makes recursive calls on all small clusters. All such calls are in parallel. Since the clusters of a fixed LDD partition $V$, and we have only $O(\log^2 n)$ LDDs, we have
\[
    \sum_{C \in S,\, S \in \cS} |E(G[C])| \;\le\; O(m \log^2 n).
    \]
So we can upper-bound Step 2 by
\[
    O(\log^2 n) \cdot T\bigl(m, n/\sigma,(1 - 1/\log n)\eps, nW\bigr).
    \]
If these recursive calls internally make more oracle calls, we can batch them (by taking the union of the graphs and adding a super source), so the stated number of oracle calls in \Cref{lem:distanceDAGembeddingtoSSSP} still holds.

Step 3 makes one recursive call
\[
    T\bigl(m, n,(1 - 10/\log n)\eps, h/\lambda\bigr)
\]
and, since there are at most $\sigma$ large clusters, makes $O(\sigma)$ calls to the $\apxDSSSP{(1+\eps/\log n)}$ oracle on graphs with $\hO{m}$ edges.

Step 4 costs time proportional to the size of the final DAG, which is $O(w \cdot m)$ by \Cref{lem:smallduplication}.

Recall that  $\lambda=\sigma=2^{\log^{1-\delta}n}$.
Putting these together, we get the recurrence
\[
    T(m,n,\eps,h)
    \;\le\;
    \tO{\,T(m,n,\eps,h/\lambda)\;}
    + \sum_{i}\tO{\,T\bigl(m_i, n/\sigma,(1 - 1/\log n)\eps, nW\bigr)\;}
    + \cO_{\mathrm{DAG}}(O(\lambda w m)).
\]

Where $m_i$ is the number of edges in the $i$-th small cluster. Observe that, since $\lambda=\sigma=2^{\log^{1-\delta}n}$, the recursion depth is at most $\log^{2\delta} n$ for expanding both the $n$ part and $h$ part, so the parameter $\eps$ passed down the recursion never shrinks below
\[
        (1 - 1/\log n)^{\log^{2\delta} n} \eps \;\ge\; \eps/2
\]
for $n$ large enough. Moreover, the total number of oracle calls is
\[
    \bigl(\tO{1}\bigr)^{\log^{2\delta} n} = \hO{1},
\]
and each is on a graph with
\[
    \tO{\lambda w m} = \hO{w m}
\]
edges (since $w$ is a function of $n,\eps$ and $\eps$ stays within a constant factor of the root value). Thus the total extra work from oracle calls is
\[
    \bigl(\tO{1}\bigr)^{\log^{2\delta} n} \cdot \lambda w m = \hO{w m}.
    \]

Finally, since the recursion depth is only $\log^{2\delta} n$ and all work inside a level can be parallelized, the overall depth is $\tO{1}$.

\section{Congestion DAG Projection Construction}\label{sec:congestionDAGembedding}
In this section, we provide an efficient parallel algorithm for constructing congestion DAG projections using oracle calls to only approximate max-flow/min-cut on DAGs.

\congestionDAGembedding*

\paragraph{High-Level Strategy.}

Assuming an oracle for $n^{o(1)}$-approximate MFMC on DAGs, our strategy for constructing a $n^{o(1)}$-congestion-preserving DAG projection is as follows.

Similar to the distance-preserving DAG projection case, the difficulty is that, if we follow the idea from the overview, then an algorithm for DAG projection requires solving max flow. Our idea is to define  a relaxed notion of $(\kappa,\delta)$-congestion-preserving DAG projection as follows.

\begin{definition}
    Let $D$ be a DAG projection of $G = (V,E)$. A \emph{$(\kappa,\delta)$-congestion-preserving projection algorithm} associated with $D$ takes as input either
    \begin{itemize}
        \item a flow $\ff^D$ in $D$ with $\pi(\Supp(\Dem(\ff^D))) \subseteq V$, and returns a flow $\ff$ in $G$ with congestion at most $\kappa \cdot \Cong(\ff^D)$ such that
        \[
            \Dem(\ff) \preceq \pi(\Dem(\ff^D))
            \quad\text{and}\quad
            \Val(\ff) \ge \Val(\ff^D) -\delta,
        \]
        or
        \item a cut $S^D$ in $D$, and returns a cut $S$ in $G$ with value at most $\Val(S^D)$ such that
        \[
            \HL(S^D)\setminus \{\bot\} \subseteq S \subseteq \pi(S^D).
        \]
    \end{itemize}
    The projection algorithm is \emph{efficient} if its complexity is at most the complexity of constructing $D$.
\end{definition}

When $\delta$ is tiny enough, it is the $\kappa$-congestion-preserving DAG projection we need.

We first observe that, when $\delta$ is very big, the problem becomes trivial. In particular, $(\kappa=1,\delta\ge U(E))$-congestion-preserving DAG projection means the flow projection algorithm does not need to return any flow, so a DAG with two copies of $V$ (denoted by $V_1,V_2$), and a center vertex $c$ with edge capacities $\deg_G(v)$ connecting from each copies of $v$ in $V_1$ to $c$; connecting to each copy of $v$ in $V_2$ from $c$, would work as a DAG satsfying the cut projection.

Our main technical contribution is to show that it is possible to reduce $(\kappa,\delta)$-congestion-preserving DAG projection to a larger $\delta$ DAG projection. In particular, we will show the following spiral recursion.

\begin{itemize}
\item We will show how to get $(\kappa,\delta)$-congestion-preserving DAG projection using roughly $(\kappa,\delta)$-approximate max flow by following the idea form the overview.
\item Then we will show how to get $(\kappa,\delta)$-approximate max flow by roughly $(\kappa,\delta\cdot z)$-approximate DAG projection where $z=n^{o(1)}$ is an appropriate parameter. This step requires concatenating many copies of $(\kappa,\delta\cdot z)$-approximate DAG projection to get a $(\kappa,\delta)$-approximate DAG projection.

\end{itemize}

\paragraph{Organization.} As preliminaries, we first define Expander Decomposition and Hierarchy in \Cref{subsec:ED}. 
To implement the main technical contribution that gives the above spiral reduction, we give the following chain of reduction:
\begin{itemize}
\item DAG projection to Expander Hierarchy (\Cref{subsec:CongDAG from EH}). The key idea for reducing the additive approximation $\delta$ by making copies is presented here.
\item Expander Hierarchy to Expander Decomposition (\Cref{subsec:EH}).
\item Expander Decomposition to approximate max flow (\Cref{subsec:ED from MFMC})
\item Approximate max flow on general graphs to approximate max flow on DAGs using DAG projection (\Cref{subsec:MFMC to DAG}).
\end{itemize}

And in the end, we combine everything together and finish the spiral reduction in \Cref{subsec:cong-complete}.

\subsection{Expander Decomposition and Hierarchy}
\label{subsec:ED}
Although we did \emph{strong expander decomposition} in the intro in the sense that edges are expanding in each induced subgraph of SCCs, we will do \emph{weak expander decomposition}, which only guarantees expanding in the whole graph in this section, because it admits a simpler algorithm.

\paragraph{Basic definitions of ordered vertex-partition.}
Let $G = (V,E)$ be a directed graph. For a sequence of vertex sets $\cV = (V_1, \dots, V_z)$ that partitions $V$ (which we call an ordered vertex-partition), define
\[
    \Erev(\cV) = \{ (u,v) \in E \mid u \in V_i,\ v \in V_j,\ j < i \}
\]
to be the set of \emph{reversed edges} with respect to $\cV$.

For two sequences $(V_1, \dots, V_x)$ and $(V'_1, \dots, V'_y)$ that both partition $V$, we say $(V_1, \dots, V_x)$ \emph{refines} $(V'_1, \dots, V'_y)$ if we can write
\[
    (V_1, \dots, V_x) = (\cC_1, \dots, \cC_y),
\]
where each $\cC_i$ is a subsequence of $(V_1, \dots, V_x)$ that partitions $V'_i$.

Given $G=(V,E)$ and a ordered vertex-partition $\cV = (V_1,\dots,V_z)$, we say a demand $(\DDelta,\nnabla)$ is \emph{$\cV$-constrained} if for every $V_i$,
\[
    \sum_{v \in V_i} \DDelta(v) \;=\; \sum_{v \in V_i} \nnabla(v),
\]
i.e. each part is individually balanced.

\begin{definition}[Terminal Expanding]\label{def:terminalexpanding}
    Let $G=(V,E)$ be a directed graph with edge capacities, let $\dd : V \to \bbR_{\ge 0}$, and let $\cV$ be a ordered vertex-partition. We say $\dd$ is \emph{$\cV$-constrained $\phi$-expanding on $G$} if every $\dd$-respecting, $\cV$-constrained demand can be routed in $G$ with congestion at most $1/\phi$.

    We say an edge set $F \subseteq E$ is \emph{$\cV$-constrained $\phi$-expanding on $G$} if $\vol_F$ is $\cV$-constrained $\phi$-expanding on $G$.
\end{definition}

\begin{definition}[Expander Decomposition]\label{def:terminalexpanderdecomposition}
    Let $G=(V,E)$ be a directed graph and $F \subseteq E$ an edge set. A \emph{(weak) $F$-expander decomposition} with expansion $\phi$ and slack $\Kap$ returns a ordered vertex-partition $\cV$ such that
    \begin{itemize}
        \item $F$ is $\phi$-expanding in the graph $G - \Erev(\cV)$, and
        \item $\vol(\Erev(\cV)) \le \Kap \phi \cdot \vol(F)$.
    \end{itemize}
    An associated \emph{flow routing algorithm} takes a $\cV$-constrained, $\vol_F$-respecting demand and outputs a flow in $G$ routing that demand. It is \emph{efficient} if its complexity is at most the complexity of constructing the decomposition.
\end{definition}

\begin{definition}[Expander  Hierarchy]\label{def:expanderdecompositionhierarchy}
    Let $G=(V,E)$ be a directed graph with edge capacities. An \emph{expander hierarchy} with $t$ layers consists of edge sets $E_1, \dots, E_t$ and ordered vertex-partitions $\cC_0, \dots, \cC_t$ such that\footnote{This deviates slightly from the classical definition in that the $\cC_i$’s do not have to be the SCCs of $G - E_{>i}$, and we assume the sequences $\cC_i$ are given. This avoids explicitly computing SCCs in parallel.}
    \begin{itemize}
        \item $\cC_0 = (\{v_1\}, \{v_2\}, \dots, \{v_n\})$ consists only of singletons, and $\cC_t = (V)$,
        \item for every $i \in [t]$, the sequence $\cC_{i-1}$ refines $\cC_i$,
        \item for every $i \in [t]$, we have $\Erev(\cC_{i-1}) \subseteq E_{\ge i}$, where $E_{\ge i} := \bigcup_{j \ge i} E_j$.
    \end{itemize}
    We say the hierarchy has expansion $(\phi_1, \dots, \phi_t)$ if, for every $1 \le i \le t$, the edge set $E_i$ is $\cC_i$-constrained $\phi_i$-expanding on $G$.

    An \emph{flow routing algorithm} associated with the hierarchy takes as input a layer number $i$ and a $\cC_i$-constrained, $\vol_{E_i}$-respecting demand, and outputs a flow in $G$ routing the demand with congestion at most $1/\phi_i$. It is \emph{efficient} if its complexity is at most the complexity of constructing the hierarchy.
\end{definition}

\subsection{Congestion DAG Projections via  Expander  Hierarchy}
\label{subsec:CongDAG from EH}
In this section, we assume we are given an expander hierarchy and we show how to construct a congestion DAG projection.

\begin{lemma}\label{lem:expanderdecompositiontoDAGembedding}
    There is an algorithm that takes as input
    \begin{itemize}
        \item a directed graph $G = (V,E)$ with edge capacities,
        \item an oracle that constructs an expander hierarchy of $G$ with $t$ layers and expansion $(\phi_1,\phi_2,\dots,\phi_t)$ where $\phi_i = \phi$ for all $i \le t-1$,\footnote{We do not care about the expansion for the last layer, because the flow routing there is contributed to the additive error.} together with an efficient flow routing algorithm, and
        \item a parameter $\sigma \ge 1$, which controls the trade-off between the projection size and additive-approximation,
    \end{itemize}
    and returns a DAG projection $D$ of $G$ such that
    \begin{itemize}
        \item $|E(D)| = O(2^t \sigma \cdot |E(G)|)$,
        \item $D$ is $(\kappa,\delta)$-congestion-preserving with
        \[
            \kappa = O\bigl(2^t / \phi\bigr)
            \qquad\text{and}\qquad
            \delta = U_G(E_t)/\sigma,
        \]
        \item and $D$ has an efficient projection algorithm.
    \end{itemize}
    The algorithm uses one call to the hierarchy oracle and performs additional $\tO{|E(D)|}$ work and $\tO{1}$ depth.

    Moreover, if the edge capacities of $G$ are integral and $\sigma = 1$, then the edge capacities of $D$ are integral.
\end{lemma}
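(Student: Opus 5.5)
I will build $D$ bottom-up along the hierarchy, following the overview construction but adapted to the weak hierarchy with given ordered partitions $\cC_0,\dots,\cC_t$. For each level $i$ and each part $C\in\cC_i$ I will define a DAG projection $D_C^{(i)}$ of $G[C]$ (with Steiner vertices). At level $0$, $D_C^{(0)}$ is the single vertex $v$ for $C=\{v\}$. For $i\ge 1$, let $(C_1,\dots,C_r)$ be the subsequence of $\cC_{i-1}$ partitioning $C$. First form $H_C$, the projection induced (\Cref{def:induced}) from $(D^{(i-1)}_{C_1},\dots,D^{(i-1)}_{C_r})$; the lifted forward edges keep their capacities from $G$. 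Then take two disjoint copies $H^{(1)}_C, H^{(2)}_C$, add a Steiner vertex $w_C$ with edges $(x,w_C)$ for $x\in H^{(1)}_C$ and $(w_C,y)$ for $y\in H^{(2)}_C$, each of capacity $\vol_{E_i}(\pi(\cdot))$, and concatenate $H^{(1)}_C$ before $H^{(2)}_C$ (again lifting $G$-edges). For the top level $i=t$ the expander edges $E_t$ are not expanding, so I replace the hub by making $\sigma$ copies of $H_V$ concatenated in sequence. Each level doubles the copies, and the top level multiplies by $\sigma$, so the width is $O(2^t\sigma)$ and $|E(D)|=O(2^t\sigma|E(G)|)$. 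Integrality is immediate since all capacities are copied $G$-capacities or volumes.

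\textbf{Forward direction (cuts / lower bound).} I will show every feasible flow in $G$ from $S$ to $T$ routes in $D$, up to $\delta$ in value, and I will dualize to obtain the cut-projection algorithm. Take a path decomposition. Within $C\in\cC_i$, a flow path segment is cut at its first and last $E_i$-edge. The prefix and suffix contain only edges of $E_{<i}$ plus forward edges of $\cC_{i-1}$; since $\Erev(\cC_{i-1})\subseteq E_{\ge i}$, they route in $H^{(1)}_C$ and $H^{(2)}_C$ by induction. The middle segment is shortcut through $w_C$, and total load on $(x,w_C)$ is at most $\vol_{E_i}(\pi(x))$ by feasibility. At the top level, a flow path may use $E_t$-edges many times, so I will do the following. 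Paths crossing $E_t$ at most $\sigma$ times fit into the $\sigma$ copies. Paths with more crossings carry total value at most $U_G(E_t)/\sigma=\delta$, since each unit of such flow uses $E_t$-capacity more than $\sigma$ times. The cut algorithm sets $S=\HL(S^D)$ (or some set sandwiched between $\HL(S^D)$ and $\pi(S^D)$) and bounds $\Val(S)\le\Val(S^D)$ by exhibiting, for each $G$-edge crossing $S$, a distinct cut edge in $D$, walking the recursive structure level by level. This charging is the delicate part: the hub edges must pay for $E_i$-edges. Concretely, if $u\in S$ and $v\notin S$, some copy of $u$ lies in $S^D$ and some copy of $v$ lies outside, and the hub path $u'\to w_C\to v'$ with capacities $\vol_{E_i}$ accounts for this.

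\textbf{Backward direction (flow projection, congestion $\kappa$).} Given $\ff^D$, I map non-Steiner edges via $\pi$; this gives congestion equal to the width contribution per level. For each hub $w_C$ at level $i<t$, flow entering from $x$ and leaving to $y$ induces a demand on $C$ that is balanced, and is $\vol_{E_i}$-respecting times $\Cong(\ff^D)$. Hubs of the same level sit in distinct parts of $\cC_i$, so the combined demand is $\cC_i$-constrained, and the flow routing algorithm routes it with congestion $1/\phi$. There are $2^{t-i}$ copies of level-$i$ hubs, so the total congestion is $\sum_i 2^{t-i}/\phi=O(2^t/\phi)$. The top level has no hub, so no additive loss arises there beyond what was discussed.

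The main obstacle I expect is the exact accounting in the forward/cut direction: the weak (ordered, non-SCC) hierarchy together with the top-level repetition must give exactly additive $U_G(E_t)/\sigma$ and a cut of value at most $\Val(S^D)$. I would first prove the flow-embedding statement cleanly and then derive the cut projection via max-flow/min-cut on the auxiliary source/sink graph.
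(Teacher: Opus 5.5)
Your lower-level gadget (two copies plus a hub) is essentially the paper's with $\sigma_i=1$. There are, however, two genuine gaps: one at the top level and one in the cut projection.

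\textbf{The top level.} You replicate $H_V$ $\sigma$ times at full capacity and delete the hub; your claim that integrality holds for every $\sigma$ is a symptom of this. This breaks both halves of the $(\kappa,\delta)$ guarantee.

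First, every $G$-edge lifted inside $H_V$ gets at least $\sigma$ full-capacity copies in $D$, and $\Theta(\sigma^2)$ if you lift between all pairs of copies as in Definition~\ref{def:induced}. So already for $G$ a single edge $(a,b)$ you get $\maxflow_D(\pi^{-1}(a),\pi^{-1}(b))=\Omega(\sigma)\cdot\maxflow_G(a,b)$, forcing $\kappa=\Omega(\sigma)$. The lemma needs $\kappa=O(2^t/\phi)$ independent of $\sigma$. That independence is the whole point: the spiral takes $\sigma=2^{2\log^{0.7}n}\alpha^{\eta}\gg 2^t/\phi$ to shrink $\delta$ without inflating $\kappa$. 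Your count of $2^{t-i}$ copies of each level-$i$ hub also omits this factor $\sigma$.

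Second, without the top hub your loss $\delta$ lands on the routing-into-$D$/cut side. But a $(\kappa,\delta)$-projection algorithm must project cuts \emph{exactly} ($\Val(S)\le\Val(S^D)$, hence $\maxflow_G\le\maxflow_D$ with no loss). Concretely, take the unit-capacity cycle $v_1\to\cdots\to v_n\to v_1$ with $t=1$, $E_1=E$, and $\cC_0=(\{v_n\},\dots,\{v_1\})$. Every edge $(v_i,v_{i+1})$ is reversed, so a path in your $D$ from a copy of $v_1$ to a copy of $v_n$ needs $n$ strictly increasing copy indices. For $\sigma<n$, the set reachable from $\pi^{-1}(v_1)$ is therefore a zero-value cut of $D$. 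Yet every $S\ni v_1$ with $v_n\notin S$ has $\Val_G(S)\ge 1$, so no valid cut projection exists.

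The paper instead builds $2\sigma$ replicas at the top and scales \emph{all} capacities of $D_t(V)$ by $1/\sigma$. It places $E_t$-edges between consecutive replicas and keeps the hub $w_t^V$ between replicas $\sigma$ and $\sigma+1$. The only loss is then the flow through $w_t^V$ in the flow-projection direction, which is $O(U_G(E_t)/\sigma)$ because its incident edges were scaled down.

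\textbf{The cut-projection algorithm.} The lemma requires an \emph{efficient} cut-projection algorithm, and neither of your routes supplies one. Dualizing a flow embedding via max-flow/min-cut means computing a minimum cut of $G$, which is exactly the general-graph max-flow problem the reduction must avoid; it would also only give $\Val(S)\le\Val(S^D)+\delta$. The direct charging from $S=\HL(S^D)$ is only asserted, and your construction lacks what it needs. Nothing ties copies of a vertex in successive replicas together, so $S^D$ need not be monotone, and a hub edge pays only if $w_C$ lies on the right side.

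The paper's missing ingredients are the following.
\begin{itemize}
\item Designated copies $\In{u}_i,\Out{u}_i$, with lifted edges only of the form $(\Out{u},\In{v})$. This is also what yields $|E(D)|=O(2^t\sigma|E(G)|)$ and $D$-congestion $2^t$. Lifting between all copy pairs makes both quadratic in the width, and your hub edges from every copy inflate hub demand by another $2^{i-1}$.
\item Infinite-capacity edges $(\Out{u}_{i-1,(k)},\In{u}_{i-1,(k+1)})$. These make $\pi(S^D\cap\In{V})\subseteq S_x\subseteq\pi(S^D\cap\Out{V})$ monotone in $x$.
\item An averaging argument choosing the best replica $x^*$ among the $\sigma_i$ replicas before the hub (after it, if $w_i^C\in S^D$). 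Each $E_i$-edge cut by $S_x$ is paid either by the cut bridge edge between replicas $x$ and $x+1$, or by the hub edge at its head. The latter happens at most once per edge, so charges to the hub edge at $v$ stay within its capacity $\vol_{E_i}(v)/\sigma_i$.
\end{itemize}
This averaging is what gives the exact bound $\Val_{G_i[C]}(S_{x^*})\le\Val_{D_i(C)}(S^D)$.
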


\paragraph{Algorithm (constructing $D$).}
Let $\cC_0, \dots, \cC_t$ and $E_1, \dots, E_t$ be the expander hierarchy returned by the oracle.
Define
\[
    G_i := G - E_{>i} \qquad\text{for } i = 0,1,\dots,t,
\]
so $G_t = G$ (since $E_{>t} = \emptyset$). For every $i = 0,1,\dots,t$ and every cluster $C \in \cC_i$, we will build a DAG projection for $G_i[C]$, denoted $D_i(C)$. Since $\cC_t = (V)$, $D_t(V)$ is a DAG projection for $G$, and we will return $D := D_t(V)$.

For every original vertex $v$ in $G_i[C]$, we will specify two special copy vertices of $v$ in $D_i(C)$, denoted
\[
    \In{v}_i, \Out{v}_i \in V(D_i(C)),
\]
with $\pi(\In{v}_i) = \pi(\Out{v}_i) = v$. We will later see that $\In{v}$ and $\Out{v}$ are basically the first and last copy of vertex $v$ in the topological order of the DAG projection. Intuitively, these are the attachment points when we later connect edges that go in or out of $C$. Moreover, in the construction, there will be an infinite capacity path from $\In{v}_i$ to $\Out{v}_i$. 

\emph{Base layer.} When $i = 0$, each $C \in \cC_0$ is a singleton, say $C = \{v\}$. We let $D_0(C)$ be the one-vertex DAG with label $v$, and we set $\In{v}_0 = \Out{v}_0$ to be that vertex.

\medskip

\emph{Inductive step.} Suppose $i \ge 1$. We will define a parameter
\[
    \sigma_i =
    \begin{cases}
        1, & \text{if } i < t ,\\
        \sigma, & \text{if } i = t.
    \end{cases}
\]

Since $\cC_{i-1}$ refines $\cC_i$, the clusters of $\cC_{i-1}$ that are contained in $C \in \cC_i$ form a subsequence
\[
    (Y_1, \dots, Y_z)
\]
that partitions $C$. By the induction hypothesis, for each $Y_j$ we have already built a DAG projection $D_{i-1}(Y_j)$ of $G_{i-1}[Y_j]$.

\textbf{(1) Concatenating the child DAGs.}
We first build an intermediate DAG $\tilde D_i(C)$ by taking the DAGs
\[
    D_{i-1}(Y_1), D_{i-1}(Y_2), \dots, D_{i-1}(Y_z)
\]
in order, and for every edge $(u,v) \in E(G_i)$ with $u \in Y_x$, $v \in Y_y$, and $x < y$, we add to $\tilde D_i(C)$ the edge
\[
    \bigl(\Out{u}_{i-1},\, \In{v}_{i-1}\bigr)
\]
with capacity $U_G(u,v)$. This makes $\tilde D_i(C)$ a DAG projection of $G_i[C]$ consistent with the ordering induced by $(Y_1,\dots,Y_z)$.

\textbf{(2) Replicating to allow congestion projection.}
To form $D_i(C)$, we make $2\sigma_i$ copies of \emph{this} intermediate DAG $\tilde D_i(C)$, all with the same projection maps. For any vertex $x \in V(\tilde D_i(C))$, denote its copy in the $k$-th replica by $x_{(k)}$, for $1 \le k \le 2\sigma_i$.

We then add the following edges:

\begin{enumerate}[(i)]
  \item For every $u \in C$ and every $1 \le k < 2\sigma_i$, add an edge
  \[
     \bigl(\Out{u}_{i-1,(k)}, \In{u}_{i-1,(k+1)}\bigr)
  \]
  with \emph{infinite} capacity. This allows flow on $u$ to “walk” forward through all replicas.

  \item For every $(u,v) \in E_i$ with $u,v \in C$, and every $1 \le k < 2\sigma_i$, add an edge
  \[
    \bigl(\Out{u}_{i-1,(k)}, \In{v}_{i-1,(k+1)}\bigr)
  \]
  with capacity $U_G(u,v)$. This encodes the “fresh” edges of layer $i$ across replicas.

  \item Add a dummy vertex $w_i^C$, i.e., $\pi(w_i^C) = \bot$. For every $u \in C$, add edges
  \[
    \bigl(\Out{u}_{i-1,(\sigma_i)}, w_i^C\bigr)
    \quad\text{and}\quad
    \bigl(w_i^C, \In{u}_{i-1,(\sigma_i+1)}\bigr)
  \]
  each with capacity $\vol_{E_i}(u)$. This ensures we can “collect” and “redistribute” the $E_i$-volume of $u$ between the two “halves” of the replicas without violating capacities.
\end{enumerate}

Finally, we scale \emph{all} capacities in $D_i(C)$ by a factor of $1/\sigma_i$ so that the total capacity budget per original vertex is preserved across the $2\sigma_i$ replicas.
For every original vertex $u \in C$, we define two special copies in $D_i(C)$
\[
    \In{u}_i := \In{u}_{i-1,(1)}
    \qquad\text{and}\qquad
    \Out{u}_i := \Out{u}_{i-1,(2\sigma_i)}.
\]

Since $\cC_t = (V)$, the construction for $i = t$ produces $D_t(V)$, which we output as the final DAG projection $D$. This completes the construction of our congest DAG projection.

\subsubsection*{Analysis.}

The following lemma bounds the size of the DAG projection.

\begin{lemma}\label{lem:sizefromexpanderhiearchy}
    $D$ has size at most $O(2^t \sigma \, |E(G)|)$.
\end{lemma}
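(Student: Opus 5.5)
We bound vertices and edges of $D_i(C)$ by induction on the layer $i$, tracking two quantities for every cluster $C\in\cC_i$: $N_i(C):=|V(D_i(C))|$ and $M_i(C):=|E(D_i(C))|$. The claim to prove is
\[
N_i(C)\le c_i\,|C|, \qquad M_i(C)\le c_i\,\bigl(|E(G_i[C])|+|C|\bigr),
\]
where $c_i=\prod_{j\le i}(4\sigma_j)$ (or a similar product). Since $\sigma_j=1$ for $j<t$ and $\sigma_t=\sigma$, this gives $c_t=O(4^t\sigma)$. In the base case $i=0$, each $D_0(\{v\})$ is a single vertex with no edges.

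For the inductive step, we count the intermediate DAG $\tilde D_i(C)$ first. Its vertices are the disjoint union of the children $D_{i-1}(Y_j)$, so there are at most $c_{i-1}|C|$ of them. Its edges are the children's edges plus one lifted edge for each $(u,v)\in E(G_i)$ going forward between children. The $G_{i-1}[Y_j]$ are edge-disjoint subgraphs of $G_i[C]$, so the children's edge counts sum to at most $c_{i-1}(|E(G_i[C])|+|C|)$, and the lifted edges add at most $|E(G_i[C])|$. We then make $2\sigma_i$ replicas and add three kinds of edges. Edges of type (i) number at most $2\sigma_i|C|$. Edges of type (ii) number at most $2\sigma_i|E_i\cap E(G_i[C])|$. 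Edges of type (iii) number $2|C|$, together with one dummy vertex. Summing, $M_i(C)\le 2\sigma_i\bigl(c_{i-1}+1\bigr)(|E(G_i[C])|+|C|)+2|C|$, which is at most $c_i(|E(G_i[C])|+|C|)$ when $c_i=4\sigma_i c_{i-1}$ and $c_0=1$. The vertex count works the same way, with the extra $+1$ for $w_i^C$ absorbed because $|C|\ge1$.

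At $i=t$ we get $|E(D)|\le O(4^t\sigma)(|E(G)|+n)$. This is off from the target in two places, and these are the only real obstacles.

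The first is the additive $n$. We handle it either by assuming, as usual, that $G$ has no isolated vertices (so $n\le 2|E(G)|$), or by noting that the type-(i) edges at a vertex can be charged to that vertex's incident edges.

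The second is the base of the exponent, $4^t$ versus $2^t$. Replication at layers $j<t$ creates exactly $2$ replicas, so the multiplicative factor per layer is really $2$ plus additive terms. To get $2^t$ we would charge the additive terms (lifted edges, type-(ii) edges, dummy edges) at layer $i$ to edges of $G$ that first appear at layer $i$. These are the edges in $E_i$ or in $E(G_i)\setminus E(G_{i-1})$. Such an edge is replicated only by the layers above $i$, so it is copied at most $2^{t-i}\cdot 2\sigma$ times, and summing over $i$ gives $O(2^t\sigma|E(G)|)$. Concretely, we would prove $M_i(C)\le 2^{i}\prod_{j\le i}\sigma_j\cdot O(|E(G_i[C])|+|C|)$ by observing that the per-layer recurrence has the form $M_i=2\sigma_i(\sum_j M_{i-1}(Y_j)+\text{new}_i)+O(|C|)$, and then unrolling it as a weighted sum over the layer at which each edge is introduced. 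Since the lemma states only $O(2^t\sigma|E(G)|)$ and $t$ is small, we could alternatively accept the constant-base variant if the exact base does not matter downstream.
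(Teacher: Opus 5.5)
Your approach is the paper's: a per-layer recurrence for the edge count of $D_i(C)$, unrolled up the hierarchy. However, the argument you actually carry out only reaches $O(4^t\sigma(|E(G)|+n))$. The ``second obstacle'' comes from setting $c_i=4\sigma_ic_{i-1}$; it is not a real difficulty. Your per-layer accounting already gives $M_i(C)\le(2\sigma_ic_{i-1}+a\sigma_i)(|E(G_i[C])|+|C|)$ for a small absolute constant $a$ ($a=6$ is more than enough with your count). So define $c_i:=2\sigma_ic_{i-1}+a\sigma_i$. With $P_i:=\prod_{j\le i}2\sigma_j$ this reads $c_i/P_i=c_{i-1}/P_{i-1}+a/(2P_{i-1})$. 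Since $P_{i-1}\ge 2^{i-1}$, it telescopes to $c_t\le(c_0+a)P_t=(c_0+a)\,2^t\sigma$.

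An additive term per layer only shifts the constant in a recurrence $c_i=2c_{i-1}+O(1)$; it never changes the base. This is exactly the paper's proof: it sums over the clusters of a layer and unrolls $A_i\le 2\sigma_i\bigl(A_{i-1}+|E(G)|\bigr)$ for $A_i:=\sum_{C\in\cC_i}|E(D_i(C))|$. Separately, with $c_0=1$ your rule $c_i=4\sigma_ic_{i-1}$ already fails at $i=1$, since $2\sigma_1(c_0+1)+2=4\sigma_1+2>4\sigma_1$.

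The charging argument you sketch instead would need repair. First, $E(G_i)\setminus E(G_{i-1})\subseteq E_i$, so your description of the ``newly introduced'' edges misses the lifted edges. These are forward edges of $G_i[C]$ between different children and need not lie in $E_i$; they are new because they first become internal to a cluster. Second, the $\Theta(\sigma_i|C|)$ type-(i) and dummy edges cannot be charged to newly introduced edges at all, since every vertex is present at every layer. They are absorbed only by the geometric decay that the plain recurrence already provides.

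Falling back to $4^t$ would be harmless downstream ($t=O(\sqrt{\log n})$), but it does not prove the lemma as stated. Your handling of the additive $n$ is correct and more careful than the paper, whose recurrence does not account for the type-(i) and dummy edges. The bound $O(2^t\sigma|E(G)|)$ implicitly requires $n=O(|E(G)|)$, for example no isolated vertices.
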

\begin{proof}
    For $i \ge 1$, each $D_i(C)$ (for $C \in \cC_i$) is composed of $2\sigma_i$ copies of $\tilde D_i(C)$, and $\tilde D_i(C)$ is the concatenation of $D_{i-1}(Y_j)$ over the partition $(Y_1,\dots,Y_z)$ of $C$. Thus,
    \[
        |E(D_i(C))|
        \;\le\;
        2\sigma_i \cdot \Bigl( |E(G[C])| + \sum_{j=1}^z |E(D_{i-1}(Y_j))| \Bigr).
    \]
    Summing this recurrence up the hierarchy (and using that $\sum_{C \in \cC_i} |E(G[C])| \le |E(G)|$ and $\sigma_i=\sigma$ for only layer $t$ while $\sigma_i=1$ for $i<t$) yields
    \[
        |E(D_t(V))| \;\le\; O(2^t \sigma \, |E(G)|).
    \]
\end{proof}

\paragraph{Projection algorithm (flow).}
For an edge $(u,v) \in E(G)$, define its \emph{$D$-congestion} to be
\[
    \sum_{(u',v') \in E(D)\,:\,\pi(u')=u,\ \pi(v')=v} U_D(u',v').
\]

\begin{lemma}\label{lem:lowcongestion}
    For every $(u,v) \in E(G)$, the $D$-congestion of $(u,v)$ is at most $2^t$.
\end{lemma}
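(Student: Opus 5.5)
We argue by induction on the layer index $i$. The claim we carry is: for every cluster $C\in\cC_i$ and every edge $(u,v)\in E(G)$ with $u,v\in C$, the total capacity of edges of $D_i(C)$ projecting to $(u,v)$ is at most $2^i$. Edges incident to the dummy vertices $w_i^C$ project to $\bot$ and do not count. Edges of type (i), the infinite-capacity edges $(\Out{u}_{i-1,(k)},\In{u}_{i-1,(k+1)})$, project to the pair $(u,u)$, which is not an edge of $G$, so they do not count either. At $i=t$ we have $D=D_t(V)$, and since $2^{t}$ is the claimed bound, this gives the lemma.

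For the base case $i=0$, each $D_0(\{v\})$ is a single vertex with no edges, so every $D$-congestion is $0$.

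For the inductive step, fix $C\in\cC_i$ with children $(Y_1,\dots,Y_z)$ and an edge $(u,v)\in E(G)$ with $u,v\in C$. We first bound the copies of $(u,v)$ inside the intermediate DAG $\tilde D_i(C)$ by $2^{i-1}$, splitting into two cases.
\begin{itemize}
\item If $u,v$ lie in the same child $Y_j$, then $(u,v)$ is an edge of $G_{i-1}[Y_j]$ or it lies in $E_i$. In the first case, only $D_{i-1}(Y_j)$ contains copies, with total capacity at most $2^{i-1}$ by induction. In the second case no copy appears in $\tilde D_i(C)$ at all, since step (1) only adds edges between different children and $D_{i-1}(Y_j)$ is a projection of $G_{i-1}[Y_j]=G[Y_j]-E_{>i-1}$, which excludes $E_i$.
\item If $u\in Y_x$ and $v\in Y_y$ with $x\ne y$, then step (1) adds at most one edge of capacity $U_G(u,v)$, and only when $x<y$ and $(u,v)\in E(G_i)$. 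If $x>y$, the edge is reversed with respect to $\cC_{i-1}$, so it lies in $E_{\ge i}$ and contributes nothing here.
\end{itemize}
Normalized by $U_G(u,v)$, the contribution of $\tilde D_i(C)$ is therefore at most $\max(2^{i-1},1)=2^{i-1}$.

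In step (2) we replicate $\tilde D_i(C)$ $2\sigma_i$ times and add the edges of type (ii). These exist only for $(u,v)\in E_i$, and they number $2\sigma_i-1$, each of capacity $U_G(u,v)$. The main subtlety is that an edge of $E_i$ either has no copy inside $\tilde D_i(C)$ (by the case analysis above) or has only a single inter-child copy. So the two contributions never stack beyond $2\sigma_i\cdot 2^{i-1}$; bounding them carefully separately is the step to check. Before scaling, the total is at most $2\sigma_i\cdot 2^{i-1}=\sigma_i 2^i$. After multiplying all capacities by $1/\sigma_i$, the total is at most $2^i$, which completes the induction.

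Because every edge of $E(G)$ is placed at a single layer, it falls into one of these cases at each level. The only delicate point is confirming that an edge of $E_i$ whose endpoints lie in different children with $x<y$, and which therefore also appears via step (1), still gives at most $2\sigma_i$ total. This holds because $(2\sigma_i-1)+2\sigma_i\cdot 1$ needs the observation that $E(G_i)\supseteq E_i$. If it fails, we accept a constant-factor loss, a bound of $4^t$ or $O(2^t)$, which still suffices for $\kappa=O(2^t/\phi)$.
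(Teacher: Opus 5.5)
Your layer-by-layer induction follows the same route as the paper. The proposal has a genuine gap, though, and it sits at exactly the step you flagged as ``the one to check''. That step cannot be repaired with the constant $2^t$.

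Take $(u,v)\in E_i\setminus E_{>i}$ going from a child $Y_x$ to a child $Y_y$ with $x<y$. Then $(u,v)\in E(G_i)$. That containment is what \emph{creates} the problem, not what resolves it. Because of it, step (1) puts a copy of $(u,v)$ into $\tilde D_i(C)$, and hence into each of the $2\sigma_i$ replicas. The $2\sigma_i-1$ type-(ii) copies come on top of these, so after scaling the normalized count is $4-1/\sigma_i$.

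For $i\ge 2$ this is at most $2^i$. For $i=1$ it is at least $3>2^1$. Every higher level then doubles it exactly, so such an edge ends with $D$-congestion $(4-1/\sigma_1)\,2^{t-1}\,U_G(u,v)>2^t\,U_G(u,v)$.

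A minimal instance: $V=\{a,b\}$, $E_1=E=\{(a,b),(b,a)\}$, $t=1$, $\cC_0=(\{a\},\{b\})$, $\cC_1=(V)$. Here $(a,b)$ gets congestion $4-1/\sigma>2$.

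So your claimed pre-scaling bound $2\sigma_i\cdot 2^{i-1}$ is false at $i=1$. The sentence about ``$(2\sigma_i-1)+2\sigma_i\cdot 1$'' is not an argument. The lemma as literally stated does not hold. The paper's own proof has the same blind spot: it gives edges of $E_i$ only the cross-replica copies (``congestion $2$'') and forgets their step-(1) copies.

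What is true is an $O(2^t)$ bound, and that is all that is used downstream for $\kappa=O(2^t/\phi)$. Your case analysis proves it once you change the invariant instead of hedging. Let $c_i(e)$ be the normalized count in $D_i(C)$.

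If the endpoints of $e$ lie in different children, lower levels contribute nothing and $c_i(e)\le 2+(2-1/\sigma_i)<4$. If they lie in one child, $c_i(e)=2c_{i-1}(e)+(2-1/\sigma_i)[e\in E_i]$.

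When the $E_j$ are pairwise disjoint, $e\in E_i$ forces $c_{i-1}(e)=0$, since no earlier step-(1) or type-(ii) copy can exist. So $c_i\le\max\{2c_{i-1},4\}$, and induction gives $c_t\le 2^{t+1}$.

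Disjointness is not part of the hierarchy definition, however. The construction $E'_{i+1}=(E'_i\setminus E_i)\cup\Erev(\cC'_i)$ can place an edge of $E_i$ into $E_{i+1}$. So your remark that each edge ``is placed at a single layer'' also needs justification. Without it, one still has $c_i\le\max\{4,\,2c_{i-1}+2\}$, hence $c_t\le 3\cdot 2^t$.

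Your suggested fallback of $4^t$ is not a constant-factor loss relative to $2^t$, and it would change $\kappa$. It should not be offered as equivalent.
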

\begin{proof}
    We prove by induction on $i$ that, for every $C \in \cC_i$, the $D_i(C)$-congestion of any edge $(u,v)$ is at most $2^i$.

    For $i = 0$ the claim is trivial, since $D_0(C)$ has no edges.

    For $i \ge 1$, $\tilde D_i(C)$ is a concatenation of $D_{i-1}(Y_j)$’s. Any edge of $G_i$ that crosses between two different $Y_j$’s is added at most once in $\tilde D_i(C)$, with its original capacity, so such edges contribute congestion $1$ at this level. Edges internal to a $Y_j$ come from $D_{i-1}(Y_j)$, and by the induction hypothesis they contribute at most $2^{i-1}$ per copy.

    Now $D_i(C)$ consists of $2\sigma_i$ copies of $\tilde D_i(C)$, and finally we scale down capacities by $\sigma_i$. Hence, for edges not in $E_i$, the total congestion is
    \[
        2^{i-1} \cdot \frac{2\sigma_i}{\sigma_i} = 2^i.
    \]
    For edges in $E_i$, we add one between every consecutive pair of copies, so there are $2\sigma_i$ of them, each scaled by $1/\sigma_i$, giving congestion $2$. Taking the maximum over these two cases yields the desired bound $2^i$.
\end{proof}

Next, we need to control the demands created at dummy vertices. For every $1 \le i \le t$ and $C \in \cC_i$, the construction creates many copies of the dummy vertex $w_i^C$. For example, $D_{i+1}$ makes $2\sigma_i$ copies of $w_i^C$. Even more copies are inductively created at higher levels. For such a $w_i^C$:
- each incoming edge $(u, w_i^C)$ gives $u$ a \emph{$w_i^C$-source-demand} equal to the capacity of that edge;
- each outgoing edge $(w_i^C, v)$ gives $v$ a \emph{$w_i^C$-sink-demand} equal to the capacity of that edge.

The \emph{$(C,i)$-source-demand} of $u$ is the sum of all $w_i^C$-source-demands over all dummy centers $w_i^C$ of clusters $C \in \cC_i$; the \emph{$(C,i)$-sink-demand} is defined symmetrically.

\begin{lemma}\label{lem:lowcongexpanderrouting}
    For all $1 \le i \le t$, all $u \in V$, and all $C \in \cC_i$, the $(C,i)$-source-demand (and also the $(C,i)$-sink-demand) of $u$ is at most
    \[
        2^{t - i} \cdot \frac{\vol_{E_i}(u)}{\sigma_i}.
    \]
\end{lemma}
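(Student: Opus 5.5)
The plan is to track how the dummy vertex $w_i^C$ and its incident edges are copied as the construction climbs the hierarchy. First, at the moment $w_i^C$ is created inside $D_i(C)$, the vertex $u\in C$ has exactly one edge into $w_i^C$: it leaves $\Out{u}_{i-1,(\sigma_i)}$ with capacity $\vol_{E_i}(u)$. After the final rescaling by $1/\sigma_i$, this capacity is $\vol_{E_i}(u)/\sigma_i$. So inside $D_i(C)$ the $(C,i)$-source-demand of $u$ is exactly $\vol_{E_i}(u)/\sigma_i$, and the sink side is symmetric, via the edge $(w_i^C,\In{u}_{i-1,(\sigma_i+1)})$.

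Second, I would prove by induction on $j=i,i+1,\dots,t$ that inside $D_j(C')$, for the unique $C'\in\cC_j$ containing $C$, the total $(C,i)$-source-demand of $u$ is at most $2^{j-i}\vol_{E_i}(u)/\sigma_i$. For the step from $j-1$ to $j$, observe that forming $\tilde D_j(C')$ by concatenation copies $D_{j-1}(C'')$, where $C''\supseteq C$, exactly once. It adds only edges between original-vertex copies and never adds edges touching an existing dummy vertex. Hence $\tilde D_j(C')$ has the same $(C,i)$-demand as $D_{j-1}(C'')$. Next, $D_j(C')$ makes $2\sigma_j$ replicas and scales every capacity by $1/\sigma_j$, so the total demand is multiplied by $2\sigma_j/\sigma_j=2$. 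The new edges (i)--(iii) touch only the new dummy $w_j^{C'}$, not any copy of $w_i^C$, so they add nothing. Unwinding from $j=i$ to $j=t$ gives the factor $2^{t-i}$. Because the final $D=D_t(V)$ contains $D_t(V)$ only once, the bound holds in $D$.

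The one subtle point is layer $t$, where $\sigma_t=\sigma$ may exceed $1$. For $i<t$, the replication at layer $t$ makes $2\sigma$ copies and then divides by $\sigma$, so it still contributes only a factor $2$; this is why the bound involves $\sigma_i$ rather than $\sigma$. For $i=t$ the bound is just the base case $\vol_{E_t}(u)/\sigma$ with $2^0=1$. I expect the main obstacle to be bookkeeping: checking that the scaling at each level applies uniformly to every edge incident to old dummy copies, so that replication never multiplies by more than $2$. I also need to confirm that the demands are summed over all copies of $w_i^C$ and that no other construction step creates additional edges into them. Neither check requires any expansion property.
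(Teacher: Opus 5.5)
Your proposal is correct and follows essentially the same argument as the paper: an induction over levels $j=i,\dots,t$, where the base case is the level-$i$ dummy edge of capacity $\vol_{E_i}(u)/\sigma_i$, and each higher level contributes a factor $2\sigma_j/\sigma_j=2$ because the lower-level DAG containing $u$ appears once in each of the $2\sigma_j$ replicas of $\tilde D_j$. Your extra bookkeeping, namely that the concatenation edges and the new edges (i)--(iii) never touch old copies of $w_i^C$, makes explicit what the paper's proof leaves implicit.
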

\begin{proof}
    We prove the following slightly stronger statement by induction: for any $x \ge i$ and any cluster $C' \in \cC_x$, the total $(C,i)$-source-demand (or sink-demand) of $u$ in $D_x(C')$ is at most
    \[
        2^{x - i} \cdot \frac{\vol_{E_i}(u)}{\sigma_i}.
    \]

    When $x = i$, this is exactly how we defined the dummy edges at level $i$: the $(C,i)$-demand is $\vol_{E_i}(u)/\sigma_i$.

    For $x > i$, the DAG $D_x(C')$ is made from $2\sigma_x$ copies of $\tilde D_x(C')$, and each $\tilde D_x(C')$ is a concatenation of lower-level DAGs. For a fixed $u$, only one of those lower-level DAGs contains the copy of $u$, so by induction that copy contributes at most
    \[
        2^{x - 1 - i} \cdot \frac{\vol_{E_i}(u)}{\sigma_i}
    \]
    in each replica. After making $2\sigma_x$ replicas and scaling by $1/\sigma_x$, we get
    \[
        2^{x - 1 - i} \cdot \frac{\vol_{E_i}(u)}{\sigma_i} \cdot \frac{2 \sigma_x}{\sigma_x}
        =
        2^{x - i} \cdot \frac{\vol_{E_i}(u)}{\sigma_i},
    \]
    as desired.
\end{proof}

Now we are ready to state the flow projection algorithm.
Let $\ff^D$ be a flow in $D$ whose support does not contain dummy vertices (i.e. all demand is on original vertices). Consider a flow path $p^D$ of $\ff^D$ that does not pass through the top-level dummy vertex $w_t^V$. We decompose $p^D$ into subpaths of two types:

1. \textbf{Pure subpaths} $\tilde p$ that contain no dummy vertex. Every edge $(u',v')$ on such a subpath satisfies either $(\pi(u'),\pi(v')) \in E(G)$ or $\pi(u') = \pi(v')$. Thus $\pi(\tilde p)$ is a path in $G$ (after removing repetitions). We create a flow path in $G$ along $\pi(\tilde p)$ with the same flow value.

2. \textbf{Dummy vertex subpaths} of the form $(u, w_i^C, v)$, where $w_i^C$ is a dummy vertex. For each such subpath with value $\gamma$, we create a source demand $\gamma$ on $\pi(u)$ and a sink demand $\gamma$ on $\pi(v)$ to be later routed by the flow routing associated with layer $i$.

Doing this for all flow paths of $\ff^D$ that avoid $w_t^V$ gives us a partial flow $\ff$ in $G$ plus, for each layer $i$, a $\cC_i$-constrained demand $(\DDelta_i, \nnabla_i)$. The total flow in $\ff^D$ that \emph{does} go through $w_t^V$ is at most
\[
    \sum_{u \in V} \vol_{E_t}(u) = U_G(E_t),
    \]
because those edges were scaled by $1/\sigma_t = 1/\sigma$. Hence
\[
    \Dem(\ff) \preceq \pi(\Dem(\ff^D)) \quad\text{and}\quad \Val(\ff) \ge \Val(\ff^D) -\delta
\]
with $\delta \le U_G(E_t)/\sigma$, as claimed in Lemma~\ref{lem:expanderdecompositiontoDAGembedding}.

To route the dummy-hop demands, for each $i \in [t]$ we collect all tuples $(u, w_i^C, v)$ of value $c$ into a demand $(\DDelta_i,\nnabla_i)$. By \Cref{lem:lowcongexpanderrouting}, this demand is $(2^{t-i} \cdot \vol_{E_i}(u)/\sigma_i)$-respecting, and by construction it is $\cC_i$-constrained. We scale this demand down by a factor $\sigma_i / 2^{t-i}$ and apply the flow routing at layer $i$ (which has congestion at most $1/\phi_i$), and finally scale the routed flow back up by $2^{t-i}/\sigma_i$. This gives congestion at most
\[
    \frac{2^{t-i}}{\sigma_i \phi_i}
\]
for the layer-$i$ part.

The final flow in $G$ consists of:
- the “trivial” projection of edges, which by \Cref{lem:lowcongestion} has congestion at most $2^t$, and
- the expander-routed part, whose congestion is at most
\[
    \sum_{i=1}^t \frac{2^{t-i}}{\sigma_i \phi_i}
    \;\le\;
    O\bigl(2^t / \phi\bigr)
\]
since $\phi_i = \phi$ for $i \le t-1$ and $\sigma_i \ge 1$.

The second part dominates the first, so the total congestion is $O(2^t/\phi)$, as stated in Lemma~\ref{lem:expanderdecompositiontoDAGembedding}. The running time is also as claimed, since flow routing is assumed to be efficient and the remaining steps are linear in $|E(D)|$.

\paragraph{Projection algorithm (cut).}
For $0 \le i < t$, $C \in \cC_i$, and the DAG projection $D_i(C)$, define
\[
    \In V(D_i(C)) := \{\In{u}_i \in V(D_i(C)) \mid u \in C\}, \qquad
    \Out V(D_i(C)) := \{\Out{u}_i \in V(D_i(C)) \mid u \in C\}.
\]
We prove the following by induction.

\begin{lemma}\label{lem:backwardmappingcuthypothesis}
    For every $0 \le i < t$ and every $C \in \cC_i$, let $S^D$ be a cut in $D_i(C)$ with finite cut value. Then there exists a set $S_C \subseteq C$ such that
    \begin{itemize}
        \item $\pi\bigl(S^D \cap \In V(D_i(C))\bigr) \subseteq S_C \subseteq \pi\bigl(S^D \cap \Out V(D_i(C))\bigr)$,
        \item $\Val_{G_i[C]}(S_C) \le \Val_{D_i(C)}(S^D)$.
    \end{itemize}
    Moreover, $S_C$ can be found in $\tO{|D_i(C)|}$ work and $\tO{t}$ depth.
\end{lemma}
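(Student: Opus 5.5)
We argue by induction on $i$. The base case $i=0$ is immediate: $D_0(\{v\})$ is a single vertex with $\In{v}_0=\Out{v}_0$. We set $S_C=\pi(S^D)$; both containments hold with equality, and the cut value is $0$.

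For the inductive step, fix $C\in\cC_i$ with children $(Y_1,\dots,Y_z)$, and let $S^D$ be a finite cut of $D_i(C)$. Since $i<t$, we have $\sigma_i=1$, so $D_i(C)$ consists of exactly two replicas $\tilde D_i(C)_{(1)},\tilde D_i(C)_{(2)}$ and the dummy $w_i^C$. Inside a replica $k\in\{1,2\}$ and a child $Y_j$, we restrict $S^D$ to that copy of $D_{i-1}(Y_j)$. This restriction has finite value, so the hypothesis gives $S^{(k)}_j\subseteq Y_j$ with
\[
\pi(S^D\cap \In V)\subseteq S^{(k)}_j\subseteq \pi(S^D\cap\Out V),
\]
taken with respect to that copy, and with $\Val_{G_{i-1}[Y_j]}(S^{(k)}_j)$ at most the restricted cut value. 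Let $S^{(k)}=\bigcup_j S^{(k)}_j$. We propose the output
\[
S_C:=S^{(1)}\cup S^{(2)}.
\]
An alternative choice is to use the replica where the cut is cheaper; we fall back to this only if the union fails. The containments follow from the infinite edges (i). If $\In{u}_i=\In{u}_{i-1,(1)}\in S^D$, then $u\in S^{(1)}$. If $u\in S^{(k)}$, then $\Out{u}_{i-1,(k)}\in S^D$, and finiteness forces $\In{u}_{i-1,(2)}\in S^D$ when $k=1$, hence $\Out{u}_{i-1,(2)}\in S^D$. So $u\in\pi(S^D\cap\Out V)$.

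The main step, and the main obstacle, is the value bound $\Val_{G_i[C]}(S_C)\le \Val_{D_i(C)}(S^D)$. We split an edge $(u,v)\in E(G_i[C])$ leaving $S_C$ into three classes. The first class is edges inside some $Y_j$ not in $E_i$. The second is edges between $Y_x,Y_y$ with $x<y$, which are lifted inside each replica. The third is reversed edges $x>y$, which by $\Erev(\cC_{i-1})\subseteq E_{\ge i}$ and $G_i=G-E_{>i}$ must lie in $E_i$. We want to charge each class to disjoint cut edges of $D_i(C)$.

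For the forward and internal edges, we use the monotonicity $S^{(1)}\subseteq S^{(2)}$, which the infinite edges give in the form $u\in S^{(1)}\Rightarrow u\in S^{(2)}$. An edge leaving $S_C=S^{(2)}$ then leaves $S^{(2)}$, so it is charged to replica $2$: either through the inductive bound on $D_{i-1}(Y_j)_{(2)}$, or through the lifted edge $(\Out{u},\In{v})$ in replica 2. That lifted edge is cut because $\Out{u}_{(2)}\in S^D$, and $\In{v}_{(2)}\notin S^D$ (otherwise $v\in S^{(2)}$). For $E_i$ edges, including reversed ones, we need a separate mechanism. If $v\notin S^{(2)}$ while $u\in S^{(2)}$, we consider where $u$ entered. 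If $u\in S^{(1)}$, then edge (ii) $(\Out{u}_{(1)},\In{v}_{(2)})$ is cut. If $u\notin S^{(1)}$, we instead route the charge through $w_i^C$: either $w_i^C\in S^D$ and we charge $(w_i^C,\In{v}_{(2)})$ of capacity $\vol_{E_i}(v)$, or $w_i^C\notin S^D$, and we must find a cut $(\Out{x}_{(1)},w_i^C)$ edge. I expect this case analysis to be delicate. The likely fix is to redefine $S_C$ according to whether $w_i^C\in S^D$: take $S_C=S^{(2)}$ if $w_i^C\in S^D$, else $S^{(1)}$, with containments re-verified via edges (i). We then charge all $E_i$ edges crossing out to the $\vol_{E_i}$ edges at the dummy, whose total capacity dominates $U(E_i\cap\delta^+(S_C))$.

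Algorithmically, the recursion is computed bottom-up over the $t$ layers. Each layer is a union, a test of membership of $w_i^C$, and a projection, which is linear work and $\tO{1}$ depth. This gives $\tO{|D_i(C)|}$ work and $\tO{t}$ depth.
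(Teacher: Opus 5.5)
Your lead choice $S_C:=S^{(1)}\cup S^{(2)}$ fails. By the monotonicity you derive it equals $S^{(2)}$, and the case $w_i^C\notin S^D$ is not just delicate, it is false. Take $i=1$ and $C=\{u,v\}$ with children ordered $(Y_1,Y_2)=(\{v\},\{u\})$. Let the only edge be $(u,v)\in E_1$. It is reversed, so it is not lifted inside the replicas, and $\tilde D_1(C)$ has no edges. The copy $\Out{u}_{0,(2)}$ of $u$ in the second replica then has no outgoing edges at all: edges (i), (ii) and the dummy's out-edges all point into replica $2$. Hence $S^D=\{\Out{u}_{0,(2)}\}$ has value $0$ and $w_1^C\notin S^D$. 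But $S^{(1)}=\emptyset$, $S^{(2)}=\{u\}$, and $\Val_{G_1[C]}(\{u\})=U_G(u,v)>0$. There is no cut edge $(\Out{x}_{(1)},w_1^C)$ to pay with, so you must drop the union and commit to your fallback.

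The fallback is correct: take $S_C=S^{(1)}$ if $w_i^C\notin S^D$ and $S_C=S^{(2)}$ if $w_i^C\in S^D$. This is exactly the paper's choice specialized to $\sigma_i=1$. The paper assumes $w_i^C\notin S^D$ without loss of generality and takes $x^*$ among the replicas before the dummy; with $\sigma_i=1$ that forces $x^*=1$, and the mirrored case gives $x^*=2$. The two facts you need are these:
\begin{itemize}
\item If $w_i^C\notin S^D$ and $u\in S^{(1)}$, then $\Out{u}_{i-1,(1)}\in S^D$. So $(\Out{u}_{i-1,(1)},w_i^C)$ is cut, with capacity $\vol_{E_i}(u)\ge\vol^+_{E_i}(u)$.
\item If $w_i^C\in S^D$ and $v\notin S^{(2)}$, then $\In{v}_{i-1,(2)}\notin S^D$. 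So $(w_i^C,\In{v}_{i-1,(2)})$ is cut, with capacity $\vol_{E_i}(v)\ge\vol^-_{E_i}(v)$.
\end{itemize}
These pay for every $E_i$-edge leaving $S_C$. The remaining edges of $G_i[C]$ are charged inside the chosen replica exactly as in your replica-$2$ argument; for replica $1$, use $v\notin S^{(1)}\Rightarrow \In{v}_{i-1,(1)}\notin S^D$. The containments for $S^{(1)}$ follow from the inductive sandwich plus edge (i), as you indicate.

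Compared with the paper, the overall structure is the same: per-replica induction, lifted forward edges, and the dummy edges. The paper proves a general-$\sigma_i$ version by averaging $\frac{1}{\sigma_i}\sum_{x\in[\sigma_i]}\Val_{G_i[C]}(S_x)$. It charges $E_i$-edges either to the bridge edges (ii) or to the dummy.

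Your direct charging to the tail's (or head's) dummy edge is simpler, and here it is actually more robust. The paper's ``terminal-paid'' step infers $\Out{v}_{i-1,(\sigma_i)}\in S^D$ from $\In{v}_{i-1,(x+1)}\in S^D$, which does not follow when $x=\sigma_i$, in particular whenever $\sigma_i=1$. Charging the tail's edge $(\Out{u}_{i-1,(\sigma_i)},w_i^C)$ instead, as you do, repairs this.

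Finally, your argument covers only the literal range $i<t$, where $\sigma_i=1$. The paper applies the lemma at $i=t$ with $\sigma_t=\sigma>1$, and there the averaging version is what is actually needed.
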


Once we have this lemma, the cut-projection algorithm follows immediately: for $i = t$ and $C = V$, we obtain a cut $S_C \subseteq V$ such that
\[
    \HL(S^D) \subseteq \pi(S^D \cap \In V(D_t(V))) \subseteq S_C \subseteq \pi(S^D \cap \Out V(D_t(V))) \subseteq \pi(S^D),
\]
and
\[
    \Val_G(S_C) \le \Val_D(S^D),
\]
as required. (If $S^D$ has infinite value, we can return any valid $s$–$t$ separating cut in $G$.) Thus it suffices to prove \Cref{lem:backwardmappingcuthypothesis}.

\begin{proof}[Proof of \Cref{lem:backwardmappingcuthypothesis}]
    \textbf{Base case.} For $i = 0$, each $C \in \cC_0$ is a singleton, so $D_0(C)$ has one vertex and every cut has value $0$. Taking $S_C = C$ or $S_C = \emptyset$ satisfies the statement.

    \medskip
    \textbf{Inductive step.} Let $i \ge 1$. The DAG $D_i(C)$ consists of $2\sigma_i$ copies of $\tilde D_i(C)$ plus one dummy vertex $w_i^C$. WLOG assume $w_i^C \notin S^D$; the other case is symmetric (we only mirror the choice of copies).

    Let $(Y_1, \dots, Y_z)$ be the subsequence of $\cC_{i-1}$ that partitions $C$. For $x \in [2\sigma_i]$, let $\tilde D_i^{(x)}(C)$ denote the $x$-th copy of $\tilde D_i(C)$. Recall that $\tilde D_i(C)$ is formed by concatenating $D_{i-1}(Y_1),\dots,D_{i-1}(Y_z)$ in order. Let $D_{i-1}^{(x)}(Y_y)$ denote the copy of $D_{i-1}(Y_y)$ inside $\tilde D_i^{(x)}(C)$.

    For each $x \in [2\sigma_i]$ and each $y \in [z]$, define
    \[
        S^D_{x,y} := S^D \cap V\bigl(D_{i-1}^{(x)}(Y_y)\bigr).
    \]
    By the induction hypothesis applied to level $i-1$, there is a set $S_{x,y} \subseteq Y_y$ such that
    \begin{equation}\label{eq:proj-in-out}
        \pi\bigl(S^D_{x,y} \cap \In V(D_{i-1}^{(x)}(Y_y))\bigr)
        \;\subseteq\;
        S_{x,y}
        \;\subseteq\;
        \pi\bigl(S^D_{x,y} \cap \Out V(D_{i-1}^{(x)}(Y_y))\bigr),
    \end{equation}
    and
    \begin{equation}\label{eq:proj-value}
        \Val_{G_{i-1}[Y_y]}(S_{x,y})
        \;\le\;
        \Val_{D_{i-1}^{(x)}(Y_y)}(S^D_{x,y}).
    \end{equation}
    Let
    \[
        S_x := \bigcup_{y=1}^z S_{x,y}.
    \]

    We will eventually pick one $S_x$ as $S_C$. To do that, define
    \begin{equation}\label{eq:choose-x}
        x^{*}=\arg\min_{x\in[\sigma_{i}]} \Val_{G_{i}[C]}(S_{x}) 
    \end{equation}
    where $E_i[C] := E_i \cap E(G_i[C])$. (If $w_i^C \in S^D$, we do the symmetric choice among $x \in \{\sigma_i+1, \dots, 2\sigma_i\}$.) Set $S_C := S_{x^*}$.

    \medskip
    \textbf{Property 1 (in/out sandwich).}
    We claim that for every $x \in [2\sigma_i]$ and every $y \in [z]$,
    \begin{equation}\label{eq:middle-ineq}
        \pi\bigl(S^D \cap \In V(D_{i-1}^{(x)}(Y_y))\bigr)
        \;\subseteq\;
        S_{x,y}
        \;\subseteq\;
        \pi\bigl(S^D \cap \Out V(D_{i-1}^{(x)}(Y_y))\bigr)
        \;\subseteq\;
        \pi\bigl(S^D \cap \In V(D_{i-1}^{(x+1)}(Y_y))\bigr),
    \end{equation}
    ignoring the last inclusion when $x = 2\sigma_i$. The first two inclusions are exactly \eqref{eq:proj-in-out}. For the last inclusion: if there were $u \in Y_y$ such that $\Out{u}_{i-1,(x)} \in S^D$ but $\In{u}_{i-1,(x+1)} \notin S^D$, then the edge
    \[
        \bigl(\Out{u}_{i-1,(x)}, \In{u}_{i-1,(x+1)}\bigr)
    \]
    of infinite capacity would be in the cut, contradicting the assumption that $S^D$ has finite value. Thus \eqref{eq:middle-ineq} holds.

    Taking the union over all $y$ and using \eqref{eq:middle-ineq} for consecutive copies, we obtain
    \[
        \pi\bigl(S^D \cap \In V(D_i(C))\bigr)
        \;\subseteq\;
        S_x
        \;\subseteq\;
        \pi\bigl(S^D \cap \Out V(D_i(C))\bigr)
    \]
    for every $x$. In particular, it holds for $x^*$, so Property 1 is proved.

    \medskip
    \textbf{Property 2 (cut value).}
    We need to compare $\Val_{G_i[C]}(S_C)$ to $\Val_{D_i(C)}(S^D)$. First we show a per-copy inequality.

    \begin{lemma}\label{lem:singlecopy}
        For every $x \in [2\sigma_i]$,
        \[
            \frac{1}{\sigma_i} \cdot \Val_{G_i[C] - E_i}(S_x)
            \;\le\;
            \Val_{\tilde D_i^{(x)}(C)}\bigl(S^D \cap V(\tilde D_i^{(x)}(C))\bigr).
        \]
    \end{lemma}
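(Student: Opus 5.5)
The plan is to decompose the edges of $G_i[C]-E_i$ that are cut by $S_x$ into two groups. The first group consists of edges inside a single child cluster $Y_y$. The second consists of edges between two different children $Y_{y}, Y_{y'}$ with $y<y'$; this is the only direction available, because $\Erev(\cC_{i-1})\subseteq E_{\ge i}$, so an edge of $G_i[C]-E_i$ cannot go from a later child to an earlier one. Each group will be charged to disjoint edge sets of $\tilde D_i^{(x)}(C)$ cut by $S^D$. The factor $1/\sigma_i$ on the left side is exactly the scaling applied to every capacity of $D_i(C)$. So it suffices to show
\[
\Val_{G_i[C]-E_i}(S_x)\le \sigma_i\cdot \Val_{\tilde D_i^{(x)}(C)}\bigl(S^D\cap V(\tilde D_i^{(x)}(C))\bigr).
\]
Here the right-hand side is measured in the capacities of $D_i(C)$, so the left factor $\sigma_i$ cancels the scaling, and we compare against the unscaled copy of $\tilde D_i(C)$.

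For the internal edges, note that $G_i[Y_y]-E_i\subseteq G_{i-1}[Y_y]$, since $G_{i-1}=G_i-E_i$. Hence the cut value of $S_{x,y}$ in $G_i[Y_y]-E_i$ is at most $\Val_{G_{i-1}[Y_y]}(S_{x,y})$. By \eqref{eq:proj-value} this is at most the (unscaled) cut value of $S^D_{x,y}$ in $D^{(x)}_{i-1}(Y_y)$. These subgraphs are edge-disjoint for different $y$, so summing over $y$ charges all internal cut edges to edges lying inside the child copies.

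For the crossing edges, take $(u,v)\in E(G_i)\setminus E_i$ with $u\in S_{x,y}$, $v\in Y_{y'}\setminus S_{x,y'}$, and $y<y'$. By \eqref{eq:proj-in-out}, $u\in S_{x,y}$ implies $\Out{u}_{i-1,(x)}\in S^D$. In the other direction, $v\notin S_{x,y'}$ implies $v\notin\pi(S^D\cap \In V(D^{(x)}_{i-1}(Y_{y'})))$, so $\In{v}_{i-1,(x)}\notin S^D$. Step (1) of the construction places the edge $(\Out{u}_{i-1,(x)},\In{v}_{i-1,(x)})$ in $\tilde D_i^{(x)}(C)$ with capacity $U_G(u,v)$, and this edge is therefore cut by $S^D$. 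Distinct crossing edges of $G$ map to distinct such $D$-edges, and these are disjoint from the internal child edges used above. Summing the two groups gives the claim.

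The main obstacle is the bookkeeping that makes the two charges disjoint and that correctly identifies which capacities are scaled by $1/\sigma_i$. We must also handle edges of $G_i[C]$ whose endpoints lie in the same $Y_y$ but that belong to $E_i$; these are excluded on the left by subtracting $E_i$, so they require no charge. Beyond that, the argument is a direct application of the inductive hypothesis.
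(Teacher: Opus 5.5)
Your proof is correct and takes essentially the same route as the paper. You split the cut edges of $G_i[C]-E_i$ into edges inside a child $Y_y$, charged via \eqref{eq:proj-value} to the cut inside $D^{(x)}_{i-1}(Y_y)$, and forward inter-child edges, charged via \eqref{eq:proj-in-out} to the step-(1) edges $(\Out{u}_{i-1,(x)},\In{v}_{i-1,(x)})$, with the factor $1/\sigma_i$ accounting for the global capacity scaling; your remarks that $G_{i-1}=G_i-E_i$ and that each in/out set holds exactly one copy per vertex just make explicit details the paper leaves implicit.
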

    \begin{proof}
        Let
        \[
            E_C^{\to} := \{ (u,v) \in E(G_i[C]) \mid u \in Y_{k_1}, v \in Y_{k_2}, k_1 < k_2 \}
        \]
        be the “forward” edges between different $Y$’s (recall $\Erev(\cC_{i-1}) \subseteq E_i$, so $G_i[C] - E_i$ only has forward inter-$Y$ edges). Then
        \[
            \Val_{G_i[C] - E_i}(S_x)
            \;\le\;
            \sum_{y=1}^z \Val_{G_{i-1}[Y_y]}(S_{x,y})
            \;+\;
            U_G\bigl(\{ (u,v) \in E_C^{\to} \mid u \in S_x, v \notin S_x \}\bigr).
        \]
        On the other hand, the cut of $S^D$ in $\tilde D_i^{(x)}(C)$ is exactly
        \[
            \sum_{y=1}^z \Val_{D_{i-1}^{(x)}(Y_y)}(S^D_{x,y})
            \;+\;
            U_{\tilde D_i^{(x)}(C)}\bigl(\{ (\Out{u}_{i-1,(x)}, \In{v}_{i-1,(x)}) \mid \Out{u}_{i-1,(x)} \in S^D,\, \In{v}_{i-1,(x)} \notin S^D \}\bigr).
        \]
        By \eqref{eq:proj-value}, the first part is at least
        \[
            \frac{1}{\sigma_i} \sum_{y=1}^z \Val_{G_{i-1}[Y_y]}(S_{x,y}),
        \]
        because capacities in $D_i(C)$ are scaled by $1/\sigma_i$. By \eqref{eq:proj-in-out} and the same “if $u\in S_x$ then its out-copy is in $S^D$, if $v\notin S_x$ then its in-copy is not in $S^D$” argument as before, each inter-$Y$ edge in $E_C^{\to}(S_x, C \setminus S_x)$ is cut in $\tilde D_i^{(x)}(C)$, again up to the $1/\sigma_i$ scaling. Putting these together gives the claim.
    \end{proof}

    Next, recall that $D_i(C)$ is formed from all $2\sigma_i$ copies plus the cross edges and the dummy edges. Let $E^D_{\mathrm{cross}}$ denote the set of edges in $D_i(C)$ that connect \emph{different} copies (including edges adjacent to $w_i^C$). We relate $E^D_{\mathrm{cross}}$ to the $E_i[C]$ part:

    \begin{lemma}\label{lem:Erevcutvalue}
        \[
            \sum_{x \in [\sigma_i]} \frac{1}{\sigma_i} \cdot \Val_{E_i[C]}(S_x)
            \;\le\;
            \Val_{E^D_{\mathrm{cross}}}(S^D).
        \]
    \end{lemma}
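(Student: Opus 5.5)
We charge each cut $E_i[C]$-edge of $S_x$ to a distinct cut edge of $D_i(C)$ that joins two different copies; this handles the WLOG case $w_i^C\notin S^D$, and the other case is symmetric. The cross edges of $D_i(C)$ come in three kinds. There are the infinite-capacity edges $(\Out{u}_{i-1,(k)},\In{u}_{i-1,(k+1)})$, which are never cut since $S^D$ has finite value. There are the fresh edges $(\Out{u}_{i-1,(k)},\In{v}_{i-1,(k+1)})$ for $(u,v)\in E_i[C]$, each of capacity $U_G(u,v)/\sigma_i$ after scaling. Finally there are the dummy edges $(\Out{u}_{i-1,(\sigma_i)},w_i^C)$ and $(w_i^C,\In{u}_{i-1,(\sigma_i+1)})$, each of capacity $\vol_{E_i}(u)/\sigma_i$.

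The key tool is the sandwich \eqref{eq:middle-ineq}. Suppose $x\in[\sigma_i]$ and $(u,v)\in E_i[C]$ with $u\in S_x$ and $v\notin S_x$. Since $u\in S_x$, we get $\Out{u}_{i-1,(x)}\in S^D$.

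If $x<\sigma_i$, we look at $v$. By the chain of inclusions in \eqref{eq:middle-ineq}, membership in $S^D$ is monotone along the copies of $v$. Because $v\notin S_x$, we cannot have $\In{v}_{i-1,(x)}\in S^D$, and the naive hope is that $\In{v}_{i-1,(x+1)}\notin S^D$. Monotonicity only gives $S_x\subseteq S_{x+1}$ in projection, however, so $v$ may be in $S_{x+1}$. I would therefore argue differently. By monotonicity, $u\in S_{x'}$ for all $x'\ge x$, while for the in-copies of $v$ there is some first index where $v$ enters $S^D$.

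The cleaner route is to charge directly. If $\In{v}_{i-1,(x+1)}\notin S^D$, the fresh edge between copies $x$ and $x+1$ is cut and pays $U_G(u,v)/\sigma_i$. Otherwise $v$ enters $S^D$ at copy $x+1$, which means $v\notin S_x$ but $v\in S_{x+1}$.

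If $x=\sigma_i$, the edge $(\Out{u}_{i-1,(\sigma_i)},w_i^C)$ is cut, because $w_i^C\notin S^D$. That edge has capacity $\vol_{E_i}(u)/\sigma_i\ge\sum_{v:(u,v)\in E_i[C]}U_G(u,v)/\sigma_i$, so it pays for all such $(u,v)$ at once.

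The main obstacle is the case where $v$ enters at copy $x+1$, since then the fresh edge is not cut. The remedy I would try is a telescoping charge. For a fixed edge $(u,v)\in E_i[C]$, let $a$ be the first copy index $\le\sigma_i$ with $u\in S_a$, and let $b$ be the first copy index with $v\in S_b$. By monotonicity, $(u,v)$ is cut in $S_x$ exactly for $x\in[a,b)\cap[\sigma_i]$, so the left-hand side contributes $U_G(u,v)\cdot|[a,b)\cap[\sigma_i]|/\sigma_i$. On the right-hand side, for every $x\in[a,\min(b-1,\sigma_i)-1]$ we need $\In{v}_{i-1,(x+1)}\notin S^D$. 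I would prove this from the in-sandwich: $\pi(S^D\cap\In V)\subseteq S_{x+1}$, so $v\notin S_{x+1}$ forces the in-copy out of $S^D$. Hence each fresh edge $(x,x+1)$ in that range is cut and pays $U_G(u,v)/\sigma_i$.

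This covers all indices in $[a,b)$ except the final one $x=\min(b-1,\sigma_i)$. If $b-1<\sigma_i$, then $v\in S_b$ at $b=x+1$ and the final index is unpaid. Since each edge has at most one unpaid index, the loss is at most $U_G(u,v)/\sigma_i$ per edge. So I expect the inequality to hold only up to one extra $\frac{1}{\sigma_i}\Val_{E_i[C]}(S_x)$-type term, or exactly if the index sum runs over $[\sigma_i]$ with the boundary handled by the dummy edge at $x=\sigma_i$.

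To settle this, I would check carefully whether the fresh edges from copy $\sigma_i$ to copy $\sigma_i+1$ plus the dummy edges close the gap. If they do not, I would fall back to using them together with the cut value inside the copies, combined with Lemma~\ref{lem:singlecopy}, when proving the final Property~2.
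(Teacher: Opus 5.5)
Your charging is correct as far as it goes. For every cut index $x<\min(b-1,\sigma_i)$, the fresh edge $(\Out{u}_{i-1,(x)},\In{v}_{i-1,(x+1)})$ is cut, because $v\notin S_{x+1}$ forces $\In{v}_{i-1,(x+1)}\notin S^D$. At $x=\sigma_i$, the tail's dummy edge $(\Out{u}_{i-1,(\sigma_i)},w_i^C)$ is cut and can pay for $u$'s out-edges. But you stop at the decisive case. When $b\le\sigma_i$, the index $x=b-1$ is left unpaid. You then conjecture that the inequality may hold only up to an extra term and propose falling back to a weaker statement. So the proposal does not prove the lemma. The conjecture is also unfounded: the inequality holds exactly.

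The missing idea is to charge that last index to the \emph{head's} dummy edge.
\begin{itemize}
\item If $b\le\sigma_i$, the monotone chain \eqref{eq:middle-ineq} gives $v\in S_b\subseteq S_{\sigma_i}$. The right inclusion of \eqref{eq:middle-ineq} at copy $\sigma_i$ then gives $\Out{v}_{i-1,(\sigma_i)}\in S^D$.
\item Since $w_i^C\notin S^D$, the edge $(\Out{v}_{i-1,(\sigma_i)},w_i^C)$, of capacity $\vol_{E_i}(v)/\sigma_i$, is cut. This is the paper's ``Terminal-paid'' case.
\item Each $(u,v)$ has at most one such index. From copy $b$ on, both endpoints lie in every later $S_{x'}$, so the edge is never cut again. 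Hence in-edges load $v$'s dummy edge by at most $\vol^-_{E_i}(v)/\sigma_i$.
\item Your $x=\sigma_i$ rule loads $v$'s dummy edge by at most $\vol^+_{E_i}(v)/\sigma_i$. No edge uses both rules, since an edge with $b\le\sigma_i$ is not cut at $x=\sigma_i$.
\end{itemize}
Because $\vol_{E_i}(v)=\vol^+_{E_i}(v)+\vol^-_{E_i}(v)$, and each fresh edge is charged by only one pair $((u,v),x)$, summing over all charges gives the lemma.

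Your separate handling of $x=\sigma_i$ through the tail's dummy edge is worth keeping. The paper's Terminal-paid step deduces $\Out{v}_{i-1,(\sigma_i)}\in S^D$ from $\In{v}_{i-1,(x+1)}\in S^D$, which is valid only for $x<\sigma_i$. Your tail-side charge is what covers $x=\sigma_i$.
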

    \begin{proof}
        Use \Cref{eq:middle-ineq}. For each $x \in [\sigma_i]$, partition the edges in $E_i[C](S_x, C \setminus S_x)$ into:
        \begin{itemize}
            \item \emph{Type ``Bridge-paid'':} edges $(u,v)$ where $\In{v}_{i-1,(x+1)} \notin S^D$. Then $\Out{u}_{i-1,(x)} \in S^D$ by \Cref{eq:middle-ineq}, so the edge $(\Out{u}_{i-1,(x)}, \In{v}_{i-1,(x+1)})$ (a cross edge) is cut and can pay for this $(u,v)$.
            \item \emph{Type ``Terminal-paid'':} edges $(u,v)$ where $\In{v}_{i-1,(x+1)} \in S^D$. Then again by \Cref{eq:middle-ineq}, $\Out{v}_{i-1,(\sigma_i)} \in S^D$. Since we assumed $w_i^C \notin S^D$, the edge $(\Out{v}_{i-1,(\sigma_i)}, w_i^C)$ is cut and can pay for $(u,v)$. Moreover, this charging does not double-count edges to $w_i^C$: once $(u,v)$ is of Type ``Terminal-paid'' at level $x$, both endpoints remain inside all later $S_{x'}$, so it cannot be of Type ``Terminal-paid'' again.
        \end{itemize}
        In all cases, we pay for the original capacity only once, and then scale by $1/\sigma_i$ to match the scaled capacity in $D_i(C)$.
    \end{proof}

    Now we can finish. Let $x^*$ be as in \eqref{eq:choose-x}. Then

    \begin{align*}
\Val_{D_{i}(C)}(S^{D}) & =\Val_{E_{\mathrm{cross}}^{D}}(S^{D})+\sum_{x\in[2\sigma_{i}]}\Val_{\tilde{D}_{i}^{(x)}(C)}\bigl(S^{D}\cap V(\tilde{D}_{i}^{(x)}(C))\bigr)\\
 & \ge\sum_{x\in[\sigma_{i}]}\frac{1}{\sigma_{i}}\Val_{E_{i}[C]}(S_{x})+\sum_{x\in[\sigma_{i}]}\frac{1}{\sigma_{i}}\Val_{G_{i}[C]-E_{i}}(S_{x}) & \text{by \Cref{lem:singlecopy,lem:Erevcutvalue}}\\
 & =\frac{1}{\sigma_{i}}\sum_{x\in[\sigma_{i}]}\Val_{G_{i}[C]}(S_{x})\\
 & \ge\Val_{G_{i}[C]}(S_{x^{*}})=\Val_{G_{i}[C]}(S_{C}), & \text{as }x^{*}=\arg\min_{x\in[\sigma_{i}]} \Val_{G_{i}[C]}(S_{x}) 
\end{align*}

    The work/depth bound follows because at each level we only need to:
    \begin{itemize}
        \item apply the inductive procedure to $O(1)$ (in fact $O(2\sigma_i)$) sub-DAGs,
        \item sum/compare the cut values, and
        \item pick the best $x$,
    \end{itemize}
    all of which take time subsumed by constructing $D_i(C)$. Over $t$ levels, this gives $\tO{|D_i(C)|}$ work and $\tO{t}$ depth.
\end{proof}

\subsection{Expander Hierarchy via Expander Decomposition}\label{subsec:EH}

In the next section, we will prove the following expander decomposition lemma.

\newcommand{\Omfmc}{\cO_{\mathrm{MFMC}}}
\begin{restatable}{lemma}{expanderdecomposition}\label{lem:terminalexpanderdecomposition}
    There is an algorithm that takes as input
    \begin{itemize}
        \item a directed graph $G = (V,E)$ with edge capacities,
        \item an edge set $F \subseteq E$,
        \item a parameter $0 < \phi < 1$,
        \item an oracle $\Omfmc$ solving $\amfmc{(\alpha,\delta)}$,
    \end{itemize}
    and outputs an edge set $F' \subseteq F$ and an $F'$-expander decomposition of $G$ with expansion $\phi$ and slack $\Kap$, together with an associated efficient flow routing algorithm, such that
    \[
        \Kap = \alpha \log^6 n, \qquad
        \delta \;\le\; U_G(F)\cdot \frac{(\phi \Kap)^2}{99 \log^4 n}, \qquad
        U_G(F') \ge (1 - \phi \Kap)\cdot U_G(F),
    \]
    provided $\phi \le o(1/\Kap)$.
    The algorithm makes $\tO{1/(\phi \Kap)}$-efficient calls to $\Omfmc$ and uses additional $\tO{|E(G)|/(\phi \Kap)}$ work and $\tO{1/(\phi \Kap)}$ depth.
\end{restatable}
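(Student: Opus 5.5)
We prove \Cref{lem:terminalexpanderdecomposition} with a cut-matching-game style procedure driven by the $\amfmc{(\alpha,\delta)}$ oracle. We maintain an ordered vertex-partition $\cV$, initially $(V)$, and a current terminal edge set $F'$, initially $F$. The procedure runs in $\tO{1/(\phi\Kap)}$ rounds. In each round it either certifies that $\vol_{F'}$ is $\cV$-constrained $\phi$-expanding in $G-\Erev(\cV)$, or it finds sparse cuts, refines $\cV$ along them, and charges the new reversed edges to the removed $F$-volume.

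\textbf{Step 1: a round.} Inside each part $V_i$ of $\cV$ we play a directed cut-matching game on the terminals weighted by $\vol_{F'}$. Following the standard directed cut player, we use random projections of the current embedded-matching flow vectors to obtain a bisection $(A_i,B_i)$ of $\vol_{F'}|_{V_i}$. We then build a single MFMC instance on $G-\Erev(\cV)$, with capacities scaled by $1/\phi$ and with all parts handled together: each $A_i$ gets source demand $\vol_{F'}|_{A_i}$ and each $B_i$ gets sink demand $\vol_{F'}|_{B_i}$. We keep the flow inside each part by routing only within $G[V_i]$, since edges between parts are either forward or reversed. We call the oracle and obtain a flow $\ff$ and a cut $S$ satisfying \eqref{eq:maxflowmincutdelta}. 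If $\Val(\ff)$ is at least the total demand minus a small slack, $\ff$ supplies a matching round. We trim the few unrouted terminals, remove their $F'$-edges from $F'$ and pay for them from the slack. Otherwise \eqref{eq:maxflowmincutdelta} gives a cut $S$ whose capacity in $G-\Erev(\cV)$ is at most $\alpha\phi$ times the unrouted demand plus $\delta$. We refine each $V_i$ into $(V_i\cap S, V_i\setminus S)$ in this order, so the cut edges, which go from $S$ to $\bar S$, stay forward. The edges that become reversed are those entering $S$. To control them we either use the symmetric choice (placing $\bar S$ first when $\delta^-$ is smaller) or run the flow in both directions and take the smaller side.

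\textbf{Step 2: accounting.} The reversed-edge volume added per refinement is $O(\alpha\phi)$ times the $F'$-volume of the smaller side, plus $\delta$. A standard potential argument over the $O(\log n)$ halvings per vertex, together with the $O(\log^2 n)$ rounds of the cut-matching game and the losses from the random-projection cut player, yields $\vol(\Erev(\cV))\le \alpha\log^6 n\cdot \phi\,\vol(F)=\Kap\phi\vol(F)$. The additive errors sum to $\tO{1/(\phi\Kap)}\cdot\delta$ over all calls, and the hypothesis $\delta\le U_G(F)(\phi\Kap)^2/(99\log^4 n)$ makes this total at most $\phi\Kap\, U_G(F)/\polylog$. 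We absorb it by deleting from $F'$ the terminal edges whose demand we failed to route, which gives $U_G(F')\ge(1-\phi\Kap)U_G(F)$. The assumption $\phi=o(1/\Kap)$ ensures that trimming never consumes a constant fraction of the volume.

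\textbf{Step 3: routing and termination.} When the game terminates, the union of the embedded matchings forms a directed expander on $\vol_{F'}$ inside each part with embedding congestion $\tO{1/\phi}$. Therefore any $\cV$-constrained $\vol_{F'}$-respecting demand routes via the expander's flow paths in $G-\Erev(\cV)$, with $\phi$ rescaled by polylog factors that are absorbed into $\Kap$. The flow routing algorithm first routes the demand in the small expander (on which it is easy, e.g.\ by a polylog-hop structure) and then maps each hop through the stored flows using \Cref{thm:flowpathdecomposition}. The round count gives the stated number of oracle calls and the stated work and depth.

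\textbf{Main obstacle.} The hardest step is making the directed cut player compatible with an oracle that has additive error and returns a single global cut. In the directed setting, the cut only balances one direction of edges, and reversed edges must be charged correctly. My plan is to adapt the directed cut-matching game of \cite{bernstein2024maximum} and to prove the trimming-based charging of $\delta$ carefully.
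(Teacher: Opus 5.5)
\textbf{Gap 1: one oracle call per round cannot give a flow-or-cut dichotomy.} Your round makes a single oracle call. It branches on whether $\Val(\ff)$ is nearly the whole demand, and otherwise reads a sparse cut off \eqref{eq:maxflowmincutdelta}. This fails for an $\alpha$-approximate oracle.

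Inequality \eqref{eq:maxflowmincutdelta} bounds $\Val(S)+\nnabla(S)+\DDelta(\bar S)$ by $\alpha$ times the \emph{routed} value, not the unrouted demand. Take $\alpha\ge 2$. The oracle may return a flow of value $D/2$ together with $S=\emptyset$, even when everything is routable. This pair is valid, since $\DDelta(V)=D\le\alpha\Val(\ff)$. You then have neither a near-perfect matching nor a cut, so ``$\alpha\phi$ times the unrouted demand plus $\delta$'' does not follow. The cut carries information only when the returned flow is below roughly a $1/\alpha$ fraction of the demand still outstanding.

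The missing idea is how the paper implements the matching oracle. Keep a residual demand $\dd''$ and call $\Omfmc$ on $(\dd''\mid_P,\dd''\mid_Q)$ repeatedly. Accept and subtract $\ff$ whenever $\Val(\ff)\ge T=\dd''(P)\phi\Kap/(9\alpha\log n)$. Stop once $\dd''(V)<\eps\,\dd'(V)$, where $\eps=\phi\Kap/\log^3 n$.
\begin{itemize}
\item The cut $S$ is used only when $\Val(\ff)<T$. Then $\Val(S)\le\alpha T+\delta$.
\item In that case $\dd''(\bar S\cap P),\dd''(S\cap Q)\le\alpha T+\delta$, which forces both sides to have $F$-volume at least $\dd''(P)/2$ (\Cref{lem:sparsecut}).
\item Each accepted flow is feasible. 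One matching call accepts at most $9\alpha\log^2 n/(\phi\Kap)=9/(\phi\log^4 n)$ flows. So the embedding of all $O(\log^2 n)$ matchings has congestion $O(1/(\phi\log^2 n))$, and routing in $W$ with congestion $O(\log n)$ stays within $1/\phi$.
\end{itemize}
This iteration count is where the factor $\alpha$ in $\Kap=\alpha\log^6 n$ comes from. Your accounting asserts that factor but has no mechanism that produces it.

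Relatedly, $\delta$ is additive error on the cut side of the certificate. It inflates reversed-edge volume, so it cannot be absorbed by deleting unrouted terminal edges from $F'$.

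\textbf{Gap 2: the single global instance and the ordering of $S$.} Your single MFMC instance per round causes further problems.
\begin{itemize}
\item On $G-\Erev(\cV)$ the forward inter-part edges remain. Flow from $A_i$ can therefore end in $B_j$ with $j>i$; you would have to delete all inter-part edges.
\item More seriously, even on $\bigcup_i G[V_i]$ the certificate \eqref{eq:maxflowmincutdelta} holds only as a sum over parts. Slack $\alpha\Val(\ff_i)$ earned in a part where routing succeeded can pay for a dense cut in another part. So refining every $V_i$ by $S\cap V_i$ is uncontrolled. This is the obstacle you flag at the end, and it is left open. The paper avoids it by recursing separately on $G[S]$ and $G[\bar S]$, with self-loops $\widehat F$ preserving $F$-volumes, so every subproblem gets its own certificate.
\item Your default order $(V_i\cap S,V_i\setminus S)$ is the wrong one. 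It makes $\delta^-(S)$ reversed, and \eqref{eq:maxflowmincutdelta} says nothing about $\delta^-(S)$. Since \eqref{eq:maxflowmincutdelta} controls only $\delta^+(S)$, $\bar S$ must always precede $S$. Then the reversed edges are exactly $\delta^+(S)$, with capacity $\Val(S)$.
\end{itemize}
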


Now we can use \Cref{lem:expanderdecompositiontoDAGembedding} to get an expander hierarchy. We initialize $E'_1=E$ and $\cC'_0=(\{v_1\},\{v_2\},...,\{v_n\})$ where $(v_1,...,v_n)$ is an arbitrary order of $V$. We will show how to start from $E'_i,\cC'_i$ to compute $E_i,\cC'_{i+1},E'_{i+1}$. We should think of $E'_i$ as the final $E_{\ge i}$, and $\cC'_i$ will be processed in the end to get $\cC_i$ to make sure $\cC_i$ refines $\cC_{i+1}$. 

Let 
\[\Kap{}=\alpha\kappa'\log^6n\qquad \phi=2^{-\log^{0.5}n}/\gamma\]

\textbf{If} the following inequality is satisfied
\begin{equation*}
    \delta'\le  U_G(E'_i)\cdot (\phi\Kap{})^2/(99\log^4 n)
\end{equation*}

Then we apply \Cref{lem:terminalexpanderdecomposition} 
on $G,E'_i,\phi$ 
to get $E_i\subseteq E'_i$ and a $E_i$-expander decomposition of $G$ denoted by $\cC'_{i}$ with expansion $\phi$ and slack $\gamma$ where

\[U_G(E_i)\ge (1-\phi\Kap{})\cdot U_G(E'_i)\]

This call to \Cref{lem:terminalexpanderdecomposition} is valid by checking all the four inequalities described in \Cref{lem:terminalexpanderdecomposition}.

We let $E'_{i+1}=(E'_i\backslash E_i)\cup \Erev(\cC'_i)$.

\textbf{Otherwise} we have

\begin{equation}\label{eq:endinghierarchy}
    \delta'>  U_G(E'_i)\cdot (\phi\Kap{})^2/(99\log^4 n)= U_G(E'_i)\cdot 4^{-\log^{0.5}n}/(99\log^4 n)
\end{equation}

In this case, we \textbf{stop} and let $t=i$ and $E_t=E'_i,\cC_t=(V)$
We process $\cC'_i$ to get $\cC_i$ from $i=t-1$ to $i=1$ in the following way (to make sure $\cC_j$ refines $\cC_{j+1}$): suppose $\cC'_i=(C_1,...,C_z)$ and $\cC_{i+1}=(C'_1,...,C'_{z'})$. We define $C_{x,y}=C'_x\cap C_y$. We let $\cC_i=(C_{1,1},C_{1,2},...,C_{1,z},C_{2,1},...,C_{z',z})$, i.e., order first according to $x$, then to $y$. Notice that this step can be done in parallel $\tO{m}$ work and $\tO{1}$ step by parallel sorting. 

\begin{lemma}\label{lem:correctnessexpanderhierarchy}
    $E_1,...,E_t$ and $\cC_0,...,\cC_t$ is a valid expander hierarchy with expansion $(\phi_i)_{i\in [t]}$ where $\phi_i=\phi$ for all $i<t$ and
    \[U_G(E_t)<\delta'\cdot 2^{3\log^{0.5}n}\]

    It is associated with an efficient projection algorithm.
\end{lemma}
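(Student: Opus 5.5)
The plan is to verify the three hierarchy conditions of \Cref{def:expanderdecompositionhierarchy} for the refined partitions $\cC_i$, then the expansion of each $E_i$, then the bound on $U_G(E_t)$ via a geometric-decay argument, and finally assemble the flow routing algorithm. For the structural conditions: $\cC_0$ is the singleton partition, and $\cC_t=(V)$ by definition. Refinement is built in, since $\cC_i$ is obtained by splitting each part $C'_x$ of $\cC_{i+1}$ into the pieces $C'_x\cap C_y$ and listing them grouped by $x$. For $\Erev(\cC_{i-1})\subseteq E_{\ge i}$, an edge reversed in $\cC_{i-1}$ is reversed either with respect to the coarser order $\cC_i$ (handled by downward induction) or within a block $C'_x$, hence with respect to $\cC'_{i-1}$. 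In the latter case it lies in $E'_i\subseteq E_{\ge i}$, because $E'_i$ collects $\Erev(\cC'_{i-1})$ together with $E'_{i-1}\setminus E_{i-1}$. By induction, $E'_i$ is exactly the set of edges not assigned to $E_1,\dots,E_{i-1}$, so $E'_i=E_{\ge i}$ once we also use $E_t=E'_t$.

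For expansion, \Cref{lem:terminalexpanderdecomposition} guarantees that $E_i$ is $\phi$-expanding in $G-\Erev(\cC'_i)$, meaning every $\vol_{E_i}$-respecting, $\cC'_i$-constrained demand is routable with congestion $1/\phi$. Since $\cC_i$ refines $\cC'_i$, any $\cC_i$-constrained demand is also $\cC'_i$-constrained, because a balanced sum on each finer part gives a balanced sum on each coarser part. It is therefore routable in $G$ with the same congestion, and the flow routing algorithm at layer $i$ is just the one returned by \Cref{lem:terminalexpanderdecomposition}. This gives $\phi_i=\phi$ for $i<t$ and an efficient routing algorithm. The last layer needs no expansion guarantee, since its routing is charged to the additive error.

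For the size bound, I track $U_G(E'_{i+1})\le U_G(E'_i)-U_G(E_i)+\vol(\Erev(\cC'_i))\le \phi\Kap U_G(E'_i)+\Kap\phi\,U_G(E_i)\le 2\phi\Kap\, U_G(E'_i)$. With $\phi\Kap=2^{-\log^{0.5}n}$, each round shrinks $U_G(E'_i)$ by a factor $2^{-\log^{0.5}n+1}$, so $t=O(\log^{0.5}n)$ because capacities are polynomial. At the stopping index, \eqref{eq:endinghierarchy} gives $U_G(E_t)=U_G(E'_t)<\delta'\cdot 99\log^4 n\cdot 4^{\log^{0.5}n}\le\delta'\cdot 2^{3\log^{0.5}n}$ for large $n$.

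The step I expect to be the main obstacle is verifying the $\Erev$ containment cleanly across the re-sorting. The issue is that an edge reversed in $\cC'_{i-1}$ may have been moved into $E'_i$ but then selected into $E_i$ or carried further upward. Either way it stays inside $E_{\ge i}$, so I would state the invariant $E'_i=E_{\ge i}$ explicitly and prove it before the refinement argument.
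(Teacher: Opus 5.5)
Your route is the paper's: refinement holds by construction, $\Erev(\cC_{i-1})\subseteq \Erev(\cC_i)\cup\Erev(\cC'_{i-1})$ is unrolled downward, expansion is inherited because $\cC_i$ refines $\cC'_i$, and the bound on $U_G(E_t)$ is read off from the stopping condition \eqref{eq:endinghierarchy}.

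There is one false claim, though. You assert that $E'_i$ is \emph{exactly} the set of edges not assigned to $E_1,\dots,E_{i-1}$, i.e.\ $E'_i=E_{\ge i}$, and you plan to state and prove this ``explicitly''. It cannot be proved, because it is false in general. The update $E'_{i+1}=(E'_i\setminus E_i)\cup\Erev(\cC'_i)$ re-inserts every reversed edge, and $\Erev(\cC'_i)$ can contain two kinds of edges that break the equality:
\begin{enumerate}
\item edges already placed in $E_i$: in Case~2 of $\mathsf{ED}$, an $F$-edge from $S$ to $\bar S$ stays in $F'$ when both self-loops survive, yet it is reversed because the part on $\bar S$ precedes the part on $S$;
\item edges outside $E'_i$ altogether: the reversed set is all edges of $G$ that point backwards, and the slack bound only controls its capacity, not its membership in $F$.
\end{enumerate}
So the $E_j$ need not be disjoint, and $E_{\ge i}$ can strictly contain $E'_i$.

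Only the inclusion $E'_i\subseteq E_{\ge i}$ is actually used. It follows from the reasoning in your last paragraph: every edge of $E'_i$ either enters $E_i$ or lies in $E'_i\setminus E_i\subseteq E'_{i+1}$, and the process ends with $E_t=E'_t$. Replace the equality with this inclusion and the rest of your argument goes through as written. The geometric-decay bound on $t$ is extra, since the lemma itself only needs the stopping condition.
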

\begin{proof}
    The inequality for $U_G(E_t)$ is from \Cref{eq:endinghierarchy} by taking $n$ to be sufficiently large.

    It is clear from the definition of $(\cC_i)_{i\in[t]}$ that $\cC_i$ refines $\cC_{i+1}$ for every $0\le i\le t-1$.

    Then we prove that $\Erev(\cC_{i-1})\subseteq E_{\ge i}$ for every $i\in[t]$. According to the definition if $E'_i$, we always have $\Erev(\cC'_{i-1})\subseteq E'_{i}$ and $E'_i\subseteq E_{\ge i}$. Moreover, according to the definition of $\cC_{i-1}$, we must have $\Erev(\cC_{i-1})\subseteq \Erev(\cC'_{i-1})\cup \Erev(\cC_{i})$. Expanding repeatedly for $\Erev(\cC_i)$ gives us $\Erev(\cC_{i-1})\subseteq \cup_{j\ge i}\Erev(\cC'_j)\subseteq E_{\ge i}$.

    Then we prove the expanding guarantee. According to the definition of $E_i$-expander decomposition, we have that $E_i$ is $\cC'_i$-constraint $\phi$-expanding in $G$. Notice that $\cC_i$ refines $\cC'_i$, so any $\cC_i$-constraint demand is also a $\cC'_i$-constraint demand, so $E_i$ is also $\cC_i$-constraint $\phi$-expanding in $G$.

    At last, the projection algorithm for each layer of the hierarchy is efficient according to \Cref{lem:terminalexpanderdecomposition}, combining them gives an efficient projection algorithm for the whole hierarchy.
\end{proof}
Notice that $t=O(\log^{0.5}n)$ since 
\[U_G(E'_{i+1})\le 2\phi\Kap{}\cdot U_G(E'_i)=2\cdot 2^{-\log^{0.5}n}\cdot U_G(E'_i)\]

and we assume the capacities are polynomially bounded.

\subsection{Expander Decomposition via MFMC Oracle}
\label{subsec:ED from MFMC}
In this section, we describe how to compute an expander decomposition given access to a congestion DAG projection. By \Cref{lem:DAGembeddingtoMaxFlowadditiveerror}, a suitable DAG projection already gives such an oracle, so it is convenient to phrase the lemma directly in terms of an MFMC oracle.

\expanderdecomposition*
The algorithm follows the same high-level structure as Section 7 of \cite{BBLST25} (a directed non-stop cut–matching framework), but we restate the pieces here for completeness.

\begin{definition}[Matching]\label{def:matching}
    Let $(P,Q)$ be a partition of $V$. A set of directed edges $M \subseteq P \times Q$ with capacities $U_M : M \to \bbR_{\ge 0}$ is called a \emph{$(P,Q)$-matching}. For a demand bound $\dd : V \to \bbR_{\ge 0}$ and an error $\eps \ge 0$, we say $M$ is \emph{$(\dd,\eps)$-perfect} if
    \begin{enumerate}
        \item $\vol_M \le \dd$ (i.e. $M$ does not send/receive more than $\dd(v)$ out of any vertex), and
        \item $\vol_M(V) \ge (1 - \eps)\cdot \dd(V)$.
    \end{enumerate}
\end{definition}

We will use the  “non-stop” cut–matching game for directed graphs from \cite{abs-2507-09729}, which is an extension of the ``non-stop'' cut–matching game for undirected graphs by 
\cite{racke2014computing,saranurak2019expander}

\newcommand{\Omatch}{\cO_{\mathrm{match}}}
\begin{lemma}[\cite{abs-2507-09729}]\label{lem:cutmatchinggame}
    Let $\dd : V \to \bbN$ be polynomially bounded and let $\eps = o(1/\log^2 n)$. Suppose we have an oracle $\Omatch$ that, for any partition $(P,Q)$ of $V$ and any $\dd' \le \dd$ with $\dd'(P) = \dd'(Q)$ and $\dd'(V) \ge \dd(V)/2$, returns a $(\dd',\eps)$-perfect $(P,Q)$-matching.

    Then there is a randomized algorithm that makes $O(\log^2 n)$ calls to $\Omatch$ and $O(m \log^2 n)$ additional work with $\tO{1}$ depth, and produces $\tilde{\dd} \le \dd$ with
    \[
        \tilde{\dd}(V) \ge \bigl(1 - O(\eps \log^2 n)\bigr) \cdot \dd(V)
    \]
    such that, with high probability, $\tilde{\dd}$ is $\Omega(1)$-expanding in the graph
    \[
        W := \bigcup_i M_i,
    \]
    where $M_i$ is the matching returned by the $i$-th call to $\Omatch$.
\end{lemma}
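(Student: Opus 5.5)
This lemma is imported from \cite{abs-2507-09729}, so the plan is to reconstruct the standard non-stop cut--matching argument for directed graphs rather than to derive it from earlier results of this paper. I will run $T=O(\log^2 n)$ rounds of a KRV/RST-style cut player against the matching oracle $\Omatch$. Throughout, the cut player maintains a current demand $\dd_i\le\dd$, with $\dd_0=\dd$, and the union $W_i=\bigcup_{j\le i}M_j$ of the matchings returned so far. At the end I set $\tilde{\dd}=\dd_T$.

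In each round the cut player works on $W_i$. It simulates a random walk, or more concretely a random projection of the flow-matrix embedding of $\dd_i$-weighted units routed along $W_i$, and measures a potential $\Phi_i$. The potential is the $\dd$-weighted variance of these embedding vectors, computed from a random Gaussian projection with one coordinate per unit of demand. From the projection values it produces a partition $(P,Q)$ in which $P$ consists of the vertices with large projection and $Q$ of those with small projection. It then trims demand to obtain $\dd'\le\dd_i$ with $\dd'(P)=\dd'(Q)$ and $\dd'(V)\ge\dd(V)/2$.

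Because the graph is directed, every round is played in both orientations: once with the cut $(P,Q)$ and once with the reversed roles $(Q,P)$. This way both forward and backward mixing are enforced, which is the main difference from the undirected game. The oracle returns a $(\dd',\eps)$-perfect matching. I then update $\dd_{i+1}$ by removing, at each vertex, the demand left unmatched by $M_i$. By perfectness, the total removal in one round is at most $\eps\,\dd(V)$.

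The key steps, in order, are as follows. (1) Show the projection lemma: if the current potential is large, then with constant probability the projection-based cut separates enough pairs that any nearly perfect matching across it reduces $\Phi$ by a $(1-\Omega(1/\log n))$ factor. The losses due to the $\eps$-fraction of unmatched demand are charged to the deletions from $\dd_i$. (2) Since $\Phi_0\le\poly(n)$ (using that $\dd$ is polynomially bounded) and the game stops once $\Phi\le 1/\poly(n)$, $O(\log^2 n)$ rounds suffice. (3) Convert small potential into expansion. The embedding certifies that every $\dd_T$-respecting demand can be routed in $W$ with congestion $O(1)$: one routes each unit along the recorded flow-matrix paths in both directions, with a standard rounding of the residual error, which is absorbed because $\eps=o(1/\log^2 n)$. (4) Accumulate the deletions to get $\tilde{\dd}(V)\ge(1-O(\eps\log^2 n))\dd(V)$. (5) Account for cost: each round costs $O(m)$ work for the projection plus one oracle call, and the random walk is maintained implicitly through the matchings, giving $O(m\log^2 n)$ additional work and $\tilde{O}(1)$ depth. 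High probability follows from repeating the Gaussian projections $O(\log n)$ times per round.

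The main obstacle is step (1) in the directed, non-stop setting. The potential must decrease even though demand is being deleted from vertices mid-game, and even though the matchings are only approximately perfect. Deleting demand can in principle increase the normalized variance. My first attempt is to define the potential only over the surviving demand, with unnormalized weights, and to show that deletions only remove nonnegative terms. I will also need care that the directed embedding tracks in-mixing and out-mixing separately, each with its own potential, so that the total drop is controlled by the sum of the two.
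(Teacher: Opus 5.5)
The paper does not prove this lemma. It imports it from \cite{abs-2507-09729} as the directed extension of the non-stop cut--matching game of \cite{racke2014computing,saranurak2019expander}, so your sketch can only be measured against that template, and it follows it. Several of your choices are right: a KRV projection cut player, both orientations each round, charging unmatched demand to deletions, an unnormalized variance potential over surviving units re-centred at their current mean (which deletion and re-centring can only decrease), and $O(\log n)\cdot O(\log n)$ rounds. The gap is that the two steps where directedness matters are asserted rather than argued, and step (3), as phrased, traverses flow paths backwards, which a directed $W$ does not allow.

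\emph{The mixing step.} The two calls return unrelated matchings $M_{P\to Q}$ and $M_{Q\to P}$, so there is no matched pair to average symmetrically, and you never say what the walk does. Work with unit copies of demand. Under your deletion rule, every surviving active unit $c$ sends half its mass along one out-edge and receives half the mass of a sender $s(c)$ along one in-edge, with $s$ injective. The identity $\|\tfrac12(a+b)-y\|^2=\tfrac12\|a-y\|^2+\tfrac12\|b-y\|^2-\tfrac14\|a-b\|^2$, with $y$ the old mean, then gives a drop of the forward potential of at least $\tfrac14\sum_c\|F_c-F_{s(c)}\|^2$, plus at least half the terms of deleted units. All these pairs cross $(P,Q)$, so this is exactly the quantity KRV's projection lemma lower-bounds for the undirected union of the two matchings. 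The reversed walk, with mass moving against the edges, uses the same pairs. Hence one cut, obtained by projecting the concatenated in/out embedding, shrinks the sum of both potentials. This is the missing reason why ``playing both orientations'' works.

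\emph{The certificate in step (3).} Route $u\to w$ along the reversed walk and $w\to v$ along the forward walk, through intermediate origins $w$. For the loads at the $w$'s to match, you need the survivors' mean rows $\bar F$ and $\bar F^{\mathrm{rev}}$ to be close. They are close. Each is a probability vector with $\bar F_w\le \dd(w)/\tilde\dd(V)$, since commodity $w$ has total mass $\dd(w)$ and there are $\tilde\dd(V)$ surviving units. Therefore $\|\bar F-\dd/\tilde\dd(V)\|_1=\dd(V)/\tilde\dd(V)-1=O(\eps\log^2 n)$, and likewise for $\bar F^{\mathrm{rev}}$. The resulting $o(1)$ mismatch is re-routed geometrically, giving congestion $O(1)$. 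This is precisely where $\eps=o(1/\log^2 n)$ is used. It also explains why only rows of survivors may be used: mass stranded at deleted units can leave an individual column badly deficient, so a single walk read ``both ways'' would not certify out-routing. Finally, high probability follows from the expected $(1-\Omega(1/\log n))$ decay over $O(\log^2 n)$ rounds together with Markov's inequality. Repeating the projection $O(\log n)$ times per round is not needed.
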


Intuitively, \Cref{lem:terminalexpanderdecomposition} will simulate the matching oracle $\Omatch$ using the MFMC oracle $\Omfmc$: each time we need a nearly perfect matching across a cut, we phrase it as a flow instance with capacities restricted to $F$, call $\Omfmc$, and either (i) get the matching we wanted, or (ii) discover a cut with small $F$-capacity and peel it off, charging its capacity to the slack. Repeating this for $O(1/(\phi \Kap))$ rounds preserves almost all of $F$ (the $F'$ guarantee) and yields an $F'$-expanding decomposition, exactly as in \cite{BBLST25}, but now with the additive loss $\delta$ inherited from the MFMC oracle.

\newcommand{\ED}{\mathsf{ED}}
\paragraph{Expander Decomposition Algorithm.}
The algorithm for \Cref{lem:terminalexpanderdecomposition} is recursive. We write
\[
    (F', \cC) \leftarrow \ED(G, F, \phi)
\]
to denote a subroutine that should satisfy the guarantees of \Cref{lem:terminalexpanderdecomposition}: namely, $\cC$ is an $F'$-expander decomposition of $G$ with expansion $\phi$ and slack $\Kap$, and $F' \subseteq F$ preserves almost all of $F$.

We set
\[
    \dd := \vol_F
    \qquad\text{and}\qquad
    \eps := \frac{\phi \Kap}{\log^3 n}.
\]
We want to run the cut–matching game of \Cref{lem:cutmatchinggame}, so we must implement the matching oracle $\Omatch$.

\medskip
\textbf{The oracle \(\Omatch\).}
An oracle call receives
\begin{itemize}
    \item a partition $(P,Q)$ of $V$,
    \item a vector $\dd' \le \dd$ such that $\dd'(P) = \dd'(Q)$ and $\dd'(V) \ge \dd(V)/2$.
\end{itemize}
We maintain:
\begin{itemize}
    \item a flow $\ff^*$ (initially empty), which will accumulate routed demand, and
    \item a residual demand $\dd'' \le \dd'$ (initially $\dd'' := \dd'$).
\end{itemize}
We repeat the following “flow-or-cut” step while
\[
    \dd''(V) \;\ge\; \eps \cdot \dd'(V).
\]

Call the MFMC oracle $\Omfmc$ on $G$ with the bipartite demand
\[
    \bigl(\dd''\mid_P,\; \dd''\mid_Q\bigr),
\]
and obtain a flow $\ff$ and a cut $(S,\bar S)$ such that
\begin{equation}\label{eq:sparsecut}
    \Val(S)
    \;+\;
    \dd''(S \cap Q)
    \;+\;
    \dd''(\bar S \cap P)
    \;\le\;
    \alpha \cdot \Val(\ff) + \delta.
\end{equation}

Define the threshold
\[
    T := \dd''(P) \cdot \frac{\phi \Kap}{9 \alpha \log n}.
\]

We branch on whether the flow is large.

\medskip
\textbf{Case 1: \(\Val(\ff) \ge T\).}
In this case the oracle made good progress routing the current residual demand. We update
\[
    \ff^* \leftarrow \ff^* + \ff
    \qquad\text{and}\qquad
    \dd'' \leftarrow \dd'' - \Dem(\ff),
\]
(where we view $\Dem(\ff)$ as a vector on $V$, and note that source/sink supports are disjoint). Then we continue the while-loop.

\medskip
\textbf{Case 2: \(\Val(\ff) < T\).}
Here the flow is too small; we interpret \eqref{eq:sparsecut} as a certificate of a sparse cut and recurse on both sides.

Define
\begin{align*}
    F[S] &:= \{ (u,v) \in F \mid u,v \in S \},\\
    \widehat F[S] &:= F[S] \;\cup\; \{ (u,u) \mid \exists v \in \bar S \text{ with } (u,v)\in F \text{ or } (v,u)\in F, \text{ and set } U(u,u) := U_G(u,v) \text{ or } U_G(v,u)\},
\end{align*}
i.e. we add self-loops to preserve the $F$-volume of vertices of $S$. Define $\widehat F[\bar S]$ symmetrically.

Now recurse:
\[
    (F'_1, \cC_1) \leftarrow \ED\bigl(G[\bar S],\, \widehat F[\bar S],\, \phi\bigr)
    \qquad\text{and}\qquad
    (F'_2, \cC_2) \leftarrow \ED\bigl(G[S],\, \widehat F[S],\, \phi\bigr).
\]
We form $F' \subseteq F$ as follows:
\begin{itemize}
    \item include all edges of $F$ inside $S$ that survived in $F'_2$ (i.e. corresponding to self-loops of $\widehat F[S]$ that were kept),
    \item include all edges of $F$ inside $\bar S$ that survived in $F'_1$,
    \item for every edge $(u,v) \in F$ with $u \in S$, $v \in \bar S$ (or vice versa), keep $(u,v)$ in $F'$ if the two self-loops of $u$ and $v$ both survived in $F'_2$ and $F'_1$ respectively.
\end{itemize}
Return $F'$ and the vertex-set sequence $(\cC_1,\cC_2)$ and terminate $\ED(G,F,\phi)$.

\medskip
\textbf{If the loop never triggers Case 2.}
Suppose in the \(O(\log^2 n)\) calls required by \Cref{lem:cutmatchinggame} we always land in Case 1. Then the loop ends only because
\[
    \dd''(V) < \eps \cdot \dd'(V),
\]
and the accumulated $\ff^*$ routes the demand
\[
    \bigl((\dd' - \dd'')\mid_P,\; (\dd' - \dd'')\mid_Q\bigr).
\]
We then construct the required $(\dd',\eps)$-perfect $(P,Q)$-matching by, for every flow path $p$ of $\ff^*$, adding the matching edge
\[
    (\Start(p), \End(p))
\]
with capacity equal to the flow on $p$. This is exactly the matching $\Omatch$ must return.

\medskip
\textbf{Applying the cut–matching game.}
We now run the directed cut–matching game of \Cref{lem:cutmatchinggame} using this implementation of $\Omatch$. If none of the $O(\log^2 n)$ rounds ever falls into Case 2, then by \Cref{lem:cutmatchinggame} we obtain $\tilde \dd \le \dd$ with
\[
    \tilde \dd(V) \ge \bigl(1 - O(\eps \log^2 n)\bigr)\dd(V),
\]
and $\tilde \dd$ is $\Omega(1)$-expanding in the union $W := \bigcup_i M_i$ of the matchings we produced. In this situation we define
\[
    F' := \{ (u,v) \in F \mid \tilde \dd(u) \ge \dd(u)/2 \text{ and } \tilde \dd(v) \ge \dd(v)/2 \}
\]
and return $F'$ and the trivial partition $(V)$ as the $F'$-expander decomposition.

This matches the guarantees of \Cref{lem:terminalexpanderdecomposition}.

\paragraph{Correctness (expansion).}
We prove by induction on the size of the instance passed to $\ED(G,F,\phi)$ that the returned $F'$ is $\phi$-expanding and that we get an efficient flow routing algorithm.

When $G$ has a constant number of vertices, the statement is trivial since $1/\phi = \omega(1)$.

Assume now that $G$ has more than one vertex. There are two ways the algorithm can terminate:

\begin{enumerate}
    \item it finishes the cut–matching game (i.e. every call to $\Omatch$ is in Case 1), or
    \item it stops early in Case 2 and recurses on $S$ and $\bar S$.
\end{enumerate}

\textbf{First situation (cut–matching game finishes).}
In this case we obtain $\tilde{\dd}$ that is $\Omega(1)$-expanding in
\[
    W := \bigcup_i M_i,
\]
where $M_i$ is the matching from the $i$-th call to $\Omatch$. Each matching $M_i$ is formed from the flow $\ff_i$ that $\Omatch$ routed in that call. We show that the total congestion of all these flows is small.

\begin{lemma}
    The congestion of $\sum_i \ff_i$ is at most $9 / (\phi \log^2 n)$.
\end{lemma}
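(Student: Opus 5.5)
We bound the congestion by counting: how many calls to $\Omfmc$ are made in total, and how much each returned flow can congest an edge. We are in the first situation, so every call to $\Omfmc$ landed in Case 1. Each returned flow $\ff$ is feasible, i.e.\ $\Cong(\ff)\le 1$. Hence $\sum_i \ff_i$ has congestion at most the total number of Case-1 iterations over all $O(\log^2 n)$ calls to $\Omatch$. The bound $9/(\phi\log^2 n)$ is much larger than $1$, so it suffices to show that each $\Omatch$ call performs at most $O(1/(\phi\log^4 n))$ iterations with a suitable constant. Summing over the $O(\log^2 n)$ cut--matching rounds then gives the claim, after absorbing constants into the $\log$ factors of $\Kap$.

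To bound the iterations inside one $\Omatch$ call, fix the call with demand $\dd'$ and track the potential $\dd''(V)$. In Case 1 we have $\Val(\ff)\ge T=\dd''(P)\cdot\phi\Kap/(9\alpha\log n)$. The update $\dd''\leftarrow\dd''-\Dem(\ff)$ removes $\Val(\ff)$ from $\dd''(P)$ and the same amount from $\dd''(Q)$. We need $\dd''(P)$ to stay comparable to $\dd''(V)$. Initially $\dd'(P)=\dd'(Q)$, and every update removes equal amounts from both sides, so $\dd''(P)=\dd''(Q)=\dd''(V)/2$ throughout. Thus each iteration multiplies $\dd''(V)$ by at most $(1-\phi\Kap/(9\alpha\log n))$. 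The loop stops once $\dd''(V)<\eps\,\dd'(V)$, so the number of iterations is at most
\[
\frac{9\alpha\log n}{\phi\Kap}\ln(1/\eps)=\frac{9\log n\,\ln(1/\eps)}{\phi\log^6 n},
\]
using $\Kap=\alpha\log^6 n$. Since $\eps=\phi\Kap/\log^3 n\ge 1/\poly(n)$ (capacities are polynomial), we have $\ln(1/\eps)=O(\log n)$. This gives $O(1/(\phi\log^4 n))$ iterations per call. Multiplying by the $O(\log^2 n)$ calls yields $O(1/(\phi\log^2 n))$, matching the claimed $9/(\phi\log^2 n)$ up to tuning the $\log$ powers in $\Kap$.

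The main obstacle is constant bookkeeping. The claimed constant $9$ ties the hidden constants in the cut--matching game round count to the $\log^6 n$ in $\Kap$. I would fix this by writing the number of rounds as $c\log^2 n$ and the iteration count as $9\log n\ln(1/\eps)/(\phi\log^6 n)$, then checking that $c\log^2 n\cdot\ln(1/\eps)\le \log^3 n$ for large $n$. A second subtlety is that $\ff_i$ denotes the flow accumulated in the $i$-th $\Omatch$ call (that is, $\ff^*$), not a single oracle output. This is handled by the per-call iteration count above, since $\ff^*$ is a sum of at most that many feasible flows.
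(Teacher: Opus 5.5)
Your proposal is correct and takes essentially the same route as the paper. Both arguments bound the number of Case-1 iterations inside one $\Omatch$ call via the multiplicative decay of $\dd''(V)$, multiply by the $O(\log^2 n)$ cut--matching rounds, and use that every oracle flow is feasible.

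Your explicit check that $\dd''(P)=\dd''(V)/2$ throughout, and your explicit $\ln(1/\eps)$ factor, make precise steps the paper leaves implicit. The paper simply bounds $\ln(1/\eps)$ by $\log n$ and drops the constant in $O(\log^2 n)$. One small correction: the bound on $\ln(1/\eps)$ does not come from polynomial capacities. It comes from the instantiation $\phi\Kap=2^{-\log^{0.5}n}$, which gives $\ln(1/\eps)=O(\sqrt{\log n})$, and that is what makes your constant check $c\ln(1/\eps)\le\log n$ go through.
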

\begin{proof}
    Each $\ff_i$ is itself the sum of per-iteration flows $\ff$. In Case 1 of $\Omatch$ we have
    \[
        \Val(\ff) \;\ge\; T \;=\; \dd''(P) \cdot \frac{\phi \Kap}{9 \alpha \log n}.
    \]
    This means in that iteration the residual demand $\dd''(V)$ shrinks by at least a $(\phi \Kap)/(9 \alpha \log n)$ fraction. We stop when $\dd''(V) < \eps \dd'(V)$, where
    \[
        \eps = \frac{\phi \Kap}{\log^3 n}.
    \]
    Hence the number of iterations inside \emph{one} call to $\Omatch$ is at most
    \[
        \frac{9 \alpha \log^2 n}{\phi \Kap} \;\le\; \frac{9}{\phi \log^4 n},
    \]
    using $\Kap = \alpha \log^6 n$. Since the cut–matching game makes $O(\log^2 n)$ calls to $\Omatch$, multiplying these two factors gives total congestion at most
    \[
        \frac{9}{\phi \log^2 n}.
    \]
\end{proof}

We store all these flows. Now let a demand be given that is $\vol_{F'}$-respecting and $(V)$-constrained. By construction of $F'$, every $(u,v) \in F'$ satisfies
\[
    \tilde{\dd}(u) \ge \dd(u)/2 \quad\text{and}\quad \tilde{\dd}(v) \ge \dd(v)/2,
\]
and recall $\dd = \vol_F \ge \vol_{F'}$, so the demand is $2\tilde{\dd}$-respecting. Since $\tilde{\dd}$ is $\Omega(1)$-expanding in $W$, we can apply any standard expander routing on $W$ to route this demand with congestion $O(\log n)$.%

Finally, we replace every edge of the routed flow on $W$ by the corresponding flow bundle $\sum_i \ff_i$ in $G$ by using \Cref{thm:flowpathdecomposition}, which turns the cumulative edges to source-sink demands and find the corresponding edge representation of flows in $\tO{m}$ work and $\tO{1}$ depth.. This multiplies the congestion by at most $9/(\phi \log^2 n)$, so the final congestion is
\[
    \frac{9}{\phi \log^2 n} \cdot O(\log n) \;\le\; \frac{1}{\phi},
\]
as desired. The work is bounded by the work to run the cut–matching game plus linear overhead, and the additional depth is $\tO{1}$.

\medskip
\textbf{Second situation (early cut, Case 2).}
Here the algorithm returns the union of two recursive calls:
\[
    (F'_1, \cC_1) = \ED(G[\bar S], \widehat F[\bar S], \phi),
    \qquad
    (F'_2, \cC_2) = \ED(G[S], \widehat F[S], \phi).
\]
By the induction hypothesis,
\[
    F'_1 \text{ is } \cC_1\text{-constrained } \phi\text{-expanding in } G[\bar S],
    \quad
    F'_2 \text{ is } \cC_2\text{-constrained } \phi\text{-expanding in } G[S].
\]
By the definition of $F'$ (we keep a cross edge only if both corresponding self-loops survived in the two recursive calls), any $\vol_{F'}$-respecting $(\cC_1,\cC_2)$-constrained demand restricts to a $\vol_{F'_1}$-respecting demand on $G[\bar S]$ and to a $\vol_{F'_2}$-respecting demand on $G[S]$. So we can route separately in the two subgraphs with congestion at most $1/\phi$, and hence $F'$ is $(\cC_1,\cC_2)$-constrained $\phi$-expanding in $G$.

\paragraph{Correctness (slack).}
We now show that the total capacity of reversed edges in the final decomposition is at most $\phi \Kap \cdot U_G(F)$. If we finish in the first situation (full cut–matching game), then $\cC = (V)$ and $U_G(\Erev(\cC)) = 0$, so there is nothing to prove.

So assume we stopped in Case 2 on some cut $(S,\bar S)$. We use:

\begin{lemma}\label{lem:sparsecut}
    The cut $S$ found in Case 2 satisfies
    \[
        \Val(S) \le \vol_F(S) \cdot \frac{\phi \Kap}{\log n}
        \qquad\text{and}\qquad
        \Val(S) \le \vol_F(\bar S) \cdot \frac{\phi \Kap}{\log n}.
    \]
\end{lemma}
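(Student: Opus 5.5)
Starting from the MFMC certificate \eqref{eq:sparsecut}, I will combine it with the Case~2 condition $\Val(\ff) < T$ and the lower bound on the residual demand that holds throughout the loop. In Case~2,
\[
\Val(S) + \dd''(S\cap Q) + \dd''(\bar S\cap P) \le \alpha T + \delta = \dd''(P)\cdot\frac{\phi\Kap}{9\log n} + \delta .
\]
All three terms on the left are nonnegative, so each is individually bounded by the right-hand side. The next step is to show that the additive error $\delta$ is also at most a $\frac{\phi\Kap}{9\log n}$ fraction of $\dd''(P)$. The while-loop runs only while $\dd''(V)\ge \eps\,\dd'(V)$. By the balancedness maintained in the loop (each routed flow removes equal mass from $P$ and $Q$, starting from $\dd'(P)=\dd'(Q)$), we get $\dd''(P)=\dd''(V)/2 \ge \eps\,\dd'(V)/2 \ge \eps\,\dd(V)/4 = \eps\,\vol_F(V)/4 = \eps\,U_G(F)/2$. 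Plugging in $\eps=\phi\Kap/\log^3 n$ and the hypothesis $\delta\le U_G(F)(\phi\Kap)^2/(99\log^4 n)$ gives $\delta \le \frac{2}{99}\cdot\frac{\phi\Kap}{\log n}\,\dd''(P)\cdot \frac{\log^2 n}{\log^3 n}\cdot\log^2 n$. I will track the exponents carefully here, but the goal is $\delta \le \dd''(P)\,\phi\Kap/(9\log n)$. Adding these yields the intermediate bound
\[
\Val(S)+\dd''(S\cap Q)+\dd''(\bar S\cap P)\le \frac{2\phi\Kap}{9\log n}\,\dd''(P).
\]

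I will then convert this into a bound in terms of $\vol_F(S)$ and $\vol_F(\bar S)$. The bound on $\dd''(\bar S\cap P)$ says that at most a small fraction of $\dd''(P)$ lies in $\bar S$, so $\dd''(S\cap P)\ge (1-o(1))\dd''(P)$. Using $\dd''\le \dd'\le \dd=\vol_F$, this gives $\vol_F(S)\ge \dd''(S\cap P)\ge \dd''(P)/2$. Symmetrically, the bound on $\dd''(S\cap Q)$ gives $\dd''(\bar S\cap Q)\ge(1-o(1))\dd''(Q)$, and $\dd''(Q)=\dd''(P)$, so $\vol_F(\bar S)\ge \dd''(P)/2$. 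Combining these with $\Val(S)\le \frac{2\phi\Kap}{9\log n}\dd''(P)$ gives $\Val(S)\le \frac{4\phi\Kap}{9\log n}\min(\vol_F(S),\vol_F(\bar S))$, which is stronger than the claimed bound. The $o(1)$ losses are absorbed because $\phi\Kap=o(1)$ by the assumption $\phi\le o(1/\Kap)$.

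The main obstacle is handling the additive $\delta$ term, which requires a lower bound on $\dd''(P)$ in terms of $U_G(F)$. That lower bound depends on two things holding exactly as claimed: $\dd''(P)=\dd''(Q)$ throughout the loop, and $\dd'(V)\ge\dd(V)/2$. If balancedness can fail (for instance, if the oracle's flow does not route exactly balanced amounts), I would fall back on the cruder estimate $\dd''(P)\ge \dd''(V)\cdot\Omega(1)$, or use $\min(\dd''(P),\dd''(Q))$ throughout. I expect the polylog factors to have enough slack (from $\log^4 n$ versus $\log^3 n$) to absorb a constant-factor loss.
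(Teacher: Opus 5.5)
Your proposal is correct and follows the paper's proof essentially step for step. You absorb $\delta$ into $\alpha T$ using the loop invariant $\dd''(P)\ge \eps\,\dd(V)/4=\eps\,U_G(F)/2$, then use the $\dd''(\bar S\cap P)$ and $\dd''(S\cap Q)$ terms of the certificate to get $\vol_F(S),\vol_F(\bar S)\ge \dd''(P)/2$; your lower bound on $\dd''(P)$ via $\dd(V)$ is actually cleaner than the paper's $\dd'(P)\ge\dd(P)/2$. Two small corrections: balancedness holds exactly, since each Case~1 flow has sources in $P$, sinks in $Q$, and equal total source and sink demand, so no fallback is needed. Also, your displayed bound on $\delta$ carries a stray factor of $\log n$. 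The correct computation gives exactly $\delta\le \frac{2}{99}\cdot\frac{\phi\Kap}{\log n}\,\dd''(P)$, so the $\log^4 n$ versus $\log^3 n$ gap is fully consumed and only constant-factor slack remains.
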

\begin{proof}
    The condition for Case 2 is $\Val(\ff) \le T$. From \eqref{eq:sparsecut} we have
    \begin{align*}
        \Val(S)
        &\le \alpha \cdot \Val(\ff) + \delta \\
        &\le \alpha \cdot \dd''(P) \cdot \frac{\phi \Kap}{9 \alpha \log n}
            + U_G(F) \cdot \frac{(\phi \Kap)^2}{99 \log^4 n} \\
        &\le \dd''(P) \cdot \frac{\phi \Kap}{4 \log n},
    \end{align*}
    where the last inequality uses that the additive term is dominated by the main term once we plug in $\eps = \frac{\phi \Kap}{\log^3 n}$ and the lower bound $\dd''(P) \ge \eps \dd'(P) \ge \eps \dd(P)/2$.

    On the other hand,
    \begin{align*}
        \vol_F(S)
        &\ge \dd''(S) \\
        &\ge \dd''(P) - \dd''(\bar S \cap P) \\
        &\ge \dd''(P) - (\alpha \Val(\ff) + \delta) \quad\text{(by \eqref{eq:sparsecut})}\\
        &\ge \dd''(P) - \dd''(P) \cdot \frac{\phi \Kap}{4 \log n} \\
        &\ge \dd''(P)/2.
    \end{align*}
    Combining the two displays gives
    \[
        \Val(S) \le \vol_F(S) \cdot \frac{\phi \Kap}{\log n}.
    \]

    A symmetric argument, swapping $P$ and $Q$, gives
    \[
        \vol_F(\bar S) \ge \dd''(Q)/2
        \quad\text{and}\quad
        \Val(S) \le \vol_F(\bar S) \cdot \frac{\phi \Kap}{\log n}.
    \]
\end{proof}

Therefore, whenever we split on $(S,\bar S)$, the amount of capacity we “cut off” is at most
\[
    \min\bigl(\vol_F(S), \vol_F(\bar S)\bigr) \cdot \frac{\phi \Kap}{\log n}.
    \tag{$\star$}
\]
So for the final decomposition $\cC$ we have the recursion
\[
    U_G(\Erev(\cC))
    \;\le\;
    \min\bigl(\vol_F(S), \vol_F(\bar S)\bigr) \cdot \frac{\phi \Kap}{\log n}
    + U_G(\Erev(\cC_1))
    + U_G(\Erev(\cC_2)),
\]
where $\cC_1,\cC_2$ are the decompositions returned on $G[\bar S]$ and $G[S]$. Unrolling this recursion over all cuts in the recursion tree, and noting that each time we lose at most a $\frac{\phi \Kap}{\log n}$-fraction of the smaller side, we obtain
\[
    U_G(\Erev(\cC)) \le \phi \Kap \cdot U_G(F),
\]
as claimed.

\paragraph{Correctness (size of $F'$).} We will show that $U_G(F')\ge (1-\phi\Kap{})\cdot U_G(F)$.

We first show it for the first situation where the cut-matching game finishes.

\begin{lemma}
    If the cut-matching game finishes, we have $U_G(F')\ge (1-O(\phi\Kap{}/\log n))\cdot U_G(F)$.
\end{lemma}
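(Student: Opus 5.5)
The plan is to bound the capacity of edges of $F$ that get discarded, i.e.\ edges with an endpoint $u$ such that $\tilde{\dd}(u) < \dd(u)/2$, using only the total-mass guarantee of \Cref{lem:cutmatchinggame}.

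Let $B := \{u \in V \mid \tilde{\dd}(u) < \dd(u)/2\}$ be the set of \emph{bad} vertices. Every edge of $F \setminus F'$ has at least one endpoint in $B$. Hence
\[
U_G(F\setminus F') \le \sum_{u\in B} \vol_F(u) = \dd(B).
\]
Each $u\in B$ has $\dd(u) - \tilde{\dd}(u) > \dd(u)/2$. Since $\tilde{\dd}\le \dd$ pointwise, summing over $B$ gives
\[
\dd(B) < 2\sum_{u\in B}\bigl(\dd(u)-\tilde{\dd}(u)\bigr) \le 2\bigl(\dd(V)-\tilde{\dd}(V)\bigr).
\]

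By \Cref{lem:cutmatchinggame}, $\dd(V)-\tilde{\dd}(V) \le O(\eps\log^2 n)\cdot \dd(V)$. Recall $\dd = \vol_F$, so $\dd(V) = \vol_F(V) = 2U_G(F)$, counting self-loops appropriately. With $\eps = \phi\Kap/\log^3 n$ this yields
\[
U_G(F\setminus F') \le O(\eps\log^2 n)\, U_G(F) = O\bigl(\phi\Kap/\log n\bigr)\, U_G(F),
\]
which is the claim.

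The only points needing care are these:
\begin{enumerate}[(i)]
\item The oracle $\Omatch$ must meet the hypotheses of \Cref{lem:cutmatchinggame}. Since we are in the situation where the game finishes, every call ended in Case~1 with $\dd''(V)<\eps\dd'(V)$. The returned matching therefore has $\vol_M \le \dd'$ and $\vol_M(V)\ge(1-\eps)\dd'(V)$, so it is $(\dd',\eps)$-perfect as required.
\item The requirement $\eps = o(1/\log^2 n)$ follows from $\phi\le o(1/\Kap)$.
\item Self-loops in $F$ (introduced in recursive calls) contribute to $\vol_F$ of a single vertex. They are discarded only when that vertex is bad, so the same charging applies.
\end{enumerate}

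I expect no real obstacle beyond this bookkeeping. The main thing to double-check is the factor-of-two relation between $\dd(V)$ and $U_G(F)$ under the self-loop convention.
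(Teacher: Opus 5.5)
Your proposal is correct and follows the same charging argument as the paper: each discarded edge is charged to an endpoint with $\tilde{\dd}(x)<\dd(x)/2$, the total volume of such vertices is at most $2(\dd(V)-\tilde{\dd}(V))$, and the cut--matching guarantee with $\eps=\phi\Kap/\log^3 n$ gives the bound. Your bookkeeping is actually cleaner than the written proof. That proof writes $\vol_{F'}(V)$ where $\tilde{\dd}(V)$ is meant, and computes $\eps\log^2 n$ as $\phi\Kap/\log^2 n$ instead of $\phi\Kap/\log n$ before relaxing to the stated bound.
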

\begin{proof}
    We have $\tilde{\dd}(V)\ge (1-O(\eps\log ^2n))\cdot \dd(V)$ according to \Cref{lem:cutmatchinggame}. Then we let $F'$ contain all edges $(u,v)\in F$ such that both $\tilde{\dd}(u)\ge \dd(u)/2,\tilde{\dd}(v)\ge \dd(v)/2$. In other words, for every edge $(u,v)\in F-F'$, there must exist $x\in\{u,v\}$ such that $\tilde{\dd}(x)<\dd(x)/2$. We charge $U_G(u,v)$ to $x$.

    Let $K=\{x\in V\mid \tilde{\dd}(x)<\dd(x)/2\}$. Then
    \begin{align*}
        \sum_{x\in K}\vol_F(x)
        &\le 2\cdot \bigl(\vol_F(V)-\vol_{F'}(V)\bigr) \\
        &\le O(\eps\log^2n)\cdot \dd(V) \\
        &= O(\phi\Kap{}/\log^2n)\cdot \dd(V),
    \end{align*}
    because $\eps = \phi\Kap{}/\log^3 n$. Each $x\in K$ can receive at most $\vol_F(x)$ total charge (since we charge edges incident to $x$), so the total charge, which equals $U_G(F-F')$, is at most
    \[
        O(\phi\Kap{}/\log^2n)\cdot \dd(V) = O(\phi\Kap{}/\log^2n)\cdot U_G(F).
    \]
    This implies
    \[
        U_G(F') \ge \bigl(1 - O(\phi\Kap{}/\log n)\bigr)\cdot U_G(F),
    \]
    where we relaxed $1 - O(\phi\Kap{}/\log^2 n)$ to $1 - O(\phi\Kap{}/\log n)$.
\end{proof}

Now we consider the second situation where the cut-matching game stops due to Case 2. We get the following recursive inequality
\[
    U_G(F-F') \le U_G(\hF[\bar{S}] - F'_1) + U_G(\hF[S] - F'_2),
\]
where recall that $\hF[\bar{S}]$ and $\hF[S]$ add self-loops so that $\vol_{\hF[\bar{S}]} = \vol_F\mid_{\bar S}$ and $\vol_{\hF[S]} = \vol_F\mid_{S}$. Note that $\hF[\bar S] + \hF[S]$ double-counts edges in $F(S,\bar S)$ (edges of $F$ with one endpoint in $S$ and the other in $\bar S$). However, each such edge is double-counted at most once in the recursion, because once we cut along $(S,\bar S)$, that edge never appears in a deeper subproblem again (the self-loop we use in the subproblems does not create another cross-partition edge).

Thus, unrolling the recursion and using \Cref{lem:sparsecut} (which shows each cut removes only a $\phi\Kap{}/\log n$-fraction of the relevant volume), we get
\[
    U_G(F-F') \le O(\phi\Kap{}/\log n)\cdot U_G(F).
    \]
Finally, by taking $n$ sufficiently large (so the hidden $O(\cdot)$ is at most, say, $1/2$), we get
\[
    U_G(F') \ge (1-\phi\Kap{})\cdot U_G(F).
\]

\paragraph{Complexity.}
In the first situation (the cut-matching game finishes), the algorithm uses $\tO{1}$ calls to $\Omfmc$ and $\tO{m}$ work and $\tO{1}$ depth.

In the second situation (the algorithm makes recursive calls to $G[S],G[\bar S]$), note that from the proof of \Cref{lem:sparsecut} we have
\begin{align*}
    \vol_F(S) &\ge \dd''(P)/2 \ge \eps\cdot \dd(P)/4, \\
    \vol_F(\bar S) &\ge \dd''(P)/2 \ge \eps\cdot \dd(P)/4.
\end{align*}
So each time we recurse, both sides keep at least an $\eps$-fraction (up to constants) of the volume, and the recursion depth is $O(1/\eps)$. Each level of recursion takes $\tO{m}$ work and $\tO{1}$ depth in total and makes $\Omfmc$ calls only on induced subgraphs (whose vertex sets are a partition of $V$), so the calls to $\Omfmc$ are $O(1/\eps)$-efficient. Hence the total additional work is $\tO{m/\eps}$ and the depth is $\tO{1/\eps}$. Plugging in $\eps = \phi\Kap{}/\log^3 n$ gives the claimed complexity.

\subsection{MFMC on General Graphs via MFMC on DAGs and DAG Projections}
\label{subsec:MFMC to DAG}
The next lemma shows that such a projection algorithm reduces MFMC on general graphs to MFMC on DAGs, with both a multiplicative and an additive loss that match the DAG oracle.

\begin{lemma}\label{lem:DAGembeddingtoMaxFlowadditiveerror}
    Let $D$ be a DAG projection of $G = (V,E)$ with an $(\kappa,\delta)$-congestion-preserving efficient projection algorithm. Let $\cO$ be an oracle solving $\amfmcDAG{\alpha}$. Then there is an algorithm solving $\amfmc{(\alpha \kappa, \alpha \delta)}$ on $G$ with complexity proportional to one efficient call to $\cO$ plus the projection algorithm.
\end{lemma}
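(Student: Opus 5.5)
We will follow the proof of \Cref{lem:DAGprojectiontoMaxFlow} almost verbatim, tracking the additive loss $\delta$. Given a demand $(\DDelta,\nnabla)$ on $G$, we build $D'$ from $D$: for every $v\in V$ we add $s_v$ with infinite-capacity edges to all of $\pi^{-1}(v)$, and $t_v$ with infinite-capacity edges from all of $\pi^{-1}(v)$. We then add a super source $s$ with edges $(s,s_v)$ of capacity $\DDelta(v)$ and a super sink $t$ with edges $(t_v,t)$ of capacity $\nnabla(v)$. Since the added vertices precede or follow all of $D$, $D'$ is still a DAG and a topological order is immediate. We call $\cO$ once on $D'$ and obtain $(\ff',S')$ with $\Val(S')\le\alpha\Val(\ff')$.

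We restrict to $D$ to get $\ff^D$ and $S^D$. The demand of $\ff^D$ is supported on copies of original vertices, so $\pi(\Supp(\Dem(\ff^D)))\subseteq V$; dummy vertices carry no demand because they receive no terminal edges. The flow side of the $(\kappa,\delta)$ projection algorithm returns $\ff$ in $G$ with $\Cong(\ff)\le\kappa$, $\Dem(\ff)\preceq\pi(\Dem(\ff^D))\preceq(\DDelta,\nnabla)$ and $\Val(\ff)\ge\Val(\ff')-\delta$. We output $\ff/\kappa$, which is feasible and routes a sub-demand. The cut side returns $S$ with $\Val_G(S)\le\Val(S^D)$ and $\HL(S^D)\setminus\{\bot\}\subseteq S\subseteq\pi(S^D)$. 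Exactly as before, infinite edges force $s_v\notin S'$ whenever some copy of $v$ lies outside $S^D$, and $t_v\in S'$ whenever some copy of $v$ lies in $S^D$. Hence
\[
\Val(S)+\nnabla(S)+\DDelta(\bar S)\le\Val(S^D)+\nnabla(\pi(S^D))+\DDelta(V\setminus\HL(S^D))\le\Val(S')\le\alpha\Val(\ff').
\]
This chain uses $S\subseteq\pi(S^D)$ for the $\nnabla$ term and $\bar S\subseteq V\setminus\HL(S^D)$ for the $\DDelta$ term. Dummy vertices do not affect either inclusion because $\pi$ and $\HL$ exclude $\bot$.

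Finally, substitute $\Val(\ff')\le\Val(\ff)+\delta=\kappa\Val(\ff/\kappa)+\delta$. The right-hand side becomes $\alpha\kappa\Val(\ff/\kappa)+\alpha\delta$, so $(\ff/\kappa,S)$ is an $\amfmc{(\alpha\kappa,\alpha\delta)}$ pair. The cost is one oracle call on a graph of size $O(|E(D)|+n)$ plus one run of the projection algorithm.

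The main obstacle is the bookkeeping around dummy vertices and non-surjectivity. We must check three things. First, the restriction $\ff^D$ satisfies the precondition of the flow projection, so no demand sits on a $\bot$ vertex. Second, the $\HL(\cdot)\setminus\{\bot\}$ formulation still gives $\bar S\subseteq V\setminus\HL(S^D)$. Third, a vertex $v$ with $\pi^{-1}(v)=\emptyset$ is handled correctly: then $s_v,t_v$ are isolated apart from $s,t$, so $v\in\HL(S^D)$ vacuously. This makes the $\DDelta(v)$ term safe, and the $\nnabla(v)$ term is fine because $v\notin\pi(S^D)$, so $v\notin S$.
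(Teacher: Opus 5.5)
This is the paper's proof essentially verbatim: the same super-source/super-sink gadget, one call to $\cO$, projection of the restricted flow and cut, the same cut chain via $\HL(S^D)\setminus\{\bot\}\subseteq S\subseteq\pi(S^D)$, and the substitution $\Val(\ff')\le\Val(\ff)+\delta$. That substitution tacitly uses $\Val(\ff^D)=\Val(\ff')$, a step the paper states explicitly.

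The one slip is your last aside about a vertex with $\pi^{-1}(v)=\emptyset$. For such $v$ you have $v\in\HL(S^D)$ and $v\notin\pi(S^D)$, so the cut guarantee would force both $v\in S$ and $v\notin S$. Such a $v$ therefore cannot occur when a projection algorithm exists (the paper assumes surjectivity). Your claim that this case is ``handled correctly'' is incoherent, though it does not affect the main argument.
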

\begin{proof}
    Build $D'$ from $D$ exactly as in the standard super-source/super-sink reduction: for every $v \in V$, add a vertex $s_v$ and edges $(s_v, v')$ for every $v' \in V(D)$ with $\pi(v') = v$, each with infinite capacity; also add a vertex $t_v$ and edges $(v', t_v)$ for every such $v'$, each with infinite capacity. Then add a super source $s$ and edges $(s, s_v)$ of capacity $\DDelta(v)$ for every $v \in V$, and add a super sink $t$ and edges $(t_v, t)$ of capacity $\nnabla(v)$ for every $v \in V$. Let the resulting graph be $D'$.

    Run the oracle $\cO$ on $D'$ with source $s$ and sink $t$, and let it return a flow $\ff'$ and a cut $S'$ satisfying
    \[
        \Val(S') \le \alpha \cdot \Val(\ff').
    \]
    Restrict $\ff'$ and $S'$ to $D$ to obtain $\ff^D$ and $S^D$. By construction, $\pi(\Supp(\Dem(\ff^D))) \subseteq V$, so we can invoke the projection algorithm to get a flow $\ff$ in $G$ and a cut $S$ in $G$ such that
    \[
        \Cong(\ff) \le \kappa \cdot \Cong(\ff^D), \qquad
        \Dem(\ff) \preceq \pi(\Dem(\ff^D)), \qquad
        \Val(\ff) \ge \Val(\ff^D)-\delta,
    \]
    and
    \[
        \Val(S) \le \Val(S^D), \qquad
        \HL(S^D)\setminus\{\bot\} \subseteq S \subseteq \pi(S^D).
    \]
    Since $\Cong(\ff) \le \kappa$, the scaled flow $\ff/\kappa$ is feasible in $G$; we will output $\ff/\kappa$ together with $S$.

    Now we relate the cut values. As in the usual argument, for any $v \in V$ for which there exists $v' \in V(D)$ with $v' \notin S^D$, we must have $s_v \notin S'$; otherwise, some infinite-capacity edge $(s_v, v')$ would cross the cut. Thus $(s, s_v)$ contributes $\DDelta(v)$ to $\Val(S')$. Symmetrically, for any $v \in V$ for which there exists $v' \in V(D)$ with $v' \in S^D$, we must have $t_v \in S'$; otherwise, some infinite-capacity edge $(v', t_v)$ would cross the cut, and so $(t_v, t)$ contributes $\nnabla(v)$ to $\Val(S')$. All other contributions to $\Val(S')$ come from edges of $D$, i.e.
    \[
        \Val(S^D)
        \le
        \Val(S')
        - \sum_{v \notin \HL(S^D)} \DDelta(v)
        - \sum_{v \in \pi(S^D)} \nnabla(v).
    \]

    We also have $\Val(\ff^D) = \Val(\ff')$ (restriction does not reduce the flow through $s,t$ in $D'$), and the projection algorithm gives $\Val(\ff) \ge \Val(\ff')-\delta$. Using $\Val(S') \le \alpha \Val(\ff')$, we obtain
    \[
        \Val(S^D) + \sum_{v \in \pi(S^D)} \nnabla(v) + \sum_{v \notin \HL(S^D)} \DDelta(v)
        \;\le\;
        \alpha \bigl(\Val(\ff) + \delta\bigr).
    \]
    Since
    \[
        \HL(S^D)\setminus \{\bot\} \subseteq S \subseteq \pi(S^D),
    \]
    we can replace the sums over $\pi(S^D)$ and over $V \setminus \HL(S^D)$ by sums over $S$ and $V \setminus S$, respectively, and also use $\Val(S) \le \Val(S^D)$, to get
    \[
        \Val(S) + \sum_{v \in S} \nnabla(v) + \sum_{v \notin S} \DDelta(v)
        \;\le\;
        \alpha \bigl(\Val(\ff) + \delta\bigr).
    \]
    Finally, we output the feasible flow $\ff/\kappa$ and the cut $S$. Multiplying the right-hand side by $\kappa$ to account for scaling gives
    \[
        \Val(S) + \sum_{v \in S} \nnabla(v) + \sum_{v \notin S} \DDelta(v)
        \;\le\;
        \alpha \kappa \cdot \Val(\ff/\kappa) + \alpha \delta,
    \]
    so $(\ff/\kappa, S)$ is an $\amfmc{(\alpha \kappa, \alpha \delta)}$ pair.
\end{proof}
\subsection{Completing the Spiral}
\label{subsec:cong-complete}

In this section, we will combine everything in the previous sections and prove \Cref{lem:congestionDAGembedding}.
\congestionDAGembedding*

\newcommand{\congDAGemb}{\mathsf{CongDAGProj}}
The algorithm is recursive. We will describe an algorithm $\congDAGemb(G,\kappa,\delta)$ computing a $(\kappa,\delta)$-congestion preserving DAG projection for $G$. It suffices to call $\congDAGemb(G,\kappa^*,1/n^C)$ for some $\kappa^*=n^{o(1)}$ to be fixed and sufficiently large constant $C$ to get the required algorithm for \Cref{lem:congestionDAGembedding} (we will argue in the end why $1/n^{C}$ suffices). Now we describe $\congDAGemb(G,\kappa,\delta)$.

\paragraph{Base Case.} Suppose $\delta\ge 2\cdot U_G(E(G))$, then we return $D=(V_1\cup \{w\}\cup V_2,E_1\cup E_2)$ where $V_1,V_2$ are copies of $V(G)$ with a natrual trivial projection map and
\[E_1=\{(v,w)\text{ with }U(v,w)=\vol_{E(G)}(\pi(v))\mid v\in V_1\}\]
\[E_2=\{(w,v)\text{ with }U(w,v)=\vol_{E(G)}(\pi(v))\mid v\in V_2\}\]

Clearly $D$ is a DAG. 

The projection algorithm takes a flow in $D$, and returns an empty flow in $G$. Since $\delta\ge 2\cdot U_G(E(G))$, this is a valid projection algorithm.

Suppose the projection algorithm is given a cut $S^D$ in $D$, if $w\not\in S^D$, then return $S=\pi(S^D\cap V_1)$ as a cut in $G$; Otherwise return $S=\pi(S^D\cap V_2)$. Clearly we have $\HL(S^D)\subseteq S\subseteq \pi(S^D)$. Moreover, according to the definition of $E_1\cup E_2$, every edge in $\delta^+_G(S)$ in case $w\in S^D$ (or $\delta^-_G(\bar{S})$ in case $w\not\in S^D$) has its capacity subsumed by the corresponding part in $E_1\cup E_2$, so this is a valid projection algorithm.

For the rest of the algorithm, we suppose $\delta<2\cdot U_G(E(G))$ and try to solve $\congDAGemb(G,\kappa,\delta)$ using recursive calls to $\congDAGemb(G',\kappa',\delta')$ for some $G'$ created by the recursive algorithm and $\kappa'<\kappa$ and $\delta'>\delta$.

\paragraph{The oracle $\Omfmc{}$.} We are intended to use \Cref{lem:expanderdecompositiontoDAGembedding} but we need a $\Omfmc$ oracle. In this paragraph, we will describe how to get the oracle $\Omfmc$ by using $\congDAGemb(G',\kappa',\delta')$. We apply \Cref{lem:DAGembeddingtoMaxFlowadditiveerror} on the DAG projection output by $\congDAGemb(G',\kappa',\delta')$. According to \Cref{lem:DAGembeddingtoMaxFlowadditiveerror}, we get the oracle $\Omfmc{}$ that solves $\amfmcDAG{(\alpha\kappa',\alpha\delta')}$ with one efficient call to $\amfmcDAG{\alpha}$. 

\paragraph{The expander hierarchy.} We use \Cref{lem:correctnessexpanderhierarchy} to get the expander hierarchy.

\paragraph{Getting the DAG projection.} Let $\eta=1/\sqrt{\log_n\alpha}$. Notice that if $\alpha=n^{o(1)}$, then $\eta=\omega(1)$ and $\alpha^\eta=n^{o(1)}$.

We use \Cref{lem:expanderdecompositiontoDAGembedding} on the expander hierarchy and parameter
\[\sigma=2^{2\cdot \log^{0.7}n}\cdot \alpha^{\eta}\]
to get a DAG projection $D$ of $G$ with an efficient $(\kappa,\delta)$-congestion-preserving projection algorithm such that

\[\kappa=O(2^t/\phi)=2^{O(\log^{0.5}n)}\cdot \alpha \kappa'\]
\begin{equation}\label{eq:delta}
    \delta=U_G(E_t)/\sigma=2^{3\log^{0.5}n}\cdot \delta'/\sigma\le \frac{\delta'}{2^{\log^{0.7}n}\cdot \alpha^{\eta}}
\end{equation}

\paragraph{Congestion analysis.} We will run $\congDAGemb(\kappa^*,1/n^C)$ for sufficiently large constant $C$.

According to \Cref{eq:delta}, the recursion depth $d$ for $\congDAGemb(\kappa^*,1/n^C)$ is 
\[d=\frac{O(\log n)}{\log \left(2^{\log^{0.7}n}\cdot \alpha^\eta\right)}\]

After depth $d$, the base case should have $\kappa_0\ge 1$ and $\delta\ge 2U_G(E(G))$. Thus, we get that

\begin{equation}\label{eq:sizeofkappa}
    \kappa^*=\left(2^{O(\log^{0.5}n)}\cdot \alpha\right)^d\le 2^{\frac{O(\log^{1.5}n)}{\log^{0.7}n}}\cdot \alpha^\frac{O(\log n)}{\eta\log\alpha}\le n^{o(1)}
\end{equation}
as required.

\paragraph{Removing Tiny Additive Error.}
Lastly, according to the construction of the DAG projection \Cref{lem:expanderdecompositiontoDAGembedding}, if the input graph $G$ has integer capacity (which is the assumption), the DAG projection $D$ cannot have edges with capacity less than $1/n$ (as the only scaling down capacity part scales down by $\sigma=n^{o(1)}$). Thus, any flow in $D$ can be scaled up to value at least $1/n$ and apply the projection algorithm with additive error $1/n^C$, which is subsumed by the multiplicative error to the original graph. This gives a $\kappa^*$-congestion preserving DAG projection without additive error.

\paragraph{Size of the DAG projection.} According to \Cref{lem:expanderdecompositiontoDAGembedding}, the size of the DAG projection can always be upper bounded by
\[O(2^t\sigma|E(G)|)=2^{O(\log^{0.7}n)}\cdot \alpha^{\eta}\cdot |E(G)|=n^{o(1)}\cdot |E(G)|\]

as required.

\paragraph{Complexity analysis.} Notice that $G$ is changing during the recursive calls. However, according to \Cref{lem:terminalexpanderdecomposition}, each recursive level only boost the total graph size by a factor of $\tO{1/(\phi\gamma)}=O(2^{\log^{0.5}n})$. Thus, the total graph size among all recursive calls is upper bounded by $n^{o(1)}$ according to the same calculation as in \Cref{eq:sizeofkappa}. 

The oracle calls to the $\amfmcDAG{\alpha}$ algorithm is only by \Cref{lem:DAGembeddingtoMaxFlowadditiveerror}, which has input size proportional to the DAG projection size, upper bounded by $n^{o(1)}$. So the oracle calls are $n^{o(1)}$-efficient.
The additional work and depth according to \Cref{lem:expanderdecompositiontoDAGembedding,lem:terminalexpanderdecomposition,lem:DAGembeddingtoMaxFlowadditiveerror} is at most $m^{1+o(1)}$ and $n^{o(1)}$, as required.

\bibliographystyle{alpha}
\bibliography{refs}
\appendix
\end{document}